\documentclass[12pt,reqno]{amsart}
\usepackage{fullpage,graphicx,subfigure,mathpazo,color}
\usepackage{amsmath,amscd,tikz,mathrsfs}
\usepackage[normalem]{ulem}
\usepackage{amsmath}
\usepackage{epstopdf}
\usepackage{tikz}
\usepackage{setspace}
\usepackage{hyperref}
\newcommand{\ii}{\mathrm{i}}

\newtheorem{thm}{Theorem}
\newtheorem{lem}{Lemma}
\newtheorem{prop}{Proposition}

\newtheorem{definition}{Definition}

\newtheorem{rem}{Remark}

\usepackage{graphicx}      
\usepackage{titlesec}

\titleformat{\section}{\centering\LARGE\bfseries}{\thesection}{1em}{}
\titleformat{\subsection}{\Large\bfseries}{\thesubsection}{1em}{}

\begin{document}

\title{Stability analysis of breathers for coupled nonlinear Schr\"odinger equations}

 \author{Liming Ling}
 \address{School of Mathematics, South China University of Technology, Guangzhou, China 510641}
\email{linglm@scut.edu.cn}
\author{Dmitry E. Pelinovsky}
\address{Department of Mathematics and Statistics, McMaster University,
Hamilton, Ontario, Canada, L8S 4K1}
\email{dmpeli@math.mcmaster.ca}
\author{Huajie Su}
\address{School of Mathematics, South China University of Technology, Guangzhou, China 510641}
\email{747443327@qq.com}

\begin{abstract}
    We investigate the spectral stability of non-degenerate vector soliton solutions and the nonlinear stability of breather solutions for the 
    coupled nonlinear Schrödinger (CNLS) equations. The non-degenerate vector solitons are spectrally stable despite the linearized operator admits either embedded or isolated eigenvalues of negative Krein signature. The nonlinear stability of breathers is obtained by the Lyapunov method with the help of 
    the squared eigenfunctions due to integrability of the CNLS equations.
\end{abstract}

\date{\today}

\maketitle

{\bf Keywords:}  Integrable systems, Vector solitons, Breathers, Spectral stability, Nonlinear stability. 

{\bf 2020 MSC:} 35Q55, 35Q51, 37K10, 37K15, 35Q15, 37K40.


\section{Introduction}
    In this work, we study the spectral stability of the non-degenerate vector soliton solutions and the 
    nonlinear stability of breather solutions in the 
    coupled nonlinear Schr\"odinger (CNLS) equations
    \cite{berkhoe_self_1970,manako_theory_1974,roskes_nonlinear_1976} on the real line (also known as Manakov system):
    \begin{equation}\label{CNLS}
        \begin{split}
            {\rm i} q_{1,t} +\frac{1}{2}q_{1,xx}+(|q_{1}|^{2}+|q_{2}|^{2})q_{1}&=0, \\
            {\rm i}q_{2,t}+\frac{1}{2}q_{2,xx}+(|q_{1}|^{2}+|q_{2}|^{2})q_{2}&=0,
        \end{split}
    \end{equation}
    where $(q_{1},q_{2})=(q_{1}(x,t),q_{2}(x,t))$ are complex-valued functions and $(x,t) \in \mathbb{R} \times \mathbb{R}$. The initial-value problem 
    for the CNLS equations \eqref{CNLS} is globally well-posed in Sobolev 
    space $H^k(\mathbb{R})$ for $k \in \mathbb{N}$, see    \cite{cazenave_introduction_1989}. The CNLS equations \eqref{CNLS} have important applications in Bose-Einstein condensates \cite{qin_nondegenerate_2019} 
    and birefringent fibers
    \cite{agrawal_nonlinear_2019}. 
    The nonlinear terms in CNLS equations \eqref{CNLS} couple two components to describe self-phase and
    cross-phase modulation phenomenon. 
    As an extension of the nonlinear Schr\"odinger equation (NLS), 
    the CNLS equations \eqref{CNLS} can be used to study the dynamics of vector solitons \cite{APT2004}. 
    
    As an example of integrable equations, the CNLS equations \eqref{CNLS} admit Lax pair \cite{manako_theory_1974, wang_integrable_2010},
    bi-Hamiltonian structure \cite{yang_nonlinear_2010},  
    and an infinite set of conservation laws \cite{yang_nonlinear_2010}. 
    The Lax representation is a powerful tool to 
    analyze the properties of equations. 
    The Lax pair 
    for CNLS equations \eqref{CNLS} is
    \begin{equation}
    \label{CNLS-lax}
        \begin{split}
            \mathbf{\Phi}_{x}(\lambda;x,t)=\mathbf{U}(\lambda,\mathbf{q})\mathbf{\Phi}(\lambda;x,t),\quad
            \mathbf{\Phi}_{t}(\lambda;x,t)=\mathbf{V}(\lambda,\mathbf{q})\mathbf{\Phi}(\lambda;x,t)
        \end{split}
    \end{equation}
    where 
    \begin{equation}
    \label{U-V-lax}
            \mathbf{U}(\lambda;x,t)={\rm i}\lambda\sigma_{3}+{\rm i} \mathbf{Q}, \quad 
            \mathbf{V}(\lambda;x,t)={\rm i}\lambda^{2}\sigma_{3}+{\rm i}\lambda\mathbf{Q}
            -\frac{1}{2}({\rm i}\sigma_{3}\mathbf{Q}^{2}-\sigma_{3}\mathbf{Q}_{x}), 
    \end{equation}
    with 
\begin{align*}
\mathbf{Q}(x,t)=\begin{pmatrix}
                0& \mathbf{r}^{T}\\
                \mathbf{q}& 0
            \end{pmatrix},\,\,\,\, 
            \sigma_{3}=\mathrm{diag}(1,-1,-1),
            \quad
            \mathbf{q}=(q_{1},q_{2})^{T}, \quad \mathbf{r} = \mathbf{q}^*,
\end{align*}
and the complex spectral parameter $\lambda$. 
The CNLS equations \eqref{CNLS} have the zero curvature representation 
$$
\mathbf{U}_t-\mathbf{V}_x+[\mathbf{U},\mathbf{V}]=0,
$$ 
where $[\mathbf{U},\mathbf{V}]\equiv \mathbf{U}\mathbf{V}-\mathbf{V}\mathbf{U}$ is the commutator. The Lax pair (\ref{CNLS-lax})--(\ref{U-V-lax}) can be derived from the $3\times 3$ system \cite{APT2004,yang_nonlinear_2010}.  
The infinite set of conservation laws can be obtained by trace formula in the inverse scattering transform for integrable equations with Hamiltonian structure \cite{faddeev1987hamiltonian, koch2018conserved}. In particular, we will use the following conserved quantities of the CNLS equations (\ref{CNLS}):
   \begin{align}
   	\label{con-0}
H_{0}(\mathbf{q})
&= \frac{1}{2}\int_{\mathbb{R}}\left| \mathbf{q} \right|^{2}\mathrm{d}x, \\
\label{con-02}
H_{1}(\mathbf{q})
&=\frac{1}{2}\int_{\mathbb{R}}{\rm i} \mathbf{q}^{\dagger}\mathbf{q}_{x} \mathrm{d}x, \\
\label{con-03}
H_{2}(\mathbf{q}) &= \frac{1}{2}\int_{\mathbb{R}} \left( |\mathbf{q}_{x}|^{2}
-\left|\mathbf{q}\right|^{4} \right) \mathrm{d}x, \\
\label{con-04}
H_{3}(\mathbf{q})
&= \frac{1}{2}\int_{\mathbb{R}}{\rm i} \left(\mathbf{q}_{x}^{\dagger}\mathbf{q}_{xx}+ 3|\mathbf{q}|^{2}\mathbf{q}^{\dagger}_{x}\mathbf{q}\right)
\mathrm{d}x, \\
\label{con-95}
H_{4}(\mathbf{q})
&= \frac{1}{2}\int_{\mathbb{R}}\left( |\mathbf{q}_{xx}|^{2}-4\left|\mathbf{q}^{\dagger}\mathbf{q}_{x}\right|^{2}
-\left( \mathbf{q}^{\dagger}\mathbf{q}_{x} \right)^{2}-\left( \mathbf{q}_{x}^{\dagger}\mathbf{q} \right)^{2}
-4\left| \mathbf{q} \right|^{2}\left| \mathbf{q}_{x} \right|^{2}
+2\left|\mathbf{q}\right|^{6} \right) \mathrm{d}x.
\end{align}
We note that $H_0(\mathbf{q}) = H_0^{(1)}(q_1) + H_0^{(2)}(q_2)$, 
where $H_0^{(j)}(q_j) = \frac{1}{2}\int_{\mathbb{R}} |q_j|^2 \mathrm{d}x$, $j = 1,2$  are independent conserved quantities. Another relevant observation is that the values of $H_1(\mathbf{q})$ and $H_3(\mathbf{q})$ are real due to integration by parts.
 
Various solutions to the CNLS equations \eqref{CNLS} have been derived through different methods. The degenerate vector solitons with the single-humped profiles were initially obtained through the inverse scattering method \cite{manako_theory_1974}. Bright and dark solitons had been discovered using Hirota bilinear method \cite{radhakrishnan1995bright, stalin2020nondegenerate}. The non-degenerate vector solitons with the double-humped profiles were derived by using the Hirota bilinear method in \cite{stalin2019prl}. The asymptotic behavior and collision dynamics of non-degenerate solitons have been investigated in \cite{ramakrishnan2020nondegenerate}. Furthermore, the Darboux transformation can be used to construct non-degenerate vector solitons and breather solutions \cite{ling_darboux_2016,qin_nondegenerate_2019}. 

This paper focuses on investigating the stability of non-degenerate vector soliton and breather solutions from \cite{qin_nondegenerate_2019} in the time evolution of the CNLS equations (\ref{CNLS}). The stability analysis is a fundamental problem of mathematical physics, which is particularly important 
for applications of solitons and breathers in physics.

\subsection{Review on stability results for CNLS equations}
    
The history of the nonlinear stability of solitary waves takes place from the first study on the Korteweg-de Vries (KdV) equation by using the Lyapunov method in \cite{benjamin1972stability}. Regarding the NLS equation, nonlinear stability of ground states was obtained by utilizing concentration compactness principle in  \cite{cazenave1983stable,cazenave_orbital_1982} and \cite{weinstein_lyapunov_1986}. Furthermore, the Lyapunov method was extended to a general class of Hamiltonian nonlinear evolution equations in \cite{grillakis_stability_1987,grillakis_stability_1990}, 
with more results concerning the spectral stability in \cite{grillakis_analysis_1990}. The Lyapunov method incorporates coercivity of the second variation of the Lyapunov functional in the proof of 
nonlinear stability of the orbit of solitary waves. This approach 
has been used in many works, see recent papers in \cite{alejo2021stability, killip_orbital_2022, koch2024multisolitons,laurens_multisolitons_2023}. 
    
For the particular case of the CNLS equations \eqref{CNLS} and their nonintegrable extensions, spectral stability of degenerate vector solitons with the single-humped profiles was obtained in \cite{mesentsev1992stability}. The proofs of their nonlinear stability and instability was developed in \cite{li_structural_1998,li_mechanism_2000}.  Variational characterizations of such vector solitons was developed in \cite{ohta_stability_1996,cipolatti2000orbitally,nguyen2011orbital,nguyen_existence_2015}. The stability analysis of more general vector solitons which include 
multi-humped profiles in one component and the single-humped profiles in another 
component was developed in \cite{pelinovsky_inertia_2005,pelinovsky_instabilities_2005}. Finally, 
bifurcations and stability of such vector solitons was developed recently in 
\cite{yagasaki_bifurcations_2023}. 
    
A difficulty in the nonlinear stability analysis of solitary waves with the multi-humped profiles by using the Lyapunov method arises due to a high number of negative eigenvalues of the second variation of the Lyapunov functional 
and a low number of symmetries that characterize its kernel \cite{kapitula_spectral_2013}. In this work, we consider 
the double-humped profiles of the nondegenerate vector solitons and breather profiles
and use integrability of the CNLS equations \eqref{CNLS} to obtain the squared eigenfunctions of the Lax pair, see \cite{deconinck_orbital_2020, upsal_real_2020}. The squared eigenfunctions satisfy the linearized operators which solve the spectral stability problem and contribute to the nonlinear stability analysis. The closure relation  \cite{gerdjikov_generating_1981,yang_squared_2009,kaup_inverse_2009} provides tools to prove the completeness of squared eigenfunctions in the spectrum of the linearized operator and to compute the inner product between the squared eigenfunctions and adjoint squared eigenfunctions. It also allows us to 
relate the spectrum of the linearized operator with the spectrum of the 
second variation of the Lyapunov functional 
\cite{kapitula_counting_2004,pelinovsky_inertia_2005,haragus_spectra_2008}. 
    
For the nonlinear stability analysis, we also use an additional tool 
that exists due to integrability of the CNLS equations (\ref{CNLS}). 
The same non-degenerate vector solitons can be characterized variationally 
with several Lyapunov functions due to the existence of the higher-order 
conserved quantities. This tool was pioneered in the proof of nonlinear stability of multi-solitons in \cite{Kap07,MS93} and applied 
to studies of nonlinear stability of breathers in \cite{Al2,Al1}, Dirac solitons 
in \cite{PY14}, and periodic waves and black solitons in \cite{GP-15,GP-dark}. Further results on the linear and nonlinear stability of multi-solitons were found recently in \cite{LeCoz,Wang22}. This approach is also useful to characterize linear transverse stability of solitary and periodic waves even if the higher-order conserved quantities do not form coercive Lyapunov functionals, see \cite{LHP-17}. Here we will extend this tool to the CNLS equations \eqref{CNLS}.

\subsection{Main result}

Recall the Galilean transformation for CNLS equations \eqref{CNLS} given by
    \begin{equation}
    \label{speed-CNLS}
        T(a)\mathbf{q}(x,t)={\rm e}^{-2{\rm i} a(x+at)}\mathbf{q}(x+2at,t). 
    \end{equation}
If $\mathbf{q}$ is a solution of \eqref{CNLS}, so is $T(a) \mathbf{q}$ for any $a \in \mathbb{R}$. Without loss of generality, one can consider the standing waves with zero speed and obtain traveling waves with nonzero speed by using (\ref{speed-CNLS}). Consequently, we will consider the non-degenerate vector solitons in the form of the standing waves:
\begin{equation}
\mathbf{q}(x,t) =    \begin{pmatrix}
{\rm e}^{2{\rm i} b_{1}^2 t} & 0 \\
0 & {\rm e}^{2{\rm i} b_{2}^2 t}
\end{pmatrix}     \begin{pmatrix}
 u_1(x) \\
u_2(x)
\end{pmatrix},
\label{standing-wave}
\end{equation}
where $u_1(x),u_2(x) : \mathbb{R} \mapsto \mathbb{C}$ represent 
spatial profiles of the two components with parameters $b_{1}, b_{2} > 0$ satisfying $b_{1} \ne b_{2}$.  Without loss of generality, we assume \(0 < b_{2} < b_{1} \). 

The exact expression for the non-degenerate vector solitons exist in the form, see \cite{qin_nondegenerate_2019},
\begin{eqnarray}
\label{u-1}
u_1(x;b_1,b_2,c_{11},c_{22}) &= \frac{4b_1 c_{11}}{M_{non}(x)}  \left(|c_{22}|^{2}{\rm e}^{2b_{2}x}+\frac{b_{1}-b_{2}}{b_{1}+b_{2}}{\rm e}^{-2b_{2}x}\right), \\
\label{u-2}
u_2(x;b_1,b_2,c_{11},c_{22}) &= \frac{4b_2 c_{22}}{M_{non}(x)} \left(|c_{11}|^{2}{\rm e}^{2b_{1}x}+\frac{b_{2}-b_{1}}{b_{1}+b_{2}}{\rm e}^{-2b_{1}x}\right),
\end{eqnarray}
where 
\begin{equation*}
M_{non}(x)=
\frac{(b_{1}-b_{2})^{2}}{(b_{1}+b_{2})^{2}}{\rm e}^{-2 (b_{1}+b_{2}) x}+
|c_{11}|^{2}{\rm e}^{2 (b_{1}-b_{2}) x}+|c_{22}|^{2}{\rm e}^{-2 (b_{1}-b_{2}) x} + |c_{11}c_{22}|^{2}{\rm e}^{2 (b_{1}+b_{2}) x}
\end{equation*}
and $c_{11}, c_{22} \in \mathbb{C}$ are arbitrary parameters.

The CNLS equations (\ref{CNLS}) are invariant under the four-parameter group of translations:
    \begin{equation}
    \label{sym-CNLS}
S(\theta_1,\theta_2,x_0,t_0) \mathbf{q}(x,t) = \begin{pmatrix}
{\rm e}^{{\rm i} \theta_1} & 0  \\
0 & {\rm e}^{{\rm i} \theta_2} 
\end{pmatrix} \mathbf{q}(x+x_0,t+t_0).
\end{equation}
If $\mathbf{q}$ is a solution of \eqref{CNLS}, so is $S(\theta_1,\theta_2,x_0,t_0)  \mathbf{q}$ for any $\theta_1,\theta_2,x_0,t_0 \in \mathbb{R}$. Without loss of generality, one can take $c_{11}$, $c_{22}$ 
as real parameters by defining $\theta_1$ and $\theta_2$ in the transformation (\ref{sym-CNLS}). Furthermore, if we parameterize 
$$
c_{ii} = \sqrt{\frac{b_{1}-b_{2}}{b_1 + b_2}} {\rm e}^{-2 b_{i} \tau_{i}}, \quad \tau_i \in \mathbb{R},
$$
the non-degenerate vector solitons in the form (\ref{non-sol-1}) reduce to the equivalent form found in \cite{akhmediev1995phase,Sipe}:
    \begin{equation*}
        \begin{split}
            u_{1}=\frac{2 b_{1}\sqrt{b_{1}^{2}-b_{2}^{2}}\ \mathrm{cosh}(2 b_{2} (x-\tau_{2}))}
            {b_{1} \mathrm{cosh}(2b_{1}(x-\tau_{1})) \mathrm{cosh}(2 b_{2} (x-\tau_{2})) -b_{2} \mathrm{sinh}(2 b_{1} (x-\tau_{1})) \mathrm{sinh}(2 b_{2}(x-\tau_{2}))}, \\
            u_{2}=\frac{2 b_{2} \sqrt{b_{1}^{2}-b_{2}^{2}}\ \mathrm{sinh}(2 b_{1}(x-\tau_{1}))}
            {b_{1} \mathrm{cosh}(2 b_{1}(x-\tau_{1})) \mathrm{cosh}(2 b_{2}(x-\tau_{2})) -b_{2} \mathrm{sinh}(2 b_{1}(x-\tau_{1})) \mathrm{sinh}(2 b_{2}(x-\tau_{2}))}.
        \end{split}
    \end{equation*}
Only one parameter from $\tau_1$, $\tau_2$ can be set to zero by the translational symmetry (\ref{sym-CNLS}), the other parameter determines 
the profile of the non-degenerate vector soliton. The spatial profile of $u_1$ is positive, whereas the spatial profile of $u_2$ has a single zero at $x = \tau_1$.  

Figure \ref{Example-non-vector soliton} displays profiles $|u_1|$ and $|u_2|$ of some non-degenerate vector soliton solutions. The positive profile $u_1$ can be either single-humped or  double-humped. The single-zero profile $u_2$ is always a superposition of two solitons of opposite polarity, hence $|u_2|$ is always double-humped. 

\begin{figure}[htbp]
	\centering
	\includegraphics[scale=0.20]{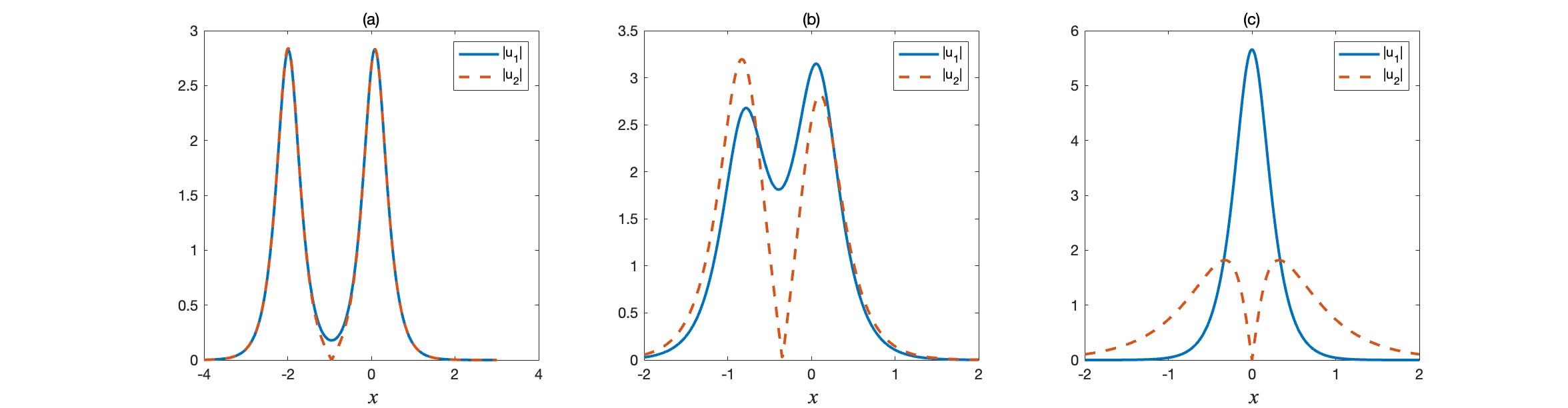}
	\caption{Examples of the non-degenerate vector soliton solutions (\ref{u-1})--(\ref{u-2}) with (a) $b_{1} = 2$, $b_{2}=2.002$, 
		$c_{11}=c_{22}=1$, (b) $b_{1}=2$, $b_{2}=2.2$,
		$c_{11}=c_{22}=1$, and (c) $b_{1}=3$, $b_{2}=1$, $c_{11}=c_{22}=\frac{\sqrt{2}}{2}$. The solid blue line represents $|u_1|$, while the dashed red line represents $|u_2|$. }
	\label{Example-non-vector soliton}
\end{figure}

\begin{rem}
The CNLS equations (\ref{CNLS}) are also invariant under the two-parameter group of rotations:
\begin{equation}
\label{rot-CNLS}
R(\alpha_1,\alpha_2) \mathbf{q}(x,t) = \begin{pmatrix}
\cos(\alpha_1) & \sin(\alpha_1)  \\
-\sin(\alpha_1) & \cos(\alpha_1)
\end{pmatrix} \begin{pmatrix}
\cos(\alpha_2) & {\rm i} \sin(\alpha_2)  \\
{\rm i} \sin(\alpha_2) & \cos(\alpha_2)
\end{pmatrix}  \mathbf{q}(x,t).
\end{equation}
If $\mathbf{q}$ is a solution of \eqref{CNLS}, so is $R(\alpha_1,\alpha_2)  \mathbf{q}$ for any $\alpha_1,\alpha_2 \in \mathbb{R}$.	The non-degenerate vector solitons (\ref{standing-wave}) are written in the form with zero rotational parameters $\alpha_1$, $\alpha_2$.
\end{rem}

For notational clarity, we define
\begin{equation}
\label{non-sol-1}
\mathbf{q}^{[2]}_{non}(x,t;a,b_{1},b_{2};c_{11},c_{22}) := T(a) 
\begin{pmatrix}
{\rm e}^{2{\rm i} b_{1}^{2} t} u_1(x;b_1,b_2,c_{11},c_{22}) \\
{\rm e}^{2{\rm i} b_{2}^{2} t} u_2(x;b_1,b_2,c_{11},c_{22})
\end{pmatrix},
\end{equation}
where $T(a)$ is given by (\ref{speed-CNLS}) and the spatial profiles are given by (\ref{u-1}) and (\ref{u-2}). 

The spectral stability of the standing waves (\ref{standing-wave}) is examined by considering the perturbative solution in the separable form
    \begin{equation}\label{perb-q}
        \mathbf{q}
        =\mathbf{q}^{[2]}_{non} + 
        \varepsilon T(a)\begin{pmatrix}
            {\rm e}^{2{\rm i} b_{1}^2 t} & 0 \\
            0 & {\rm e}^{2{\rm i} b_{2}^2 t}
        \end{pmatrix}
        \left(\mathbf{p}_{1}(x){\rm e}^{\Omega t}+\mathbf{p}_{2}^{*}(x){\rm e}^{\Omega^{*} t}\right)
    \end{equation}
where $\varepsilon$ is a small perturbation, 
$\mathbf{p}_{1}, \mathbf{p}_{2} \in L^{2}(\mathbb{R},\mathbb{C}^2)$, 
and $\Omega \in \mathbb{C}$. Substituting \eqref{perb-q} into \eqref{CNLS}  
and linearizing at the order of $\mathcal{O}(\varepsilon)$ yields the spectral problem for the linearized operator
    \begin{equation}\label{spectral-JL1}
        \mathcal{J}\mathcal{L}_{1}\mathbf{P}=
        \Omega\mathbf{P}, 
    \end{equation}
    where $\mathbf{P} := (\mathbf{p}_{1}^{T}, \mathbf{p}_{2}^{T})^{T}$ and the expressions for $\mathcal{J}$ and $\mathcal{L}_{1}$ can be found in Section 3. In Hilbert space $L^{2}(\mathbb{R},\mathbb{C}^4)$, $\mathcal{J}$ is skew-adjoint and $\mathcal{L}_{1}$ is self-adjoint. 
    
The self-adjoint operator $\mathcal{L}_{1}$ is related to the first 
variational characterization of the non-degenerate vector solitons
\begin{equation}
\label{Lyp-first}
\mathcal{I}_1 := H_2 - 4a H_1 + 4(a^2 + b_1^2) H_0^{(1)} + 4 (a^2 + b_2^2) H_0^{(2)},
\end{equation}
where $H_0 = H_0^{(1)} + H_0^{(2)}$, $H_1$, and $H_2$ are given by 
(\ref{con-0}), (\ref{con-02}), and (\ref{con-03}). 
Euler--Lagrange equations for $\mathcal{I}_1$ with fixed parameters $a \in \mathbb{R}$, $b_1, b_2 > 0$ are given by 
\begin{equation}
\label{CNLS-trav}
\begin{split}
2 {\rm i} a q_{1,x} +\frac{1}{2}q_{1,xx}+(|q_{1}|^{2}+|q_{2}|^{2})q_{1}&= 2(a^2+b_1^2) q_1, \\
2 {\rm i} a q_{2,x}+\frac{1}{2}q_{2,xx}+(|q_{1}|^{2}+|q_{2}|^{2})q_{2}&=2(a^2+b_2^2) q_2,
\end{split}
\end{equation}
which are satisfied by $\mathbf{q} =   \mathbf{q}^{[2]}_{non}(\cdot,t;a,b_{1},b_{2};c_{11},c_{22})$ 
for every $t \in \mathbb{R}$, $c_{11}, c_{22} \in \mathbb{C}$. 
Adding a perturbation $\mathbf{p}$ to $\mathbf{q}^{[2]}_{non}$
in $\mathcal{I}_1$ and expanding it near $\mathbf{q}^{[2]}_{non}$ yields 
\begin{equation}
\label{def-L1}
\mathcal{I}_1(\mathbf{q}^{[2]}_{non} + \mathbf{p}) = \mathcal{I}_1(\mathbf{q}^{[2]}_{non}) + (\mathcal{L}_1 
\mathbf{P},  \mathbf{P}) + \mathcal{O}(\| \mathbf{p} \|_{H^1}^3),
\end{equation}
with the same self-adjoint operator $\mathcal{L}_1$ in $L^{2}(\mathbb{R},\mathbb{C}^4)$ and with $\mathbf{P} := (\mathbf{p}^{T}, \mathbf{p}^{\dagger})^{T}$.
    
The method of squared eigenfunctions allows us to construct 
all solutions of the spectral problem \eqref{spectral-JL1}.
Denote the stability spectrum
    \begin{equation*}
        \sigma_{s}(\mathcal{J}\mathcal{L}_{1})=\{
        \Omega\in\mathbb{C} : \quad \mathbf{P}\in L^{\infty}(\mathbb{R},\mathbb{C}^4)
        \}.
    \end{equation*} 
The following theorem gives the spectral stability of the non-degenerate vector solitons by using the squared eigenfunctions.

    \begin{thm}\label{spectral-stability}
        The non-degenerate vector solitons $\mathbf{q}_{non}^{[2]}$ for the CNLS equations \eqref{CNLS} are spectrally stable, as indicated by 
        $$
        \sigma_{s}(\mathcal{J}\mathcal{L}_{1})\subset {\rm i}\mathbb{R}.
        $$ 
        If $\frac{b_{1}}{b_{2}} \leq \frac{1}{\sqrt{2}}$ or $\frac{b_{1}}{b_{2}} \geq \sqrt{2}$, there exist embedded eigenvalues of $\mathcal{J}\mathcal{L}_{1}$
        that are limit points of the essential spectrum of $\mathcal{J}\mathcal{L}_{1}$. For 
        $\frac{1}{\sqrt{2}} < \frac{b_{1}}{b_{2}} < \sqrt{2}$, the closure of the essential spectrum has no intersection with the point spectrum.
    \end{thm}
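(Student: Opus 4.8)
The plan is to exploit the integrability of \eqref{CNLS} and build every solution of the spectral problem \eqref{spectral-JL1} out of the squared eigenfunctions of the Lax pair \eqref{CNLS-lax}. The Galilean symmetry \eqref{speed-CNLS} conjugates the linearized operator at speed $a$ to the one with $a=0$, so it suffices to treat the standing wave \eqref{standing-wave}, whose two components oscillate at frequencies $2b_1^2$ and $2b_2^2$; we take $0<b_2<b_1$. Since $\mathbf{q}^{[2]}_{non}$ is a two-fold Darboux transformation of the zero solution \cite{ling_darboux_2016,qin_nondegenerate_2019}, the fundamental matrix $\mathbf{\Phi}(\lambda;x,t)$ solving \eqref{CNLS-lax} at $\mathbf{q}=\mathbf{q}^{[2]}_{non}$ is explicit, a rational-in-$\lambda$ dressing of the vacuum exponentials. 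Taking off-diagonal products of the columns of $\mathbf{\Phi}(\lambda)$ with the rows of $\mathbf{\Phi}(\lambda)^{-1}$ produces the squared eigenfunctions; they solve \eqref{spectral-JL1} with $\Omega$ equal to one of finitely many quadratic functions, schematically $\Omega_m(\lambda)=\pm 2{\rm i}(\lambda^2+b_m^2)$ for $m=1,2$ corresponding to the two asymptotic channels, and by the closure relation for the CNLS hierarchy \cite{gerdjikov_generating_1981,yang_squared_2009,kaup_inverse_2009} they are complete in $L^2(\mathbb{R},\mathbb{C}^4)$. Therefore $\sigma_s(\mathcal{J}\mathcal{L}_1)$ is precisely the set of values $\Omega_m(\lambda)$ for which the associated squared eigenfunction is bounded on $\mathbb{R}$.

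First I would determine the essential spectrum. The Jost solutions are bounded exactly for $\lambda\in\mathbb{R}$, where $\Omega_m(\lambda)\in{\rm i}\mathbb{R}$ sweeps out the bands; since $\lambda^2\geq 0$ there, this yields
\[
\sigma_{\mathrm{ess}}(\mathcal{J}\mathcal{L}_1)={\rm i}\big((-\infty,-2b_2^2]\cup[2b_2^2,\infty)\big),
\]
which I would also verify directly from the constant-coefficient limit of $\mathcal{L}_1$ as $|x|\to\infty$, where the two components decouple around the zero background and a Fourier transform gives $\Omega=\pm{\rm i}(\tfrac{1}{2} k^2+2b_j^2)$, $k\in\mathbb{R}$, $j=1,2$. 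This is a closed subset of ${\rm i}\mathbb{R}$ with spectral gap ${\rm i}(-2b_2^2,2b_2^2)$, and it accounts for the continuous part of $\sigma_s(\mathcal{J}\mathcal{L}_1)$, which therefore lies on the imaginary axis.

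Next I would extract the point spectrum from the squared eigenfunctions (and their $\lambda$-derivatives) evaluated at the two Darboux eigenvalues $\lambda={\rm i}b_1,\,{\rm i}b_2$, where $\mathbf{\Phi}$ carries bound states. The matched pairings $\Omega_1({\rm i}b_1)$ and $\Omega_2({\rm i}b_2)$ give a finite-dimensional generalized kernel at $\Omega=0$, whose Jordan structure is supplied by the symmetries \eqref{sym-CNLS} together with the remaining internal parameter of the soliton family; the mismatched pairings $\Omega_2({\rm i}b_1)$ and $\Omega_1({\rm i}b_2)$ give the single extra pair $\Omega_*=\pm 2{\rm i}(b_1^2-b_2^2)$ (whose negative Krein signature, noted in the abstract, is immaterial to the present inclusion). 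Every value so obtained is purely imaginary, so together with the essential spectrum this gives $\sigma_s(\mathcal{J}\mathcal{L}_1)\subset{\rm i}\mathbb{R}$, i.e. spectral stability. Finally, $2|b_1^2-b_2^2|\geq 2b_2^2\iff b_1^2\geq 2b_2^2\iff b_1/b_2\geq\sqrt 2$ (and symmetrically $b_1/b_2\leq 1/\sqrt2$ after exchanging the labels), so for $b_1/b_2\geq\sqrt 2$ or $b_1/b_2\leq 1/\sqrt2$ the internal mode $\Omega_*$ lies in $\sigma_{\mathrm{ess}}(\mathcal{J}\mathcal{L}_1)$ and is an embedded eigenvalue that is a limit point of the essential spectrum, whereas for $1/\sqrt2<b_1/b_2<\sqrt 2$ the whole point spectrum $\{0,\pm 2{\rm i}(b_1^2-b_2^2)\}$ sits inside the open gap ${\rm i}(-2b_2^2,2b_2^2)$ and hence is disjoint from the closed set $\sigma_{\mathrm{ess}}(\mathcal{J}\mathcal{L}_1)$.

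The main obstacle will be the explicit squared-eigenfunction bookkeeping together with completeness: one must check that the closure relation genuinely applies to this two-parameter non-degenerate bound state so that no bounded solution of \eqref{spectral-JL1} is missed, identify which residues at $\lambda={\rm i}b_1,{\rm i}b_2$ yield true $L^\infty$ solutions rather than spurious ones, and handle the degenerate configurations in which ${\rm i}b_1$ or ${\rm i}b_2$ collides with another spectral value, where Jordan-chain squared eigenfunctions must be constructed and their boundedness verified. A secondary point is confirming that no further isolated or embedded eigenvalues lie outside $\{0,\pm 2{\rm i}(b_1^2-b_2^2)\}$, which again rests on completeness of the squared eigenfunctions rather than on a separate counting argument.
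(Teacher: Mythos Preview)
Your proposal is correct and follows essentially the same route as the paper: construct the squared eigenfunctions from the Darboux-dressed fundamental matrix, read off the dispersion relations $\Omega_j(\lambda)=2{\rm i}((\lambda-a)^2+b_j^2)$, identify the essential spectrum from $\lambda\in\mathbb{R}$ and the point spectrum $\{0,\pm 2{\rm i}(b_1^2-b_2^2)\}$ from $\lambda\in\{\lambda_1,\lambda_1^*,\lambda_2,\lambda_2^*\}$, and compare $|b_1^2-b_2^2|$ with $\min\{b_1^2,b_2^2\}$ to decide embeddedness. The one methodological difference is your appeal to the closure relation for completeness: the paper instead observes that \eqref{spectral-JL1} is an eighth-order ODE in $x$, so the eight squared eigenfunctions obtained by inverting each $\Omega_j$ already form a fundamental system (Lemma~\ref{Lem-sol-to-JL1}), which is a lighter and more self-contained argument than invoking the full $L^2$ closure machinery.
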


\begin{rem}
The result of Theorem \ref{spectral-stability} agrees with the theory based on the dimension of the negative subspace 
$$
\mathcal{N}_1 := \{ \mathbf{P} \in H^1(\mathbb{R},\mathbb{C}^4) : \quad (\mathcal{L}_1 \mathbf{P}, \mathbf{P}) < 0 \}, 
$$
see \cite[Section 3]{pelinovsky_instabilities_2005}. Eigenfunctions for embedded eigenvalues were found explicitly, see equation (36) in \cite{pelinovsky_instabilities_2005}, and these eigenfunctions attain the negative values of $(\mathcal{L}_1 \mathbf{P},  \mathbf{P})$. If $\frac{1}{\sqrt{2}} < \frac{b_{1}}{b_{2}} < \sqrt{2}$, the same eigenvalues with negative values of $(\mathcal{L}_1 \mathbf{P},  \mathbf{P})$ are isolated from the essential spectrum of $\mathcal{J}\mathcal{L}_{1}$. Such eigenvalues on $i \mathbb{R}$ with negative values of $(\mathcal{L}_1 \mathbf{P},  \mathbf{P})$ are referred to as eigenvalues of negative Krein signature.
\end{rem}
    
It is natural to ask whether the non-degenerate vector solitons 
remain stable under nonlinear perturbation in $H^1(\mathbb{R},\mathbb{C}^4)$. 
However, the nonlinear orbital stability theory from     \cite{grillakis_stability_1987,grillakis_stability_1990} does not 
hold for such vector solitons with multi-humped profiles because 
the negative subspace of $\mathcal{N}_1$ restricted 
at the tangent space of fixed $H_0^{(1)}$, $H_0^{(2)}$, and $H_1$ 
is two-dimensional, see equation (41) in \cite{pelinovsky_instabilities_2005}.

As a result, we provide a second variational characterization 
of the non-degenerate vector solitons by using the higher-order 
conserved quantities $H_3$ and $H_4$ in (\ref{con-03}) and (\ref{con-04}). However, the second characterization 
does not distinguish between the non-degenerate vector solitons and 
the breather solutions, to which they are particular cases \cite{silberberg1995rotating}. The nonlinear orbit of the breather solutions is constructed by using the scattering parameters in the Darboux transformation as in the following definition.

    \begin{definition}
    	\label{def-breather}
Fix the scattering parameters $(c_{11},c_{21},c_{12},c_{22})\in \mathbb{C}^{4}$ and the spectral parameters $a\in \mathbb{R}$, $b_{1},b_{2}>0$ such that $b_{1}\ne b_{2}$. The breather solutions of CNLS equations \eqref{CNLS} are given by 
    \begin{equation}\label{breather}
\mathbf{q}^{[2]}(x,t;a,b_{1},b_{2};c_{11},c_{22},c_{12},c_{21}) := 4 T(a) \frac{N(x,t)}{M(x,t)},
\end{equation}
where
\begin{align*}
N(x,t) &:= \left[ 
                {\rm e}^{2{\rm i} b_{1}^{2} t}
                b_{1}
            \left( |\mathbf{c}_2|^2 {\rm e}^{2b_{2}x}+\frac{b_{1}-b_{2}}{b_{1}+b_{2}}{\rm e}^{-2b_{2}x}\right)
-{\rm e}^{2{\rm i} b_{2}^{2} t} 
                \frac{2b_{1}b_{2}}{b_{1}+b_{2}} \mathbf{c}_1^{\dag}\mathbf{c}_2 {\rm e}^{2b_{1}x}
\right] \mathbf{c}_1
            \\
            & \qquad 
                +\left[
                    {\rm e}^{2{\rm i} b_{2}^{2}t}
                    b_{2}
                \left( |\mathbf{c}_1|^2 {\rm e}^{2b_{1}x}+\frac{b_{2}-b_{1}}{b_{1}+b_{2}}{\rm e}^{-2b_{1}x}\right)
                -{\rm e}^{2{\rm i} b_{1}^{2} t} 
\frac{2b_{1}b_{2}}{b_{1}+b_{2}} \mathbf{c}_2^{\dag}\mathbf{c}_1 {\rm e}^{2b_{2}x}           \right] 
 \mathbf{c}_2,
\end{align*}
and
\begin{align*}
M(x,t) &:= \frac{(b_{1}-b_{2})^{2}}{(b_{1}+b_{2})^{2}}{\rm e}^{-2 (b_{1}+b_{2}) x} + |\mathbf{c}_1|^2 {\rm e}^{2 (b_{1}-b_{2}) x} + |\mathbf{c}_2|^2 {\rm e}^{-2 (b_{1}-b_{2}) x} \\
& \qquad 
        +\left(
       |\mathbf{c}_{1}|^{2}|\mathbf{c}_{2}|^{2}
                    -\frac{4b_{1}b_{2}}{(b_{1}+b_{2})^{2}}\mathbf{c}_{1}^{\dagger}\mathbf{c}_{2}\mathbf{c}_{2}^{\dagger}\mathbf{c}_{1} \right)
        {\rm e}^{2 (b_{1}+b_{2}) x} -\frac{8b_{1}b_{2}}{(b_{1}+b_{2})^{2}}\mathrm{Re}\left(
            \mathbf{c}_1^{\dag}\mathbf{c}_2 {\rm e}^{2{\rm i}(-b_{1}^{2}+b_{2}^{2})t}
        \right),
\end{align*}
for $\mathbf{c}_{i}=(c_{1i}, c_{2i})^{T}$, $i=1,2$.
Non-degenerate vector soliton solutions \eqref{non-sol-1} are given by the breather solutions \eqref{breather} with $c_{12}=c_{21}=0$.  
    \end{definition}

    \begin{figure}[htbp]
        \centering
        \includegraphics[scale=0.19]{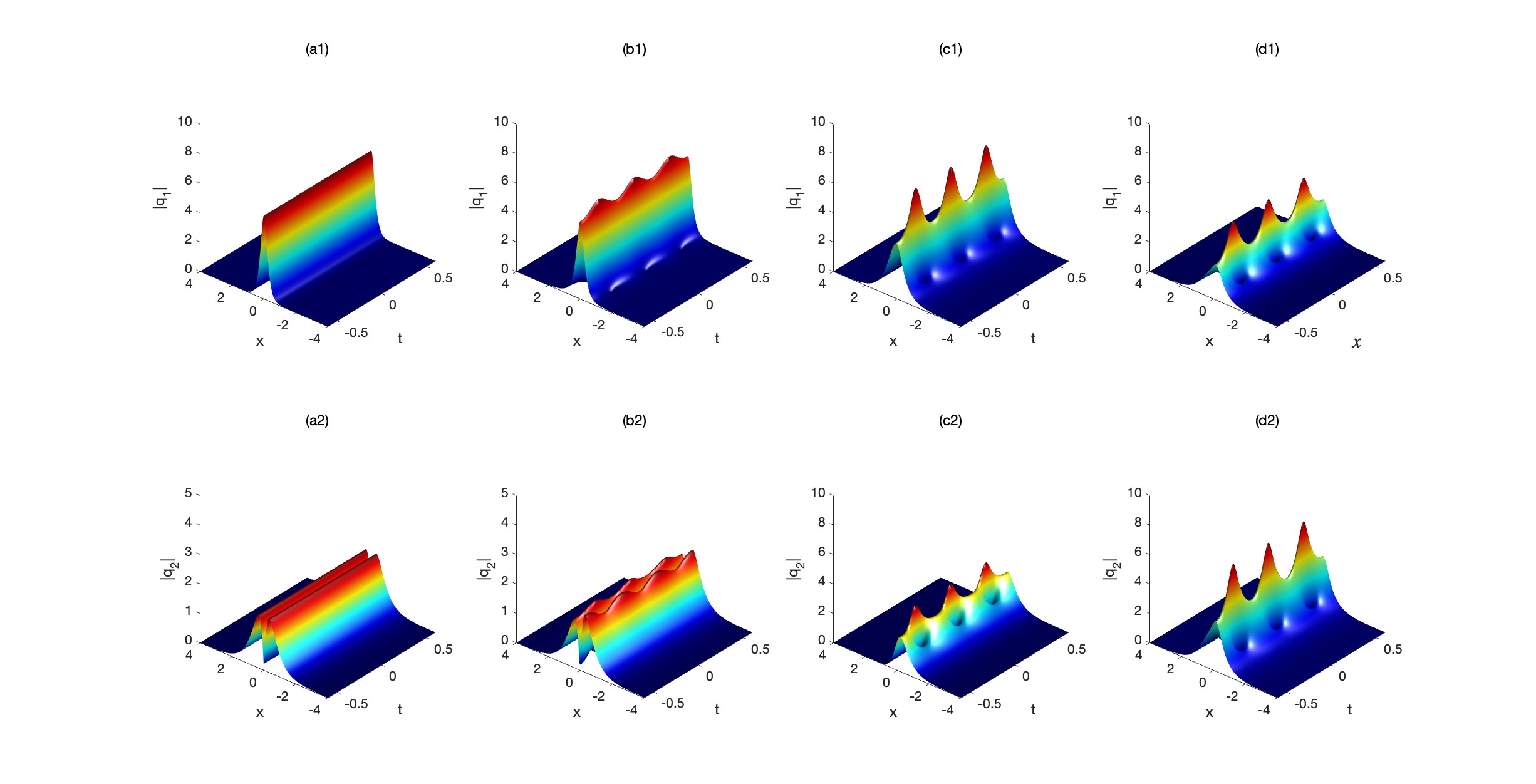}
        \caption{Examples of the breather solutions (\ref{breather}) with 
        	$a = 0$, $b_{1}=3$, $b_{2}=1$, $c_{11}=c_{22}=\frac{\sqrt{2}}{2}$, and 
        (a1, a2) $c_{12}=c_{21}=0$,  
        (b1, b2) $c_{12}=c_{21}=0.1$, 
        (c1, c2) $c_{12}=c_{21}=0.5$, 
        (d1, d2) $c_{12}=0.5$, $c_{21}=1$. 
        Top (bottom) panels show the norm of the first (second) component.}
        \label{Example-breather}
        \end{figure}

Examples of the profiles of breather solutions of Definition \ref{def-breather} are shown in Figure \ref{Example-breather}. Panels (a1,a2) display the time-independent dynamics of a non-degenerate vector soliton since $c_{12} = c_{21} = 0$, where other panels display the time-periodic dynamics of the breather solutions with different nonzero values of $c_{12}$, $c_{21}$.

The following theorem presents the main result of this paper on the nonlinear stability of the breather solutions of Definition \ref{def-breather}.

    \begin{thm}\label{nonlinear-stability}
        The breather solutions \eqref{breather} are nonlinearly stable in the following sense. For any initial 
        condition $\mathbf{u}_{0} \in H^{2}(\mathbb{R},\mathbb{C}^2)$, denote the global solution of the CNLS equations \eqref{CNLS} by  $\mathbf{u}$.
        Given any positive 
        constant $\epsilon>0$, there exist $\delta>0$ such that if 
        \begin{equation*}
            \|\mathbf{u}_{0} -\mathbf{q}^{[2]}(\cdot,0;a,b_1,b_2;c_{11}(0),c_{22}(0),c_{12}(0),c_{21}(0))\|_{H^{2}}< \delta, 
        \end{equation*}
        for some breather solutions $\mathbf{q}^{[2]}$ with parameters $a \in \mathbb{R}$, $b_1,b_2 >0$ such that $b_1 \neq b_2$, and 
        $c_{ij}(0) \in \mathbb{C}$, $1 \leq i,j \leq 2$,  
        then there exist $c_{ij}(t)\in C^1(\mathbb{R}, \mathbb{C}), 1\leq i,j \leq 2$ such that 
        \begin{equation*}
            \|\mathbf{u}(\cdot,t)-\mathbf{q}^{[2]}(\cdot,t;a,b_1,b_2;c_{11}(t),c_{22}(t),c_{12}(t),c_{21}(t))\|_{H^{2}}< \epsilon
        \end{equation*}
        for any $t\in\mathbb{R}$.  The rate of change of $c_{ij}(t)$ are controlled by
        \begin{equation}\label{derivative-cij}
            \sum_{i,j}|\partial_{t}c_{ij}(t)|\leq C \epsilon
        \end{equation} 
        for all $t\in \mathbb{R}$ and some constant $C$. 
    \end{thm}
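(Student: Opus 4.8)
The plan is to use the Lyapunov method with a higher-order conserved quantity, in the spirit of \cite{Kap07, MS93, Al1, Al2}. The starting point is the second variational characterization of the breather orbit mentioned before Definition \ref{def-breather}. I would construct a functional of the form $\mathcal{I}_2 = H_4 + \text{(lower-order corrections involving } H_3, H_2, H_1, H_0^{(1)}, H_0^{(2)})$ whose Euler--Lagrange equation is satisfied by the full four-parameter family $\mathbf{q}^{[2]}(\cdot,t; a,b_1,b_2; c_{11}(t),c_{22}(t),c_{12}(t),c_{21}(t))$, where the scattering parameters $c_{ij}(t)$ evolve according to the simple linear dynamics dictated by the Darboux/inverse scattering flow (each $c_{ij}$ picks up a time-dependent phase $\mathrm{e}^{2\mathrm{i} b_i^2 t}$ or similar). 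The coefficients in $\mathcal{I}_2$ are chosen as polynomials in $a, b_1^2, b_2^2$ so that $\mathbf{q}^{[2]}$ is a critical point of $\mathcal{I}_2$ on the level set of the lower conserved quantities. This requires checking that the breather family sits inside the manifold of critical points — I would verify this either directly from the explicit formulas in Definition \ref{def-breather}, or, more conceptually, via the trace-formula expansion of the monodromy matrix evaluated at the discrete eigenvalues $\pm b_1, \pm b_2$, since the breather is precisely a $4$-soliton-type potential with those eigenvalues.

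Next I would establish coercivity of the second variation. Write $\mathbf{q} = \mathbf{q}^{[2]} + \mathbf{v}$ and expand $\mathcal{I}_2(\mathbf{q}^{[2]} + \mathbf{v}) = \mathcal{I}_2(\mathbf{q}^{[2]}) + (\mathcal{L}_2 \mathbf{V}, \mathbf{V}) + \mathcal{O}(\|\mathbf{v}\|_{H^2}^3)$, where $\mathcal{L}_2$ is self-adjoint on $L^2(\mathbb{R},\mathbb{C}^4)$ with domain $H^4$, and $\mathbf{V} = (\mathbf{v}^T, \mathbf{v}^\dagger)^T$. The key point is that $\mathcal{L}_2$ and $\mathcal{J}\mathcal{L}_1$ should have a common set of squared-eigenfunction solutions: because both operators arise from the same Lax pair, $\mathcal{L}_2$ is (up to a spectral multiplier) a polynomial in $\mathcal{J}\mathcal{L}_1$-type operators acting on the squared eigenfunctions. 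Using the completeness of the squared eigenfunctions established for Theorem \ref{spectral-stability} and the closure relation, I would show that $\mathcal{L}_2 \geq 0$ on the whole space, with kernel exactly spanned by the four generalized tangent vectors $\partial_{c_{ij}} \mathbf{q}^{[2]}$ (equivalently the directions generated by the symmetries and the breather internal parameters), and that $\mathcal{L}_2$ is strictly positive, i.e. $(\mathcal{L}_2 \mathbf{V},\mathbf{V}) \geq \kappa \|\mathbf{v}\|_{H^2}^2$, on the $H^2$-orthogonal complement of that kernel. The negative Krein-signature eigenvalues of $\mathcal{J}\mathcal{L}_1$ that obstructed the first characterization are mapped to \emph{positive} eigenvalues of $\mathcal{L}_2$ precisely because the spectral multiplier relating the two changes sign there — this is the whole reason for passing to $H_4$.

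Then comes the standard orbital-stability bootstrap: given $\mathbf{u}_0$ close to $\mathbf{q}^{[2]}(\cdot,0)$ in $H^2$, global well-posedness in $H^2$ (from \cite{cazenave_introduction_1989}) gives a solution $\mathbf{u}(\cdot,t)$, and by an implicit-function-theorem argument I choose $c_{ij}(t)$ modulation parameters so that $\mathbf{v}(\cdot,t) := \mathbf{u}(\cdot,t) - \mathbf{q}^{[2]}(\cdot,t; \ldots, c_{ij}(t),\ldots)$ stays symplectically orthogonal to $\ker \mathcal{L}_2$; the $c_{ij}(t)$ are $C^1$ by the IFT and satisfy the modulation ODEs, which combined with coercivity give $\sum_{i,j}|\partial_t c_{ij}(t)| \leq C\epsilon$. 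Conservation of $\mathcal{I}_2$ (and of the lower-order quantities, used to control the motion of the level set under modulation) together with the coercivity estimate and the cubic remainder bound then yields $\|\mathbf{v}(\cdot,t)\|_{H^2}^2 \lesssim \|\mathbf{v}(\cdot,0)\|_{H^2}^2 + \|\mathbf{v}\|_{H^2}^3$ uniformly in $t$, hence $\|\mathbf{v}(\cdot,t)\|_{H^2} < \epsilon$ for $\delta$ small, closing the argument. The main obstacle I anticipate is the coercivity step: one must rule out any negative directions of $\mathcal{L}_2$ transverse to its kernel, and although the squared-eigenfunction calculus reduces this to checking the sign of an explicit spectral multiplier along the continuous spectrum and at the (finitely many) discrete eigenvalues, verifying that no new negative eigenvalue or zero-energy resonance of $\mathcal{L}_2$ appears — and that the $H^2$ (rather than $L^2$) coercivity genuinely holds, controlling two derivatives — will require careful use of the closure relation and possibly a separate argument near the band edges of the essential spectrum where the embedded eigenvalues accumulate when $b_1/b_2 \notin (1/\sqrt 2,\sqrt 2)$.
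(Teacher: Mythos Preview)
Your overall strategy --- pass to a higher Lyapunov functional $\mathcal{I}_2$ built from $H_4$ and lower conserved quantities, show the breather family gives critical points via the trace formula, modulate over the scattering parameters $c_{ij}$ by the implicit function theorem, and close with conservation plus coercivity --- matches the paper's. But the coercivity step contains a genuine gap.

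You assert that $\mathcal{L}_2 \ge 0$ on all of $L^2$, with kernel spanned by the four tangent vectors $\partial_{c_{ij}}\mathbf{q}^{[2]}$, and that the negative Krein directions of $\mathcal{J}\mathcal{L}_1$ are flipped to positive directions of $\mathcal{L}_2$. In fact this is false: the squared-eigenfunction/closure-relation computation the paper carries out shows that $\mathcal{L}_2$ has exactly \emph{two} negative eigenvalues (down from four for $\mathcal{L}_1$, but not zero), and the kernel is \emph{eight}-dimensional, since the four $c_{ij}$ are complex and contribute eight real tangent directions. The negative subspace comes from the quadratic form $(\mathcal{L}_2\cdot,\cdot)$ restricted to $\mathrm{gKer}(\mathcal{J}\mathcal{L}_2)\setminus\mathrm{Ker}(\mathcal{J}\mathcal{L}_2)$, which is computed via asymptotics of the squared-eigenfunction matrices and yields a $4\times 4$ Hermitian matrix with exactly two negative eigenvalues. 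So coercivity on $(\mathrm{Ker}\,\mathcal{L}_2)^\perp$ alone fails.

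The missing ingredient is a constraint step of Grillakis--Shatah--Strauss type. One introduces three additional conserved functionals $\mathcal{Q}_\sigma = \sum_n \partial_\sigma \mu_n\, H_n$ for $\sigma\in\{a,b_1,b_2\}$ (derivatives of $\mathcal{I}_2$ with respect to the spectral parameters), and restricts $\mathcal{L}_2$ to the codimension-three subspace orthogonal to $\delta\mathcal{Q}_\sigma/\delta\mathbf{q}$. The count $\mathrm{n}(\mathcal{L}_2\mathcal{P}) = \mathrm{n}(\mathcal{L}_2) - \mathrm{p}(D)$, where $D$ is the explicit $3\times 3$ Hessian $\bigl(\partial_{\sigma\tau}\mathcal{I}_2 - \sum_n \partial_{\sigma\tau}\mu_n H_n\bigr)$, then gives $\mathrm{n}(\mathcal{L}_2\mathcal{P})=2-2=0$. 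Only \emph{after} this reduction does one obtain coercivity (first in $L^2$, then upgraded to $H^2$), and in the final bootstrap one must also approximate $\mathbf{u}_n(\cdot,t_n)$ by $\mathbf{v}_n$ lying exactly on the common level set of the $\mathcal{Q}_\sigma$ to land in the coercive subspace. Without this constraint mechanism your argument cannot close.
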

    
    \begin{rem}
    Compared to the standard applications of the Lyapunov method, it is not sufficient to use the translation and phase symmetries \eqref{sym-CNLS} of the CNLS equations \eqref{CNLS}, as well as the rotational symmetry \eqref{rot-CNLS}, for the proof of the nonlinear stability of the non-degenerate vector solitons because the second variation of the corresponding Lyapunov functional is not coercive in the tangent plane related to these symmetries. The choice of the scattering parameters $c_{11},c_{22},c_{12},c_{21}$, which can be viewed as generalizations of symmetries in the Darboux transformation \cite{koch2024multisolitons}, addresses this issue. When the four complex-valued scattering parameters are perturbed, the non-degenerate vector solitons are not only translated according to the symmetries \eqref{sym-CNLS} and \eqref{rot-CNLS} but also transform into more general breather solutions. Therefore, it is natural to consider the nonlinear stability of the set of breather solutions. Evidences that breather solutions rather than the non-degenerate vector solitons are more appropriate objects in the construction of the nonlinearly stable orbits can be seen from various soliton interaction scenarios in a related complex short-pulse equation \cite{Prinari,Ling}.
    \end{rem}

\subsection{Main steps of the proof} 

The CNLS equations \eqref{CNLS} are given by the second flow of the Hamiltonian equations
  \begin{equation}
  \label{Ham-str}
  \mathbf{q}_{t} = \frac{1}{2} \mathcal{J} \frac{\delta H_{2}}{\delta \mathbf{q}},
  \end{equation}
where $\mathcal{J}=-{\rm i}$ is a skew-adjoint operator. The Hamiltonian formulation \eqref{Ham-str} is used in the construction of the first variational characterization of the non-degenerate vector solitons, as in (\ref{Lyp-first}). However, since this is not sufficient for the proof of their nonlinear stability, we proceed with the second variational characterization, which extends to the entire family of breather solutions of Definition \ref{def-breather}. We define the following Lyapunov functional
    \begin{align}
  \mathcal{I}_{2} &:= H_{4} - 8aH_{3} +
  4(6a^{2}+b_{1}^{2}+b_{2}^{2})H_{2} - 16 a(2a^{2}+b_{1}^{2}+b_{2}^{2})H_{1} 
  \notag \\
  \label{Ly-non}
& \qquad + 16(a^{2}+b_{1}^{2})(a^{2}+b_{2}^{2}) H_{0}.
    \end{align}
We show in Appendix \ref{A.ISM} from the trace formula that  $\mathbf{q}^{[2]}(x,t;a,b_{1},b_{2};c_{11},c_{22},c_{12},c_{21})$ 
is a critical point of $\mathcal{I}_2$.

For the solution 
    $\mathbf{u} = \mathbf{u}(x,t)$ in Theorem \ref{nonlinear-stability}, the continuity and conservation 
    of the Lyapunov functional lead to
    \begin{equation*}
        \mathcal{I}_{2}(\mathbf{u}(\cdot,t))-\mathcal{I}_{2}(\mathbf{q}^{[2]}(\cdot,t))=
        \mathcal{I}_{2}(\mathbf{u}_{0})-\mathcal{I}_{2}(\mathbf{q}^{[2]}(\cdot,0)) \leq C
        \|\mathbf{u}_{0} - \mathbf{q}^{[2]}(\cdot,0)\|_{H^{2}},
    \end{equation*}
for some $C > 0$. On the other hand, for the perturbation  $\mathbf{p}$ 
added to $\mathbf{q}^{[2]}$, we can expand $\mathcal{I}_2$ near 
$\mathbf{q}^{[2]}$ and obtain 
\begin{equation}
\label{L2}
\mathcal{I}_2(\mathbf{q}^{[2]} + \mathbf{p}) = \mathcal{I}_2(\mathbf{q}^{[2]}) + \langle \mathcal{L}_2 
\mathbf{P},  \mathbf{P} \rangle_{L^2} + \mathcal{O}(\| \mathbf{p} \|_{H^2}^3),
\end{equation}
defined by a self-adjoint operator $\mathcal{L}_2$ in $L^{2}(\mathbb{R},\mathbb{C}^4)$ acting on $\mathbf{P} := (\mathbf{p}^{T}, \mathbf{p}^{\dagger})^{T}$.

If $\mathcal{L}_2$ were coercive in $H^{2}(\mathbb{R},\mathbb{C}^4)$,  then 
we could obtain the stability of the non-degenerate vector solitons.  
However, this is not the case. The self-adjoint operator $\mathcal{L}_{2}$ admits a nontrivial kernel of dimension {\em eight} and a negative subspace of dimension {\em two}. Introducing the nonlinear orbit is a method to overcome this difficulty and obtain coercivity at the tangent plane defined by appropriate orthogonality conditions. 
    
To characterize the spectrum of $\mathcal{L}_{2}$, it suffices to study eigenfunctions of the operator $\mathcal{J}\mathcal{L}_{2}$. Using the higher flow in Hamiltonian equations, the spectrum of $\mathcal{J}\mathcal{L}_{2}$ can be determined by the method of squared eigenfunctions. The closure relation can then be used to calculate the inner product $(\mathcal{L}_{2}\cdot,\cdot)$ through these squared eigenfunctions. This approach allows us to determine the number of negative eigenvalues and the kernel of $\mathcal{L}_{2}$. 

The spectrum of  $\mathcal{J}\mathcal{L}_{2}$ is very similar to the spectrum of 
 $\mathcal{J}\mathcal{L}_{1}$ in Theorem \ref{spectral-stability} but includes no additional eigenvalues on $i \mathbb{R}$ (neither isolated nor embedded), see Theorem \ref{thm-J-L2} in Section \ref{JL2-L2}. Similarly, the spectrum of $\mathcal{L}_2$ contains {\em two} negative eigenvalues, see Theorem \ref{spectrum-L2-1}, compared to the spectrum of $\mathcal{L}_1$ which contains {\em four} negative eigenvalues, see Remark \ref{rem-spectrum-L1}.
    
The main difficulty in the proof of nonlinear stability of breather solutions lies in evaluating the inner product $(\mathcal{L}_{2}\cdot,\cdot)$ on the subspace 
$$
\mathrm{gKer}(\mathcal{J}\mathcal{L}_{2}) \backslash \mathrm{Ker}(\mathcal{J}\mathcal{L}_{2}) 
$$ 
in $L^2(\mathbb{R},\mathbb{C}^4)$, which corresponds to the zero eigenvalue of $\mathcal{J}\mathcal{L}_{2}$ with high algebraic multiplicity. By defining the skew-symmetric differential form
    \begin{equation}\label{diff-form}
        \omega(\mathbf{h},\mathbf{f})=(\mathbf{h}^{\dagger}\mathcal{J}\mathbf{f})\mathrm{d}x ,\quad 
        \mathbf{f}\in\mathrm{Ker}(\mathcal{J}\mathcal{L}_{2}),\,\, 
    \mathbf{h}\in \mathrm{gKer}(\mathcal{J}\mathcal{L}_{2})\backslash 
    \mathrm{Ker}(\mathcal{J}\mathcal{L}_{2})
    \end{equation}
    and transforming the inner product $(\mathcal{L}_{2}\cdot,\cdot)$ to the integral of the 
    differential form $\omega(\mathbf{h},\mathbf{f})$, we transform the 
   corresponding inner product $(\mathcal{L}_{2}\cdot,\cdot)$ into the inner product between squared eigenfunctions and adjoint squared eigenfunctions. 
    We present a new method to calculate the inner product between squared eigenfunctions  and adjoint squared eigenfunctions, which relies on the 
properties of integrability rather than on the explicit expressions for eigenfunctions.

    In our approach, the inner product between squared eigenfunctions and adjoint squared eigenfunctions can be determined by analyzing the behavior of squared eigenfunction matrices 
    at infinity. The connection arises from the symmetry of the potential 
    in the Lax pair \eqref{CNLS-lax} and the differential equations 
    satisfied by the squared eigenfunction matrices.

The kernel $\mathrm{Ker}(\mathcal{L}_{2})$ can be obtained 
    through derivatives of the four complex scattering parameters 
    $(c_{11},c_{12},c_{21},c_{22})$ in the orbit of breather solutions (\ref{breather}). Moreover, functions corresponding to the negative subspace
    $$
\mathcal{N}_2 :=    \{ \mathbf{P} \in H^2(\mathbb{R},\mathbb{C}^4) : \quad (\mathcal{L}_{2} \mathbf{P}, \mathbf{P}) < 0 \}
    $$ 
    can be derived from variations of conserved quantities 
    that are linear combinations of conservation laws. 
    By considering modulation and constructing the reduced Hamiltonian 
    \cite{grillakis_stability_1987,grillakis_stability_1990}, we derive the nonlinear stability result of Theorem \ref{nonlinear-stability}.

\subsection{Notations}

Let us consider the real Hilbert space 
    \begin{equation*}
        \mathrm{X}=\left\{
            \begin{pmatrix}
                u_{1} &
                u_{2} &
                u_{1}^{*} &
                u_{2}^{*} 
            \end{pmatrix}^{T}: \quad u_{1},u_{2}\in L^{2}(\mathbb{R},\mathbb{C})
        \right\}\subset L^{2}(\mathbb{R},\mathbb{C}^{4}), 
    \end{equation*}
    equipped with the inner product 
    \begin{equation*}
        (\mathbf{f},\mathbf{g})=\mathrm{Re}\int_{\mathbb{R}}\mathbf{f}^{\dagger}\mathbf{g}
        \mathrm{d}x. 
    \end{equation*}
    The decomposition for real Hilbert space $L^2(\mathbb{R},\mathbb{C}^4) = \mathrm{X}\oplus {\rm i}\mathrm{X}$ holds. 
    
Notation $\cdot^{*}$ represents the complex conjugate and $\cdot^{\dagger}$ represents the operator adjoint or conjugate and transpose for matrices. 
For example, 
$$
({\rm i}\partial_{x})^{*}=(-{\rm i})\partial_{x} \quad 
\mbox{\rm and} \quad  
({\rm i}\partial_{x})^{\dagger}=(-\partial_{x})(-{\rm i})={\rm i}\partial_{x}.
$$ 
    Then we can view any vector $\mathbf{u}\in L^{2}(\mathbb{R},\mathbb{C}^2)$ in the space $\mathrm{X}$ through the 
    bijection
    \begin{equation*}
        \iota (\mathbf{u})=
        \begin{pmatrix}
            \mathbf{u}\\
            \mathbf{u}^{*}
        \end{pmatrix}\in \mathrm{X}.
    \end{equation*}
    The bijection $\iota$ is also an isomorphism between the Hilbert space $L^{2}(\mathbb{R},\mathbb{C}^{2})$ and $\mathrm{X}$
    since 
    \begin{equation*}
        (\iota (\mathbf{u}),\iota (\mathbf{v}))=2 (\mathbf{u},\mathbf{v}). 
    \end{equation*}
    Then any operator $\mathcal{A}$ acting on $L^{2}(\mathbb{R},\mathbb{C}^2)$ can naturally be extended on
    $\mathrm{X}$:
    \begin{equation*}
        \mathcal{A}'\begin{pmatrix}
            \mathbf{u} \\
            \mathbf{u}^{*}
        \end{pmatrix}=\iota \mathcal{A}(\mathbf{u})=
        \begin{pmatrix}
            \mathcal{A}\mathbf{u} \\
            (\mathcal{A}\mathbf{u})^{*}
        \end{pmatrix}.
    \end{equation*}
For a functional \(\mathcal{K}(\mathbf{q})\) on $L^{2}(\mathbb{R},\mathbb{C}^2)$, the first variation $\frac{\delta \mathcal{K}}{\delta \mathbf{q}}(\mathbf{q})$ is given by
        \begin{equation*}
            \left(\mathbf{v}, \frac{\delta \mathcal{K}}{\delta \mathbf{q}}(\mathbf{q})\right)=
            \lim_{\epsilon\to 0}\frac{\mathcal{K}(\mathbf{q}+\epsilon \mathbf{v})-\mathcal{K}(\mathbf{q})}{\epsilon}
        \end{equation*}
        for $\mathbf{v}\in L^{2}(\mathbb{R},\mathbb{C}^2)$. 
The second variation is given by 
\begin{equation*}
\frac{\delta^{2} \mathcal{K}}{\delta^{2} \mathbf{q}}(\mathbf{q})[\mathbf{v}]=
\lim_{\epsilon\to 0}\frac{\frac{\delta \mathcal{K}}{\delta \mathbf{q}}(\mathbf{q}+\epsilon \mathbf{v})-\frac{\delta \mathcal{K}}{\delta \mathbf{q}}(\mathbf{q})}{\epsilon}. 
\end{equation*}

    In the following, we abuse the notation $\mathcal{A}$ to represent $\mathcal{A}'$ without ambiguity 
    and sometimes we mean 
    $\mathbf{u}$ be $\iota(\mathbf{u})$ in $\mathrm{X}$. 
    The reason why we consider $\mathrm{X}$ instead of $L^{2}(\mathbb{R},\mathbb{C}^2)$ is that it makes the expression 
    and calculation of functions more convenient compared to the separation of
    the real and imaginary parts of functions. 
    The functions in $\mathrm{X}$ can be obtained by the functions
    $(\mathbf{q}^{T}, \mathbf{r}^{T})^{T} \in L^{2}(\mathbb{R},\mathbb{C}^4)$ with symmetries 
    $\mathbf{r}=\mathbf{q}^{*}$ which are the symmetries of potentials in the Lax pair (\ref{CNLS-lax}). Hence 
    it is natural to consider the space $\mathrm{X}$ in the dynamics of the CNLS equations (\ref{CNLS}).
    
    It is important to note that both the space $\mathrm{X}$ and $L^{2}(\mathbb{R},\mathbb{C}^{4})$
    can be interpreted as complex linear spaces. When we examine the eigenspace of an operator, 
    we consider complex linear combinations of eigenfunctions. 

\subsection{Main contributions}

Our main contributions can be listed as follows.

    \begin{itemize}
        \item The spectral stability for non-degenerate vector solitons is obtained by using the method of squared eigenfunctions. The spectral problem of the linearized operator $\mathcal{J}\mathcal{L}_{1}$ is solved by squared eigenfunctions and the spectrum of the linearized operator $\mathcal{J}\mathcal{L}_{1}$ is obtained with the dimension of the negative subspace $\mathcal{N}_1$. The non-degenerate vector solitons are spectrally stable. In addition, the linearized operator $\mathcal{J}\mathcal{L}_{1}$ admits either embedded or isolated eigenvalues of negative Krein signature. \\
            
        \item We connect the inner product $(\mathcal{L}_{2}\cdot,\cdot)$ for 
              eigenfunctions in the generalized kernel of the operator $\mathcal{J}\mathcal{L}_{2}$ with 
              the integral in the skew-symmetric differential form. This integral can be computed from the 
              squared eigenfunction matrix due to integrability of the CNLS equations (\ref{CNLS}). The dimension of the negative 
              subspace $\mathcal{N}_2$ is calculated.  Our method can be readily extended to the studies of stability of solitons or breathers in other integrable equations by using the integrability tools.\\
              
        \item The nonlinear stability of the breather solutions in the CNLS equations (\ref{CNLS}) is derived for the first time. The nonlinear 
        stability of the non-degenerate vector solitons with the spectral parameters $a, b_1, b_2$ holds when they are 
        included in the orbit of the breather solutions of Definition \ref{def-breather} given by the scattering parameters $c_{11},  c_{22}, c_{12}, c_{21}$ in the relevant Darboux transformation. 
    \end{itemize}

\subsection{Outline}

This paper is organized as follows. 

\begin{itemize}
\item In Section 2, we construct the
non-degenerate vector soliton and breather solution by using 
Darboux transformation. We also introduce squared eigenfunctions and the squared eigenfunction matrix. The Lyapunov functional and the properties of its second variation are analyzed.  \\

\item In Section 3, we investigate the spectral stability of non-degenerate vector soliton solutions. The proof of Theorem \ref{spectral-stability} relies on analyzing the spectrum of the linearized operator $\mathcal{J}\mathcal{L}_{1}$ constructed by using squared eigenfunctions. \\

\item In Section 4, we construct the squared eigenfunctions in higher flow to prove that the squared eigenfunctions satisfy the spectral problem for $\mathcal{J}\mathcal{L}_{2}$. Subsequently, 
the closure relation induces the spectrum of $\mathcal{L}_{2}$. \\

\item In Section 5, we prove the nonlinear stability of breather solutions in Theorem \ref{nonlinear-stability}.
\end{itemize}

\section{Darboux transformation and breathers}

We introduce the Darboux transformation for constructing breather solutions and their squared eigenfunctions. We also 
compute values of the conserved quantities at the breather solutions. 

The $N$-fold Darboux transformation is applied to the Lax pair \eqref{CNLS-lax}--\eqref{U-V-lax}
in order to obtain the same Lax pair but with a new potential that gives a new 
solution to the coupled NLS equation (\ref{CNLS}). Compared 
to the Lax spectrum of the original solution, the Lax spectrum 
of the new solution contains $N$ additional isolated eigenvalues. Since the zero solution is inherently a solution for the CNLS equations \eqref{CNLS}, applying the $N$-fold Darboux transformation yields $N$-soliton solutions of the 
CNLS equations \eqref{CNLS}.

The trace formulas (see Appendix \ref{A.ISM}) suggest that $N$-soliton solutions satisfy ordinary differential equations (ODEs). The Lyapunov functional $\mathcal{I}_2$ in (\ref{Ly-non}) is derived from the corresponding system of two fourth-order ODEs as a linear combination of the conservation laws, each of which remains independent of time.

\subsection{Construction of breathers}

Denote for any matrix $\mathbf{M}=(m_{ij})_{1\leq i,j\leq N}$, 
\begin{equation*}
(\mathbf{M})^{off}=\begin{pmatrix}
0 & m_{12} & m_{13}\\
m_{21}& 0 & 0\\
m_{31}& 0 & 0\\
\end{pmatrix}.
\end{equation*}
To get general $N$-soliton solutions for CNLS equations \eqref{CNLS}, we 
pick $N$ distinct spectral parameters $\{\lambda_{i}\}_{i=1}^{N}\subset 
\mathbb{C}^{+}=\{z\in\mathbb{C}:\mathrm{Im}(z)>0\}$ such that $\lambda_i \neq \lambda_j$ for every $i \neq j$ and use the following $N$-fold Darboux transformation.

\begin{prop}\cite{ling_darboux_2015, ling_darboux_2016}
	\label{thm-DT}
	Let $\mathbf{\Phi}^{[0]}$ be a fundamental solution and $\mathbf{Q}^{[0]}$ be the corresponding potential for the Lax pair \eqref{CNLS-lax}--\eqref{U-V-lax}. Define the $N$-fold Darboux matrix
	\begin{equation}
	\label{Darboux}
	\mathbf{D}^{[N]}(\lambda;x,t)=\mathbb{I}_{3}-\sum_{i=1}^{N}
	\frac{\lambda_{i}-\lambda_{i}^{*}}{\lambda-\lambda_{i}^{*}}
	|\mathbf{x}_i\rangle \langle \mathbf{y}_i|.
	\end{equation}
Then, 
	\begin{equation*}
	\mathbf{\Phi}^{[N]}=\mathbf{D}^{[N]}\mathbf{\Phi}^{[0]}
	\end{equation*}
is a fundamental solution of the Lax pair \eqref{CNLS-lax}--\eqref{U-V-lax} 
with the new potential given by
	\begin{equation}\label{Ba}
	\mathbf{Q}^{[N]}=\mathbf{Q}^{[0]}+2\sigma_{3}\sum_{i=1}^{N}
	(\lambda_{i}-\lambda_{i}^{*})
	(|\mathbf{x}_i\rangle \langle \mathbf{y}_i|)^{off}, 
	\end{equation}
	where $|\mathbf{x}_i\rangle=(x_{1,i},x_{2,i},x_{3,i})^{T}$, 
	$|\mathbf{y}_i\rangle=(y_{1,i},y_{2,i},y_{3,i})^{T}$, $\langle x_i|=(|x_i\rangle)^{\dag}$ and $\langle y_i|=(|y_i\rangle)^{\dag}$. The vectors $|\mathbf{x}_i\rangle$ and 
	$|\mathbf{y}_i\rangle$ are related by 
	\begin{equation}
	(|\mathbf{y}_1\rangle, |\mathbf{y}_2\rangle,\cdots, |\mathbf{y}_N\rangle)
	=(|\mathbf{x}_1\rangle, |\mathbf{x}_2\rangle,\cdots, |\mathbf{x}_N\rangle)\mathbf{M}, \quad 
	\mathbf{M}=\left(
	\frac{\lambda_{i}-\lambda_{i}^{*}}{\lambda_{j}-\lambda_{i}^{*}}\langle \mathbf{y}_i|\mathbf{y}_j\rangle
	\right)_{1\leq i,j\leq N}.
	\label{matrix-M}
	\end{equation}
\end{prop}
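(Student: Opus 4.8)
The plan is to use the classical dressing argument. Set $\mathbf{U}^{[0]}(\lambda):=\mathbf{U}(\lambda,\mathbf{Q}^{[0]})$ and $\mathbf{V}^{[0]}(\lambda):=\mathbf{V}(\lambda,\mathbf{Q}^{[0]})$. Differentiating $\mathbf{\Phi}^{[N]}=\mathbf{D}^{[N]}\mathbf{\Phi}^{[0]}$ and using $\mathbf{\Phi}^{[0]}_x=\mathbf{U}^{[0]}(\lambda)\mathbf{\Phi}^{[0]}$, $\mathbf{\Phi}^{[0]}_t=\mathbf{V}^{[0]}(\lambda)\mathbf{\Phi}^{[0]}$, one finds that $\mathbf{\Phi}^{[N]}$ solves $\mathbf{\Phi}^{[N]}_x=\mathbf{U}^{[N]}(\lambda)\mathbf{\Phi}^{[N]}$ and $\mathbf{\Phi}^{[N]}_t=\mathbf{V}^{[N]}(\lambda)\mathbf{\Phi}^{[N]}$ with $\mathbf{U}^{[N]}(\lambda):=(\mathbf{D}^{[N]}_x+\mathbf{D}^{[N]}\mathbf{U}^{[0]}(\lambda))(\mathbf{D}^{[N]})^{-1}$ and $\mathbf{V}^{[N]}(\lambda):=(\mathbf{D}^{[N]}_t+\mathbf{D}^{[N]}\mathbf{V}^{[0]}(\lambda))(\mathbf{D}^{[N]})^{-1}$. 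The whole task is then to show that $\mathbf{U}^{[N]}(\lambda)={\rm i}\lambda\sigma_3+{\rm i}\mathbf{Q}^{[N]}$ and $\mathbf{V}^{[N]}(\lambda)={\rm i}\lambda^{2}\sigma_3+{\rm i}\lambda\mathbf{Q}^{[N]}-\frac{1}{2}({\rm i}\sigma_3(\mathbf{Q}^{[N]})^{2}-\sigma_3\mathbf{Q}^{[N]}_x)$ with $\mathbf{Q}^{[N]}$ given by \eqref{Ba}; since $\mathbf{V}^{[N]}$ is treated exactly as $\mathbf{U}^{[N]}$ (only the polynomial degree in $\lambda$ changes), I describe the $x$-part. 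Recall that in \eqref{Darboux} the vectors are built from Lax solutions: $|\mathbf{y}_i\rangle=\mathbf{\Phi}^{[0]}(\lambda_i;x,t)\mathbf{v}_i$ solves \eqref{CNLS-lax}--\eqref{U-V-lax} at $\lambda=\lambda_i$ for fixed $\mathbf{v}_i\in\mathbb{C}^{3}\setminus\{0\}$, and $|\mathbf{x}_i\rangle$ are then fixed by $\mathbf{X}=\mathbf{Y}\mathbf{M}^{-1}$ with $\mathbf{X}=(|\mathbf{x}_1\rangle,\dots,|\mathbf{x}_N\rangle)$, $\mathbf{Y}=(|\mathbf{y}_1\rangle,\dots,|\mathbf{y}_N\rangle)$ and $\mathbf{M}$ as in \eqref{matrix-M}, which is assumed invertible (this holds for the soliton and breather data used below).

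I would first collect three structural facts about $\mathbf{D}^{[N]}$: (i) $\mathbf{D}^{[N]}(\lambda_i;x,t)|\mathbf{y}_i\rangle=0$, which is immediate from the relation $\mathbf{Y}=\mathbf{X}\mathbf{M}$; (ii) the involution $\mathbf{D}^{[N]}(\lambda)^{-1}=\mathbf{D}^{[N]}(\lambda^{*})^{\dagger}$; and (iii) the expansion $\mathbf{D}^{[N]}(\lambda)=\mathbb{I}_{3}-\lambda^{-1}\Pi+\mathcal{O}(\lambda^{-2})$ as $\lambda\to\infty$, where $\Pi:=\sum_{i=1}^{N}(\lambda_i-\lambda_i^{*})|\mathbf{x}_i\rangle\langle\mathbf{y}_i|$. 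I would also record the symmetries $\mathbf{U}^{[0]}(\lambda)^{\dagger}=-\mathbf{U}^{[0]}(\lambda^{*})$ and $\mathbf{V}^{[0]}(\lambda)^{\dagger}=-\mathbf{V}^{[0]}(\lambda^{*})$, which follow from $(\mathbf{Q}^{[0]})^{\dagger}=\mathbf{Q}^{[0]}$, $[\sigma_3,(\mathbf{Q}^{[0]})^{2}]=0$, and $\sigma_3\mathbf{Q}^{[0]}+\mathbf{Q}^{[0]}\sigma_3=0$; together with $\partial_x|\mathbf{y}_i\rangle=\mathbf{U}^{[0]}(\lambda_i)|\mathbf{y}_i\rangle$ these give the adjoint relation $\partial_x\langle\mathbf{y}_i|=-\langle\mathbf{y}_i|\mathbf{U}^{[0]}(\lambda_i^{*})$.

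The core step is to prove that $\mathbf{U}^{[N]}(\lambda)$ is a polynomial in $\lambda$. A priori it is rational with at most simple poles at each $\lambda_i$ (from $(\mathbf{D}^{[N]})^{-1}$) and each $\lambda_i^{*}$ (from $\mathbf{D}^{[N]}$ and $\mathbf{D}^{[N]}_x$). At $\lambda=\lambda_i$, the residue of $(\mathbf{D}^{[N]})^{-1}$ is rank one with range $\mathrm{span}\,|\mathbf{y}_i\rangle$ (by (i) and one-dimensionality of the kernel), so the residue of $\mathbf{U}^{[N]}$ is a multiple of $(\mathbf{D}^{[N]}_x(\lambda_i)+\mathbf{D}^{[N]}(\lambda_i)\mathbf{U}^{[0]}(\lambda_i))|\mathbf{y}_i\rangle$, which vanishes upon differentiating $\mathbf{D}^{[N]}(\lambda_i)|\mathbf{y}_i\rangle\equiv 0$ in $x$ and inserting $\partial_x|\mathbf{y}_i\rangle=\mathbf{U}^{[0]}(\lambda_i)|\mathbf{y}_i\rangle$. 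At $\lambda=\lambda_i^{*}$, the adjoint relation reduces the residue to a multiple of $(\partial_x|\mathbf{x}_i\rangle)\langle\mathbf{y}_i|(\mathbf{D}^{[N]})^{-1}(\lambda_i^{*})$, and by (ii) combined with (i), $\langle\mathbf{y}_i|(\mathbf{D}^{[N]})^{-1}(\lambda_i^{*})=\langle\mathbf{y}_i|\mathbf{D}^{[N]}(\lambda_i)^{\dagger}=(\mathbf{D}^{[N]}(\lambda_i)|\mathbf{y}_i\rangle)^{\dagger}=0$. Hence $\mathbf{U}^{[N]}$ is entire, and because $(\mathbf{D}^{[N]})^{-1}$ is bounded at infinity, $\mathbf{U}^{[N]}(\lambda)$ is affine in $\lambda$. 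Matching against (iii), the leading coefficient is ${\rm i}\sigma_3$ and the constant term is ${\rm i}\mathbf{Q}^{[0]}+{\rm i}[\sigma_3,\Pi]$; using the elementary identity $[\sigma_3,\mathbf{A}]=2\sigma_3\mathbf{A}^{off}$, valid for every $3\times 3$ matrix $\mathbf{A}$, the constant term becomes ${\rm i}(\mathbf{Q}^{[0]}+2\sigma_3\Pi^{off})={\rm i}\mathbf{Q}^{[N]}$ with $\mathbf{Q}^{[N]}$ exactly as in \eqref{Ba}. In particular $\mathbf{Q}^{[N]}$ retains the off-diagonal block structure, and the involution (ii) propagates the symmetry $\mathbf{U}^{[N]}(\lambda)^{\dagger}=-\mathbf{U}^{[N]}(\lambda^{*})$, i.e. $(\mathbf{Q}^{[N]})^{\dagger}=\mathbf{Q}^{[N]}$, so that $\mathbf{r}^{[N]}=(\mathbf{q}^{[N]})^{*}$ and $\mathbf{Q}^{[N]}$ is an admissible potential. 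The $t$-equation is handled identically: $\mathbf{V}^{[N]}(\lambda)$ is a quadratic polynomial whose coefficients of $\lambda^{2}$, $\lambda^{1}$, $\lambda^{0}$ match ${\rm i}\sigma_3$, ${\rm i}\mathbf{Q}^{[N]}$, and $-\frac{1}{2}({\rm i}\sigma_3(\mathbf{Q}^{[N]})^{2}-\sigma_3\mathbf{Q}^{[N]}_x)$, respectively.

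The step I expect to be the main obstacle is the involution (ii): verifying $\mathbf{D}^{[N]}(\lambda)^{-1}=\mathbf{D}^{[N]}(\lambda^{*})^{\dagger}$ is precisely what the normalization matrix $\mathbf{M}$ in \eqref{matrix-M} is engineered to guarantee. Writing $\mathbf{M}=\mathrm{diag}(\lambda_i-\lambda_i^{*})\,\widetilde{\mathbf{M}}$ with $\widetilde{M}_{ij}=\langle\mathbf{y}_i|\mathbf{y}_j\rangle/(\lambda_j-\lambda_i^{*})$, one checks that $\widetilde{\mathbf{M}}^{\dagger}=-\widetilde{\mathbf{M}}$; expanding the product $\mathbf{D}^{[N]}(\lambda)\mathbf{D}^{[N]}(\lambda^{*})^{\dagger}$, the cross terms form a double sum of rank-one pieces which is resolved using the partial-fraction identity $\frac{\lambda_j-\lambda_i^{*}}{(\lambda-\lambda_i^{*})(\lambda-\lambda_j)}=\frac{1}{\lambda-\lambda_j}-\frac{1}{\lambda-\lambda_i^{*}}$ together with $\mathbf{Y}=\mathbf{X}\mathbf{M}$ and the anti-Hermitian symmetry of $\widetilde{\mathbf{M}}$, after which the linear terms cancel and only $\mathbb{I}_{3}$ survives. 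This is the one genuinely computational ingredient; the pole cancellations and the asymptotic matching above are then routine. One should also verify at the outset that $\det\mathbf{M}(x,t)\neq 0$, so that $\mathbf{D}^{[N]}$ is well defined and generically has a one-dimensional kernel at each $\lambda_i$, which holds for the soliton and breather data constructed in the sequel.
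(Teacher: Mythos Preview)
The paper does not give its own proof of this proposition: it is quoted from \cite{ling_darboux_2015, ling_darboux_2016} and used as a black box. Your dressing/Liouville argument is the standard route in that literature and is correct as written: the kernel condition (i) and the involution (ii) kill the potential simple poles of $\mathbf{U}^{[N]}(\lambda)$ at $\lambda_i$ and $\lambda_i^{*}$, entirety plus the large-$\lambda$ expansion (iii) then force $\mathbf{U}^{[N]}(\lambda)={\rm i}\lambda\sigma_3+{\rm i}(\mathbf{Q}^{[0]}+[\sigma_3,\Pi])$, and the block identity $[\sigma_3,\mathbf{A}]=2\sigma_3\mathbf{A}^{off}$ gives exactly \eqref{Ba}. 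Your identification of (ii) as the one genuinely computational step, resolved via the anti-Hermitian structure of $\widetilde{\mathbf{M}}$ and the partial-fraction identity, is accurate; this is precisely why $\mathbf{M}$ is defined as in \eqref{matrix-M}. The only caveats worth stating explicitly are that $\lambda_i\in\mathbb{C}^{+}$ ensures $\lambda_i^{*}\neq\lambda_j$ so that $(\mathbf{D}^{[N]})^{-1}(\lambda_i^{*})$ is finite, and that invertibility of $\mathbf{M}$ (equivalently, one-dimensionality of $\ker\mathbf{D}^{[N]}(\lambda_i)$) is needed for the residue argument at $\lambda_i$; you already flag both.
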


Applying Proposition \ref{thm-DT} with $\mathbf{Q}^{[0]} = {\bf 0}$, 
$N = 2$, and $\mathbf{\Phi}^{[0]}={\rm e}^{{\rm i}\lambda\sigma_{3}(x+\lambda t)}$, we define two spectral parameters $\lambda_{1},\lambda_{2}$ 
such that $\lambda_1 \neq \lambda_2$ and the vector
\begin{equation*}
|\mathbf{y}_i\rangle=\mathbf{\Phi}^{[0]}(\lambda_{i};x,t)c^{[i]}
={\rm e}^{{\rm i}\lambda_{i}\sigma_{3}(x+\lambda_{i} t)}\begin{pmatrix}
1 \\ \ii c_{1i} \\ \ii c_{2i}
\end{pmatrix},\quad i=1,2,
\end{equation*}
where $(c_{ji})_{1 \leq i,j \leq 2}$ is a matrix of complex scattering parameters. The transformation (\ref{Ba}) produces the $2$-soliton solutions 
of the CNLS equations \eqref{CNLS}. 
To get breather solutions of Definition \ref{def-breather}, we take the spectral parameters in the form $\lambda_1 = a+{\rm i} b_1$ and $\lambda_2 = a+{\rm i} b_2$ with $b_1 \neq b_2$. Matrix $\mathbf{M}$ 
in (\ref{matrix-M}) is given explicitly as 
\begin{equation*}
\mathbf{M}=\begin{pmatrix}
{\rm e}^{2\mathrm{Re}\theta(\lambda_{1})}+d_{11}{\rm e}^{-2\mathrm{Re}\theta(\lambda_{1})} & 
\frac{2b_{1}}{b_{1}+b_{2}}({\rm e}^{\theta(\lambda_{1})^{*}+\theta(\lambda_{2})}
+d_{12}{\rm e}^{-\theta(\lambda_{1})^{*}-\theta(\lambda_{2})}) \\
\frac{2b_{2}}{b_{1}+b_{2}}({\rm e}^{\theta(\lambda_{1})+\theta(\lambda_{2})^{*}}
+d_{21}{\rm e}^{-\theta(\lambda_{1})-\theta(\lambda_{2})^{*}}) & 
{\rm e}^{2\mathrm{Re}\theta(\lambda_{2})}+d_{22}{\rm e}^{-2\mathrm{Re}\theta(\lambda_{2})}
\end{pmatrix}
\end{equation*}
where
    \begin{equation*}
        d_{ij}=\mathbf{c}_{i}^{\dagger}\mathbf{c}_{j},
    \end{equation*}
and 
\begin{align*}
\theta(\lambda_{i}) &= i \lambda_i (x+\lambda_i t) \\
&= -b_{i}(x+2at)+{\rm i}\left(a(x+2at)-(a^{2}+b_{i}^{2})t\right) \\
&= -b_{i}\xi+{\rm i}\left(a\xi-(a^{2}+b_{i}^{2})t\right)
\end{align*}
with $\xi=x+2at$. Then, $M := \det(\mathbf{M})$ is given in Definition \ref{def-breather}. Denote $\mathbf{M}=(m_{ij})_{1\leq i,j\leq 2}$. 
The Darboux matrix $\mathbf{D}^{[2]}$ in (\ref{Darboux}) is obtained in the form
\begin{equation}
\begin{split}
\mathbf{D}^{[2]} &=\mathbb{I}_{3}  -\frac{1}{M}\\
& \times \left(
\frac{2{\rm i}b_{1}\left(m_{22}|\mathbf{y}_1\rangle \langle \mathbf{y}_1|
	-m_{21}|\mathbf{y}_2\rangle \langle \mathbf{y}_1|\right)}{\lambda-a+{\rm i}b_{1}}
+\frac{2{\rm i}b_{2}    \left(-m_{12}|\mathbf{y}_1\rangle \langle \mathbf{y}_2|
	+m_{11}|\mathbf{y}_2\rangle \langle \mathbf{y}_2|\right)}{\lambda-a+{\rm i}b_{2}}
\right). 
\end{split}
\label{DT-breather}
\end{equation}
Since 
\begin{equation}\label{asy-yiyj}
|\mathbf{y}_i\rangle \langle \mathbf{y}_j|=
{\rm e}^{{\rm i}\left(a\xi-(a^{2}+b_{i}^{2})t\right) \sigma_{3}}
\begin{pmatrix}
{\rm e}^{-(b_{i}+b_{j})\xi} & -\ii c_{1j}^{*}{\rm e}^{(b_{i}-b_{j})\xi} & -\ii c_{2j}^{*}{\rm e}^{(b_{i}-b_{j})\xi} \\
\ii c_{1i}{\rm e}^{-(b_{i}-b_{j})\xi} & c_{1i}c_{1j}^{*}{\rm e}^{(b_{i}+b_{j})\xi} & c_{1i}c_{2j}^{*}{\rm e}^{(b_{i}+b_{j})\xi} \\
\ii c_{2i}{\rm e}^{-(b_{i}-b_{j})\xi} & c_{2i}c_{1j}^{*}{\rm e}^{(b_{i}+b_{j})\xi} & c_{2i}c_{2j}^{*} {\rm e}^{(b_{i}+b_{j})\xi}
\end{pmatrix}
{\rm e}^{-{\rm i}\left(a\xi-(a^{2}+b_{j}^{2})t\right) \sigma_{3}},
\end{equation}
the transformation \eqref{Ba} yields the breather solutions given by \eqref{breather}. Additionally, 
by setting \(c_{12} = c_{21} = 0\), we obtain the non-degenerate vector soliton solutions 
given by \eqref{non-sol-1}.

The fundamental matrix solution (FMS) of the Lax pair (\ref{CNLS-lax}) with the new potential $\mathbf{Q}^{[2]}$ is defined by  $\mathbf{\Phi}^{[2]}=\mathbf{D}^{[2]}\mathbf{\Phi}^{[0]}$. 
The asymptotic behavior of \(\mathbf{\Phi}^{[2]}\) is particularly useful
in Section \ref{JL2-L2}. It follows from the explicit expression for 
$\mathbf{M}$ that 
\begin{equation*}
\mathbf{M}\sim 
\begin{pmatrix}
d_{11}{\rm e}^{2b_{1}\xi} & 
\frac{2b_{1}}{b_{1}+b_{2}}d_{12}{\rm e}^{(b_{1}+b_{2})\xi-{\rm i}(b_{1}^{2}-b_{2}^{2})t} \\
\frac{2b_{2}}{b_{1}+b_{2}}d_{21}{\rm e}^{(b_{1}+b_{2})\xi+{\rm i}(b_{1}^{2}-b_{2}^{2})t} & 
d_{22}{\rm e}^{2b_{2}\xi}
\end{pmatrix},\quad x\to +\infty 
\end{equation*}
and
\begin{equation*}
\mathbf{M}\sim 
\begin{pmatrix}
{\rm e}^{-2b_{1}\xi} & 
\frac{2b_{1}}{b_{1}+b_{2}}{\rm e}^{-(b_{1}+b_{2})\xi+{\rm i}(b_{1}^{2}-b_{2}^{2})t} \\
\frac{2b_{2}}{b_{1}+b_{2}}{\rm e}^{-(b_{1}+b_{2})\xi-{\rm i}(b_{1}^{2}-b_{2}^{2})t} & 
{\rm e}^{-2b_{1}\xi}
\end{pmatrix},\quad x\to -\infty,
\end{equation*}
which implies that the determinant $M = \det(\mathbf{M})$ satisfies
\begin{equation}
\label{asy-M}
M  \sim \left\{ \begin{array}{ll} M^{+} {\rm e}^{2(b_{1}+b_{2})\xi}, \quad & x\to +\infty \\
M^{-} {\rm e}^{-2(b_{1}+b_{2})\xi} ,\quad & x\to -\infty, 
\end{array} \right.
\end{equation}
where 
\begin{equation*}
M^{+}=d_{11}d_{22}-\frac{4b_{1}b_{2}}{(b_{1}+b_{2})^{2}}d_{12}d_{21}, \quad 
M^{-}=\frac{(b_{1}-b_{2})^{2}}{(b_{1}+b_{2})^{2}}. 
\end{equation*}
Combing \eqref{asy-yiyj} and \eqref{asy-M}  with \eqref{DT-breather}, we define 
\begin{equation*}
    \begin{split}
    D_{i1}=d_{11}c_{i2}-\frac{2 b_{1}}{b_{1}+b_{2}}d_{12}c_{i1}, \quad
    D_{i2}=d_{22}c_{i1}-\frac{2 b_{2}}{b_{1}+b_{2}}d_{21}c_{i2}
    \end{split}
\end{equation*}
and 
\begin{equation*}
l_{ij}^{+}(\lambda)=2{\rm i}b_{1}D_{i2}c_{j1}^{*}(\lambda-\lambda_{2}^{*})
+2{\rm i}b_{2}D_{i1}c_{j2}^{*}(\lambda-\lambda_{1}^{*}),
\end{equation*}
which gives the following asymptotic representation of the 
Darboux matrix 
\begin{equation*}
\mathbf{D}^{[2]} \sim \mathbf{D}^{[2]}_{\pm \infty} \quad \mbox{\rm as} \;\; 
x \to \pm \infty,
\end{equation*} 
where
\begin{equation*} 
\mathbf{D}^{[2]}_{+\infty} =
\begin{pmatrix}
1 & 0 & 0\\
0 & 1-\frac{l_{11}^{+}(\lambda)}{M^{+}(\lambda-\lambda_{1}^{*})(\lambda-\lambda_{2}^{*})} & -\frac{l_{12}^{+}(\lambda)}{M^{+}(\lambda-\lambda_{1}^{*})(\lambda-\lambda_{2}^{*})} \\
0 & -\frac{l_{21}^{+}(\lambda)}{M^{+}(\lambda-\lambda_{1}^{*})(\lambda-\lambda_{2}^{*})} &1-\frac{l_{22}^{+}(\lambda)}{M^{+}(\lambda-\lambda_{1}^{*})(\lambda-\lambda_{2}^{*})}
\end{pmatrix}
\end{equation*}
and
\begin{equation*}
\mathbf{D}^{[2]}_{-\infty} =
\begin{pmatrix}
\frac{(\lambda-\lambda_{1})(\lambda-\lambda_{2})}{(\lambda-\lambda_{1}^{*})(\lambda-\lambda_{2}^{*})} & 0 & 0\\
0 & 1 & 0 \\
0 & 0 & 1
\end{pmatrix}. 
\end{equation*}
Let us define two FMSs $\mathbf{\Phi}^{\pm}$ of the Lax pair \eqref{CNLS-lax}--\eqref{U-V-lax} for the breather solutions by  
\begin{equation*}
\mathbf{\Phi}^{+}=\mathbf{\Phi}^{[2]}(\mathbf{D}^{[2]}_{+\infty})^{-1},\quad 
\mathbf{\Phi}^{-}=\mathbf{\Phi}^{[2]}(\mathbf{D}^{[2]}_{-\infty})^{-1}.
\end{equation*}
The two FMSs satisfy
\begin{equation*}
\mathbf{\Phi}^{\pm}\sim {\rm e}^{{\rm i}\lambda (x+\lambda t)\sigma_{3}}, \quad 
x\to \pm \infty 
\end{equation*}
and appear to be important in the inverse scattering transform 
(see Appendix \ref{A.ISM}). 
The transfer matrix between the two FMSs is given by
\begin{align}
\mathbf{S}(\lambda) &= \mathbf{D}^{[2]}_{+\infty}(\mathbf{D}^{[2]}_{-\infty})^{-1} \notag \\
&=\begin{pmatrix}
\frac{(\lambda-\lambda_{1}^{*})(\lambda-\lambda_{2}^{*})}{(\lambda-\lambda_{1})(\lambda-\lambda_{2})} & 0 & 0\\
0 & 1-\frac{l_{11}^{+}(\lambda)}{M^{+}(\lambda-\lambda_{1}^{*})(\lambda-\lambda_{2}^{*})} & \frac{l_{12}^{+}(\lambda)}{M^{+}(\lambda-\lambda_{1}^{*})(\lambda-\lambda_{2}^{*})} \\
0 & \frac{l_{21}^{+}(\lambda)}{M^{+}(\lambda-\lambda_{1}^{*})(\lambda-\lambda_{2}^{*})} &1-\frac{l_{22}^{+}(\lambda)}{M^{+}(\lambda-\lambda_{1}^{*})(\lambda-\lambda_{2}^{*})}
\end{pmatrix}. 
\label{S-matrix-non-2}
\end{align}
The (1,1) element of the transfer matrix \(\mathbf{S}(\lambda)\) is useful for calculating the closure relation of squared eigenfunctions.
 
\subsection{Construction of squared eigenfunctions}

Recall that the FMS matrix $\mathbf{\Phi}=(\phi_{ij})_{1\leq i,j \leq 3}$ satisfies the spectral problem 
\begin{equation}\label{spectral-problem}
\mathbf{\Phi}_{x}(\lambda;x,t)=\mathbf{U}(\lambda,\mathbf{q})\mathbf{\Phi}(\lambda;x,t), 
\end{equation}
with the symmetry $\mathbf{Q}=\mathbf{Q}^{\dagger}$. It follows from  Appendix \ref{A.ISM} that the inverse matrix 
$\mathbf{\Phi}^{-1}(\lambda)=\mathbf{\Phi}^{\dagger}(\lambda^{*})=(\hat{\phi}_{ij})_{1\leq i,j \leq 3}$ satisfies 
\begin{equation}\label{spectral-problem-inverse}
(\mathbf{\Phi}^{-1})_{x}(\lambda;x,t)=-(\mathbf{\Phi}^{-1})(\lambda;x,t)\mathbf{U}(\lambda,\mathbf{q}). 
\end{equation}
The squared eigenfunctions of the Lax pair \eqref{CNLS-lax}--\eqref{U-V-lax} are constructed from $\mathbf{\Phi}(\lambda)$ and $\mathbf{\Phi}^{\dagger}(\lambda^{*})$. 
Let us denote the $i$-th column vector of matrix $\mathbf{\Phi}$ by $(\mathbf{\Phi})_{i}$  and its $i$-th row vector by 
$(\mathbf{\Phi})^{i}$. The squared eigenfunctions are defined by 
\begin{equation}\label{squared-eigenfunctions}
\begin{split}
&s_{1}(\mathbf{\Phi})=
\begin{pmatrix}
\phi_{21}\hat{\phi}_{12} &
\phi_{31}\hat{\phi}_{12} &
-\phi_{11}\hat{\phi}_{22}&
-\phi_{11}\hat{\phi}_{32}
\end{pmatrix}^{T}, \\
&s_{2}(\mathbf{\Phi})=
\begin{pmatrix}
\phi_{21}\hat{\phi}_{13} &
\phi_{31}\hat{\phi}_{13} &
-\phi_{11}\hat{\phi}_{23}&
-\phi_{11}\hat{\phi}_{33}
\end{pmatrix}^{T},\\
&s_{-1}(\mathbf{\Phi})=\begin{pmatrix}
\phi_{22}\hat{\phi}_{11} &
\phi_{32}\hat{\phi}_{11} &
-\phi_{12}\hat{\phi}_{21} &
-\phi_{12}\hat{\phi}_{31}
\end{pmatrix}^{T}, \\
&s_{-2}(\mathbf{\Phi})=\begin{pmatrix}
\phi_{23}\hat{\phi}_{11} &
\phi_{33}\hat{\phi}_{11} &
-\phi_{13}\hat{\phi}_{21} &
-\phi_{13}\hat{\phi}_{31}
\end{pmatrix}^{T}
\end{split}
\end{equation}
and the squared eigenfunction matrices are defined by 
\begin{equation}\label{squared-eigenfunction-matrix}
\begin{split}
&p_{1}(\mathbf{\Phi})(\lambda)=(\mathbf{\Phi}(\lambda))_{1}(\mathbf{\Phi}^{\dagger}(\lambda^{*}))^{2}, \\
&p_{2}(\mathbf{\Phi})(\lambda)=(\mathbf{\Phi}(\lambda))_{1}(\mathbf{\Phi}^{\dagger}(\lambda^{*}))^{3}, \\
&p_{-1}(\mathbf{\Phi})(\lambda)=(\mathbf{\Phi}(\lambda))_{2}(\mathbf{\Phi}^{\dagger}(\lambda^{*}))^{1}, \\
&p_{-2}(\mathbf{\Phi})(\lambda)=(\mathbf{\Phi}(\lambda))_{3}(\mathbf{\Phi}^{\dagger}(\lambda^{*}))^{1}.
\end{split}
\end{equation}
The elements of the squared eigenfunctions \(s_{i}(\mathbf{\Phi})\) 
are found in the off-diagonal entries of the matrix \(p_{i}(\mathbf{\Phi})\) since we have for $i=1,2$, 
\begin{align*}
p_{i}(\mathbf{\Phi}) &= \begin{pmatrix}
\star & \phi_{11}\hat{\phi}_{2,i+1} &\phi_{11}\hat{\phi}_{3,i+1} \\
\phi_{21}\hat{\phi}_{1,i+1} & \star & \star \\
\phi_{31}\hat{\phi}_{1,i+1} & \star & \star
\end{pmatrix}, \\
p_{-i}(\mathbf{\Phi}) &= \begin{pmatrix}
\star & \phi_{1,i+1}\hat{\phi}_{21} &\phi_{1,i+1}\hat{\phi}_{21} \\
\phi_{2,i+1}\hat{\phi}_{11} & \star & \star \\
\phi_{3,i+1}\hat{\phi}_{11} & \star & \star 
\end{pmatrix}.
\end{align*}
The symmetry condition \(\mathbf{Q} = \mathbf{Q}^{\dagger}\) for the potential induces symmetry between different squared eigenfunction matrices
\begin{equation}
\label{symmetric-squared-eigenfunction-matrix}
p_{i}(\mathbf{\Phi})(\lambda)=p_{-i}(\mathbf{\Phi})^{\dagger}(\lambda^{*}), \quad 
i = 1,2.
\end{equation}
Hence the squared eigenfunctions satisfy the following symmetry,
\begin{equation}\label{symmetric-squared-eigenfunctions}
s_{i}(\mathbf{\Phi})(\lambda)=-\mathbf{\Sigma} (s_{-i}(\mathbf{\Phi})(\lambda^{*}))^{*}, \qquad 
\mathbf{\Sigma}=\begin{pmatrix}
\mathbf{0}_{2\times 2} & \mathbb{I}_{2}\\
\mathbb{I}_{2} & \mathbf{0}_{2\times 2}
\end{pmatrix}. 
\end{equation}
Now we consider the differential equations satisfied by the squared eigenfunction matrix. According to equations \eqref{spectral-problem} and \eqref{spectral-problem-inverse} and symmetry \eqref{symmetric-squared-eigenfunction-matrix}, 
the squared eigenfunction matrix \(p_{i}(\mathbf{\Phi})(\lambda)\)
satisfies for \(i \in \{\pm 1, \pm 2\}\), 
\begin{equation}\label{diff-L}
\mathbf{F}_{x}(\lambda)=[\mathbf{U}(\lambda),\mathbf{F}(\lambda)],
{\quad \mathbf{G}_{x}(\lambda)=-[\mathbf{G}(\lambda),\mathbf{U}(\lambda)]. }
\end{equation}
The differential equations (\ref{diff-L}) are useful 
for calculating the orthogonality conditions between squared eigenfunctions and adjoint squared eigenfunctions in Section 4.4.

For different solutions, we need to choose different FMS \(\mathbf{\Phi}\) to 
construct squared eigenfunctions. We require that \(\mathbf{\Phi}\) has no singularities 
at the point spectrum of the Lax pair and that the squared eigenfunctions constructed 
from \(\mathbf{\Phi}\) are non-zero at the point spectrum. The exact forms of the squared 
eigenfunctions for non-degenerate vector soliton solutions and breather solutions will be examined in Sections \ref{sec-3} and \ref{JL2-L2}.

\subsection{Conserved energies for Lyapunov functional $\mathcal{I}_2$}

The Lyapunov functional $\mathcal{I}_2$ given by (\ref{Ly-non}) is time-independent since it is a linear combination of the conserved quantities 
(\ref{con-0}), (\ref{con-02}), (\ref{con-03}), (\ref{con-04}), and (\ref{con-95}). It follows from Appendix \ref{A.ISM} that 
the conserved quantities for breather solutions can be expressed explicitly 
in terms of the spectral parameters $\lambda_1, \lambda_2\in \mathbb{C}$:
\begin{equation*}
H_{n-1}=\frac{2^{n}}{n}
\mathrm{Im} \left(\lambda_{1}^{n}+\lambda_{2}^{n}\right),\quad n\geq 1.
\end{equation*}
Variations of the conserved quantities with respect to \(\mathbf{q}\) are given by 
\begin{equation*}
    \frac{\delta H_{n-1}}{\delta \mathbf{q}}= -2^{n-1}{\rm i}\left(
        \lambda_{1}^{n-1}\frac{\delta \lambda_{1}}{\delta \mathbf{q}}-
        (\lambda_{1}^{*})^{n-1}\frac{\delta \lambda_{1}^{*}}{\delta \mathbf{q}}+
        \lambda_{2}^{n-1}\frac{\delta \lambda_{2}}{\delta \mathbf{q}}-
        (\lambda_{2}^{*})^{n-1}\frac{\delta \lambda_{2}^{*}}{\delta \mathbf{q}}
    \right)
\end{equation*}
and involve \(\frac{\delta \lambda_{i}}{\delta \mathbf{q}}\) and 
\(\frac{\delta \lambda_{i}^{*}}{\delta \mathbf{q}}\), with coefficients that are polynomials 
in \(\lambda_{i}\) and \(\lambda_{i}^{*}\). Consequently, a special linear 
combination of these variations can vanish. The following proposition follows from a more general Proposition \ref{prop-app-A} with $N = 2$ proven in Appendix \ref{app-A-1}. 

\begin{prop}
	Define the polynomial
	\begin{equation*}
	\mathcal{P}_{2}(\lambda)=(\lambda-\lambda_{1})(\lambda-\lambda_{1}^{*})
	(\lambda-\lambda_{2})(\lambda-\lambda_{2}^{*})=\sum_{n=0}^{4}2^{n-4}\mu_{n}
	\lambda^{n}
	\end{equation*}
	where $\mu_{n}$ are the elementary symmetric polynomials. 
	The breather solutions correspond to the critical points of the Lyapunov functional
	\begin{equation}\label{Ly-non-1}
	\mathcal{I}_{2}=\sum_{n=0}^{4}\mu_{n}H_{n}. 
	\end{equation}
\end{prop}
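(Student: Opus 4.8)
The plan is to reduce this proposition to the general statement of Proposition~\ref{prop-app-A} with $N=2$, whose proof is deferred to Appendix~\ref{app-A-1}, and to verify that the specific linear combination $\sum_{n=0}^4 \mu_n H_n$ with the elementary symmetric polynomials $\mu_n$ of the roots $\{\lambda_1,\lambda_1^*,\lambda_2,\lambda_2^*\}$ is exactly the functional $\mathcal{I}_2$ defined in (\ref{Ly-non}). The argument has two essentially independent parts: an algebraic identity matching coefficients, and the variational/critical-point assertion.

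First I would establish the identity $\mathcal{I}_2 = \sum_{n=0}^4 \mu_n H_n$. Writing $\lambda_1 = a + {\rm i} b_1$ and $\lambda_2 = a + {\rm i} b_2$, I expand
\begin{equation*}
\mathcal{P}_2(\lambda) = \bigl((\lambda-a)^2 + b_1^2\bigr)\bigl((\lambda-a)^2 + b_2^2\bigr)
\end{equation*}
and collect powers of $\lambda$ to read off $2^{n-4}\mu_n$ for $n=0,\dots,4$. For instance $\mu_4 = 16$, $\mu_3 = -16(\lambda_1 + \lambda_1^* + \lambda_2 + \lambda_2^*)/2 \cdot (\text{normalization})$, and so on; carrying out the elementary-symmetric-polynomial bookkeeping gives $\mu_3 = -8a$, $\mu_2 = 4(6a^2 + b_1^2 + b_2^2)$, $\mu_1 = -16a(2a^2 + b_1^2 + b_2^2)$, and $\mu_0 = 16(a^2+b_1^2)(a^2+b_2^2)$. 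Comparing term by term with (\ref{Ly-non}) — using $H_0 = H_0^{(1)} + H_0^{(2)}$ — shows that $\sum_{n=0}^4 \mu_n H_n$ coincides with $H_4 - 8aH_3 + 4(6a^2+b_1^2+b_2^2)H_2 - 16a(2a^2+b_1^2+b_2^2)H_1 + 16(a^2+b_1^2)(a^2+b_2^2)H_0 = \mathcal{I}_2$. This is a routine but necessary coefficient check.

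Second, for the critical-point claim I would use the expressions from the trace formula (Appendix~\ref{A.ISM}): $H_{n-1} = \frac{2^n}{n}\,\mathrm{Im}(\lambda_1^n + \lambda_2^n)$, so that
\begin{equation*}
\frac{\delta H_{n-1}}{\delta \mathbf{q}} = -2^{n-1}{\rm i}\!\left(
\lambda_1^{n-1}\frac{\delta\lambda_1}{\delta\mathbf{q}} - (\lambda_1^*)^{n-1}\frac{\delta\lambda_1^*}{\delta\mathbf{q}}
+ \lambda_2^{n-1}\frac{\delta\lambda_2}{\delta\mathbf{q}} - (\lambda_2^*)^{n-1}\frac{\delta\lambda_2^*}{\delta\mathbf{q}}
\right).
\end{equation*}
Then $\frac{\delta \mathcal{I}_2}{\delta\mathbf{q}} = \sum_{n=0}^4 \mu_n \frac{\delta H_n}{\delta\mathbf{q}}$ is, up to the common factor, a sum of four terms whose coefficients are $\sum_{n} 2^n \mu_n \lambda_1^{n}$, $\sum_n 2^n \mu_n (\lambda_1^*)^n$, and the analogous expressions at $\lambda_2, \lambda_2^*$ — but these are (proportional to) $\mathcal{P}_2$ evaluated at $\lambda_1, \lambda_1^*, \lambda_2, \lambda_2^*$ respectively, each of which vanishes by definition of $\mathcal{P}_2$. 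Hence $\frac{\delta\mathcal{I}_2}{\delta\mathbf{q}}\big|_{\mathbf{q}^{[2]}} = 0$, i.e. the breather is a critical point. The one subtlety to handle carefully is that the index shift between $H_{n-1}$ and $\mu_n H_n$ means I should be summing $\sum_{n=0}^4 \mu_n H_n = \sum_{m=-1}^3 \mu_{m+1} H_m$ and confirming the polynomial $\sum_n \mu_n \lambda^{n}$ (with the $2^{n-4}$ normalization absorbed) is genuinely $\mathcal{P}_2(\lambda)$ up to scaling, so that the vanishing at the four roots is exact; the lowest-order conserved quantity (the mass, which appears with coefficient $\mu_0$) must be included for this to close, and one should check the $n=0$ contribution to $\frac{\delta}{\delta\mathbf{q}}$ is consistent with the trace-formula convention.

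The main obstacle is not conceptual but is making the normalization conventions in the trace formula and in the definition of $\mathcal{P}_2$ (the $2^{n-4}\mu_n$ factors) mesh precisely so that ``$\sum_n 2^n\mu_n\lambda^n = 16\,\mathcal{P}_2(\lambda)$'' holds on the nose; once that is pinned down, the vanishing of the variation at the breather solution is immediate from $\mathcal{P}_2(\lambda_j) = \mathcal{P}_2(\lambda_j^*) = 0$. Since the general case $N$ (with $\mathcal{P}_N(\lambda) = \prod_{j=1}^N (\lambda-\lambda_j)(\lambda-\lambda_j^*)$ and $\mathcal{I}_N = \sum_{n} \mu_n H_n$) is handled in Appendix~\ref{app-A-1}, I would simply specialize that argument to $N=2$ and record the explicit coefficients above to confirm agreement with (\ref{Ly-non}).
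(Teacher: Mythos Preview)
Your proposal is correct and follows essentially the same route as the paper: the proposition is stated as the $N=2$ specialization of Proposition~\ref{prop-app-A}, whose proof in Appendix~\ref{app-A-1} uses exactly the trace-formula variation $\frac{\delta H_n}{\delta\mathbf{q}} = \frac{2^n}{2{\rm i}}\sum_i(\lambda_i^n\frac{\delta\lambda_i}{\delta\mathbf{q}} - (\lambda_i^*)^n\frac{\delta\lambda_i^*}{\delta\mathbf{q}})$ together with $\mathcal{P}_N(\lambda_i)=\mathcal{P}_N(\lambda_i^*)=0$, and the explicit values of $\mu_n$ you compute are precisely those recorded by the paper immediately after the proposition. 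One small slip: from $\mathcal{P}_2(\lambda)=\sum_n 2^{n-4}\mu_n\lambda^n$ with $\mathcal{P}_2$ monic of degree four you get $\mu_4=1$, not $\mu_4=16$; your other $\mu_n$ values are correct and the argument is unaffected.
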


For the spectral parameters $\lambda_{1}=a+{\rm i}b_{1}, \lambda_{2}=a+{\rm i}b_{2}$, we have 
\begin{align*}
\mu_{0} &= 16(a^{2}+b_{1}^{2})(a^{2}+b_{2}^{2}), \\
\mu_{1} &= -16 a(2a^{2}+b_{1}^{2}+b_{2}^{2}), \\
\mu_{2} &=4(6a^{2}+b_{1}^{2}+b_{2}^{2}), \\
\mu_{3} &=-8a, \\
\mu_4 &= 1,
\end{align*} 
from which the Lyapunov functional \eqref{Ly-non-1} coincides with the explicit expression \eqref{Ly-non}. Moreover, the trace formula reveals that
\begin{equation}\label{con-law-spectral}
\begin{split}
&H_{0}
=2(b_{1}+b_{2}), \\
&H_{1}
=4a(b_{1}+b_{2}), \\
&H_{2}=8a^{2}(b_{1}+b_{2})-\frac{8}{3}(b_{1}^{3}+b_{2}^{3}), \\
&H_{3}=16a^{3}(b_{1}+b_{2})-16a(b_{1}^{3}+b_{2}^{3}), \\
&H_{4}=32a^{4}(b_{1}+b_{2})-64a^{2}(b_{1}^{3}+b_{2}^{3})+
\frac{32}{5}(b_{1}^{5}+b_{2}^{5}),
\end{split}
\end{equation}
from which we obtain
\begin{equation*}
\mathcal{I}_{2}=-\frac{64}{15}(b_{1}+b_{2})^{3}(b_{1}^{2}-3b_{1}b_{2}+b_{2}^{2}).
\end{equation*}

Now we turn to variation of conserved quantities in terms of the functions \(\mathbf{q}\). We will derive the explicit form 
of the ODE satisfied by the breather profile, 
as well as the operator \(\mathcal{L}_{2}\) defined in \eqref{L2}. 
The first variations are given by
\begin{equation*}
\begin{split}
\frac{\delta H_{0}}{\delta \mathbf{q}}&=\mathbf{q}, \\
\frac{\delta H_{1}}{\delta \mathbf{q}}&={\rm i}\mathbf{q}_{x}, \\
\frac{\delta H_{2}}{\delta \mathbf{q}} &=-\left(\mathbf{q}_{xx}+2|\mathbf{q}|^{2}\mathbf{q}\right), \\
\frac{\delta H_{3}}{\delta \mathbf{q}} &=
-{\rm i}\left(\mathbf{q}_{(3x)}+3|\mathbf{q}|^{2}\mathbf{q}_{x}+
3\mathbf{q}\mathbf{q}^{\dagger}\mathbf{q}_{x}\right),\\
\frac{\delta H_{4}}{\delta \mathbf{q}}&=
\mathbf{q}_{(4x)}+4|\mathbf{q}|^{2}\mathbf{q}_{xx}+
2\mathbf{q}\mathbf{q}_{xx}^{\dagger}\mathbf{q}+
4\mathbf{q}\mathbf{q}^{\dagger}\mathbf{q}_{xx}+
2\mathbf{q}_{x}\mathbf{q}_{x}^{\dagger}\mathbf{q} +
6 \mathbf{q}_{x}\mathbf{q}^{\dagger}\mathbf{q}_{x}  \\ &\qquad +
2|\mathbf{q}_{x}|^{2}\mathbf{q}+
6|\mathbf{q}|^{4}\mathbf{q}.
\end{split}
\end{equation*}
Hence the breather profile satisfies the system of two fourth-order ODEs given by 
$$
\sum_{n=0}^{4}\mu_{n}\frac{\delta H_{n}}{\delta \mathbf{q}}=0.
$$
The second variations are given by
\begin{equation*}
\begin{split}
\frac{\delta^{2} H_{0}}{\delta \mathbf{q}^{2}}&=\mathbb{I}_{2}, \\
\frac{\delta^{2} H_{1}}{\delta \mathbf{q}^{2}} &= {\rm i}\partial_{x}, \\
\frac{\delta^{2} H_{2}}{\delta \mathbf{q}^{2}} &=
-\left(\partial_{x}^{2}+2|\mathbf{q}|^{2}+
2\mathbf{q}\otimes \mathbf{q}^{*}+2\mathbf{q}\otimes \mathbf{q}\cdot^{*}\right), \\
\frac{\delta^{2} H_{3}}{\delta \mathbf{q}^{2}}&=
-{\rm i}\left(\partial_{x}^{3}+3|\mathbf{q}|^{2}\partial_{x}+
3\mathbf{q}_{x}\otimes \mathbf{q}^{*}+3\mathbf{q}_{x}\otimes \mathbf{q}\cdot^{*}
+3\mathbf{q}^{\dagger}\mathbf{q}_{x}
+3\mathbf{q}\otimes \mathbf{q}^{*}\partial_{x}+3\mathbf{q}\otimes \mathbf{q}_{x}\cdot^{*}\right),\\
\frac{\delta^{2} H_{4}}{\delta \mathbf{q}^{2}}&=
\partial_{x}^{4}+4|\mathbf{q}|^{2}\partial_{x}^{2}+
4\mathbf{q}_{xx}\otimes \mathbf{q}^{*}+
4\mathbf{q}_{xx}\otimes \mathbf{q}\cdot^{*}+
2\mathbf{q}_{xx}^{\dagger}\mathbf{q}+
2\mathbf{q}\otimes \mathbf{q}\partial_{x}^{2}\cdot^{*}+
2\mathbf{q}\otimes \mathbf{q}_{xx}^{*}
\\&+
4\mathbf{q}^{\dagger}\mathbf{q}_{xx}+
4\mathbf{q}\otimes \mathbf{q}^{*}\partial_{x}^2+
4\mathbf{q}\otimes \mathbf{q}_{xx} \cdot^{*}+
2\mathbf{q}_{x}^{\dagger}\mathbf{q}\partial_{x}+
2\mathbf{q}_{x}\otimes \mathbf{q}_{x}^{\dagger}+
2\mathbf{q}_{x}\otimes \mathbf{q} \partial_{x}\cdot^{*}
\\&  +
6 \mathbf{q}^{\dagger}\mathbf{q}_{x} \partial_{x}+
6  \mathbf{q}_{x}\otimes \mathbf{q}_{x} \cdot^{*}+
6  \mathbf{q}_{x}\otimes \mathbf{q}^{\dagger} \partial_{x}+
2|\mathbf{q}_{x}|^{2}+
2\mathbf{q}\otimes \mathbf{q}_{x}^{*}\partial_{x}+
2\mathbf{q}\otimes \mathbf{q}_{x}\partial_{x}\cdot^{*}\\&+
6|\mathbf{q}|^{4}+
12|\mathbf{q}|^{2}\mathbf{q}\otimes \mathbf{q}^{*}+
12|\mathbf{q}|^{2}\mathbf{q}\otimes \mathbf{q}\cdot^{*} 
,
\end{split}
\end{equation*}
where the operation $\otimes$ is defined by $\mathbf{q}\otimes\mathbf{r}\equiv \mathbf{q} \mathbf{r}^{{\rm T}}$ and $\cdot^{*}$ is a multiplication with complex conjugation. For instance, $\frac{\delta^{2} H_{2}}{\delta \mathbf{q}^{2}}$ acts on an element of $\mathrm{X}$ as follows:
\begin{equation*}
    \frac{\delta^{2} H_{2}}{\delta \mathbf{q}^{2}}\begin{pmatrix}
        \mathbf{v} \\
        \mathbf{v}^{*}
    \end{pmatrix}=
    -\begin{pmatrix}
        \partial_{x}^{2}\mathbf{v}+2|\mathbf{q}|^{2}\mathbf{v}+
2\mathbf{q}\otimes \mathbf{q}^{*}\mathbf{v}+2\mathbf{q}\otimes \mathbf{q}\mathbf{v}^{*}\\
\partial_{x}^{2}\mathbf{v}^{*}+2|\mathbf{q}|^{2}\mathbf{v}^{*}+
2\mathbf{q}^{*}\otimes \mathbf{q}\mathbf{v}^{*}+2\mathbf{q}^{*}\otimes \mathbf{q}^{*}\mathbf{v}
    \end{pmatrix}.
\end{equation*}
The linear operator $\mathcal{L}_2$ in (\ref{L2}) is expressed by the fourth-order operator
$$
\mathcal{L}_2 := \sum_{n=0}^{4}\mu_{n}\frac{\delta^2 H_{n}}{\delta \mathbf{q}^2}.
$$
Let \(\mathcal{D}(\mathcal{A})\) denote the domain of an operator \(\mathcal{A}\) in $\mathrm{X}$. Then, we have the following property.

\begin{prop}
	The linear operator $\mathcal{L}_{2}$ is self-adjoint operator acting on the space $\mathrm{X}\cap \mathcal{D}(\mathcal{L}_{2})$. 
\end{prop}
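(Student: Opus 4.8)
The plan is to show self-adjointness by writing $\mathcal{L}_2 = \sum_{n=0}^4 \mu_n \frac{\delta^2 H_n}{\delta \mathbf{q}^2}$ as a sum of a constant-coefficient fourth-order differential operator and a bounded (even compactly supported relative to the breather profile, which decays exponentially) perturbation, and then invoking the Kato--Rellich theorem. First I would identify the principal part: the only term contributing $\partial_x^4$ comes from $\mu_4 = 1$ times $\frac{\delta^2 H_4}{\delta \mathbf{q}^2}$, giving $\mathcal{L}_2 = \partial_x^4 + (\text{lower order})$. Writing $\mathcal{L}_2 = \mathcal{A}_0 + \mathcal{B}$ where $\mathcal{A}_0 := \partial_x^4 + \mu_3 {\rm i}\partial_x^3 + 4\mu_2 \partial_x^2 \cdot (\text{wait, the constant part of } \mu_2 \text{ term}) + \ldots$ — more carefully, $\mathcal{A}_0$ collects all constant-coefficient pieces (arising from the $\partial_x^k$ terms with no $\mathbf{q}$ prefactor), and $\mathcal{B}$ collects all the terms with coefficients depending on $\mathbf{q}(x)$ and its derivatives. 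Because the breather profile $\mathbf{q}^{[2]}$ and all its $x$-derivatives are bounded and decay exponentially as $|x| \to \infty$ (this follows from the explicit formula \eqref{breather} together with the asymptotics \eqref{asy-M}), every coefficient appearing in $\mathcal{B}$ is a bounded smooth function; moreover $\mathcal{B}$ involves at most two derivatives, so $\mathcal{B}$ is relatively bounded with respect to $\mathcal{A}_0$ with relative bound zero (interpolation: $\|\partial_x^2 u\| \le \varepsilon \|\partial_x^4 u\| + C_\varepsilon \|u\|$).

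Next I would check that $\mathcal{A}_0$ is self-adjoint on $\mathrm{X} \cap \mathcal{D}(\mathcal{A}_0)$ with $\mathcal{D}(\mathcal{A}_0) = H^4(\mathbb{R},\mathbb{C}^2) \cap \mathrm{X}$ (identifying $\mathrm{X}$ with $L^2(\mathbb{R},\mathbb{C}^2)$ via $\iota$). This is standard via the Fourier transform: $\mathcal{A}_0$ becomes multiplication by a matrix-valued symbol $\widehat{\mathcal{A}_0}(k)$ which is Hermitian for each $k \in \mathbb{R}$ — this Hermitian symmetry is exactly the statement that each $\frac{\delta^2 H_n}{\delta \mathbf{q}^2}$ is formally self-adjoint, which in turn follows because each $H_n$ is a real-valued functional (recall the paper's remark that $H_1$ and $H_3$ are real due to integration by parts, and $H_0, H_2, H_4$ are manifestly real). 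Then $\mathcal{B}$ being symmetric (again because it is the second variation of a real functional, or checkable directly from the explicit formulas using $\mathbf{q}\otimes\mathbf{q}^*$-type structure and integration by parts) and $\mathcal{A}_0$-bounded with relative bound $< 1$ gives, by Kato--Rellich, that $\mathcal{L}_2 = \mathcal{A}_0 + \mathcal{B}$ is self-adjoint on $\mathcal{D}(\mathcal{A}_0) = H^4(\mathbb{R},\mathbb{C}^2) \cap \mathrm{X}$, which is the claimed domain $\mathrm{X} \cap \mathcal{D}(\mathcal{L}_2)$.

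The remaining point is to verify the symmetry of $\mathcal{B}$ honestly, since the explicit expression for $\frac{\delta^2 H_4}{\delta \mathbf{q}^2}$ contains terms like $2\mathbf{q}_x^{\dagger}\mathbf{q}\partial_x$ and $6\mathbf{q}^{\dagger}\mathbf{q}_x\partial_x$ which are not individually symmetric; symmetry holds only for the combination, after pairing with terms like $2\mathbf{q}\otimes\mathbf{q}_x^*\partial_x$ and integrating by parts. I would argue this abstractly: $\langle \mathcal{L}_2 \mathbf{P}, \mathbf{P}\rangle$ equals (twice) the Hessian quadratic form $\frac{d^2}{d\epsilon^2}\big|_{\epsilon=0} \mathcal{I}_2(\mathbf{q}^{[2]} + \epsilon\mathbf{p})$ by \eqref{L2}, and a Hessian of a (twice Fréchet-differentiable) real-valued functional is automatically a symmetric bilinear form; hence $\mathcal{L}_2$ is symmetric on $C_0^\infty(\mathbb{R},\mathbb{C}^2) \cap \mathrm{X}$, a core for $\mathcal{A}_0$, and symmetry extends to $\mathcal{D}(\mathcal{A}_0)$ by continuity of the form in the graph norm.

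The main obstacle I anticipate is not conceptual but bookkeeping: one must confirm that the coefficient functions in $\mathcal{B}$ are genuinely bounded — i.e., that no term in $\frac{\delta^2 H_4}{\delta \mathbf{q}^2}$ produces an unbounded coefficient — which is immediate from exponential decay of $\mathbf{q}^{[2]}$ and its derivatives, and to make sure the decomposition $\mathcal{A}_0 + \mathcal{B}$ is done consistently so that $\mathcal{A}_0$ really is constant-coefficient (in particular the "$4|\mathbf{q}|^2\partial_x^2$" type terms go entirely into $\mathcal{B}$, and only the bare "$\partial_x^4$", "${\rm i}\mu_3\partial_x^3$", "$4\mu_2\partial_x^2$"–type pieces — with $4\mu_2$ being the coefficient of the bare $\partial_x^2$ term, namely from $\mu_4 \cdot \partial_x^4$ there is none, from the $H_2$ contribution $\mu_2 \cdot(-\partial_x^2)$ — remain in $\mathcal{A}_0$). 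Since relative bound zero is available from Sobolev interpolation regardless of the size of $\|\mathcal{B}\|$-type constants, no smallness hypothesis on the breather amplitude is needed, and Kato--Rellich closes the argument.
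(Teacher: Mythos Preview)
Your approach is correct and in fact more rigorous than the paper's. The paper's own proof simply verifies \emph{formal} symmetry term by term: it notes that $\partial_x^2$, $|\mathbf{q}|^2$, and the block operators $\mathbf{q}\otimes\mathbf{q}^*$, $\mathbf{q}\otimes\mathbf{q}\cdot^*$ satisfy $(\mathcal{A}\mathbf{f},\mathbf{g})=(\mathbf{f},\mathcal{A}\mathbf{g})$ in the real Hilbert space $\mathrm{X}$, and asserts the same holds for the remaining second variations by the same method. No domain analysis, no perturbation theorem, no identification of $\mathcal{D}(\mathcal{L}_2)$ is carried out.

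Your Kato--Rellich argument is a genuinely different route that buys more: it identifies the domain concretely as $H^4$, and distinguishes symmetry from true self-adjointness. Your abstract argument that symmetry follows from $\mathcal{L}_2$ being the Hessian of a real functional is cleaner than the paper's term-by-term check, and sidesteps the integration-by-parts bookkeeping you worried about. The paper's approach is lighter and suffices for its purposes (the subsequent spectral analysis proceeds via squared eigenfunctions rather than abstract spectral theory), but as a proof of self-adjointness in the operator-theoretic sense yours is the complete one.
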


\begin{proof}
	It suffices to prove that \(\frac{\delta^{2} H_{n}}{\delta \mathbf{q}^{2}}\) is 
	self-adjoint. We demonstrate this by showing that \(\frac{\delta^{2} H_{2}}{\delta \mathbf{q}^{2}}\) 
	is self-adjoint, and the same method can be applied to the other cases. 
	It is evident that \(\partial_{x}^{2}\) and \(|\mathbf{q}|^{2}\) are self-adjoint. 
	We consider the operator \(\mathbf{q} \otimes \mathbf{q}^{*}\) in the space \(\mathrm{X}\), then
	\begin{equation*}
	\begin{split}
	\mathrm{Re}\int_{\mathbb{R}}\left(\begin{pmatrix}
	\mathbf{q}\otimes \mathbf{q}^{*} & 0 \\
	0 & \mathbf{q}^{*}\otimes \mathbf{q}
	\end{pmatrix}
	\mathbf{f}\right)^{\dagger}
	\mathbf{g}\mathrm{d}x=\mathrm{Re}\int_{\mathbb{R}}
	\mathbf{f}^{\dagger}
	\begin{pmatrix}
	\mathbf{q}\otimes \mathbf{q}^{*} & 0 \\
	0 & \mathbf{q}^{*}\otimes \mathbf{q}
	\end{pmatrix}\mathbf{g}\mathrm{d}x. 
	\end{split}
	\end{equation*}
	Similarly, the operator $\mathbf{q}\otimes \mathbf{q}\cdot^{*}$
	\begin{equation*}
	\begin{split}
	\mathrm{Re}\int_{\mathbb{R}}\left(\begin{pmatrix}
	0 & \mathbf{q}\otimes \mathbf{q} \\
	\mathbf{q}^{*}\otimes \mathbf{q}^{*} & 0
	\end{pmatrix}
	\mathbf{f}\right)^{\dagger}
	\mathbf{g}\mathrm{d}x=\mathrm{Re}\int_{\mathbb{R}}
	\mathbf{f}^{\dagger}
	\begin{pmatrix}
	0 & \mathbf{q}\otimes \mathbf{q} \\
	\mathbf{q}^{*}\otimes \mathbf{q}^{*} & 0
	\end{pmatrix}\mathbf{g}\mathrm{d}x. 
	\end{split}
	\end{equation*}
Similarly, one can show that \(\frac{\delta^{2} H_{3}}{\delta \mathbf{q}^{2}}\)  and \(\frac{\delta^{2} H_{4}}{\delta \mathbf{q}^{2}}\)  are self-adjoint. 
\end{proof}

Since \(\mathcal{L}_{2}\) is self-adjoint, the spectrum of \(\mathcal{L}_{2}\) in the space \(\mathrm{X}\) is a subset of \(\mathbb{R}\). 
The operator \(\mathcal{L}_{2}\) can be viewed as a perturbation of a linear operator
\begin{equation*}
\mathcal{L}_{\infty}=\sum_{n=0}^{4}\mu_{n}({\rm i}\partial_{x})^{n}=2^{4}\mathcal{P}_2\left(\frac{{\rm i}\partial_{x}}{2} \right).
\end{equation*}
The perturbation depends on \(\mathbf{q}\) and its derivatives, which belong for the breather solutions to the Schwartz class. Consequently, \(\mathcal{L}_{2}\) is a relatively compact perturbation of the operator 
\(\mathcal{L}_{\infty}\). The essential spectrum of \(\mathcal{L}_{2}\) coincides with the spectrum of \(\mathcal{L}_{\infty}\) by Weyl's essential spectrum theorem. Since 
\begin{equation*}
\mathcal{P}_2(\lambda)=|\lambda-\lambda_{1}|^{2}|\lambda-\lambda_{2}|^{2}>0,\quad \lambda\in\mathbb{R} 
\end{equation*}
for every $\lambda_{1},\lambda_{2} \in \mathbb{C} \backslash \{ \mathbb{R} \}$,  the essential spectrum of $\mathcal{L}_{2}$ is strictly positive and is bounded away from zero.

\section{Spectral stability of vector solitons}
\label{sec-3}

We examine the spectral stability of non-degenerate vector solitons and prove Theorem \ref{spectral-stability}. According to the expression \eqref{non-sol-1}, the variables in non-degenerate vector soliton solutions 
can be separated into functions of $\xi := x + 2 a t$ and $t$ respectively
\begin{equation}\label{q-non2-spearate}
\mathbf{q}^{[2]}_{non}={\rm e}^{\mathbf{\Pi} t}
\hat{\mathbf{q}}(\xi),
\end{equation}
where 
\begin{equation*}
\mathbf{\Pi}=2{\rm i}\begin{pmatrix}
|\lambda_{1}|^2 & 0 \\
0 & |\lambda_{2}|^2
\end{pmatrix},\quad
\hat{\mathbf{q}}(\xi)=
\frac{4}{\mathrm{det}(\mathbf{M})}
\begin{pmatrix}
b_{1}c_{11}{\rm e}^{-2{\rm i} a \xi}
\left(|c_{22}|^{2} {\rm e}^{2b_{2}\xi}+\frac{b_{1}-b_{2}}{b_{1}+b_{2}} 
 {\rm e}^{-2b_{2}\xi} \right) \\
b_{2}c_{22}{\rm e}^{-2{\rm i} a \xi}
\left(|c_{11}|^{2}{\rm e}^{2b_{1}\xi}+\frac{b_{2}-b_{1}}{b_{1}+b_{2}} 
 {\rm e}^{-2b_{1}\xi} \right)
\end{pmatrix},
\end{equation*}
where $\lambda_1 = a + {\rm i} b_1$ and $\lambda_2 = a + {\rm i} b_2$.
Substituting \eqref{q-non2-spearate} into \eqref{CNLS}, CNLS equations \eqref{CNLS} can 
be reduced to equations involving \(\hat{\mathbf{q}}(\xi)\) that are independent of \(t\)
\begin{equation}\label{ODE-non}
-\frac{1}{2}\hat{\mathbf{q}}_{xx}-2{\rm i} a \hat{\mathbf{q}}_{x}
+2
\begin{pmatrix}
|\lambda_{1}|^{2} & 0\\
0 & |\lambda_{2}|^{2}
\end{pmatrix}
\hat{\mathbf{q}}-|\hat{\mathbf{q}}|^2\hat{\mathbf{q}}=0,
\end{equation} 
which coincides with system (\ref{CNLS-trav}). The spectral stability of the non-degenerate vector soliton solutions is examined by considering 
the perturbation
\begin{equation*}
\mathbf{q}
=
{\rm e}^{\mathbf{\Pi} t}
(\hat{\mathbf{q}}(\xi)+\epsilon 
(\mathbf{p}_{1}(\xi){\rm e}^{\Omega t}+\mathbf{p}_{2}^{*}(\xi){\rm e}^{\Omega^{*} t})).
\end{equation*}
The perturbations $\mathbf{P}$ satisfy
\begin{equation}\label{linear-operator-2-1}
-{\rm i}\begin{pmatrix}
\mathbb{I}_{2} & 0\\
0 & -\mathbb{I}_{2}
\end{pmatrix}
\begin{pmatrix}
\mathcal{M}_{1} & \mathcal{M}_{2} \\
\mathcal{M}_{2}^{*} & \mathcal{M}_{1}^{*}
\end{pmatrix}
\mathbf{P}
=
\Omega
\mathbf{P}, \quad \mathbf{P}=\begin{pmatrix}
\mathbf{p}_{1}\\\mathbf{p}_{2}
\end{pmatrix}, 
\end{equation}
where 
\begin{equation*}
\mathcal{M}_{1}(\hat{\mathbf{q}})=\mathcal{L}_{0}(\hat{\mathbf{q}})-\hat{\mathbf{q}}\otimes \hat{\mathbf{q}}^{*},\quad 
\mathcal{M}_{2}(\hat{\mathbf{q}})=-\hat{\mathbf{q}}\otimes \hat{\mathbf{q}}
\end{equation*}
and 
\begin{equation*}
\mathcal{L}_{0}(\hat{\mathbf{q}}) = - \frac{1}{2} \partial_{x}^{2} - 
2 {\rm i} a \partial_{x}
+ 2 \begin{pmatrix}
|\lambda_{1}|^{2} & 0\\
0 & |\lambda_{2}|^{2}
\end{pmatrix} - |\mathbf{\hat{\mathbf{q}}}|^2. 
\end{equation*}
The variation of \eqref{ODE-non} lead to the operator $\mathcal{L}_{1}$ acting 
of $(\mathbf{v}^{T},\mathbf{v}^{\dagger})^{T} \in \mathbb{C}^4$ as follows:
\begin{equation*}
\mathcal{L}_{1}\begin{pmatrix}
    \mathbf{v} \\
    \mathbf{v}^{*}
\end{pmatrix}
=\begin{pmatrix}
    \mathcal{L}_{0}\mathbf{v}-
(\hat{\mathbf{q}}\otimes \hat{\mathbf{q}}^{*})\mathbf{v}-
(\hat{\mathbf{q}}\otimes \hat{\mathbf{q}})\mathbf{v}^{*}\\
\mathcal{L}_{0}^{*}\mathbf{v}^{*}-
(\hat{\mathbf{q}}^{*}\otimes \hat{\mathbf{q}})\mathbf{v}^{*}-
(\hat{\mathbf{q}}^{*}\otimes \hat{\mathbf{q}}^{*})\mathbf{v}
\end{pmatrix}. 
\end{equation*}
Note that $\mathcal{L}_{1}$ is self-adjoint since 
\begin{equation*}
    \mathcal{L}_{1}(\hat{\mathbf{q}})=\frac{1}{2} 
        \frac{\delta^{2} H_{2}}{\delta \mathbf{q}^{2}}
        -2a\, \frac{\delta^{2} H_{1}}{\delta \mathbf{q}^{2}}
        +2\, \mathrm{diag}(|\lambda_{1}|^{2},|\lambda_{2}|^{2},|\lambda_{1}|^{2},|\lambda_{2}|^{2})\frac{\delta^{2} H_{0}}{\delta \mathbf{q}^{2}}.
\end{equation*}
We consider $\mathcal{J}$ and $\mathcal{L}_{1}$ in the space 
$\mathrm{X}$, then the two operators have representation
\begin{equation*}
\mathcal{J}=-{\rm i}\begin{pmatrix}
\mathbb{I}_{2} & 0\\
0 & -\mathbb{I}_{2}
\end{pmatrix},\quad \mathcal{L}_{1}=
\begin{pmatrix}
\mathcal{M}_{1} & \mathcal{M}_{2} \\
\mathcal{M}_{2}^{*} & \mathcal{M}_{1}^{*}
\end{pmatrix}. 
\end{equation*}
The equation \eqref{linear-operator-2-1} becomes
\begin{equation}\label{spectral-JL}
\mathcal{J}\mathcal{L}_{1}\mathbf{P}=
\Omega\mathbf{P}. 
\end{equation}
The stability spectrum is defined by 
\begin{equation*}
\sigma_{s}(\mathcal{J}\mathcal{L}_{1})=\{
\Omega\in\mathbb{C}: \quad \mathbf{P}\in L^{\infty}
\}.
\end{equation*}
Spectral stability is defined by the condition 
\(\sigma_{s}(\mathcal{J}\mathcal{L}_{1}) \subset {\rm i}\mathbb{R}\). 
Our goal is to determine the spectrum of \(\mathcal{J}\mathcal{L}_{1}\) by constructing the squared eigenfunctions that satisfy the spectral problem \eqref{spectral-JL}. 

\begin{rem}
Since the range of the function \(\mathbf{q}\) is \(\mathbb{C}^2\), 
it is important to note that \(\mathbf{p}_{1}\) and \(\mathbf{p}_{2}\) 
are linearly independent. Unlike previous work, where the perturbation was considered as 
\(\mathbf{p} = (\mathbf{p}_{1}(\xi) + {\rm i} \mathbf{p}_{2}(\xi)){\rm e}^{\Omega t}\), 
we now consider 
\(\mathbf{p} = \mathbf{p}_{1}(\xi){\rm e}^{\Omega t} + \mathbf{p}_{2}^{*}(\xi){\rm e}^{\Omega^{*} t}\). 
This approach takes into account the linearized operator that the squared eigenfunctions satisfy.
\end{rem}

We define 
\begin{equation*}
\mathbf{\Phi}^{[2]}_{r,non}(\lambda;\xi) = \mathbf{\Phi}^{[2]}(\lambda;\xi,0) \left(\begin{matrix}
(\lambda-\lambda_{1}^{*})(\lambda-\lambda_{2}^{*}) & 0 & 0 \\
0 & \lambda-\lambda_{1}^{*} & 0 \\ 0 & 0 & \lambda-\lambda_{2}^{*} \end{matrix}
\right)
\end{equation*}
to eliminate the singularity with respect to $\lambda$ and consider the squared eigenfunctions
\begin{equation*}
\hat{\mathbf{P}}_{\pm 1}(\lambda;\xi) = s_{\pm 1}(\mathbf{\Phi}^{[2]}_{r,non}(\lambda;\xi)), \qquad  
\hat{\mathbf{P}}_{\pm 2}(\lambda;\xi) = s_{\pm 2}(\mathbf{\Phi}^{[2]}_{r,non}(\lambda;\xi)).
\end{equation*}
Denote $\mathbf{\Phi}^{[2]}_{r,non}=(Q_{ij})_{1\leq i,j\leq 3}$, the squared eigenfunctions have exact representation
\begin{equation}\label{eigenfunction-JL1}
\begin{split}
\hat{\mathbf{P}}_{i-1} =
\begin{pmatrix}
Q_{21}(\lambda)Q_{1i}(\lambda^{*})^{*} &
Q_{31}(\lambda)Q_{1i}(\lambda^{*})^{*} &
-Q_{11}(\lambda)Q_{2i}(\lambda^{*})^{*} &
-Q_{11}(\lambda)Q_{3i}(\lambda^{*})^{*}
\end{pmatrix}^{T}, 
\\
\hat{\mathbf{P}}_{-i+1}=
\begin{pmatrix}
Q_{2i}(\lambda)Q_{11}(\lambda^{*})^{*} &
Q_{3i}(\lambda)Q_{11}(\lambda^{*})^{*} &
-Q_{1i}(\lambda)Q_{21}(\lambda^{*})^{*} &
-Q_{1i}(\lambda)Q_{31}(\lambda^{*})^{*}
\end{pmatrix}^{T},
\end{split}
\end{equation}
for $i=2,3$.
The derivatives of $Q_{ij}$ with respect to $\lambda$ are given by
\begin{equation*}
\begin{split}
Q_{11,\lambda}&={\rm i}\xi Q_{11}+R_{1}, \quad
Q_{21,\lambda}={\rm i}\xi Q_{21}+R_{2}, \quad
Q_{31,\lambda}={\rm i}\xi Q_{31}+R_{3}, \\
Q_{12,\lambda}&=-{\rm i}\xi Q_{12}, \quad
Q_{22,\lambda}=-{\rm i}\xi Q_{22}+{\rm e}^{-{\rm i} \lambda \xi}, \quad
Q_{32,\lambda}=-{\rm i}\xi Q_{32}, \\ 
Q_{13,\lambda}&=-{\rm i}\xi Q_{13}, \quad
Q_{23,\lambda}=-{\rm i}\xi Q_{23}, \quad
Q_{33,\lambda}=-{\rm i}\xi Q_{33}+{\rm e}^{-{\rm i} \lambda \xi} \\
\end{split}
\end{equation*}
where
\begin{align*}
    R_{1}&=(2\lambda-\lambda_{1}^{*}-\lambda_{2}^{*})-\frac{2{\rm i}}{M_{non}}\left(
        \frac{(b_{1}-b_{2})^{2}}{b_{1}+b_{2}}\tau_{00}+\sum_{s=1}^{2}b_{3-s}|c_{ss}|\tau_{4-2s,2s-2}
    \right){\rm e}^{{\rm i}\lambda\xi},\\
    R_{i+1}&=\frac{2c_{ii}b_{i}}{M_{non}}\left(
        |c_{3-i,3-i}|^{2}\tau_{i,3-i}+\frac{b_{1}-b_{2}}
        {b_{1}+b_{2}}\tau_{00}
    \right){\rm e}^{-2{\rm i}a\xi}{\rm e}^{{\rm i}\lambda \xi}, \quad i=1,2, 
\end{align*}
and $\tau_{ij}(x)={\rm e}^{[(2i-2)b_{1}+(2j-2)b_{2}]x}$.
Now we calculate the derivative of squared eigenfunctions with respect to $\lambda$. Denote
$\mathbf{e}_{i}$ be the identity column vector with $i$-th component be 1, and define
\begin{equation*}
\begin{split}
\mathbf{G}_{i-1}(\lambda)=\begin{pmatrix}
R_{2}(\lambda)Q_{1i}^{*}(\lambda^{*}) &
R_{3}(\lambda)Q_{1i}^{*}(\lambda^{*}) &
-R_{1}(\lambda)Q_{2i}^{*}(\lambda^{*}) &
-R_{1}(\lambda)Q_{3i}^{*}(\lambda^{*})
\end{pmatrix}, \\
\mathbf{G}_{-i+1}(\lambda)=\begin{pmatrix}
Q_{2i}(\lambda)R_{1}^{*}(\lambda^{*}) &
Q_{3i}(\lambda)R_{1}^{*}(\lambda^{*}) &
-Q_{1i}(\lambda)R_{2}^{*}(\lambda^{*}) &
-Q_{1i}(\lambda)R_{3}^{*}(\lambda^{*})
\end{pmatrix}
\end{split}
\end{equation*}
for $i=2,3$. 
We obtain
\begin{equation*}
\begin{split}
&\hat{\mathbf{P}}_{i,\lambda}(\lambda)=
2{\rm i}\xi\hat{\mathbf{P}}_{i}(\lambda)+\mathbf{G}_{i}(\lambda)-Q_{11}(\lambda){\rm e}^{{\rm i}\lambda\xi}\mathbf{e}_{i+2}, \\
&\hat{\mathbf{P}}_{-i,\lambda}(\lambda)=
-2{\rm i}\xi\hat{\mathbf{P}}_{-i}(\lambda)+\mathbf{G}_{-i}(\lambda)+Q_{11}^{*}(\lambda^{*}){\rm e}^{-{\rm i}\lambda\xi}\mathbf{e}_{i},
\end{split}
\end{equation*}
for $i=1,2$. 

The squared eigenfunctions \eqref{eigenfunction-JL1} are used 
to find the spectrum of \(\mathcal{J}\mathcal{L}_{1}\).
The linearized operator for CNLS equations \eqref{CNLS} is
\begin{equation*}
\mathcal{L}_{t} =
\left( - \frac{1}{2} \partial_{x}^{2} - {\rm i} \partial_{t} - |\mathbf{q}|^2 \right) \mathbb{I}_{2}
-\mathbf{q} \otimes \mathbf{q}^{*}-\mathbf{q}\otimes \mathbf{q}\cdot^{*}
\end{equation*}
obtained by the variation of \eqref{CNLS}. 
Let $\mathbf{\Phi}(\lambda;x,t)$ be a solution of the Lax pair \eqref{CNLS-lax} 
and $\mathbf{\Psi}(\lambda;x,t)$ be a solution of the adjoint spectral problem
\begin{equation*}
-\mathbf{\Psi}_{x}=\mathbf{\Psi}\mathbf{U}, \quad 
-\mathbf{\Psi}_{t}=\mathbf{\Psi}\mathbf{V},
\end{equation*}
setting $\mathbf{F} = \mathbf{\Phi}(\lambda;x,t)\mathbf{\Psi}(\lambda;x,t)$, we have
\begin{equation}\label{L-equ}
\begin{split}
\mathbf{F}_{x}=[\mathbf{U},\mathbf{F}],\quad
\mathbf{F}_{t}=[\mathbf{V},\mathbf{F}].
\end{split}
\end{equation}
Denote 
\begin{equation*}
\mathbf{F}=\begin{pmatrix}
f & \mathbf{h}^{T} \\
\mathbf{g} & \mathbf{K}
\end{pmatrix},
\end{equation*}
it leads to
\begin{align*}
\left( - \frac{1}{2} \partial_{x}^{2} - {\rm i} \partial_{t} - |\mathbf{q}|^2 \right) \mathbf{g}
- (\mathbf{q}\otimes \mathbf{q}^{*})\mathbf{g}+(\mathbf{q}\otimes \mathbf{q})\mathbf{h} =& 0, \\
\left( - \frac{1}{2} \partial_{x}^{2} - {\rm i} \partial_{t} - |\mathbf{q}|^2 \right) \mathbf{h}
-(\mathbf{q}\otimes \mathbf{q}^{*})^{*}\mathbf{h}+(\mathbf{q}\otimes \mathbf{q})^{*}\mathbf{g} =& 0 ,
\end{align*}
due to equations \eqref{L-equ}. 
Hence the squared eigenfunctions satisfy
\begin{equation}\label{linear-problem-t}
\mathcal{L}_{t}
\begin{pmatrix}
\mathbf{g} \\ -\mathbf{h}
\end{pmatrix}=0. 
\end{equation}
We identify four squared eigenfunctions, \(\mathbf{P}_{j}\) for \(j = \pm 1, \pm 2\), that satisfy the linearized equations \eqref{linear-problem-t}. The squared eigenfunctions are written in the separable form 
\begin{equation*}
\mathbf{P}_{j} = \hat{\mathbf{P}}_{j} e^{{\rm i} \Omega_j t}, \quad 
j = \pm 1, \pm 2,
\end{equation*}
hence the squared eigenfunctions \eqref{eigenfunction-JL1}
satisfy 
\begin{equation}
\label{spectral-JL-ch-1}
\mathcal{J}\mathcal{L}_{1}\hat{\mathbf{P}}_{j} = \Omega_{j}\hat{\mathbf{P}}_{j}, \qquad j=\pm 1,\pm 2,
\end{equation}
where $\Omega_{j}=2{\rm i}(\lambda-\lambda_{j})(\lambda-\lambda_{j}^{*})$ and $\Omega_{-j}=-\Omega_{j}$ for $j=1,2$. Computations in Appendix \ref{B.asy-exp} show that 
\begin{equation*}
\begin{split}
\hat{\mathbf{P}}_{i} \in \mathcal{S}(\mathbb{R})\subset L^{2}(\mathbb{R}), \quad 
\lambda=\lambda_{1},\lambda_{2},\lambda_{1}^{*},\lambda_{2}^{*}, \\
\hat{\mathbf{P}}_{1},\hat{\mathbf{P}}_{2} \in  {\rm e}^{-2\mathrm{Im}(\lambda)\xi}L^{\infty}(\mathbb{R}), \quad 
\lambda\ne\lambda_{1},\lambda_{2},\lambda_{1}^{*},\lambda_{2}^{*}, \\
\hat{\mathbf{P}}_{-1},\hat{\mathbf{P}}_{-2} \in  {\rm e}^{2\mathrm{Im}(\lambda)\xi}L^{\infty}(\mathbb{R}), \quad 
\lambda\ne\lambda_{1},\lambda_{2},\lambda_{1}^{*},\lambda_{2}^{*}.
\end{split}
\end{equation*} 
Hence the squared eigenfunctions belong to $L^{\infty}(\mathbb{R},\mathbb{C}^4)$ if and only if 
$\lambda\in\{\lambda_{1},\lambda_{2},\lambda_{1}^{*},\lambda_{2}^{*}\}\cup \mathbb{R}$ which is the Lax spectrum of the Lax pair \eqref{CNLS-lax} at the 
non-degenerate vector solitons. Since 
$\Omega_{j}=2{\rm i}((\lambda-\mathrm{Re}\lambda_{j})^{2}+\mathrm{Im}^{2}(\lambda_{j}))$,
the values of $\Omega_{i}$ on Lax spectrum are
\begin{equation*}
\begin{split}
&\Omega_{j} \in
[2{\rm i}b_{i}^{2},{\rm i}\infty ),
\quad -\Omega_{j}\in(-{\rm i}\infty , -2{\rm i}b_{i}^{2}],
\quad \lambda\in\mathbb{R},   \\
&\pm\Omega_{1}(\lambda_{1})=0 ,\quad \pm\Omega_{1}(\lambda_{1}^{*})=0 ,\quad 
\pm\Omega_{1}(\lambda_{2})=\pm 2{\rm i}(b_{1}^{2}-b_{2}^{2}) ,
\quad \pm\Omega_{1}(\lambda_{2}^{*})=\pm 2{\rm i}(b_{1}^{2}-b_{2}^{2}) ,
\\
&\pm\Omega_{2}(\lambda_{1})=\pm 2{\rm i}(b_{2}^{2}-b_{1}^{2}) ,\quad 
\pm\Omega_{2}(\lambda_{1}^{*})=\pm 2{\rm i}(b_{2}^{2}-b_{1}^{2}) ,\quad 
\pm\Omega_{2}(\lambda_{2})=0 ,
\quad \pm\Omega_{2}(\lambda_{2}^{*})=0. 
\end{split}
\end{equation*}
Using the squared eigenfunctions, all solutions for spectral problem \eqref{spectral-JL} are obtained according to the following lemma.

\begin{lem}
	\label{Lem-sol-to-JL1}
The squared eigenfunctions \eqref{eigenfunction-JL1} 
	are fundamental solutions of the spectral problem \eqref{spectral-JL} for any $\Omega\in\mathbb{C}$. If 
	$\Omega\notin\{ \pm 2{\rm i}b_{1}^{2},\pm 2{\rm i}b_{2}^{2}, 0,\pm 2{\rm i}(b_{1}^{2}-b_{2}^{2}) \}$, the FMS are 
	\begin{equation}\label{eigenfunction-JL}
	\begin{split}
	&\hat{\mathbf{P}}_{1}\left(a\pm \sqrt{-\frac{{\rm i}}{2}\Omega-b_{1}^{2}}\right), \quad 
	\hat{\mathbf{P}}_{2}\left(a\pm \sqrt{-\frac{{\rm i}}{2}\Omega-b_{2}^{2}}\right),
	\\ 
	&\hat{\mathbf{P}}_{-1}\left(a\pm \sqrt{\frac{{\rm i}}{2}\Omega-b_{1}^{2}}\right), \quad 
	\hat{\mathbf{P}}_{-2}\left(a\pm \sqrt{\frac{{\rm i}}{2}\Omega-b_{2}^{2}}\right).
	\end{split}
	\end{equation}
	If $\Omega\in\{ \pm 2{\rm i}b_{1}^{2},\pm 2{\rm i}b_{2}^{2}, 0
	,\pm 2{\rm i}(b_{1}^{2}-b_{2}^{2}) \}$, the FMS is given by the limit of a linear combination of squared eigenfunctions. 
\end{lem}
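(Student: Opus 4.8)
The plan is to leverage the identity $\mathcal{J}\mathcal{L}_{1}\hat{\mathbf{P}}_{j}(\lambda)=\Omega_{j}(\lambda)\hat{\mathbf{P}}_{j}(\lambda)$, $j=\pm1,\pm2$, established in \eqref{spectral-JL-ch-1} and valid for every $\lambda\in\mathbb{C}$ after the regularization that renders $\mathbf{\Phi}^{[2]}_{r,non}$ entire in $\lambda$. Since $\Omega_{j}(\lambda)=2{\rm i}\bigl((\lambda-a)^{2}+b_{j}^{2}\bigr)$ for $j=1,2$ and $\Omega_{-j}=-\Omega_{j}$, a given $\Omega\in\mathbb{C}$ is attained by $\Omega_{\pm1}$ (resp. by $\Omega_{\pm2}$) exactly at the two points $\lambda=a\pm\sqrt{-\tfrac{{\rm i}}{2}\Omega-b_{1}^{2}}$ (resp. $\lambda=a\pm\sqrt{\tfrac{{\rm i}}{2}\Omega-b_{1}^{2}}$ for $\hat{\mathbf{P}}_{-1}$, and similarly with $b_{2}$), which are precisely the eight arguments listed in \eqref{eigenfunction-JL}. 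Hence each of those eight functions solves \eqref{spectral-JL} with eigenvalue $\Omega$, and one checks directly that these eight roots are pairwise distinct exactly when $\Omega$ avoids the branch-point values $\{\pm2{\rm i}b_{1}^{2},\pm2{\rm i}b_{2}^{2}\}$ (where $a+\sqrt{\cdot}$ and $a-\sqrt{\cdot}$ collapse onto $\lambda=a$) and the values $\{0,\pm2{\rm i}(b_{1}^{2}-b_{2}^{2})\}$ (where one admissible $\lambda$ falls onto the point spectrum $\{\lambda_{1},\lambda_{2},\lambda_{1}^{*},\lambda_{2}^{*}\}$); no cross-family coincidence can occur because $\mu_{1}^{2}-\mu_{2}^{2}=b_{2}^{2}-b_{1}^{2}\neq0$ for the corresponding square roots $\mu_{1},\mu_{2}$.

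Next I would argue that eight solutions suffice. Because $\mathcal{J}$ is invertible and $\mathcal{L}_{1}$ has principal part $-\tfrac12\partial_{\xi}^{2}\mathbb{I}_{4}$, equation \eqref{spectral-JL} is a nondegenerate second-order linear system in $\xi$ for a $\mathbb{C}^{4}$-valued function, so its space of (not necessarily bounded) solutions on $\mathbb{R}$ has dimension eight. It therefore remains to prove that, for $\Omega$ outside the exceptional set, the eight functions in \eqref{eigenfunction-JL} are linearly independent. For this I would use the asymptotics of Appendix \ref{B.asy-exp}: as $\xi\to\pm\infty$ the potential $\hat{\mathbf{q}}$ vanishes, $\mathcal{J}\mathcal{L}_{1}$ tends to a constant-coefficient operator whose eight characteristic exponents are in bijection with the eight admissible $\lambda$, and each $\hat{\mathbf{P}}_{\pm j}(\lambda)$ carries the corresponding leading behavior $e^{\mp2\,\mathrm{Im}(\lambda)\xi}$ times a nonvanishing vector built from $\mathbf{\Phi}^{[2]}_{r,non}$. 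Outside the exceptional set these eight leading terms are pairwise non-proportional: when all exponents differ this is immediate, and when two exponents coincide (which can happen only on the essential spectrum, where the asymptotics are oscillatory) the leading vectors are of the structurally distinct ``$s_{j}$'' and ``$s_{-j}$'' types. This forces linear independence, hence the fundamental-solution property for generic $\Omega$.

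For the exceptional values I would invoke continuity of the eight-dimensional solution space of \eqref{spectral-JL} in $\Omega$: the fundamental matrix is analytic in $\Omega$ over the relevant branched cover, so an FMS at an exceptional $\Omega_{*}$ arises as a limit of the generic FMS after rescaling to resolve the collision. At $\Omega_{*}\in\{\pm2{\rm i}b_{j}^{2}\}$ the pair $\hat{\mathbf{P}}_{j}(a\pm\sqrt{\cdot})$ degenerates and the divided difference converges, by l'Hôpital, to $\hat{\mathbf{P}}_{j,\lambda}\big|_{\lambda=a}$; differentiating \eqref{spectral-JL-ch-1} and using $\Omega_{j}'(a)=0$ at a double root shows this limit is again a genuine solution with eigenvalue $\Omega_{*}$, and it is independent of $\hat{\mathbf{P}}_{j}(a)$ thanks to the $2{\rm i}\xi$-term in the derivative formula. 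At $\Omega_{*}\in\{0,\pm2{\rm i}(b_{1}^{2}-b_{2}^{2})\}$ one admissible $\lambda$ equals a point-spectrum value $\lambda_{k}$, where $\det\mathbf{\Phi}^{[2]}_{r,non}(\lambda_{k})=0$ forces the relevant squared eigenfunction $\hat{\mathbf{P}}_{j}(\lambda_{k})$ to vanish; then $\hat{\mathbf{P}}_{j}(\lambda)/(\lambda-\lambda_{k})\to\hat{\mathbf{P}}_{j,\lambda}(\lambda_{k})$ remains a genuine solution with eigenvalue $\Omega_{*}$ (since $\Omega_{j}(\lambda_{k})=\Omega_{*}$), and in the derivative formula the $2{\rm i}\xi\hat{\mathbf{P}}_{j}$ term drops out, leaving the nonzero function $\mathbf{G}_{j}(\lambda_{k})\mp Q_{11}(\lambda_{k})e^{{\rm i}\lambda_{k}\xi}\mathbf{e}_{j+2}$. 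In all cases one obtains eight solutions whose independence is read off from the leading asymptotics, now possibly with $\xi$-polynomial prefactors.

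I expect the main obstacle to be the bookkeeping at $\Omega_{*}\in\{0,\pm2{\rm i}(b_{1}^{2}-b_{2}^{2})\}$: one must identify precisely which of the candidate functions in \eqref{eigenfunction-JL1} vanish or coincide at the point spectrum — this is governed by the rank-two structure of $\mathbf{\Phi}^{[2]}_{r,non}$ there together with the symmetry $s_{i}(\lambda)=-\mathbf{\Sigma}(s_{-i}(\lambda^{*}))^{*}$ — and then verify that the corresponding l'Hôpital limits are genuine (not generalized) solutions and that they, with the surviving direct evaluations, form a basis. This is where the explicit $\lambda$-derivative formulas for $Q_{ij}$ and $\hat{\mathbf{P}}_{\pm j}$ recorded above and the expansions of Appendix \ref{B.asy-exp} are indispensable; by contrast, the branch-point values $\pm2{\rm i}b_{j}^{2}$ are comparatively routine, the collision there being transparent.
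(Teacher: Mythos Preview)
Your overall architecture matches the paper's: exhibit eight squared-eigenfunction solutions for generic $\Omega$ by inverting $\Omega_{\pm j}(\lambda)$, note that \eqref{spectral-JL} is an eight-dimensional first-order system so eight independent solutions suffice, and treat the exceptional $\Omega$ by a limiting argument. The paper does precisely this, though much more tersely: it rewrites \eqref{spectral-JL} as the first-order system \eqref{linear-CNLS-ODE}, takes the eight functions in \eqref{eigenfunction-JL} as the FMS $\hat{\mathbf P}_A$ for generic $\Omega$, and at the exceptional values simply sets the FMS equal to $\lim \hat{\mathbf P}_A(x;\cdot)[\hat{\mathbf P}_A(0;\cdot)]^{-1}$, invoking the analytic dependence of solutions to \eqref{linear-CNLS-ODE} on the parameter. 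It does not attempt to identify the limiting columns explicitly.

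Your treatment of the exceptional set $\{0,\pm 2{\rm i}(b_1^2-b_2^2)\}$ contains a genuine error. You assert that $\det\mathbf{\Phi}^{[2]}_{r,non}(\lambda_k)=0$ forces $\hat{\mathbf P}_j(\lambda_k)$ to \emph{vanish}, and then run l'H\^opital on $\hat{\mathbf P}_j(\lambda)/(\lambda-\lambda_k)$. This is false: the squared eigenfunctions at the point spectrum are nonzero Schwartz functions (they are exactly the kernel elements identified in Proposition~\ref{spectrum-JL1}). What the determinant degeneracy produces is \emph{linear dependence} among the eight candidates --- e.g.\ $\hat{\mathbf P}_{-1}(\lambda_1)$ is proportional to $\hat{\mathbf P}_1(\lambda_1)$ --- not vanishing of any single one. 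Consequently the $2{\rm i}\xi\hat{\mathbf P}_j$ term in the derivative formula does \emph{not} drop out, and $\hat{\mathbf P}_{j,\lambda}(\lambda_k)$ is a genuine \emph{generalized} eigenfunction: differentiating \eqref{spectral-JL-ch-1} gives $\mathcal{J}\mathcal{L}_1\hat{\mathbf P}_{j,\lambda}(\lambda_k)=\Omega_{j,\lambda}(\lambda_k)\hat{\mathbf P}_j(\lambda_k)$ with $\Omega_{j,\lambda}(\lambda_k)=4{\rm i}(\lambda_k-a)\neq 0$, so it fails to solve the homogeneous problem at $\Omega=0$.

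If you want to salvage the explicit-limit route, the correct object to divide by $(\lambda-\lambda_k)$ is a linear combination such as $\hat{\mathbf P}_{-1}(\lambda)-c\,\hat{\mathbf P}_1(\lambda')$ (with $\lambda,\lambda'$ the two branch values tending to $\lambda_k$ and $c$ the proportionality constant), which does vanish in the limit and whose divided difference remains a solution of \eqref{spectral-JL} for each $\Omega$. But this bookkeeping is exactly what the paper sidesteps by normalizing the full $8\times 8$ matrix at $\xi=0$ before taking the limit; that normalization automatically absorbs the dependencies and yields an FMS without case analysis.
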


\begin{proof}
	We rewrite the spectral problem \eqref{spectral-JL} for $\hat{\mathbf{P}}$ as the following system of linear first-order ODEs 
	\begin{equation}\label{linear-CNLS-ODE}
	\begin{pmatrix}
	\hat{\mathbf{P}} \\
	\hat{\mathbf{P}}_{x}
	\end{pmatrix}_{x}
	=
	\mathbf{A}(\Omega)
	\begin{pmatrix}
	\hat{\mathbf{P}} \\
	\hat{\mathbf{P}}_{x}
	\end{pmatrix},
	\end{equation}
	where $\mathbf{A}(\Omega)$ is obtained by the spectral problem \eqref{spectral-JL}. 
	
	For general $\lambda$, we can find eight solutions for 
	\eqref{linear-CNLS-ODE}. Hence these solutions are all solutions for the spectral problem \eqref{linear-CNLS-ODE}. 
	More {concretely}, if 
	$\lambda\notin\{ \mathrm{Re}(\lambda_{1}), \lambda_{1}, \lambda_{1}^{*},\mathrm{Re}(\lambda_{2}), \lambda_{2}, \lambda_{2}^{*} \}$, 
	then there are eight independent solution \eqref{eigenfunction-JL}. 
	We denote the FMS be $\hat{\mathbf{P}}_{A}(\lambda)$. 
	
	On the other hand, for 
	$\lambda_{0}\in\{ \mathrm{Re}(\lambda_{1}), \lambda_{1}, \lambda_{1}^{*},\mathrm{Re}(\lambda_{2}), \lambda_{2}, \lambda_{2}^{*} \}$, 
	the FMS is given by 
	\begin{equation*}
	\lim_{\lambda\to\lambda_{0}}\hat{\mathbf{P}}_{A}(x,t;\lambda)[\hat{\mathbf{P}}_{A}(0,0;\lambda)]^{-1},
	\end{equation*}
	which proves the last assertion.
\end{proof}

Based on Lemma \ref{Lem-sol-to-JL1}, we obtain the spectrum of operator $\mathcal{J}\mathcal{L}_{1}$ as follows.
 
\begin{prop}
	\label{spectrum-JL1}
The spectrum of operator $\mathcal{J}\mathcal{L}_{1}$ in  $L^2(\mathbb{R},\mathbb{C}^4)$ is given by 
	\begin{equation*}
	\sigma(\mathcal{J}\mathcal{L}_{1})=
	[-{\rm i}\infty,-2{\rm i}\min_{i=1,2}b_{i}^{2}]
	\cup[2{\rm i}\min_{i=1,2}b_{i}^{2},+{\rm i}\infty]\cup 
	\{0,\pm 2{\rm i}(b_{1}^{2}-b_{2}^{2})\}. 
	\end{equation*}
	The essential spectrum is
	\begin{equation*}
	\sigma_{ess}(\mathcal{J}\mathcal{L}_{1})=
	[-{\rm i}\infty,-2{\rm i}\min_{i=1,2}b_{i}^{2}]
	\cup[2{\rm i}\min_{i=1,2}b_{i}^{2},+{\rm i}\infty]\backslash 
	\{ \pm 2{\rm i}(b_{1}^{2}-b_{2}^{2}) \}
	\end{equation*}
	and the eigenfunctions are 
	\begin{equation*}
	\hat{\mathbf{P}}_{j}(\lambda), \quad \lambda\in\mathbb{R}, \qquad j = \pm 1, \pm 2.
	\end{equation*}
	The point spectrum is 
	\begin{equation*}
	\sigma_{point}(\mathcal{J}\mathcal{L}_{1})=\{0,\pm 2{\rm i}(b_{1}^{2}-b_{2}^{2})\}. 
	\end{equation*} 
	The eigenfunctions for $\Omega=0$ are spanned by
	\begin{equation*}
	\hat{\mathbf{P}}_{1}(\lambda_{1}), \quad 
	\hat{\mathbf{P}}_{1}(\lambda_{1}^{*}) ,\quad
	\hat{\mathbf{P}}_{2}(\lambda_{2}), \quad 
	\hat{\mathbf{P}}_{2}(\lambda_{2}^{*}),
	\end{equation*}
	and the generalized eigenfunctions are spanned by
	\begin{equation*}
	\hat{\mathbf{P}}_{1,\lambda}(\lambda_{1}), \quad 
	\hat{\mathbf{P}}_{1,\lambda}(\lambda_{1}^{*}) ,\quad
	\hat{\mathbf{P}}_{2,\lambda}(\lambda_{2}), \quad 
	\hat{\mathbf{P}}_{2,\lambda}(\lambda_{2}^{*}).
	\end{equation*}
	The eigenfunctions for $\Omega=2{\rm i}(b_{1}^{2}-b_{2}^{2})$ are spanned by
	\begin{equation*}
	\hat{\mathbf{P}}_{1}(\lambda_{2}), \quad 
	\hat{\mathbf{P}}_{1}(\lambda_{2}^{*}) .
	\end{equation*}
	The eigenfunctions for $\Omega=-2{\rm i}(b_{1}^{2}-b_{2}^{2})$ are spanned by
	\begin{equation*}
	\hat{\mathbf{P}}_{2}(\lambda_{1}), \quad 
	\hat{\mathbf{P}}_{2}(\lambda_{1}^{*}).
	\end{equation*}
\end{prop}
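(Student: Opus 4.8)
\emph{Overall strategy.} The plan is to combine Lemma~\ref{Lem-sol-to-JL1} with the decay and growth estimates for the squared eigenfunctions from Appendix~\ref{B.asy-exp}. By the lemma, for each fixed $\Omega \in \mathbb{C}$ the solution space of \eqref{spectral-JL} is eight-dimensional and is spanned by the squared eigenfunctions \eqref{eigenfunction-JL1} evaluated at the roots $\lambda$ of $\pm\Omega_j(\lambda) = \Omega$, that is, at the eight values listed in \eqref{eigenfunction-JL}, with a limiting procedure at finitely many exceptional $\Omega$. Hence $\Omega$ belongs to the spectrum iff at least one of these squared eigenfunctions is bounded, and $\Omega$ is an eigenvalue iff at least one of them lies in $L^2$. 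Since $\Omega = \pm\Omega_j(\lambda) = \pm 2\mathrm{i}\big((\lambda-a)^2 + b_j^2\big)$ by \eqref{spectral-JL-ch-1}, and since the appendix shows that $\hat{\mathbf{P}}_{\pm j}(\lambda)$ lies in the Schwartz class for $\lambda \in \{\lambda_1,\lambda_2,\lambda_1^*,\lambda_2^*\}$ and in $\mathrm{e}^{\mp 2\,\mathrm{Im}(\lambda)\xi}L^\infty(\mathbb{R})$ otherwise, a squared eigenfunction is in $L^\infty(\mathbb{R},\mathbb{C}^4)$ exactly when $\lambda \in \mathbb{R}$ (bounded, oscillatory) or $\lambda \in \{\lambda_1,\lambda_2,\lambda_1^*,\lambda_2^*\}$ (decaying), and it is in $L^2$ only in the latter case. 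The computation then reduces to translating these two families of admissible $\lambda$ into sets of admissible $\Omega$.

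\emph{Essential spectrum.} Setting $\lambda = a + \mu$ with $\mu \in \mathbb{R}$ gives $\pm\Omega_j(\lambda) = \pm 2\mathrm{i}(\mu^2 + b_j^2)$; as $\mu$ runs over $\mathbb{R}$ and $j$ over $\{1,2\}$ these fill the rays $[2\mathrm{i}\min_i b_i^2, \mathrm{i}\infty)$ and $(-\mathrm{i}\infty, -2\mathrm{i}\min_i b_i^2]$, with bounded oscillatory (non-$L^2$) eigenfunctions $\hat{\mathbf{P}}_j(\lambda)$, $\lambda \in \mathbb{R}$. To identify this continuous part with the essential spectrum I would invoke Weyl's theorem: $\hat{\mathbf{q}}$ and its derivatives are Schwartz, so $\mathcal{J}\mathcal{L}_1$ is a relatively compact perturbation of the constant-coefficient operator obtained by setting $\hat{\mathbf{q}} = 0$, whose spectrum computed by the Fourier transform is exactly those two rays.

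\emph{Point spectrum and Jordan structure.} The remaining admissible values $\lambda \in \{\lambda_1,\lambda_2,\lambda_1^*,\lambda_2^*\}$ give Schwartz, hence $L^2$, eigenfunctions; reading off the table of values of $\Omega_j$ on the Lax spectrum recorded before the lemma, these produce precisely $\Omega = 0$ (from $\Omega_1$ at $\lambda_1,\lambda_1^*$ and $\Omega_2$ at $\lambda_2,\lambda_2^*$) and $\Omega = \pm 2\mathrm{i}(b_1^2-b_2^2)$, so $\sigma_{point}(\mathcal{J}\mathcal{L}_1) = \{0,\pm 2\mathrm{i}(b_1^2-b_2^2)\}$. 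Reducing the a priori list of Schwartz squared eigenfunctions to the bases displayed in the statement uses the symmetry relations \eqref{symmetric-squared-eigenfunctions} among the $\hat{\mathbf{P}}_{\pm j}$ and the rank deficiency of $\mathbf{\Phi}^{[2]}_{r,non}$ at the point spectrum inherent in the Darboux construction, with the total count checked against the eight-dimensional solution space of \eqref{linear-CNLS-ODE}. For the algebraic multiplicity at $\Omega = 0$, I would differentiate \eqref{spectral-JL-ch-1} in $\lambda$ and evaluate at $\lambda_j$ and at $\lambda_j^*$: since $\Omega_j(\lambda_j) = 0$ and $\Omega_j'(\lambda_j) = 4\mathrm{i}(\lambda_j - a) = -4b_j \neq 0$, this yields $\mathcal{J}\mathcal{L}_1\,\hat{\mathbf{P}}_{j,\lambda}(\lambda_j) = \Omega_j'(\lambda_j)\,\hat{\mathbf{P}}_j(\lambda_j)$, exhibiting generalized eigenfunctions, which remain in $L^2$ because in the explicit formula for $\hat{\mathbf{P}}_{j,\lambda}$ the extra factor $\xi$ multiplies a Schwartz function. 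At $\Omega = \pm 2\mathrm{i}(b_1^2-b_2^2)$ one has $\Omega_1(\lambda_2) \neq 0$ because $b_1 \neq b_2$, so there is no Jordan block and the eigenspace is the two-dimensional span listed.

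\emph{Main obstacle.} The delicate point is the value $\Omega = \pm 2\mathrm{i}(b_1^2-b_2^2)$: solving $\mu^2 + b_j^2 = b_1^2 - b_2^2$ for real $\mu$ is possible precisely when $b_1/b_2 \ge \sqrt 2$ (taking $j = 2$), in which case this value also lies on a ray and a bounded oscillatory solution coexists with the Schwartz eigenfunction — this is exactly the embedded-eigenvalue scenario of Theorem~\ref{spectral-stability}. Care is then needed in the bookkeeping that records $\pm 2\mathrm{i}(b_1^2-b_2^2)$ as point spectrum while removing it from $\sigma_{ess}$, and in verifying through the limiting procedure of Lemma~\ref{Lem-sol-to-JL1} that no additional bounded solutions, and no Jordan structure inside the essential spectrum, appear at these or at the band-edge values $\pm 2\mathrm{i}b_i^2$. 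Everything else reduces to the boundedness and decay estimates already supplied in Appendix~\ref{B.asy-exp}.
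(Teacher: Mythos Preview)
Your proposal is correct and follows essentially the same route as the paper: Lemma~\ref{Lem-sol-to-JL1} plus the asymptotics of Appendix~\ref{B.asy-exp} to sort the squared eigenfunctions into bounded-oscillatory versus Schwartz, then differentiation of \eqref{spectral-JL-ch-1} for the Jordan block at $\Omega=0$. The only substantive difference is that where you invoke rank deficiency and a dimension count to reduce the sixteen a~priori Schwartz eigenfunctions to the listed bases, the paper writes out the explicit linear dependence relations among the $\hat{\mathbf{P}}_{\pm j}(\lambda_k)$ (e.g.\ $\hat{\mathbf{P}}_{-1}(\lambda_1)=\tfrac{1}{c_{11}^2}\tfrac{b_2-b_1}{b_1+b_2}\hat{\mathbf{P}}_1(\lambda_1)$, etc.); your Weyl-theorem identification of $\sigma_{ess}$ is a clean addition that the paper leaves implicit here.
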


\begin{proof}
	Having determined all solutions to the spectral problem \eqref{spectral-JL} by Lemma \ref{Lem-sol-to-JL1}, 
	we now focus on the asymptotic behavior of squared eigenfunctions described in \eqref{eigenfunction-JL}. 
	The kernel of \(\mathcal{J}\mathcal{L}_{1}\) can be obtained by squared eigenfunctions on 
	the point spectrum of Lax spectrum. 
    Since the determinant of the FMS $\mathbf{\Phi}^{[2]}_{r,non}$ is zero at the 
    points $\lambda={\lambda_{1}, \lambda_{2}, \lambda_{1}^{*}, \lambda_{2}^{*}}$, 
    the squared eigenfunctions in the point spectrum are linearly dependent, and it can be shown that
	\begin{equation*}
        \begin{split}
        \hat{\mathbf{P}}_{-1}(\lambda_{1}) &= \frac{1}{c_{11}^{2}}\frac{b_{2}-b_{1}}{b_{1}+b_{2}}\hat{\mathbf{P}}_{1}(\lambda_{1}) , \\
        \hat{\mathbf{P}}_{-1}(\lambda_{1}^{*}) &= (c_{11}^{*})^{2}\frac{b_{1}+b_{2}}{b_{2}-b_{1}}\hat{\mathbf{P}}_{1}(\lambda_{1}^{*}), \\
        \hat{\mathbf{P}}_{-1}(\lambda_{2})&=\frac{(b_{1}-b_{2})}{2c_{11}c_{22}b_{1}}\left( \hat{\mathbf{P}}_{2}(\lambda_{1})+|c_{11}|^{2}\hat{\mathbf{P}}_{2}(\lambda_{1}^{*}) \right), \\
        \hat{\mathbf{P}}_{-1}(\lambda_{2}^{*})&=\frac{c_{22}^{*}(b_{2}-b_{1})}{2c_{11}b_{1}}\left(\hat{\mathbf{P}}_{2}(\lambda_{1})+\frac{(b_{1}+b_{2})^{2}}{(b_{2}-b_{1})^{2}}|c_{11}|^{2}\hat{\mathbf{P}}_{2}(\lambda_{1}^{*})\right), \\
        \hat{\mathbf{P}}_{-2}(\lambda_{1})&=\frac{(b_{2}-b_{1})}{2c_{11}c_{22}b_{2}}\left( \hat{\mathbf{P}}_{1}(\lambda_{2})+|c_{22}|^{2}\hat{\mathbf{P}}_{1}(\lambda_{2}^{*}) \right), \\
        \hat{\mathbf{P}}_{-2}(\lambda_{1}^{*}) &= \frac{c_{11}^{*}(b_{1}-b_{2})}{2c_{22}b_{2}}\left(\hat{\mathbf{P}}_{1}(\lambda_{2})+\frac{(b_{1}+b_{2})^{2}}{(b_{2}-b_{1})^{2}}|c_{22}|^{2}\hat{\mathbf{P}}_{1}(\lambda_{2}^{*})\right), \\
        \hat{\mathbf{P}}_{-2}(\lambda_{2})&=\frac{1}{c_{22}^{2}}\frac{b_{1}-b_{2}}{b_{1}+b_{2}}\hat{\mathbf{P}}_{2}(\lambda_{2}) , \\
        \hat{\mathbf{P}}_{-2}(\lambda_{2}^{*}) &= (c_{22}^{*})^{2}\frac{b_{1}+b_{2}}{b_{1}-b_{2}}\hat{\mathbf{P}}_{2}(\lambda_{2}^{*}). \\
        \end{split}
    \end{equation*}
	To analyze the generalized eigenspace on point spectrum, we take the derivative with respect to \(\lambda\) 
	on both sides of \eqref{spectral-JL-ch-1} 
	\begin{equation*}
	\mathcal{J}\mathcal{L}_{1}\hat{\mathbf{P}}_{j,\lambda}(\lambda) = \Omega_{j,\lambda}(\lambda)\hat{\mathbf{P}}_{j}(\lambda) +
	\Omega_{j}(\lambda)\hat{\mathbf{P}}_{j,\lambda}(\lambda), \qquad j=\pm 1,\pm 2. 
	\end{equation*}
	The proof is completed by using the asymptotic behavior of generalized eigenfunctions in Appendix \ref{B.asy-exp}.
\end{proof}

\begin{rem}
It is well-known for general soliton solutions \cite{yang_nonlinear_2010} that the squared eigenfunctions for the point spectrum are also eigenfunctions of the recursion operator $\mathcal{L}_{r}$ given by \eqref{reu-op}. However, for non-degenerate vector soliton solutions with $c_{12}=c_{21}=0$, the corresponding relationships need to be slightly amended. In this case, we have 
\begin{equation*}
\mathcal{L}_{r}\mathbf{p}(\lambda)=\lambda\mathbf{p}(\lambda).
\end{equation*}
for the eigenvectors given by 
\begin{equation*}
    \mathbf{p}\in\{\hat{\mathbf{P}}_{\pm 1}(\lambda)\hat{\mathbf{P}}_{\pm 2}(\lambda): \lambda=\lambda_{1},\lambda_{1}^{*},\lambda_{2},\lambda_{2}^{*}\}\backslash
\{\hat{\mathbf{P}}_{1}(\lambda_{2}),\hat{\mathbf{P}}_{-1}(\lambda_{2}^{*}),\hat{\mathbf{P}}_{2}(\lambda_{1}),\hat{\mathbf{P}}_{-2}(\lambda_{1}^{*})\}.
\end{equation*}
However, the left eigenvectors satisfy the following equations:
\begin{align*}
    \mathcal{L}_{r}\hat{\mathbf{P}}_{1}(\lambda_{2}) &= \lambda_{1}\hat{\mathbf{P}}_{1}(\lambda_{2}) + |c_{22}|^{2}(\lambda_{2}-\lambda_{1}^{*})\hat{\mathbf{P}}_{1}(\lambda_{2}^{*}), \\
    \mathcal{L}_{r}\hat{\mathbf{P}}_{2}(\lambda_{1}) &= \lambda_{2}\hat{\mathbf{P}}_{2}(\lambda_{1}) + |c_{11}|^{2}(\lambda_{1}-\lambda_{2}^{*})\hat{\mathbf{P}}_{2}(\lambda_{1}^{*}), \\
    \mathcal{L}_{r}\hat{\mathbf{P}}_{-1}(\lambda_{2}^{*}) &= \lambda_{1}^{*}\hat{\mathbf{P}}_{-1}(\lambda_{2}^{*}) + |c_{22}|^{2}(\lambda_{2}^{*}-\lambda_{1})\hat{\mathbf{P}}_{-1}(\lambda_{2}), \\
    \mathcal{L}_{r}\hat{\mathbf{P}}_{-2}(\lambda_{1}^{*}) &= \lambda_{2}^{*}\hat{\mathbf{P}}_{-2}(\lambda_{1}^{*}) + |c_{11}|^{2}(\lambda_{1}^{*}-\lambda_{2})\hat{\mathbf{P}}_{-2}(\lambda_{1}).
\end{align*}
For example, the squared eigenfunction matrices satisfy
\begin{equation*}
    p_{1}(\mathbf{\Phi}^{[2]}_{r,non}(x;\lambda_{2}))\sim\begin{pmatrix}
        0 & 0 & 0 \\
        0 & 0 & 0 \\
        0 & 2{\rm i}c_{22}b_{2}(b_{2}^{2}-b_{1}^{2}) & 0
    \end{pmatrix},\quad x\to-\infty,
\end{equation*}
and hence the integral in $\mathcal{L}_{r}\hat{\mathbf{P}}_{1}(\lambda_{2})$ is computed as follows:
\begin{equation*}
    \mathcal{L}_{r}\hat{\mathbf{P}}_{1}(\lambda_{2}) = \lambda_{2}\hat{\mathbf{P}}_{1}(\lambda_{2})+{\rm i}c_{22}b_{2}(b_{2}^{2}-b_{1}^{2})(0,\hat{q}_{1},-\hat{q}_{2}^{*},0)^{T}
\end{equation*}
where $\hat{\mathbf{q}}=(\hat{q}_{1},\hat{q}_{2})^{T}$ is given in \eqref{q-non2-spearate}. By straightforward calculation, we obtain 
\begin{equation*}
    (0,\hat{q}_{1},-\hat{q}_{2}^{*},0)^{T}=-\frac{1}{c_{22}b_{2}(b_{1}+b_{2})}\hat{\mathbf{P}}_{1}(\lambda_{2})+\frac{c_{22}^{*}}{b_{2}(b_{2}-b_{1})}\hat{\mathbf{P}}_{1}(\lambda_{2}^{*}),
\end{equation*}
hence $\mathcal{L}_{r}\hat{\mathbf{P}}_{1}(\lambda_{2})$ is equal to $\lambda_{1}\hat{\mathbf{P}}_{1}(\lambda_{2}) + |c_{22}|^{2}(\lambda_{2}-\lambda_{1}^{*})\hat{\mathbf{P}}_{1}(\lambda_{2}^{*})$. 
For the scalar NLS equation, the above relations do not arise since each spectral parameter corresponds to one scattering parameter, while in CNLS equations, one spectral parameter $\lambda_{i}$ corresponds to two scattering parameters $\mathbf{c}_{i}$.
\end{rem}

\begin{figure}[htbp]
	\centering
	\includegraphics[scale=0.24]{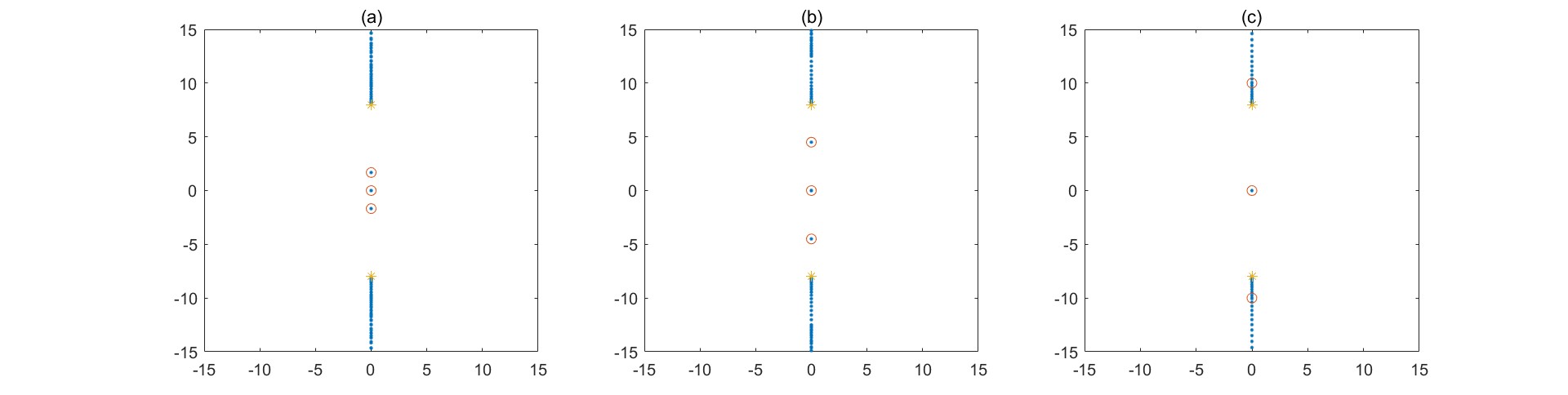}
	\caption{Computation of the spectrum of operator  $\mathcal{J}\mathcal{L}_{1}$ through Fourier 
		collocation method. The blue points are the numerical results. The red circles are the eigenvalues in point spectrum 
		and the yellow stars are the end points of essential spectrum $\pm 2{\rm i} \min\{ b_{1}^{2}, b_2^2 \}$. The parameters are $a=0$, $b_{1}=2$, $c_{11} = c_{22} = 1$, and (a) $b_{2}=2.2$, (b) $b_{2}=2.5$, (c) $b_{2}=3$. }
	\label{computation-spectrum}
\end{figure}

As an illustration, we apply the Fourier collocation method to numerically compute the spectrum of the operator \(\mathcal{J}\mathcal{L}_{1}\). Figure \ref{computation-spectrum} presents three cases. 
For a fixed \(b_{1}\), as \(b_{2}\) increases, 
isolated eigenvalues of \(\mathcal{J}\mathcal{L}_{1}\) on panels (a,b) 
become embedded eigenvalues on panel (c). 
The numerical results are in agreement with the analytical 
expressions in Proposition \ref{spectrum-JL1}.

Since the spectrum of the operator \(\mathcal{J}\mathcal{L}_{1}\) is contained on the purely imaginary axis, the proof of Theorem \ref{spectral-stability} on the spectral stability of non-degenerate vector solitons is complete.

\section{Spectra of $\mathcal{J}\mathcal{L}_{2}$ and $\mathcal{L}_{2}$}
\label{JL2-L2}

We consider the spectra of \(\mathcal{J}\mathcal{L}_{2}\) 
and \(\mathcal{L}_{2}\) for the breather solutions of Definition \ref{def-breather}. The spectrum of \(\mathcal{J}\mathcal{L}_{2}\) 
can be obtained by using squared eigenfunctions, and the key tool connecting the spectra of these two operators is the closure relation for squared eigenfunctions. The main difficulty lies in the inner product \((\mathcal{L}_{2}\cdot,\cdot)\) within the space 
\(\mathrm{gKer}(\mathcal{J}\mathcal{L}_{2}) \backslash \mathrm{Ker}(\mathcal{J}\mathcal{L}_{2})\). 
We propose a new method to calculate it.

As the main outcome of our analysis, we compute the number of negative eigenvalues and the dimension of the kernel for the second variation of the Lyapunov functional
\eqref{Ly-non}:
\begin{align*}
    \mathcal{L}_{2}=&
    16(a^{2}+b_{1}^{2})(a^{2}+b_{2}^{2})\frac{\delta^{2} H_{0}}{\delta q^{2}}
    -16a(2a^{2}+b_{1}^{2}+b_{2}^{2})\frac{\delta^{2} H_{1}}{\delta q^{2}}\\&+
    4(6a^{2}+b_{1}^{2}+b_{2}^{2})\frac{\delta^{2} H_{2}}{\delta q^{2}}
    -8a\frac{\delta^{2} H_{3}}{\delta q^{2}}+\frac{\delta^{2} H_{4}}{\delta q^{2}}.
\end{align*}
This linear operator appears in the expansion of the Lyapunov functional around the breather solution $\mathbf{q}^{[2]}$:
\begin{equation*}
    \mathcal{I}_{2}(\mathbf{q}^{[2]}+\epsilon \mathbf{u})=\mathcal{I}_{2}(\mathbf{q}^{[2]})+
    \frac{1}{2}\epsilon^{2}
    (\mathcal{L}_{2}(\mathbf{u}),\mathbf{u})+\mathcal{O}(\epsilon^{3}).
\end{equation*}
It is necessary to study the spectrum of \(\mathcal{L}_{2}\) for the proof of  the nonlinear stability of breathers in Section \ref{sec-5}. 

Recall that the non-degenerate vector solitons are included in the breather 
solutions (\ref{breather}) for $c_{11},c_{22}\ne 0$ and $c_{12}=c_{21}=0$.
For general non-degenerate breather solutions, three of the scattering parameters $c_{11},c_{12},c_{21},c_{22}$ are required to be nonzero.
If $c_{11}\neq 0$ and $c_{12}=c_{22}=0$, then the breather solutions become the degenerate vector soliton with the single-humped profiles, stability of which is well-studied, see \cite{cipolatti2000orbitally,li_structural_1998,li_mechanism_2000,nguyen2011orbital,nguyen_existence_2015,ohta_stability_1996}. Hence we assume that $c_{11}\neq 0$ with either $c_{12}\neq 0$ or $c_{22} \neq 0$ and without loss of generality, we consider $c_{11},c_{12} \ne 0$ and arbitrary $c_{21},c_{22}\in\mathbb{C}$.

\subsection{The squared eigenfunctions}

The squared eigenfunctions for the non-degenerate vector soliton solutions have been constructed in Section \ref{sec-3}. Here we consider the general case of breathers. To eliminate the singularity associated 
with $\lambda$ in the point spectrum, we introduce the regular Darboux matrix and the regular FMS
\begin{equation*}
    \mathbf{D}_{r}^{[2]}=(\lambda-\lambda_{1}^{*})(\lambda-\lambda_{2}^{*})\mathbf{D}^{[2]}|_{t=0},\quad 
    \mathbf{\Phi}^{[2]}_{r}=(\lambda-\lambda_{1}^{*})(\lambda-\lambda_{2}^{*})\mathbf{\Phi}^{[2]}|_{t=0}. 
\end{equation*}
The squared eigenfunctions for breather solutions are
\begin{equation}\label{squared-eigenfunction-JL2}
    \mathbf{P}_{\pm 1}(\lambda)=s_{\pm 1}(\mathbf{\Phi}^{[2]}_{r}), \qquad  \mathbf{P}_{\pm 2}(\lambda)=s_{\pm 2}(\mathbf{\Phi}^{[2]}_{r}).
\end{equation}
To obtain the asymptotic behavior of squared eigenfunction, we define $\mathbf{D}_{r}^{+}=(D_{r,ij}^{+})_{1\leq i,j\leq 3}$ with 
    \begin{equation*}
        \begin{split}
            &D_{r,11}^{+}=
                (\lambda-\lambda_{1}^{*})(\lambda-\lambda_{2}^{*})-\frac{1}{M^{+}}\sum_{s=1}^{2}
                    (\lambda-\lambda_{s})(\lambda-\lambda_{3-s}^{*})d_{3-s,3-s}{\rm e}^{-4b_{s}x}
                , \\
                &D_{r,i+1,1}^{+}=\frac{2}{M^{+}}\sum_{s=1}^{2}
                b_{s}D_{i,3-s}(\lambda-\lambda_{3-s}^{*}){\rm e}^{-2 b_{s}x}{\rm e}^{-2{\rm i}ax}, \\
                &D_{r,1,i+1}^{+}=\frac{-2}{M^{+}}\sum_{s=1}^{2}
                \left(b_{s}d_{3-s,3-s}c_{is}^{*}(\lambda-\lambda_{3-s}^{*})-\frac{2b_{1}b_{2}d_{s,3-s}c_{i,3-s}^{*}}{b_{1}+b_{2}}(\lambda-\lambda_{s}^{*})\right){\rm e}^{-2 b_{s}x}
                {\rm e}^{2{\rm i}ax}, \\
                &D_{r,i+1,j+1}^{+}=
                (\lambda-\lambda_{1}^{*})(\lambda-\lambda_{2}^{*})\delta_{ij}-\frac{2{\rm i}}{M^{+}}\sum_{s=1}^{2}
                    b_{s}D_{i,3-s}c_{js}^{*}(\lambda-\lambda_{3-s}^{*})
        \end{split}
    \end{equation*}
    for $i,j=1,2$ and $\mathbf{D}_{r}^{-}=(D_{r,ij}^{-})_{1\leq i,j\leq 3}$ with 
    \begin{equation*}
        \begin{split}
            D_{r,11}^{-}=&
                (\lambda-\lambda_{1})(\lambda-\lambda_{2})-\frac{1}{M^{-}}\sum_{s=1}^{2}
                    (\lambda-\lambda_{s}^{*})(\lambda-\lambda_{3-s})d_{ss}{\rm e}^{4b_{s}x}
                , \\
            D_{r,i+1,1}^{-}=&\frac{2}{M^{-}}\sum_{s=1}^{2}
                b_{s}\frac{b_{s}-b_{3-s}}{b_{1}+b_{2}}c_{is}(\lambda-\lambda_{3-s}){\rm e}^{2b_{s}x}
                {\rm e}^{-2{\rm i}ax}, \\
            D_{r,1,i+1}^{-}=&\frac{-2}{M^{-}}\sum_{s=1}^{2}
                b_{s}\frac{b_{s}-b_{3-s}}{b_{1}+b_{2}}c_{is}^{*}(\lambda-\lambda_{3-s}^{*}){\rm e}^{2b_{s}x}
                {\rm e}^{2{\rm i}ax}, \\
            D_{r,i+1,i+1}^{-}=&
                (\lambda-\lambda_{1}^{*})(\lambda-\lambda_{2}^{*})-\frac{1}{M^{-}}\sum_{s=1}^{2}
                    (2{\rm i}b_{s}|c_{is}|^{2}-(\lambda-\lambda_{s}^{*})d_{ss})(\lambda-\lambda_{3-s}^{*}){\rm e}^{4b_{s}x}
                \\
            D_{r,i+1,4-i}^{-}=&
                    -\frac{2{\rm i}}{M^{-}}\sum_{s=1}^{2}
                        b_{s}c_{is}c_{3-i,s}^{*}(\lambda-\lambda_{3-s}^{*}){\rm e}^{4b_{s}x}
        \end{split}
    \end{equation*}
    for $i=1,2$. The elements of regular Darboux matrix $\mathbf{D}_{r}^{[2]}=(D_{r,ij}^{[2]})_{1\leq i,j\leq 3}$ satisfy
\begin{equation*}
    D_{r,ij}^{[2]}\sim D_{r,ij}^{\pm},\quad x\to\pm \infty,
\end{equation*}
from which the asymptotic behavior of the squared eigenfunctions can be derived.
The following proposition gives the construction of the squared eigenfunctions. 

\begin{prop}\label{asy-squ-eig-2}
    Assuming $c_{11},c_{21},c_{12}, c_{22} \neq 0$, the squared eigenfunctions \eqref{squared-eigenfunction-JL2} are nonzero and belong to the class of Schwartz functions on $\mathbb{R}$ for $\lambda=\lambda_{1},\lambda_{1}^{*},\lambda_{2},\lambda_{2}^{*}$. Moreover, they satisfy the following relations:
    \begin{equation*}
        \begin{split}
            \mathbf{P}_{2}(\lambda_{1}) &= \frac{c_{21}}{c_{11}} \mathbf{P}_{1}(\lambda_{1}), \quad
            \mathbf{P}_{2}(\lambda_{1}^{*}) = -\frac{c_{11}^{*}}{c_{21}^{*}} \mathbf{P}_{1}(\lambda_{1}^{*}) + \frac{1}{c_{11}^{*}c_{21}^{*}} \mathbf{P}_{-1}(\lambda_{1}^{*}), \\
            \mathbf{P}_{2}(\lambda_{2}) &= \frac{c_{22}}{c_{12}} \mathbf{P}_{1}(\lambda_{2}), \quad
            \mathbf{P}_{2}(\lambda_{2}^{*}) = -\frac{c_{12}^{*}}{c_{22}^{*}} \mathbf{P}_{1}(\lambda_{2}^{*}) + \frac{1}{c_{12}^{*}c_{22}^{*}} \mathbf{P}_{-1}(\lambda_{2}^{*}),
        \end{split}
    \end{equation*}
with linearly independent $\mathbf{P}_{\pm 1}(\lambda_{1})$, $\mathbf{P}_{\pm 1}(\lambda_{2})$,  $\mathbf{P}_{\pm 1}(\lambda_{1}^*)$, and $\mathbf{P}_{\pm 1}(\lambda_{2}^*)$. The functions
$$
\mathbf{P}_{1,\lambda}(\lambda_{1}), \mathbf{P}_{1,\lambda}(\lambda_{2}), \mathbf{P}_{-1,\lambda}(\lambda_{1}^{*}), \mathbf{P}_{-1,\lambda}(\lambda_{2}^{*})
$$ 
also belong to the class of Schwartz functions on $\mathbb{R}$, whereas 
$$
\mathbf{P}_{1,\lambda}(\lambda_{1}^{*}), \mathbf{P}_{1,\lambda}(\lambda_{2}^{*}), 
            \mathbf{P}_{-1,\lambda}(\lambda_{1}), \mathbf{P}_{-1,\lambda}(\lambda_{2})
$$ 
are unbounded as $|x| \to \infty$. 
\end{prop}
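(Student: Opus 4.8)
The plan is to derive every claim from the explicit dressing $\mathbf{\Phi}^{[2]}_r=(\lambda-\lambda_1^*)(\lambda-\lambda_2^*)\,\mathbf{D}^{[2]}\mathbf{\Phi}^{[0]}|_{t=0}$ together with the asymptotic representations $D_{r,ij}^{[2]}\sim D_{r,ij}^{\pm}$ as $x\to\pm\infty$ recorded above. First I would write out every component of $\mathbf{P}_{\pm 1}(\lambda)$ and $\mathbf{P}_{\pm 2}(\lambda)$ through the entries of $\mathbf{\Phi}^{[2]}_r(\lambda)$ and $\mathbf{\Phi}^{[2]}_r(\lambda^*)$, hence through $\mathbf{D}^{[2]}_r(\lambda)$, $\mathbf{D}^{[2]}_r(\lambda^*)$ and the scalar plane waves $e^{\pm{\rm i}\lambda x}$, $e^{\pm{\rm i}\lambda^* x}$ coming from $\mathbf{\Phi}^{[0]}$. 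After the matched exponential factors cancel, one is left for generic $\lambda$ with a single surviving factor $e^{\mp 2\,\mathrm{Im}(\lambda)\xi}$ multiplying a bounded function, consistent with the behavior found in Section \ref{sec-3}; the rest of the proof reduces to bookkeeping of which exponents survive at $\lambda\in\{\lambda_1,\lambda_2,\lambda_1^*,\lambda_2^*\}$ and to reading off the leading coefficients.

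For the Schwartz statement I would substitute $\lambda=\lambda_j$ and $\lambda=\lambda_j^*$ into the formulas for $D_{r,ij}^{\pm}$. At those four values the polynomial factors $(\lambda-\lambda_k^*)$ appearing in $\mathbf{D}_r^{\pm}$, combined with the off-diagonal structure of the squared-eigenfunction matrices $p_{\pm i}$, force the surviving exponent to be strictly negative both as $\xi\to+\infty$ and as $\xi\to-\infty$; since the pre-exponential factors are at worst polynomial in $x$ (the breather potential lies in the Schwartz class), each $\mathbf{P}_{\pm 1}(\lambda_j)$ and $\mathbf{P}_{\pm 2}(\lambda_j)$ decays faster than any polynomial and therefore belongs to $\mathcal{S}(\mathbb{R})$. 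Nonvanishing is then obtained by reading off the leading asymptotic coefficient in each case; this is precisely where the hypotheses $c_{11},c_{21},c_{12},c_{22}\neq 0$ enter, since those coefficients are monomials or short polynomials in the $c_{ij}$ and $b_i$ that must not vanish.

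The linear relations follow from the rank deficiency of $\mathbf{\Phi}^{[2]}_r$ on the point spectrum, where $\det\mathbf{\Phi}^{[2]}_r=(\lambda-\lambda_1^*)^2(\lambda-\lambda_2^*)^2(\lambda-\lambda_1)(\lambda-\lambda_2)$ vanishes and the null directions of the dressing at $\lambda_i$ and $\lambda_i^*$ are spanned by the constant vectors $c^{[i]}=(1,{\rm i}c_{1i},{\rm i}c_{2i})^{T}$. Tracking how these enter the definitions of $s_{\pm 1}$ and $s_{\pm 2}$, at $\lambda_1$ the relevant columns of the inverse-type matrix become proportional with ratio $c_{21}/c_{11}$, which gives $\mathbf{P}_2(\lambda_1)=\frac{c_{21}}{c_{11}}\mathbf{P}_1(\lambda_1)$; the double zero at $\lambda_1^*$ produces the three-term identity once the subleading contribution is retained, and the relations at $\lambda_2,\lambda_2^*$ follow identically with $c_{11},c_{21}$ replaced by $c_{12},c_{22}$. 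For linear independence of the eight functions $\mathbf{P}_{\pm 1}(\lambda_1),\mathbf{P}_{\pm 1}(\lambda_2),\mathbf{P}_{\pm 1}(\lambda_1^*),\mathbf{P}_{\pm 1}(\lambda_2^*)$ I would compare their leading-order asymptotics as $x\to\pm\infty$: functions attached to distinct eigenvalues $\Omega$ of $\mathcal{J}\mathcal{L}_2$ are automatically independent, while within a block sharing the same $\Omega$ the leading coefficients at $+\infty$ and at $-\infty$ fail to be proportional precisely because $c_{12},c_{21}\neq 0$ — in contrast with the non-degenerate case of Section \ref{sec-3}, where $\mathbf{P}_{-1}(\lambda_1)$ and $\mathbf{P}_1(\lambda_1)$ coincide up to a constant.

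For the $\lambda$-derivatives I would differentiate these asymptotic expansions. At $\lambda=\lambda_j$ the surviving exponent is strictly negative, so $\partial_\lambda$ merely multiplies a decaying exponential by a polynomial in $x$ and $\mathbf{P}_{1,\lambda}(\lambda_j)\in\mathcal{S}(\mathbb{R})$; at $\lambda=\lambda_j^*$ the Schwartz property of $\mathbf{P}_1(\lambda_j^*)$ rested on the vanishing of a leading coefficient that multiplied a growing exponential, and since the $\lambda$-derivative of that coefficient is nonzero the growing term reappears, so $\mathbf{P}_{1,\lambda}(\lambda_j^*)$ is unbounded as $|x|\to\infty$. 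The statements for $\mathbf{P}_{-1,\lambda}$ follow from the same analysis with the roles of $+\infty$ and $-\infty$, equivalently of $\lambda_j$ and $\lambda_j^*$, interchanged — which is the concrete form of the symmetry \eqref{symmetric-squared-eigenfunctions} relating $\mathbf{P}_{-1}$ and $\mathbf{P}_1$. The main obstacle throughout is the asymptotic bookkeeping itself: propagating every exponential factor through the products defining $s_{\pm 1}$ and $s_{\pm 2}$, locating the surviving exponents at the four spectral points, and verifying that the leading coefficients behave as claimed under the hypotheses $c_{ij}\neq 0$. This is naturally relegated to an appendix, parallel to the treatment of the non-degenerate vector solitons in Section \ref{sec-3}.
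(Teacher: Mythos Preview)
Your proposal is correct and follows essentially the same approach as the paper: substitute $\lambda\in\{\lambda_1,\lambda_2,\lambda_1^*,\lambda_2^*\}$ into the explicit asymptotic formulas for $\mathbf{D}_r^{\pm}$, track the surviving exponents in the products defining $p_{\pm i}(\mathbf{\Phi}_r^{[2]})$ to establish Schwartz decay, differentiate in $\lambda$ for the derivative statements, and invoke the symmetry \eqref{symmetric-squared-eigenfunctions} to halve the work. One small caution: your remark that ``functions attached to distinct eigenvalues $\Omega$ of $\mathcal{J}\mathcal{L}_2$ are automatically independent'' is vacuous here, since $\mathcal{P}_2(\lambda_j)=\mathcal{P}_2(\lambda_j^*)=0$ places all eight of $\mathbf{P}_{\pm 1}(\lambda_j),\mathbf{P}_{\pm 1}(\lambda_j^*)$ in the \emph{same} eigenspace $\mathrm{Ker}(\mathcal{J}\mathcal{L}_2)$; the linear independence must be established entirely by your asymptotic-coefficient comparison, which is indeed what the paper (implicitly) does.
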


\begin{proof}
The relations between squared eigenfunctions $\mathbf{P}_{2}(\lambda)$ at $\lambda=\lambda_{1},\lambda_{1}^{*},\lambda_{2},\lambda_{2}^{*}$ are obtained by algebraic 
    calculations. The expressions for $\mathbf{P}_{-2}(\lambda)$ at  $\lambda=\lambda_{1},\lambda_{1}^{*},\lambda_{2},\lambda_{2}^{*}$ can be obtained by the relation \eqref{symmetric-squared-eigenfunction-matrix}. In view of the symmetry \eqref{symmetric-squared-eigenfunctions}, it suffices to consider 
$p_1(\mathbf{\Phi}_{r}^{[2]})(\lambda)$. We provide the calculations
for $p_{1}(\mathbf{\Phi}_{r}^{[2]})(\lambda_{1})$, and the calculations for the other entries are similar. The squared eigenfunction matrix satisfies 
\begin{equation*}
    p_1(\mathbf{\Phi}_{r}^{[2]})(\lambda) = {\rm e}^{2{\rm i}\lambda x} p_1(\mathbf{D}_{r}^{[2]})(\lambda) \sim
    {\rm e}^{2{\rm i}\lambda x} p_1(\mathbf{D}_{r}^{\pm})(\lambda), \quad x \to \pm\infty,
\end{equation*}
hence we have
\begin{equation*}
    p_1(\mathbf{\Phi}_{r}^{[2]})(\lambda_{1}) \sim 
    {\rm e}^{2{\rm i}a x}{\rm e}^{-2b_{1} x} p_1(\mathbf{D}_{r}^{\pm})(\lambda_{1}), \quad 
    x \to \pm \infty. 
\end{equation*}
It can be observed that \( s_1(\mathbf{\Phi}_{r}^{[2]})(\lambda_{1}) \) exhibits exponential decay as \( x \to +\infty \) due to the term \( {\rm e}^{-2b_{1} x} \). Elements of the first column of \( \mathbf{D}_{r}^{-} \) at \( \lambda = \lambda_{1} \) are represented as:
\begin{equation*}
    \begin{split}
        D_{r,11}^{-}(\lambda_{1}) &= (\lambda_{1}-\lambda_{2})(\lambda_{1}-\lambda_{1}^{*})d_{11}{\rm e}^{4b_{1}x}, \\
        D_{r,i+1,1}^{-}(\lambda_{1}) &= \frac{2b_{1}}{M^{-}}
            \frac{b_{1}-b_{2}}{b_{1}+b_{2}}c_{i1}(\lambda_{1}-\lambda_{2}){\rm e}^{2b_{1}x}
            {\rm e}^{-2{\rm i}ax},
    \end{split}
\end{equation*}
and elements of the second and third columns of \( \mathbf{D}_{r}^{-} \) at \( \lambda = \lambda_{1}^{*} \) are represented as
\begin{equation*}
    \begin{split}
        D_{r,1,i+1}^{-}(\lambda_{1}^{*}) &= -\frac{2b_{1}}{M^{-}}
            \frac{b_{1}-b_{2}}{b_{1}+b_{2}}c_{i1}^{*}(\lambda_{1}^{*}-\lambda_{2}^{*}){\rm e}^{2b_{1}x}{\rm e}^{2{\rm i}ax}, \\
        D_{r,i+1,i+1}^{-}(\lambda_{1}^{*}) &= -\frac{2{\rm i}b_{1}}{M^{-}}
            |c_{i1}|^{2}(\lambda_{1}^{*}-\lambda_{2}^{*}){\rm e}^{4b_{1}x}, \\
        D_{r,i+1,4-i}^{-}(\lambda_{1}^{*}) &= -\frac{2{\rm i}b_{1}}{M^{-}}
            c_{i1}c_{3-i,1}^{*}(\lambda_{1}^{*}-\lambda_{2}^{*}){\rm e}^{4b_{1}x},
    \end{split}
\end{equation*}
for $i=1,2$. Hence, \( s_1(\mathbf{\Phi}_{r}^{[2]})(\lambda_{1}) \) exhibits exponential decay as $x \to -\infty$ as well.   
For the derivative of \( s_1(\mathbf{\Phi}_{r}^{[2]}) \) at $\lambda=\lambda_{1}$, we have 
\begin{equation*}
    p_1(\mathbf{\Phi}_{r}^{[2]})_{\lambda}(\lambda_{1}) \sim 
    {\rm e}^{2{\rm i}a x}{\rm e}^{-2b_{1} x} p_1(\mathbf{D}_{r}^{\pm})_{\lambda}(\lambda_{1})+
    2{\rm i}x{\rm e}^{2{\rm i}a x}{\rm e}^{-2b_{1} x} p_1(\mathbf{D}_{r}^{\pm})(\lambda_{1}), \quad 
    x \to \pm \infty, 
\end{equation*}
with
\begin{equation*}
    p_{1}(\mathbf{D}_{r}^{\pm})_{\lambda}(\lambda_{1})=
    (\mathbf{\mathbf{D}}_{r,\lambda}^{\pm}(\lambda_{1}))_{1}
    ((\mathbf{\mathbf{D}}_{r}^{\pm})^{\dagger}(\lambda_{1}^{*}))^{2}+
    (\mathbf{\mathbf{D}}_{r}^{\pm}(\lambda_{1}))_{1}
    ((\mathbf{\mathbf{D}}_{r,\lambda}^{\pm})^{\dagger}(\lambda_{1}^{*}))^{2}. 
\end{equation*}
In view of the exponential decay \( {\rm e}^{-2b_{1} x} \) 
for \( p_{1}(\mathbf{\Phi}_{r}^{[2]})_{\lambda}(\lambda_{1}) \) as \( x \to +\infty \), 
we only need to consider the case \( x \to -\infty \). The functions 
\begin{equation*}
    \begin{split}
        (D_{r,11}^{-})_{\lambda}(\lambda_{1}) &= (\lambda_{1}-\lambda_{2})-\frac{1}{M^{-}}\sum_{s=1}^{2}
                (2\lambda_{1}-\lambda_{3-s}-\lambda_{s}^{*})d_{11}{\rm e}^{4b_{s}x}, \\
        (D_{r,i+1,1}^{-})_{\lambda}(\lambda_{1}) &= \frac{2}{M^{-}}\sum_{s=1}^{2}
            b_{s}\frac{b_{s}-b_{3-s}}{b_{1}+b_{2}}c_{is}{\rm e}^{2b_{s}x}{\rm e}^{-2{\rm i}ax}
    \end{split}
\end{equation*}
and 
\begin{equation*}
    \begin{split}
        (D_{r,1,i+1}^{-})_{\lambda}(\lambda_{1}^{*}) =& -\frac{2}{M^{-}}\sum_{s=1}^{2}
            b_{s}\frac{b_{s}-b_{3-s}}{b_{1}+b_{2}}c_{is}^{*}{\rm e}^{2b_{s}x}{\rm e}^{2{\rm i}ax}, \\
        (D_{r,i+1,i+1}^{-})_{\lambda}(\lambda_{1}^{*}) =& (\lambda_{1}^{*}-\lambda_{2}^{*})-\frac{1}{M^{-}}
        \sum_{s=1}^{2}
                (2{\rm i}b_{s}|c_{is}|^{2}-d_{ss}(\lambda_{1}^{*}-\lambda_{2}^{*})){\rm e}^{4b_{s}x},\\
        (D_{r,i+1,4-i}^{-})_{\lambda}(\lambda_{1}^{*}) =& -\frac{2{\rm i}}{M^{-}}\sum_{s=1}^{2}
                b_{s}c_{is}c_{3-i,s}^{*}{\rm e}^{4b_{s}x},
    \end{split}
\end{equation*}
for $i = 1,2$. Hence, \( s_{1}(\mathbf{\Phi}_{r}^{[2]})_{\lambda}(\lambda_{1}) \) exhibits the exponential decay as \( x = - \infty \) as well.
\end{proof}

\subsection{The spectrum for $\mathcal{J}\mathcal{L}_{2}$}

Since \(\mathbf{q}^{[2]}\) is a solution of $\frac{\delta \mathcal{I}_{2}}{\delta \mathbf{q}}=0$ for every $t \in \mathbb{R}$, it suffices to consider \(\mathcal{J}\mathcal{L}_{2}(\mathbf{q}^{[2]})|_{t=0}\). 
The eigenfunctions of \(\mathcal{J}\mathcal{L}_{2}\) are given by the squared eigenfunctions \eqref{squared-eigenfunction-JL2}.
Define 
\begin{equation*}
    \mathbf{V}_{n} = {\rm i} \sum_{i=0}^{n} \mathbf{L}_{i} \lambda^{n-i}, \quad n \in \mathbb{N},
\end{equation*}
where \(\mathbf{L}_{i}\) is defined in Appendix \ref{app.A2}. We consider the Lax pair associated with a linear combination of the first flows in the CNLS hierarchy:
\begin{equation}
\label{CNLS-higer-1}
    \begin{split}
        \mathbf{\Phi}_{2,x}(\lambda; x, t_{2}) &= \mathbf{U}(\lambda, \mathbf{u}) \mathbf{\Phi}_{2}(\lambda; x, t_{2}) \\
        \mathbf{\Phi}_{2,t_{2}}(\lambda; x, t_{2}) &= \sum_{i=0}^{4} 2^{i} \mu_{i} \mathbf{V}_{i}(\lambda, \mathbf{u}) \mathbf{\Phi}_{2}(\lambda; x, t_{2}). 
    \end{split}
\end{equation}
The squared eigenfunctions of the Lax pair \eqref{CNLS-higer-1} satisfy
\begin{equation*}
    \mathbf{B}_{2,x} = [\mathbf{U}, \mathbf{B}_{2}], \quad  
    \mathbf{B}_{2,t_{2}} =  \sum_{i=0}^{4} 2^{i} \mu_{i} [\mathbf{V}_{i}, \mathbf{B}_{2}]. 
\end{equation*}
If we denote
\begin{equation*}
    \mathbf{B}_{2}=\begin{pmatrix}
        f_{2} & \mathbf{h}_{2}^{T} \\
        \mathbf{g}_{2} & -\mathbf{f}_{1,2}
    \end{pmatrix}, 
\end{equation*}
then the straightforward calculation shows that 
\begin{equation*}
    \begin{pmatrix}
        \mathbf{g}_{2} \\
        -\mathbf{h}_{2}
    \end{pmatrix}_{t_{2}}=
    \mathcal{J}\mathcal{L}_{2}(\mathbf{u})\begin{pmatrix}
        \mathbf{g}_{2} \\
        -\mathbf{h}_{2}
    \end{pmatrix}. 
\end{equation*}

Applying the $2$-fold Darboux transformation to the solution
\begin{equation*}
    \mathbf{\Phi}_{2}^{[0]}={\rm e}^{{\rm i}(\lambda x+16\mathcal{P}_{2}(\lambda) t_{2})\sigma_{3}}
\end{equation*}
of the Lax pair \eqref{CNLS-higer-1} associated with zero potential $\mathbf{Q}_{2}^{[0]}=\mathbf{0}$ and the vector
\begin{equation*}
    |\mathbf{y}_i\rangle=\mathbf{\Phi}_{2}^{[0]}(\lambda_{i};x,t_{2})c^{[i]}
    ={\rm e}^{{\rm i}(\lambda_{i} x+16\mathcal{P}_{2}(\lambda_{i}) t_{2})\sigma_{3}}\begin{pmatrix}
        1 \\ \ii c_{1i} \\ \ii c_{2i}
    \end{pmatrix}, \quad i=1,2,
\end{equation*}
we obtain the new FMS 
\begin{equation*}
    \mathbf{\Phi}_{2}^{[2]}(\lambda;x,t_{2})
    =\mathbf{\Phi}^{[2]}(\lambda;x,0){\rm e}^{16{\rm i}\mathcal{P}_{2}(\lambda) t_{2}\sigma_{3}} 
\end{equation*}
of the same Lax pair \eqref{CNLS-higer-1} but associated with the potential
$\mathbf{Q}_{2}^{[2]}(x,t_{2})=\mathbf{Q}^{[2]}(x,0)$. 
Hence we have obtained the eigenfunctions of the operator $\mathcal{J}\mathcal{L}_{2}(\mathbf{q}^{[2]})|_{t=0}$. From here, we prove the following theorem, where we simplify the 
notation and write $\mathcal{J}\mathcal{L}_{2}$ instead of  $\mathcal{J}\mathcal{L}_{2}(\mathbf{q}^{[2]})|_{t=0}$.

\begin{thm}
	\label{thm-J-L2}
    The squared eigenfunctions \eqref{squared-eigenfunction-JL2}
    satisfy the spectral problem for the operator \(\mathcal{J}\mathcal{L}_{2}\) given by
    \begin{equation*}
        \mathcal{J}\mathcal{L}_{2}\mathbf{P}_{\pm 1}(\lambda) =\pm 32{\rm i}\mathcal{P}_{2}(\lambda)
        \mathbf{P}_{\pm 1}(\lambda), \qquad         \mathcal{J}\mathcal{L}_{2}\mathbf{P}_{\pm 2}(\lambda) =\pm 32{\rm i}\mathcal{P}_{2}(\lambda)
        \mathbf{P}_{\pm 2}(\lambda).
        \end{equation*}
    The spectrum of \(\mathcal{J}\mathcal{L}_{2}\) is given by 
    \begin{equation*}
        \sigma(\mathcal{J}\mathcal{L}_{2}) = (-{\rm i}\infty, -32{\rm i}b_{1}^{2}b_{2}^{2}] \cup [32{\rm i}b_{1}^{2}b_{2}^{2}, {\rm i}\infty) \cup \{0\}
    \end{equation*}
and includes the essential spectrum 
    \begin{equation*}
        \sigma_{ess}(\mathcal{J}\mathcal{L}_{2}) = (-{\rm i}\infty, -32{\rm i}b_{1}^{2}b_{2}^{2}] \cup [32{\rm i}b_{1}^{2}b_{2}^{2}, {\rm i}\infty)
    \end{equation*}
and the point spectrum
    \begin{equation*}
        \sigma_{point}(\mathcal{J}\mathcal{L}_{2}) = \{0\}.
    \end{equation*}
If $c_{11},c_{12}\neq 0$, then the eigenfunctions of ${\rm Ker}(\mathcal{J}\mathcal{L}_{2})$ are spanned by eight squared eigenfunctions 
$$
\mathbf{P}_{\pm 1}(\lambda_1), \quad \mathbf{P}_{\pm 1}(\lambda_2), \quad 
\mathbf{P}_{\pm 1}(\lambda_1^*), \quad \mathbf{P}_{\pm 1}(\lambda_2^*),
$$
whereas the generalized eigenfunctions of ${\rm Ker}(\mathcal{J}\mathcal{L}_{2})$ are spanned by four squared eigenfunctions
    \begin{equation*}
        \mathbf{P}_{1,\lambda}(\lambda_{1}), \quad 
        \mathbf{P}_{1,\lambda}(\lambda_{2}), \quad
        \mathbf{P}_{-1,\lambda}(\lambda_{1}^{*}), \quad 
        \mathbf{P}_{-1,\lambda}(\lambda_{2}^{*}).
    \end{equation*}
    Specifically, we have
    \begin{equation*}
        \begin{split}
        \mathcal{J}\mathcal{L}_{2}\mathbf{P}_{1,\lambda}(\lambda_{1})&=
        64b_{1}(b_{1}^{2}-b_{2}^{2})\mathbf{P}_{1}(\lambda_{1}), \quad
        \mathcal{J}\mathcal{L}_{2}\mathbf{P}_{-1,\lambda}(\lambda_{1}^{*})=
        64b_{1}(b_{1}^{2}-b_{2}^{2})\mathbf{P}_{-1}(\lambda_{1}^{*}), \\
        \mathcal{J}\mathcal{L}_{2}\mathbf{P}_{1,\lambda}(\lambda_{2})&=
        -64b_{2}(b_{1}^{2}-b_{2}^{2})\mathbf{P}_{1}(\lambda_{2}), \quad
        \mathcal{J}\mathcal{L}_{2}\mathbf{P}_{-1,\lambda}(\lambda_{2}^{*})=
        -64b_{2}(b_{1}^{2}-b_{2}^{2})\mathbf{P}_{-1}(\lambda_{2}^{*}). \\
        \end{split}
    \end{equation*}
\end{thm}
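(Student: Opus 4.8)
The plan is to combine the higher-flow Darboux construction already set up above with the asymptotic information of Proposition \ref{asy-squ-eig-2} and a completeness argument parallel to Lemma \ref{Lem-sol-to-JL1}. For the eigenvalue relation, I would start from $\mathbf{\Phi}_{2}^{[2]}(\lambda;x,t_{2})=\mathbf{\Phi}^{[2]}(\lambda;x,0){\rm e}^{16{\rm i}\mathcal{P}_{2}(\lambda)t_{2}\sigma_{3}}$. Since $\mathbf{\Phi}^{-1}=\mathbf{\Phi}^{\dagger}(\lambda^{*})$ and $\mathcal{P}_{2}$ has real coefficients (so that $\overline{\mathcal{P}_{2}(\lambda^{*})}=\mathcal{P}_{2}(\lambda)$), the first column of $\mathbf{\Phi}_{2}^{[2]}$ and the second and third rows of $(\mathbf{\Phi}_{2}^{[2]})^{-1}$ each carry the scalar factor ${\rm e}^{16{\rm i}\mathcal{P}_{2}(\lambda)t_{2}}$; hence the squared-eigenfunction matrices satisfy $p_{\pm i}(\mathbf{\Phi}_{2,r}^{[2]})(\lambda;x,t_{2})={\rm e}^{\pm 32{\rm i}\mathcal{P}_{2}(\lambda)t_{2}}\,p_{\pm i}(\mathbf{\Phi}_{r}^{[2]})(\lambda;x,0)$. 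Combined with the identity $(\mathbf{g}_{2},-\mathbf{h}_{2})_{t_{2}}=\mathcal{J}\mathcal{L}_{2}(\mathbf{g}_{2},-\mathbf{h}_{2})$ obtained above, this reads off $\mathcal{J}\mathcal{L}_{2}\mathbf{P}_{\pm i}(\lambda)=\pm 32{\rm i}\mathcal{P}_{2}(\lambda)\mathbf{P}_{\pm i}(\lambda)$ for $i=1,2$ at every $\lambda$ where $\mathbf{\Phi}_{r}^{[2]}$ is regular. As in Lemma \ref{Lem-sol-to-JL1}, rewriting $\mathcal{J}\mathcal{L}_{2}\mathbf{P}=\Omega\mathbf{P}$ as a first-order linear system of dimension sixteen, the functions $\mathbf{P}_{1}(\lambda),\mathbf{P}_{2}(\lambda)$ at the four roots of $32{\rm i}\mathcal{P}_{2}(\lambda)=\Omega$ together with $\mathbf{P}_{-1}(\lambda),\mathbf{P}_{-2}(\lambda)$ at the four roots of $-32{\rm i}\mathcal{P}_{2}(\lambda)=\Omega$ are, for generic $\Omega$, sixteen linearly independent solutions (their exponential rates at $\pm\infty$, computed from $\mathbf{D}_{r}^{\pm}$, are pairwise distinct), hence a fundamental system; degenerate values of $\Omega$ are treated by the usual limiting procedure.

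I would then identify the spectrum. By Proposition \ref{asy-squ-eig-2} and the boundary matrices $\mathbf{D}_{r}^{\pm}$ (with the asymptotics at general $\lambda$ handled as in Appendix \ref{B.asy-exp}), $\mathbf{P}_{\pm i}(\lambda)\in L^{\infty}(\mathbb{R},\mathbb{C}^{4})$ precisely when $\lambda\in\mathbb{R}\cup\{\lambda_{1},\lambda_{2},\lambda_{1}^{*},\lambda_{2}^{*}\}$, every other $\lambda$ giving exponential growth at one end. For $\lambda\in\mathbb{R}$ one has $\pm 32{\rm i}\mathcal{P}_{2}(\lambda)=\pm 32{\rm i}\big((\lambda-a)^{2}+b_{1}^{2}\big)\big((\lambda-a)^{2}+b_{2}^{2}\big)$, which sweeps $\pm[32{\rm i}b_{1}^{2}b_{2}^{2},{\rm i}\infty)$; together with the absence of bounded solutions for any $\Omega$ outside this set and off $\{0\}$ (from the fundamental-system count), this yields the essential spectrum $(-{\rm i}\infty,-32{\rm i}b_{1}^{2}b_{2}^{2}]\cup[32{\rm i}b_{1}^{2}b_{2}^{2},{\rm i}\infty)$. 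Since $\mathcal{P}_{2}$ vanishes at all four Lax points, each bounded eigenfunction $\mathbf{P}_{\pm 1}(\lambda_{k})$ belongs to $\Omega=0$, so the point spectrum is $\{0\}$, and it is isolated because it lies in the gap $(-32{\rm i}b_{1}^{2}b_{2}^{2},32{\rm i}b_{1}^{2}b_{2}^{2})$. This is the crucial contrast with $\mathcal{J}\mathcal{L}_{1}$: the polynomial $\mathcal{P}_{2}(\lambda)=(\lambda-\lambda_{1})(\lambda-\lambda_{1}^{*})(\lambda-\lambda_{2})(\lambda-\lambda_{2}^{*})$ annihilates all four Lax points simultaneously, whereas $\Omega_{j}=2{\rm i}(\lambda-\lambda_{j})(\lambda-\lambda_{j}^{*})$ does not, so no nonzero isolated or embedded eigenvalue survives.

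For the zero eigenvalue, the eight functions $\mathbf{P}_{\pm 1}(\lambda_{1}),\mathbf{P}_{\pm 1}(\lambda_{2}),\mathbf{P}_{\pm 1}(\lambda_{1}^{*}),\mathbf{P}_{\pm 1}(\lambda_{2}^{*})$ are Schwartz and linearly independent by Proposition \ref{asy-squ-eig-2}, while the corresponding $\mathbf{P}_{\pm 2}(\lambda_{k})$ are linear combinations of them via the relations in that proposition and the symmetry \eqref{symmetric-squared-eigenfunctions}; hence they span $\mathrm{Ker}(\mathcal{J}\mathcal{L}_{2})$, which is eight-dimensional. Differentiating $\mathcal{J}\mathcal{L}_{2}\mathbf{P}_{\pm 1}(\lambda)=\pm 32{\rm i}\mathcal{P}_{2}(\lambda)\mathbf{P}_{\pm 1}(\lambda)$ in $\lambda$ and using $\mathcal{P}_{2}(\lambda_{k})=0$ yields $\mathcal{J}\mathcal{L}_{2}\mathbf{P}_{\pm 1,\lambda}(\lambda_{k})=\pm 32{\rm i}\mathcal{P}_{2}'(\lambda_{k})\mathbf{P}_{\pm 1}(\lambda_{k})$, and evaluating $\mathcal{P}_{2}'(\lambda_{1})=-2{\rm i}b_{1}(b_{1}^{2}-b_{2}^{2})$ and $\mathcal{P}_{2}'(\lambda_{2})=2{\rm i}b_{2}(b_{1}^{2}-b_{2}^{2})$ (and the conjugates at $\lambda_{1}^{*},\lambda_{2}^{*}$) gives the stated coefficients $\pm 64 b_{j}(b_{1}^{2}-b_{2}^{2})$. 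By Proposition \ref{asy-squ-eig-2} exactly $\mathbf{P}_{1,\lambda}(\lambda_{1}),\mathbf{P}_{1,\lambda}(\lambda_{2}),\mathbf{P}_{-1,\lambda}(\lambda_{1}^{*}),\mathbf{P}_{-1,\lambda}(\lambda_{2}^{*})$ are bounded (Schwartz) while the remaining four $\lambda$-derivatives are unbounded, so within the sixteen-dimensional $\Omega=0$ solution space the bounded part has dimension $8+4=12$; because $\mathcal{P}_{2}$ has simple zeros, $\mathcal{P}_{2}'(\lambda_{k})\neq 0$ and no longer Jordan chains occur, so $\mathrm{gKer}(\mathcal{J}\mathcal{L}_{2})$ is exactly twelve-dimensional.

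The main obstacle I anticipate is this last step: establishing that $\mathrm{Ker}(\mathcal{J}\mathcal{L}_{2})$ is exactly eight-dimensional and $\mathrm{gKer}(\mathcal{J}\mathcal{L}_{2})$ exactly twelve-dimensional. This relies entirely on the delicate asymptotic bookkeeping of Proposition \ref{asy-squ-eig-2} — the explicit boundary matrices $\mathbf{D}_{r}^{\pm}$ and the clean separation of the decaying squared eigenfunctions and their four bounded $\lambda$-derivatives from the remaining, unbounded ODE solutions — and it is here that the hypothesis $c_{11},c_{12}\neq 0$ is essential: it guarantees that $\mathbf{P}_{\pm 1}$ at the four Lax points are genuinely independent, unlike in the degenerate soliton case $c_{12}=c_{21}=0$, where proportionality relations collapse the kernel (compare the relations in Proposition \ref{spectrum-JL1}).
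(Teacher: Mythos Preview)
Your proposal is correct and follows essentially the same route as the paper: the eigenvalue relation is read off from the higher-flow Darboux construction preceding the theorem, the kernel and generalized kernel are identified via Proposition~\ref{asy-squ-eig-2}, and the essential spectrum comes from the range of $\pm 32{\rm i}\mathcal{P}_{2}(\lambda)$ over $\lambda\in\mathbb{R}$. The only minor methodological difference is that the paper invokes Weyl's essential spectrum theorem directly for $\sigma_{ess}(\mathcal{J}\mathcal{L}_{2})$, whereas you argue via the fundamental-system count and boundedness criterion as in Lemma~\ref{Lem-sol-to-JL1}; your explicit evaluation of $\mathcal{P}_{2}'(\lambda_{k})$ to obtain the generalized-eigenfunction coefficients is a detail the paper leaves implicit.
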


\begin{proof}
    The point spectrum consists of squared eigenfunctions corresponding to 
    $\lambda \in \{\lambda_{1}, \lambda_{2},$ $ \lambda_{1}^{*}, \lambda_{2}^{*}\}$. 
    For these values of \(\lambda\), the FMS becomes singular, leading to only eight independent 
    eigenfunctions in the kernel by Proposition \ref{asy-squ-eig-2} (see also Remark \ref{rem-asy-squ-eig-2}). The essential spectrum is a direct consequence of Weyl's essential spectrum theorem. The polynomial satisfies
    \begin{equation*}
        \mathcal{P}_{2}(\lambda) = ((\lambda - a)^{2} + b_{1}^{2})((\lambda - a)^{2} + b_{2}^{2}) \geq b_{1}^{2}b_{2}^{2},
    \end{equation*}
    for \(\lambda \in \mathbb{R}\), and the lower bound is attained if and only if \(\lambda = a\).
\end{proof}

\begin{rem}
	Compared to the spectrum of the operator $\mathcal{J}\mathcal{L}_{1}$ in Theorem \ref{spectral-stability}, the spectrum of the operator $\mathcal{J}\mathcal{L}_{2}$ in Theorem \ref{thm-J-L2} does not include a pair of purely imaginary eigenvalues (isolated or embedded) of negative Krein signature. This is related to the fact that compared to the four negative eigenvalues of the operator $\mathcal{L}_{1}$, the operator 
	$\mathcal{L}_{2}$ has only two negative eigenvalues, see Theorem \ref{spectrum-L2-1} and Remark \ref{rem-spectrum-L1}.
\end{rem}

\begin{rem}\label{rem-asy-squ-eig-2} 
In Proposition \ref{asy-squ-eig-2}, we require the assumption $c_{11},c_{21},c_{12}, c_{22} \neq 0$.  
If $c_{11},c_{12}\ne 0$ with zero $c_{21}$ or $c_{22}$ or both, some of squared eigenfunctions will be vanishing.  We can perform a similar calculation as in Proposition \ref{asy-squ-eig-2} with a proper adjustment. 
For example, if $c_{21}=0$, the functions $\mathbf{P}_{1}(\lambda_{1}^*)$ 
and $\mathbf{P}_{-1}(\lambda_{1}^*)$ are linearly dependent but $\mathbf{P}_{2,\lambda}(\lambda_{1})$ is linearly independent from 
other functions in $\mathrm{Ker}(\mathcal{J}\mathcal{L}_{2})$. By straightforward calculations, the following squared eigenfunctions in $\mathrm{Ker}(\mathcal{J}\mathcal{L}_{2})$ are linearly independent if $c_{21}=0$ and $c_{11},c_{12},c_{22}\ne 0$:
    \begin{equation*}
        \mathbf{P}_{1}(\lambda_{1}), \quad 
        \mathbf{P}_{1}(\lambda_{2}), \quad
        \mathbf{P}_{-1}(\lambda_{1}^{*}), \quad
        \mathbf{P}_{-1}(\lambda_{2}^{*}), \quad 
        \mathbf{P}_{2,\lambda}(\lambda_{1}), \quad 
        \mathbf{P}_{-1}(\lambda_{2}), \quad
        \mathbf{P}_{-2}(\lambda_{1}), \quad 
        \mathbf{P}_{-2,\lambda}(\lambda_{1}^{*}).
    \end{equation*}
Similarly, the following squared eigenfunctions in $\mathrm{Ker}(\mathcal{J}\mathcal{L}_{2})$ are linearly independent if $c_{22}=0$ and $c_{11},c_{12},c_{21}\ne 0$:
    \begin{equation*}
        \mathbf{P}_{1}(\lambda_{1}), \quad 
        \mathbf{P}_{1}(\lambda_{2}), \quad
        \mathbf{P}_{-1}(\lambda_{1}^{*}), \quad
        \mathbf{P}_{-1}(\lambda_{2}^{*}), \quad 
        \mathbf{P}_{2,\lambda}(\lambda_{2}), \quad 
        \mathbf{P}_{-1}(\lambda_{1}), \quad
        \mathbf{P}_{-2}(\lambda_{2}), \quad 
        \mathbf{P}_{-2,\lambda}(\lambda_{2}^{*})
    \end{equation*}
and, finally, if $c_{21}=c_{22}=0$ and $c_{11},c_{12}\ne 0$:
    \begin{equation*}
        \mathbf{P}_{1}(\lambda_{1}), \quad 
        \mathbf{P}_{1}(\lambda_{2}), \quad
        \mathbf{P}_{-1}(\lambda_{1}^{*}), \quad
        \mathbf{P}_{-1}(\lambda_{2}^{*}), \quad 
        \mathbf{P}_{2,\lambda}(\lambda_{1}), \quad 
        \mathbf{P}_{2,\lambda}(\lambda_{2}), \quad
        \mathbf{P}_{-2,\lambda}(\lambda_{1}^{*}), \quad 
        \mathbf{P}_{-2,\lambda}(\lambda_{2}^{*}).
    \end{equation*}
In all these cases, the generalized eigenfunctions remain the same as in Theorem \ref{thm-J-L2}.
\end{rem}

\subsection{The closure relation and orthogonal condition}

The squared eigenfunctions satisfy the closure relation, which implies that the set of squared eigenfunctions forms a complete basis in the \(L^{2}\) space. 
Using the closure relation, any perturbation in \(L^{2}\) can be expressed in terms of the squared eigenfunctions.

The closure relation and orthogonal conditions relate the spectrum of \(\mathcal{J}\mathcal{L}_{2}\) to the spectrum of \(\mathcal{L}_{2}\). To compute the quadratic form \((\mathcal{L}_{2} \cdot, \cdot)\) 
for squared eigenfunctions, it is sufficient to calculate 
\((\mathcal{J}^{-1} \cdot, \cdot) = -(\mathcal{J} \cdot, \cdot)\) for squared eigenfunctions.
Since
\begin{equation*}
    (\mathcal{J}\mathcal{L}_{2})^{*} = -\mathcal{L}_{2}\mathcal{J},
\end{equation*}
the adjoint squared eigenfunctions \(\mathcal{J}\mathbf{P}_{i}\) satisfy the spectral problem
\begin{equation*}
    -\mathcal{L}_{2}\mathcal{J}(\mathcal{J}\mathbf{P}_{\pm i}) = \mp 32{\rm i}\mathcal{P}_{2}(\lambda)
    (\mathcal{J}\mathbf{P}_{\pm i}),\quad i=1,2.
\end{equation*}
Thus, computing the quadratic form for squared eigenfunctions relies on the analysis of the inner products between squared eigenfunctions and 
adjoint squared eigenfunctions. This approach allows us to determine the number of negative eigenvalues and the kernel of \(\mathcal{L}_{2}\).
The closure relation (see Appendix \ref{C.colsure}) leads to the decomposition
\begin{equation*}
    L^{2}(\mathbb{R},\mathbb{C}^4) = \mathbb{E} + \mathrm{gKer}(\mathcal{J}\mathcal{L}_{2}),
\end{equation*}
where
\begin{equation*}
    \mathbb{E} = \mathrm{span} \left\{
        \int_{\mathbb{R}} w_{i}(\lambda)\mathbf{P}_{i}(\lambda,x) \, \mathrm{d}\lambda, \; \;  
        w_{i} \in L^{2}(\mathbb{R},\mathbb{C}), \;\; i = \pm 1, \pm 2
    \right\}.
\end{equation*}
The sum is not a direct sum with respect to the inner product. 
However, orthogonality conditions apply to squared eigenfunctions 
both in the continuous and discrete spectra. The following proposition specifies the orthogonality conditions between the squared eigenfunctions in the continuous spectrum, where we use the Kronecker symbol:
$\delta_{ij}=1$ if $i=j$ and $\delta_{ij}=0$ if $i \ne j$. 

\begin{prop}
    The following orthogonality relations hold:
    \begin{equation}\label{ort-cod}
        \int_{\mathbb{R}}\mathbf{P}_{i}^{\dagger}(\lambda,x) \mathcal{J}\mathbf{P}_{j}(\lambda',x) \, \mathrm{d}x = {\rm i}\pi \, |(\lambda - \lambda_{1})(\lambda - \lambda_{2})|^{4} \delta(\lambda - \lambda')\delta_{ij},
    \end{equation}
    for \( i,j \in \{ \pm 1, \pm 2\} \) and $\lambda,\lambda'\in\mathbb{R}$. 
    Moreover, the orthogonality conditions
    \begin{equation}\label{ort-cod-continuous-discrete}
        (\mathcal{L}_{2}\mathbf{u}, \mathbf{v}) = 0, \quad \mathbf{u} \in \mathbb{E}, \;\; \mathbf{v} \in \mathrm{gKer}(\mathcal{J}\mathcal{L}_{2})
    \end{equation}
    also hold.
\end{prop}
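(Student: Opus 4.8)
The plan is to reduce the integral $\int_{\mathbb{R}}\mathbf{P}_i^{\dagger}(\lambda,x)\mathcal{J}\mathbf{P}_j(\lambda',x)\,\mathrm{d}x$ to a boundary term at $x=\pm\infty$ by means of the matrix differential equations \eqref{diff-L}, and to extract both statements from that boundary term. Writing the regularized squared eigenfunction matrix $\mathbf{F}_k:=p_k(\mathbf{\Phi}^{[2]}_{r})$ in the block form $\mathbf{F}_k=\bigl(\begin{smallmatrix} f_k & \mathbf{h}_k^{T}\\ \mathbf{g}_k & \mathbf{K}_k\end{smallmatrix}\bigr)$, one has $\mathbf{P}_k=(\mathbf{g}_k^{T},-\mathbf{h}_k^{T})^{T}$; using $\mathcal{J}=-{\rm i}\,\mathrm{diag}(\mathbb{I}_2,-\mathbb{I}_2)$ together with the real-axis form $\mathbf{F}_k(\lambda)^{\dagger}=\mathbf{F}_{-k}(\lambda)$ of the symmetry \eqref{symmetric-squared-eigenfunction-matrix}, a short computation gives the pointwise identity
\[
\mathbf{P}_i(\lambda)^{\dagger}\mathcal{J}\mathbf{P}_j(\lambda')=-{\rm i}\,\big[\mathbf{F}_{-i}(\lambda),\mathbf{F}_{j}(\lambda')\big]_{11},\qquad \lambda\in\mathbb{R}.
\]
Since $\mathbf{U}(\lambda)-\mathbf{U}(\lambda')={\rm i}(\lambda-\lambda')\sigma_3$, and $\mathbf{F}_{-i}(\lambda)$ and $\mathbf{F}_j(\lambda')$ each solve $\mathbf{F}_x=[\mathbf{U},\mathbf{F}]$ at their own spectral parameter, the trace of the product obeys $\partial_x\,\mathrm{tr}\big(\mathbf{F}_{-i}(\lambda)\mathbf{F}_j(\lambda')\big)={\rm i}(\lambda-\lambda')\,\mathrm{tr}\big(\sigma_3[\mathbf{F}_{-i}(\lambda),\mathbf{F}_j(\lambda')]\big)=2{\rm i}(\lambda-\lambda')\,[\mathbf{F}_{-i}(\lambda),\mathbf{F}_j(\lambda')]_{11}$, the last step because the trace of a commutator vanishes. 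Hence, for $\lambda\neq\lambda'$,
\[
\mathbf{P}_i(\lambda)^{\dagger}\mathcal{J}\mathbf{P}_j(\lambda')=\frac{-1}{2(\lambda-\lambda')}\,\partial_x\,\mathrm{tr}\big(\mathbf{F}_{-i}(\lambda)\mathbf{F}_j(\lambda')\big),
\]
so the integral collapses to $T_{+}-T_{-}$, where $T_{\pm}(\lambda,\lambda'):=\lim_{x\to\pm\infty}\mathrm{tr}\big(\mathbf{F}_{-i}(\lambda)\mathbf{F}_j(\lambda')\big)$ are to be read off from the asymptotic Darboux matrices $\mathbf{D}^{[2]}_{\pm\infty}$ and the factors ${\rm e}^{{\rm i}\lambda x\sigma_3}$.

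For \eqref{ort-cod-continuous-discrete} no boundary analysis is needed. Since $\mathcal{J}$ is invertible, $\mathrm{Ker}(\mathcal{J}\mathcal{L}_2)=\mathrm{Ker}(\mathcal{L}_2)$, and because $\mathcal{P}_2$ vanishes at $\lambda_1,\lambda_2,\lambda_1^{*},\lambda_2^{*}$, Theorem \ref{thm-J-L2} gives $\mathcal{J}\mathcal{L}_2\mathbf{P}_{\pm1}(\mu)=0$, hence $\mathcal{L}_2\mathbf{P}_{\pm1}(\mu)=0$, for each $\mu\in\{\lambda_1,\lambda_2,\lambda_1^{*},\lambda_2^{*}\}$; thus every one of the eight squared eigenfunctions spanning $\mathrm{Ker}(\mathcal{J}\mathcal{L}_2)$ already lies in $\mathrm{Ker}(\mathcal{L}_2)$, and for such $\mathbf{v}$ one has $(\mathcal{L}_2\mathbf{u},\mathbf{v})=(\mathbf{u},\mathcal{L}_2\mathbf{v})=0$ directly. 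For a generalized eigenfunction $\mathbf{v}\in\{\mathbf{P}_{1,\lambda}(\lambda_1),\mathbf{P}_{1,\lambda}(\lambda_2),\mathbf{P}_{-1,\lambda}(\lambda_1^{*}),\mathbf{P}_{-1,\lambda}(\lambda_2^{*})\}$, Theorem \ref{thm-J-L2} gives $\mathcal{J}\mathcal{L}_2\mathbf{v}=c_{\mathbf{v}}\mathbf{w}_{\mathbf{v}}$ with an explicit constant $c_{\mathbf{v}}$ and $\mathbf{w}_{\mathbf{v}}\in\mathrm{Ker}(\mathcal{L}_2)$ at one of the discrete points, so $\mathcal{L}_2\mathbf{v}=-c_{\mathbf{v}}\mathcal{J}\mathbf{w}_{\mathbf{v}}$ and it suffices to prove $(\mathbf{P}_i(\lambda),\mathcal{J}\mathbf{w}_{\mathbf{v}})=0$ for $\lambda\in\mathbb{R}$, $i\in\{\pm1,\pm2\}$. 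This last fact follows softly: using that $\mathcal{L}_2$ is self-adjoint, $\mathbf{w}_{\mathbf{v}}$ is Schwartz with $\mathcal{L}_2\mathbf{w}_{\mathbf{v}}=0$, and $\mathcal{J}\mathcal{L}_2\mathbf{P}_i(\lambda)=\Omega_i(\lambda)\mathbf{P}_i(\lambda)$ with $\Omega_i(\lambda)=\pm32{\rm i}\mathcal{P}_2(\lambda)\neq0$ for $\lambda\in\mathbb{R}$, one gets $0=(\mathbf{P}_i(\lambda),\mathcal{L}_2\mathbf{w}_{\mathbf{v}})=(\mathcal{L}_2\mathbf{P}_i(\lambda),\mathbf{w}_{\mathbf{v}})=-\Omega_i(\lambda)(\mathcal{J}\mathbf{P}_i(\lambda),\mathbf{w}_{\mathbf{v}})$, hence $(\mathbf{P}_i(\lambda),\mathcal{J}\mathbf{w}_{\mathbf{v}})=0$; the case $\mathbf{u}=\int w_i(\lambda)\mathbf{P}_i(\lambda,\cdot)\,\mathrm{d}\lambda$ then follows from the closure decomposition $L^2=\mathbb{E}+\mathrm{gKer}(\mathcal{J}\mathcal{L}_2)$.

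For \eqref{ort-cod} with $\lambda,\lambda'\in\mathbb{R}$, the opposite-sign entries ($i>0>j$) vanish by the same soft mechanism: $\big(\overline{\Omega_i(\lambda)}+\Omega_j(\lambda')\big)\int\mathbf{P}_i^{\dagger}\mathcal{J}\mathbf{P}_j\,\mathrm{d}x=0$ with $\overline{\Omega_i(\lambda)}+\Omega_j(\lambda')=\mp32{\rm i}\big(\mathcal{P}_2(\lambda)+\mathcal{P}_2(\lambda')\big)\neq0$. For same-sign entries I would feed $\mathbf{\Phi}^{[2]}_r(\lambda)\sim(\lambda-\lambda_1^{*})(\lambda-\lambda_2^{*})\,\mathbf{D}^{[2]}_{\pm\infty}(\lambda)\,{\rm e}^{{\rm i}\lambda x\sigma_3}$ into $T_{\pm}$. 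At $x\to-\infty$, $\mathbf{D}^{[2]}_{-\infty}$ is diagonal, so for $i\neq j$ the $-\infty$ contribution vanishes, while for $i=j$ it is $|(\lambda-\lambda_1)(\lambda-\lambda_2)|^{4}\,{\rm e}^{2{\rm i}(\lambda'-\lambda)x}$ times the unimodular reflection phase coming from the $(1,1)$ element of $\mathbf{S}(\lambda)$ in \eqref{S-matrix-non-2}. At $x\to+\infty$, $\mathbf{D}^{[2]}_{+\infty}$ is $1$ in the $(1,1)$ slot and the $2\times2$ transmission block of $\mathbf{S}(\lambda)$, which is unitary for $\lambda\in\mathbb{R}$ because $\mathbf{U}(\lambda)^{\dagger}=-\mathbf{U}(\lambda)$ forces $(\mathbf{\Phi}^{[2]})^{\dagger}\mathbf{\Phi}^{[2]}$ to be $x$-independent; orthonormality of the columns of this unitary block produces the Kronecker $\delta_{ij}$ and, for $i=j$, again $|(\lambda-\lambda_1)(\lambda-\lambda_2)|^{4}\,{\rm e}^{2{\rm i}(\lambda'-\lambda)x}$. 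Evaluating $\lim_{R\to\infty}\frac{-1}{2(\lambda-\lambda')}\big[\mathrm{tr}(\mathbf{F}_{-i}(\lambda)\mathbf{F}_j(\lambda'))\big]_{x=-R}^{x=+R}$ in the distributional sense, the pieces that vanish at $\lambda'=\lambda$ contribute nothing by Riemann--Lebesgue, whereas $\dfrac{{\rm e}^{2{\rm i}(\lambda'-\lambda)R}-{\rm e}^{-2{\rm i}(\lambda'-\lambda)R}}{2(\lambda-\lambda')}\longrightarrow{\rm i}\pi\,\delta(\lambda-\lambda')$, producing exactly \eqref{ort-cod}.

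The step I expect to be the main obstacle is this last, distributional one: justifying the interchange of $\int\mathrm{d}x$ with the spectral integral, making sense of the boundary term as an oscillatory limit, and---above all---organizing the $x\to\pm\infty$ asymptotics so that all principal-value and off-support pieces cancel and only the coefficient ${\rm i}\pi|(\lambda-\lambda_1)(\lambda-\lambda_2)|^{4}$ in front of $\delta(\lambda-\lambda')$ survives. This rests on two subsidiary facts that need their own short arguments: the unitarity on $\lambda\in\mathbb{R}$ of the $2\times2$ transmission block of $\mathbf{S}(\lambda)$, and the bookkeeping showing that the regularization factor $(\lambda-\lambda_1^{*})(\lambda-\lambda_2^{*})$ together with its conjugate coming from $(\mathbf{\Phi}^{[2]}_r)^{\dagger}(\lambda^{*})$ contributes precisely the fourth power on the support of the delta. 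Everything else is routine matrix algebra.
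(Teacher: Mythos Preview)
Your approach is correct but genuinely different from the paper's. The paper does not compute the boundary asymptotics directly; instead it invokes the general closure relation and orthogonality identity \eqref{orth-cod-general} from Appendix~\ref{C.colsure} (standard material, cf.\ \cite{yang_nonlinear_2010}), observes that for the breather background the squared eigenfunctions $\mathbf{O}_j$, $\mathbf{R}_j$ appearing there are scalar multiples of the $\mathbf{P}_{\pm j}$ (via the explicit reflectionless $\mathbf{S}(\lambda)$ in \eqref{S-matrix-non-2}), and reads off the coefficient. For \eqref{ort-cod-continuous-discrete} the paper argues, rather tersely, that both $\mathbb{E}$ and $\mathrm{gKer}(\mathcal{J}\mathcal{L}_2)$ are built from $\mathcal{J}\mathcal{L}_2$-eigendata, so $(\mathcal{L}_2\mathbf{u},\mathbf{v})$ reduces to $(\mathcal{J}\mathbf{u},\tilde{\mathbf{v}})$ with $\tilde{\mathbf{v}}$ in the generalized kernel, and the closure relation separates the continuous and discrete parts.

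Your route---the trace identity $\partial_x\mathrm{tr}(\mathbf{F}_{-i}(\lambda)\mathbf{F}_j(\lambda'))=-2(\lambda-\lambda')\,\mathbf{P}_i^{\dagger}\mathcal{J}\mathbf{P}_j$ followed by distributional evaluation of the boundary terms---is exactly the continuous-spectrum analogue of Lemma~\ref{thm-orthogonal-condition-discrete}, which the paper derives but only exploits for the \emph{discrete} spectrum in Proposition~\ref{negative-1}. So you are using the same basic tool in a regime the paper chose not to, preferring to quote the appendix instead. What you gain is a self-contained derivation that does not rely on the general closure machinery; what the paper gains is a one-line proof once Appendix~\ref{C.colsure} is in hand. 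Your treatment of \eqref{ort-cod-continuous-discrete} via self-adjointness of $\mathcal{L}_2$ and the nonvanishing of $\mathcal{P}_2$ on $\mathbb{R}$ is in fact more transparent than the paper's sketch. The ingredients you flag as needing care---unitarity of the lower $2\times2$ block of $\mathbf{S}(\lambda)$ on $\mathbb{R}$, and the bookkeeping that turns the regularization factors into $|(\lambda-\lambda_1)(\lambda-\lambda_2)|^4$ on the support of the delta---are genuine but routine, and your Dirichlet-kernel plus Riemann--Lebesgue argument for the oscillatory remainder is the standard way to close it.
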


\begin{proof}
Appendix \ref{C.colsure} and the expression for \(\mathbf{S}(\lambda)\) give the relations
    \begin{equation*}
        \mathbf{O}_{j} = \frac{s_{11}(\lambda)}{\mathcal{P}_{2}(\lambda)}\mathbf{P}_{j}, \quad 
        \mathbf{O}_{j+2} = \frac{s_{11}^{-1}(\lambda)}{\mathcal{P}_{2}(\lambda)}\mathbf{P}_{-j}, \quad j = 1, 2.
    \end{equation*}
The orthogonality relation \eqref{ort-cod} is obtained from \eqref{orth-cod-general} and \eqref{S-matrix-non-2}. Since the bases for \(\mathbb{E}\) and \(\mathrm{gKer}(\mathcal{J}\mathcal{L}_{2})\) satisfy the spectral problem for the operator \(\mathcal{J}\mathcal{L}_{2}\), the inner product \((\mathcal{L}_{2}\mathbf{u}, \mathbf{v})\) can be transformed into \((\mathcal{J}\mathbf{u}, \tilde{\mathbf{v}})\) for some \(\tilde{\mathbf{v}} \in \mathrm{gKer}(\mathcal{J}\mathcal{L}_{2})\). The conditions \eqref{ort-cod-continuous-discrete} follow from the closure relations when the continuous and discrete spectra are not empty.
\end{proof}

It remains to obtain the orthogonal condition between squared eigenfunctions 
and adjoint squared eigenfunctions in the point spectrum. 
We use here the squared eigenfunction matrix \eqref{squared-eigenfunction-matrix} defined by $\mathbf{\Phi}$. 
The singularities of $\mathbf{\Phi}$ are eliminated after a multiplication of $\mathbf{\Phi}$ by a constant factor. In view of \eqref{diff-L}, the matrix
\begin{equation*}
    (\mathbf{G}(\eta)\mathbf{F}(\lambda))_{x} = {\rm i}(\lambda - \eta) \mathbf{G}(\eta) \sigma_{3} \mathbf{F}(\lambda)
    - {\rm i} \lambda \mathbf{G}(\eta) \mathbf{F}(\lambda) \sigma_{3} + {\rm i} \eta \sigma_{3} \mathbf{G}(\eta) \mathbf{F}(\lambda)
    + {\rm i} [\mathbf{Q}, \mathbf{G}(\eta) \mathbf{F}(\lambda)]
\end{equation*}
leads to
\begin{equation}\label{trace-1}
    \mathrm{Tr}((\mathbf{G}(\eta)\mathbf{F}(\lambda))_{x}) =
    {\rm i}(\lambda - \eta) \left(\mathrm{Tr}(\mathbf{G}(\eta) \sigma_{3} \mathbf{F}(\lambda)) - \mathrm{Tr}(\sigma_{3} \mathbf{G}(\eta) \mathbf{F}(\lambda))\right).
\end{equation}
Let $\lambda_0, \eta_0 \in\mathbb{C}$ be arbitrary. Substituting
\begin{equation*}
    \mathbf{F}(\lambda)=\sum_{i=0}^{4}\mathbf{F}_{i}(\lambda-\lambda_{0})^{i},\quad 
    \mathbf{G}(\lambda)=\sum_{i=0}^{4}\mathbf{G}_{i}(\eta-\eta_{0})^{i},
\end{equation*}
into \eqref{trace-1} and grouping the terms with respect to $(\eta-\eta_{0})^{j}(\lambda-\lambda_{0})^{i}$,
we obtain
\begin{align*}
        \mathrm{Tr}(\mathbf{G}_{0}\mathbf{F}_{0})_{x}=&{\rm i}(\lambda_{0}-\eta_{0})(\mathrm{Tr}(\mathbf{G}_{0}\sigma_{3}\mathbf{F}_{0})-\mathrm{Tr}(\sigma_{3}\mathbf{G}_{0}\mathbf{F}_{0})),\\
        \mathrm{Tr}(\mathbf{G}_{j}\mathbf{F}_{0})_{x}=&
    {\rm i}(\lambda_{0}-\eta_{0})(\mathrm{Tr}(\mathbf{G}_{j}\sigma_{3}\mathbf{F}_{0})-\mathrm{Tr}(\sigma_{3}\mathbf{G}_{j}\mathbf{F}_{0}))-
    {\rm i}(\mathrm{Tr}(\mathbf{G}_{j-1}\sigma_{3}\mathbf{F}_{0})-\mathrm{Tr}(\sigma_{3}\mathbf{G}_{j-1}\mathbf{F}_{0})),\\
    \mathrm{Tr}(\mathbf{G}_{0}\mathbf{F}_{i})_{x}=&
    {\rm i}(\lambda_{0}-\eta_{0})(\mathrm{Tr}(\mathbf{G}_{0}\sigma_{3}\mathbf{F}_{i})-\mathrm{Tr}(\sigma_{3}\mathbf{G}_{0}\mathbf{F}_{i}))+
    {\rm i}(\mathrm{Tr}(\mathbf{G}_{0}\sigma_{3}\mathbf{F}_{i-1})-\mathrm{Tr}(\sigma_{3}\mathbf{G}_{0}\mathbf{F}_{i-1})),\\
        \mathrm{Tr}(\mathbf{G}_{j}\mathbf{F}_{i})_{x}=&
    {\rm i}(\lambda_{0}-\eta_{0})(\mathrm{Tr}(\mathbf{G}_{j}\sigma_{3}\mathbf{F}_{i})-\mathrm{Tr}(\sigma_{3}\mathbf{G}_{j}\mathbf{F}_{i}))\\&+
    {\rm i}(\mathrm{Tr}(\mathbf{G}_{j}\sigma_{3}\mathbf{F}_{i-1})-\mathrm{Tr}(\sigma_{3}\mathbf{G}_{j}\mathbf{F}_{i-1}))-
    {\rm i}(\mathrm{Tr}(\mathbf{G}_{j-1}\sigma_{3}\mathbf{F}_{i})-\mathrm{Tr}(\sigma_{3}\mathbf{G}_{j-1}\mathbf{F}_{i}))
\end{align*}
for $1\leq i,j\leq 4$. This yields for $\lambda_{0}\ne\eta_{0}$, 
\begin{align*}
        \frac{{\rm i}}{2}(\mathrm{Tr}(\mathbf{G}_{0}\sigma_{3}\mathbf{F}_{0})-\mathrm{Tr}(\sigma_{3}\mathbf{G}_{0}\mathbf{F}_{0}))=&
    \frac{1}{2(\lambda_{0}-\eta_{0})}\mathrm{Tr}(\mathbf{G}_{0}\mathbf{F}_{0})_{x}, \\
    \frac{{\rm i}}{2}(\mathrm{Tr}(\mathbf{G}_{1}\sigma_{3}\mathbf{F}_{0})-\mathrm{Tr}(\sigma_{3}\mathbf{G}_{1}\mathbf{F}_{0}))=&
    \frac{1}{2(\lambda_{0}-\eta_{0})}\mathrm{Tr}(\mathbf{G}_{1}\mathbf{F}_{0}+\frac{1}{\lambda_{0}-\eta_{0}}\mathbf{G}_{0}\mathbf{F}_{0})_{x}. 
    \end{align*}
and for $\lambda_{0}=\eta_{0}$, 
\begin{align*}
        \frac{{\rm i}}{2}(\mathrm{Tr}(\mathbf{G}_{0}\sigma_{3}\mathbf{F}_{0})-\mathrm{Tr}(\sigma_{3}\mathbf{G}_{0}\mathbf{F}_{0}))=&
        -\frac{1}{2}\mathrm{Tr}(\mathbf{G}_{1}\mathbf{F}_{0})_{x}, \\
    \frac{{\rm i}}{2}(\mathrm{Tr}(\mathbf{G}_{1}\sigma_{3}\mathbf{F}_{0})-\mathrm{Tr}(\sigma_{3}\mathbf{G}_{1}\mathbf{F}_{0}))=&
    -\frac{1}{2}\mathrm{Tr}(\mathbf{G}_{2}\mathbf{F}_{0})_{x}. 
    \end{align*}
Let us denote
\begin{equation*}
    \mathbf{F}=\begin{pmatrix}
        f_{1} & \mathbf{h}_{1}^{T} \\
        \mathbf{g}_{1} & \mathbf{K}_{1}
    \end{pmatrix},\quad \mathbf{G}=\begin{pmatrix}
        f_{2}^{*} & \mathbf{g}^{\dagger}_{2} \\
        \mathbf{h}_{2}^{*} & \mathbf{K}_{2}^{\dagger}
    \end{pmatrix}.
\end{equation*}
Then, we obtain 
\begin{align*}
        \mathrm{Tr}(\mathbf{G}(\eta)\sigma_{3}\mathbf{F}(\lambda))-\mathrm{Tr}(\sigma_{3}\mathbf{G}(\eta)\mathbf{F}(\lambda)) &=
    2(-\mathbf{g}^{\dagger}_{2}\mathbf{g}_{1}+\mathbf{h}^{\dagger}_{2}\mathbf{h}_{1}), \\
    \mathrm{Tr}(\mathbf{G}(\eta)\mathbf{F}(\lambda))&=f_{1}f_{2}^{*}+
    \mathbf{g}^{\dagger}_{2}\mathbf{g}_{1}+\mathbf{h}^{\dagger}_{2}\mathbf{h}_{1}+
    \mathrm{Tr}(\mathbf{K}_{2}^{\dagger}\mathbf{K}_{1}),
    \end{align*}
which yields the following lemma.

\begin{lem}\label{thm-orthogonal-condition-discrete}
    Consider the squared eigenfunction matrix $p_{i}(\mathbf{\Phi})$ defined in \eqref{squared-eigenfunction-matrix} and the squared eigenfunctions $s_{i}(\mathbf{\Phi})$ defined in \eqref{squared-eigenfunctions}, associated with the spectral problem \eqref{spectral-problem} with the symmetric potential $\mathbf{Q} = \mathbf{Q}^{\dagger}$. The identity
    \begin{equation*}
        2(\lambda - \eta) s_{j}(\mathbf{\Phi})(\eta^{*})^{\dagger} \mathcal{J} s_{i}(\mathbf{\Phi})(\lambda) = \mathrm{Tr}\left(p_{j}^{\dagger}(\mathbf{\Phi})(\eta^{*}) p_{i}(\mathbf{\Phi})(\lambda)\right)_{x}
    \end{equation*}
    holds for any spectral parameters $\lambda$ and $\eta$. If $\lambda = \lambda_{0}$ and $\eta = \eta_{0}$, where $\lambda_{0}$ and $\eta_{0}$ are eigenvalues of the spectral problem \eqref{spectral-problem} and $\lambda_{0} \ne \eta_{0}$, the integrals between the squared eigenfunctions and the adjoint squared eigenfunctions in the point spectrum is given by
    \begin{align*}
            \int_{\mathbb{R}} s_{j}(\mathbf{\Phi})^{\dagger}(\eta_{0}^{*}) \mathcal{J} s_{i}(\mathbf{\Phi})(\lambda_{0})\mathrm{d}x =& \frac{1}{2(\lambda_{0} - \eta_{0})} \mathrm{Tr}\left(p_{j}^{\dagger}(\mathbf{\Phi})(\eta_{0}^{*}) p_{i}(\mathbf{\Phi})(\lambda_{0})\right) \bigg|_{-\infty}^{+\infty}, \\
            \int_{\mathbb{R}} s_{j,\eta}(\mathbf{\Phi})^{\dagger}(\eta_{0}^{*})\mathcal{J} s_{i}(\mathbf{\Phi})(\lambda_{0})\mathrm{d}x =& \frac{1}{2(\lambda_{0} - \eta_{0})} \mathrm{Tr}\left(p_{j,\eta}^{\dagger}(\mathbf{\Phi})(\eta_{0}^{*}) p_{i}(\mathbf{\Phi})(\lambda_{0}) 
            \right.\\&\left.+ \frac{1}{\lambda_{0} - \eta_{0}} p_{j}^{\dagger}(\mathbf{\Phi})(\eta_{0}^{*}) p_{i}(\mathbf{\Phi})(\lambda_{0})\right) \bigg|_{-\infty}^{+\infty}.
        \end{align*}
    For $\lambda_{0} = \eta_{0}$, the integrals are given by
    \begin{align*}
        \int_{\mathbb{R}} s_{j}(\mathbf{\Phi})^{\dagger}(\eta_{0}^{*}) \mathcal{J} s_{i}(\mathbf{\Phi})(\lambda_{0})\mathrm{d}x &= -\frac{1}{2} \mathrm{Tr}\left(p_{j,\eta}^{\dagger}(\mathbf{\Phi})(\eta_{0}^{*}) p_{i}(\mathbf{\Phi})(\lambda_{0})\right) \bigg|_{-\infty}^{+\infty}, \\
        \int_{\mathbb{R}} s_{j,\eta}(\mathbf{\Phi})^{\dagger}(\eta_{0}^{*})\mathcal{J} s_{i}(\mathbf{\Phi})(\lambda_{0})\mathrm{d}x &= -\frac{1}{4} \mathrm{Tr}\left(p_{j,\eta\eta}^{\dagger}(\mathbf{\Phi})(\eta_{0}^{*}) p_{i}(\mathbf{\Phi})(\lambda_{0})\right) \bigg|_{-\infty}^{+\infty}.
        \end{align*}
\end{lem}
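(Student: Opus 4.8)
The plan is to derive all four displayed formulas from the single pointwise identity in the first line of the statement, which is itself an immediate consequence of the $x$-evolution \eqref{diff-L} of the squared eigenfunction matrices together with the trace formula \eqref{trace-1}; no integrability input beyond what has already been assembled before the statement is needed. First I would set $\mathbf{F}(\lambda):=p_{i}(\mathbf{\Phi})(\lambda)$ and $\mathbf{G}(\eta):=p_{j}^{\dagger}(\mathbf{\Phi})(\eta^{*})$, and note that the symmetry \eqref{symmetric-squared-eigenfunction-matrix} gives $\mathbf{G}(\eta)=p_{-j}(\mathbf{\Phi})(\eta)$, so that both $\mathbf{F}$ and $\mathbf{G}$ are squared eigenfunction matrices and obey the equations in \eqref{diff-L}; hence \eqref{trace-1} applies with these choices. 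Inserting the block forms $\mathbf{F}=\begin{pmatrix} f_{1} & \mathbf{h}_{1}^{T} \\ \mathbf{g}_{1} & \mathbf{K}_{1}\end{pmatrix}$, $\mathbf{G}=\begin{pmatrix} f_{2}^{*} & \mathbf{g}_{2}^{\dagger} \\ \mathbf{h}_{2}^{*} & \mathbf{K}_{2}^{\dagger}\end{pmatrix}$, the already recorded identity $\mathrm{Tr}(\mathbf{G}\sigma_{3}\mathbf{F})-\mathrm{Tr}(\sigma_{3}\mathbf{G}\mathbf{F})=2(-\mathbf{g}_{2}^{\dagger}\mathbf{g}_{1}+\mathbf{h}_{2}^{\dagger}\mathbf{h}_{1})$, and the fact (read off from \eqref{squared-eigenfunctions}--\eqref{squared-eigenfunction-matrix}) that $s_{i}(\mathbf{\Phi})(\lambda)=(\mathbf{g}_{1}^{T},-\mathbf{h}_{1}^{T})^{T}$ and $s_{j}(\mathbf{\Phi})(\eta^{*})=(\mathbf{g}_{2}^{T},-\mathbf{h}_{2}^{T})^{T}$, the right-hand side of \eqref{trace-1} collapses, with $\mathcal{J}=-{\rm i}\,\mathrm{diag}(\mathbb{I}_{2},-\mathbb{I}_{2})$, to $2(\lambda-\eta)\,s_{j}(\mathbf{\Phi})(\eta^{*})^{\dagger}\mathcal{J}s_{i}(\mathbf{\Phi})(\lambda)$. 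This proves the first displayed identity for all $\lambda,\eta$ and pointwise in $x$.

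For the remaining formulas I would expand $\mathbf{F}(\lambda)=\sum_{m}\mathbf{F}_{m}(\lambda-\lambda_{0})^{m}$ and $\mathbf{G}(\eta)=\sum_{m}\mathbf{G}_{m}(\eta-\eta_{0})^{m}$ about eigenvalues $\lambda_{0},\eta_{0}$ of \eqref{spectral-problem}; since $p_{j}(\mathbf{\Phi})(\cdot)$ is holomorphic, so is $\eta\mapsto p_{j}^{\dagger}(\mathbf{\Phi})(\eta^{*})$, with Taylor coefficients $\mathbf{F}_{0}=p_{i}(\mathbf{\Phi})(\lambda_{0})$, $\mathbf{G}_{0}=p_{j}^{\dagger}(\mathbf{\Phi})(\eta_{0}^{*})$, $\mathbf{G}_{1}=p_{j,\eta}^{\dagger}(\mathbf{\Phi})(\eta_{0}^{*})$, $\mathbf{G}_{2}=\tfrac{1}{2}p_{j,\eta\eta}^{\dagger}(\mathbf{\Phi})(\eta_{0}^{*})$, and the block identity applied with $\mathbf{G}_{m}$ in place of $\mathbf{G}_{0}$ shows that $\tfrac{{\rm i}}{2}\bigl(\mathrm{Tr}(\mathbf{G}_{m}\sigma_{3}\mathbf{F}_{0})-\mathrm{Tr}(\sigma_{3}\mathbf{G}_{m}\mathbf{F}_{0})\bigr)$ equals $s_{j}(\mathbf{\Phi})(\eta_{0}^{*})^{\dagger}\mathcal{J}s_{i}(\mathbf{\Phi})(\lambda_{0})$ for $m=0$ and $s_{j,\eta}(\mathbf{\Phi})(\eta_{0}^{*})^{\dagger}\mathcal{J}s_{i}(\mathbf{\Phi})(\lambda_{0})$ for $m=1$. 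The coefficient relations already derived before the statement rewrite precisely these quantities as total $x$-derivatives of $\mathrm{Tr}$ of explicit combinations of $\mathbf{G}_{\bullet}\mathbf{F}_{0}$ --- carrying a factor $(\lambda_{0}-\eta_{0})^{-1}$ when $\lambda_{0}\ne\eta_{0}$ and an index shift ($\mathbf{G}_{0}\to\mathbf{G}_{1}$, $\mathbf{G}_{1}\to\mathbf{G}_{2}$) when $\lambda_{0}=\eta_{0}$ --- so that integration over $\mathbb{R}$ reduces the right-hand side to the boundary value $\mathrm{Tr}(\cdot)\big|_{-\infty}^{+\infty}$. These boundary terms are finite because, by the first identity combined with Proposition \ref{asy-squ-eig-2} (and Appendix \ref{B.asy-exp} in the vector-soliton case), the integrands $s_{j}^{\dagger}\mathcal{J}s_{i}$ and $s_{j,\eta}^{\dagger}\mathcal{J}s_{i}$ are Schwartz for the admissible pairs $(i,j)$, so the relevant traces have limits at $\pm\infty$; rearranging then yields the four stated formulas.

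The main obstacle is combinatorial rather than analytic: one must keep the daggers and the reflection $p_{j}^{\dagger}(\mathbf{\Phi})(\eta^{*})=p_{-j}(\mathbf{\Phi})(\eta)$ straight so that the blocks $\mathbf{g}_{2},\mathbf{h}_{2}$ of $\mathbf{G}$ are matched correctly with the adjoint squared eigenfunction $s_{j}(\mathbf{\Phi})(\eta^{*})^{\dagger}$ and with the Taylor coefficients $\mathbf{G}_{0},\mathbf{G}_{1},\mathbf{G}_{2}$, and one must verify that the boundary contributions converge. The latter forces the restriction --- implicit in the statement --- to those index pairs for which the squared eigenfunctions and their $\lambda$-derivatives remain integrable, namely the Schwartz cases catalogued in Proposition \ref{asy-squ-eig-2}; these are exactly the cases that occur when the identities are used to evaluate $(\mathcal{L}_{2}\cdot,\cdot)$ on $\mathrm{gKer}(\mathcal{J}\mathcal{L}_{2})\backslash\mathrm{Ker}(\mathcal{J}\mathcal{L}_{2})$ in Section 4.4.
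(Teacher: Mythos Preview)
Your proposal is correct and follows essentially the same route as the paper: the first identity comes from specializing \eqref{trace-1} with $\mathbf{F}=p_{i}(\mathbf{\Phi})(\lambda)$ and $\mathbf{G}=p_{j}^{\dagger}(\mathbf{\Phi})(\eta^{*})=p_{-j}(\mathbf{\Phi})(\eta)$ (via \eqref{symmetric-squared-eigenfunction-matrix}) and the block computation $\mathrm{Tr}(\mathbf{G}\sigma_{3}\mathbf{F})-\mathrm{Tr}(\sigma_{3}\mathbf{G}\mathbf{F})=2(-\mathbf{g}_{2}^{\dagger}\mathbf{g}_{1}+\mathbf{h}_{2}^{\dagger}\mathbf{h}_{1})$, while the four integrated formulas are exactly the coefficient relations displayed just before the lemma, obtained by Taylor expanding in $(\lambda-\lambda_{0})$ and $(\eta-\eta_{0})$ and then integrating in $x$. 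Your added remark about finiteness of the boundary terms via Proposition~\ref{asy-squ-eig-2} is a useful clarification, though the paper defers this to the explicit evaluations in Proposition~\ref{negative-1}.
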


We use Lemma \ref{thm-orthogonal-condition-discrete} to calculate  $(\mathcal{L}_{2}\cdot,\cdot)$ in the generalized kernel of $\mathcal{J}\mathcal{L}_{2}$. 

\subsection{The inner product for the generalized kernel of $\mathcal{J}\mathcal{L}_{2}$}

It remains to calculate the quadratic form $(\mathcal{L}_{2} \cdot, \cdot)$ in the subspace spanned by the generalized eigenfunctions for the zero eigenvalue 
of $\mathcal{J}\mathcal{L}_{2}$:
\begin{equation*}
    \mathrm{span}\left\{
    \mathbf{P}_{1,\lambda}(\lambda_{1}), 
    \mathbf{P}_{1,\lambda}(\lambda_{2}),
    \mathbf{P}_{-1,\lambda}(\lambda_{1}^{*}),
    \mathbf{P}_{-1,\lambda}(\lambda_{2}^{*})
    \right\}.
\end{equation*}
Equivalently, we need to analyze the quadratic form $(\mathcal{J} \cdot, \cdot)$ 
using the relation between $\mathcal{J}\mathcal{L}_{2}$ and $\mathcal{L}_{2}$. 
Recall the differential form \eqref{diff-form} which is skew-symmetric, 
i.e., $\omega^{\dagger}(\mathbf{h}, \mathbf{f}) = -\omega(\mathbf{f}, \mathbf{h})$. 
The quadratic form becomes
\begin{equation*}
    (\mathbf{h}, \mathcal{L}_{2} \mathbf{f}) = -\mathrm{Re} \int_{\mathbb{R}} \omega(\mathbf{h}, \mathcal{J} \mathcal{L}_{2} \mathbf{f}).
\end{equation*}
Hence we need to calculate the integral of
\begin{equation*}
    \omega(\mathbf{h},\mathbf{g}), \quad \mathbf{g}\in\mathrm{span}\{
        \mathbf{P}_{1}(\lambda_{1}), 
            \mathbf{P}_{1}(\lambda_{2}) ,
            \mathbf{P}_{-1}(\lambda_{1}^{*}), 
            \mathbf{P}_{-1}(\lambda_{2}^{*})\},\,\, 
    \mathbf{h}\in \mathrm{gKer}(\mathcal{J}\mathcal{L}_{2})\backslash 
    \mathrm{Ker}(\mathcal{J}\mathcal{L}_{2})
\end{equation*}
on the real line. 
It is noted that if $\mathbf{h}\in\mathrm{Ker}(\mathcal{J}\mathcal{L}_{2})
=\mathrm{Ker}(\mathcal{L}_{2})$, then the self-adjoint operator 
$\mathcal{L}_{2}$ induces
\begin{equation*}
    (\mathbf{h},\mathcal{L}_{2}\mathbf{f})=(\mathcal{L}_{2}\mathbf{h},\mathbf{f})=0. 
\end{equation*}
We use Lemma \ref{thm-orthogonal-condition-discrete} to obtain 
the following proposition.

\begin{prop}\label{negative-1}
If $c_{11},c_{12} \neq 0$ for the breather solutions of Definition \ref{def-breather}, then the matrix is given by
    \begin{equation*}
        \left(\int_{\mathbb{R}}\omega(\mathbf{h},\mathbf{g})\right)_{4\times 4}=2(b_{1}^{2}-b_{2}^{2})^{2}\begin{pmatrix}
            0 & b_{1}^{2} &0 &0\\
            b_{1}^{2}& 0 &0 &0\\
            0 & 0 &0 &b_{2}^{2}\\
            0& 0 &b_{2}^{2}& 0\\
        \end{pmatrix},
    \end{equation*}
    where 
    \begin{align*}
            &\mathbf{h}\in\mathrm{span}\left\{\frac{\mathbf{P}_{1,\lambda}(\lambda_{1})}{\ii c_{11}},
            \frac{\mathbf{P}_{-1,\lambda}(\lambda_{1}^{*})}{-\ii c_{11}^{*}},
            \frac{\mathbf{P}_{1,\lambda}(\lambda_{2})}{\ii c_{12}},
            \frac{\mathbf{P}_{-1,\lambda}(\lambda_{2}^{*})}{-\ii c_{12}^{*}}\right\},
            \\
            &\mathbf{g}\in\mathrm{span}\left\{\frac{\mathbf{P}_{1}(\lambda_{1})}{\ii c_{11}},
            \frac{\mathbf{P}_{-1}(\lambda_{1}^{*})}{-\ii c_{11}^{*}},
            \frac{\mathbf{P}_{1}(\lambda_{2})}{\ii c_{12}},
            \frac{\mathbf{P}_{-1}(\lambda_{2}^{*})}{-\ii c_{12}^{*}}\right\}.
        \end{align*}
\end{prop}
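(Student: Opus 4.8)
The plan is to turn each of the sixteen entries $\int_{\mathbb{R}}\omega(\mathbf{h},\mathbf{g})=\int_{\mathbb{R}}\mathbf{h}^{\dagger}\mathcal{J}\mathbf{g}\,\mathrm{d}x$ into a boundary limit at $x=\pm\infty$ by means of Lemma \ref{thm-orthogonal-condition-discrete}, and then to evaluate those limits from the explicit asymptotic matrices $\mathbf{D}_{r}^{\pm}$ introduced before Proposition \ref{asy-squ-eig-2}. Writing $\mathbf{P}_{\pm1}=s_{\pm1}(\mathbf{\Phi}^{[2]}_{r})$, every entry has the shape $\int_{\mathbb{R}}s_{j,\eta}(\mathbf{\Phi}^{[2]}_{r})^{\dagger}(\eta_{0}^{*})\,\mathcal{J}\,s_{i}(\mathbf{\Phi}^{[2]}_{r})(\lambda_{0})\,\mathrm{d}x$ with $i,j\in\{\pm1\}$ and $\lambda_{0},\eta_{0}\in\{\lambda_{1},\lambda_{2},\lambda_{1}^{*},\lambda_{2}^{*}\}$, multiplied by the normalization constants $\pm1/(\mathrm{i}c_{11})$, $\pm1/(\mathrm{i}c_{12})$ and their conjugates. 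For the four in-block entries pairing a $\lambda_{j}$-object with the corresponding $\lambda_{j}^{*}$-object one has $\lambda_{0}=\eta_{0}$, so Lemma \ref{thm-orthogonal-condition-discrete} gives $-\tfrac14\,\mathrm{Tr}\bigl(p_{j,\eta\eta}^{\dagger}(\mathbf{\Phi}^{[2]}_{r})(\eta_{0}^{*})\,p_{i}(\mathbf{\Phi}^{[2]}_{r})(\lambda_{0})\bigr)\big|_{-\infty}^{+\infty}$; for the remaining entries $\lambda_{0}\neq\eta_{0}$ and one uses the companion formula with the factor $\tfrac{1}{2(\lambda_{0}-\eta_{0})}$. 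Thus everything is reduced to limits of $3\times3$ squared eigenfunction matrices and of their first and second $\lambda$-derivatives.

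\textbf{Collapsing the trace.} The key algebraic input is the identity $(\mathbf{\Phi}^{[2]}_{r})^{\dagger}(\lambda^{*})\,\mathbf{\Phi}^{[2]}_{r}(\lambda)=\mathcal{P}_{2}(\lambda)\,\mathbb{I}_{3}$, which follows from the symmetry $(\mathbf{\Phi}^{[2]})^{-1}(\lambda)=(\mathbf{\Phi}^{[2]})^{\dagger}(\lambda^{*})$ and from $\mathbf{\Phi}^{[2]}_{r}=(\lambda-\lambda_{1}^{*})(\lambda-\lambda_{2}^{*})\mathbf{\Phi}^{[2]}|_{t=0}$. Since $\mathcal{P}_{2}$ vanishes at $\lambda_{1},\lambda_{2},\lambda_{1}^{*},\lambda_{2}^{*}$, any inner product between a row of $(\mathbf{\Phi}^{[2]}_{r})^{\dagger}(\lambda_{j}^{*})$ and a column of $\mathbf{\Phi}^{[2]}_{r}(\lambda_{j})$ vanishes. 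As each $p_{i}(\mathbf{\Phi}^{[2]}_{r})(\lambda)$ is the rank-one product of a column of $\mathbf{\Phi}^{[2]}_{r}(\lambda)$ with a row of $(\mathbf{\Phi}^{[2]}_{r})^{\dagger}(\lambda^{*})$, expanding $p_{j,\eta\eta}=\partial_{\lambda}^{2}p_{j}$ by the product rule shows that $\mathrm{Tr}\bigl(p_{j,\eta\eta}^{\dagger}(\lambda_{j})\,p_{i}(\lambda_{j})\bigr)$ reduces to a single surviving term, proportional to $\langle w_{j}|u_{j}'\rangle\,\langle w_{j}'|u_{j}\rangle$, where $|u_{j}\rangle$ and $\langle w_{j}|$ are the relevant column and row and $'$ denotes $\partial_{\lambda}$ at $\lambda_{j}$. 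Differentiating the identity once gives $(\mathbf{\Phi}^{[2]}_{r})^{\dagger}(\lambda_{j}^{*})\,\partial_{\lambda}\mathbf{\Phi}^{[2]}_{r}(\lambda_{j})=\mathcal{P}_{2}'(\lambda_{j})\mathbb{I}_{3}-\partial_{\lambda}\bigl[(\mathbf{\Phi}^{[2]}_{r})^{\dagger}(\lambda^{*})\bigr]\big|_{\lambda_{j}}\mathbf{\Phi}^{[2]}_{r}(\lambda_{j})$, whose off-diagonal entry yields $\langle w_{j}|u_{j}'\rangle=-\langle w_{j}'|u_{j}\rangle$, and hence $\mathrm{Tr}\bigl(p_{j,\eta\eta}^{\dagger}(\lambda_{j})\,p_{i}(\lambda_{j})\bigr)=-2\,\langle w_{j}|u_{j}'\rangle^{2}$ up to conjugation.

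\textbf{Evaluating the limits.} Using $p_{i}(\mathbf{\Phi}^{[2]}_{r})(\lambda)=\mathrm{e}^{2\mathrm{i}\lambda x}p_{i}(\mathbf{D}_{r}^{[2]})(\lambda)\sim\mathrm{e}^{2\mathrm{i}\lambda x}p_{i}(\mathbf{D}_{r}^{\pm})(\lambda)$ as $x\to\pm\infty$, together with the analogous expansions for $\partial_{\lambda}\mathbf{\Phi}^{[2]}_{r}$ (which introduce polynomial factors $\mathrm{i}x$ multiplying entries of $\mathbf{D}_{r}^{\pm}$ and their $\lambda$-derivatives), I will compute $\langle w_{j}|u_{j}'\rangle$ at the two ends. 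At one end $\mathrm{e}^{2\mathrm{i}\lambda_{j}x}$ decays and multiplies only polynomially bounded quantities, so the limit is $0$; at the other end the surviving entry of $\partial_{\lambda}\mathbf{D}_{r}^{\pm}(\lambda_{j})$ tends to the constant $\lambda_{1}-\lambda_{2}$ (respectively $\lambda_{2}-\lambda_{1}$ at $\lambda_{2}$), which cancels the exponential, and after substituting $M^{-}=\tfrac{(b_{1}-b_{2})^{2}}{(b_{1}+b_{2})^{2}}$ one gets $\langle w_{1}|u_{1}'\rangle\to2b_{1}(b_{1}^{2}-b_{2}^{2})c_{11}$ and $\langle w_{2}|u_{2}'\rangle\to2b_{2}(b_{1}^{2}-b_{2}^{2})c_{12}$, with the conjugate values at $\lambda_{1}^{*},\lambda_{2}^{*}$. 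Squaring, multiplying by $-\tfrac14$ and the sign coming from $|_{-\infty}^{+\infty}$, and dividing by the normalization constants then produces exactly $2(b_{1}^{2}-b_{2}^{2})^{2}b_{j}^{2}$ in the off-diagonal slots of each $2\times2$ block. All other entries vanish: for the diagonal in-block entries and for every cross-block entry (a $\lambda_{1}$-object paired with a $\lambda_{2}$-object), the squared eigenfunctions involved, as well as $\mathbf{P}_{1,\lambda}(\lambda_{1})$, $\mathbf{P}_{1,\lambda}(\lambda_{2})$, $\mathbf{P}_{-1,\lambda}(\lambda_{1}^{*})$, $\mathbf{P}_{-1,\lambda}(\lambda_{2}^{*})$, are Schwartz by Proposition \ref{asy-squ-eig-2}, so the columns and rows entering the traces decay at both ends and the boundary limits are $0$. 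The symmetry of the resulting $4\times4$ matrix is a consequence of the symmetry relation \eqref{symmetric-squared-eigenfunction-matrix}.

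\textbf{Main obstacle.} The principal difficulty is the asymptotic bookkeeping in the two steps above. A priori the matrices $p_{j,\eta\eta}^{\dagger}(\lambda_{j})\,p_{i}(\lambda_{j})$ contain exponentially growing contributions coming from $\mathbf{D}_{r}^{-}$ as well as polynomially growing $x$-factors from the $\lambda$-differentiations, so one must verify that the identity $(\mathbf{\Phi}^{[2]}_{r})^{\dagger}(\lambda^{*})\mathbf{\Phi}^{[2]}_{r}(\lambda)=\mathcal{P}_{2}(\lambda)\mathbb{I}_{3}$ and its first derivative really annihilate everything except the single term $-2\langle w_{j}|u_{j}'\rangle^{2}$, and that this term is genuinely finite, nonzero at one end and zero at the other. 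Pinning down which entry of $\partial_{\lambda}\mathbf{D}_{r}^{\pm}(\lambda_{j})$ survives in the limit, and checking these cancellations, is the technical heart of the argument.
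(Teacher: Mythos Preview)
Your approach is essentially the same as the paper's: both reduce every entry to a boundary trace via Lemma~\ref{thm-orthogonal-condition-discrete} and then evaluate those traces from the explicit asymptotic blocks $\mathbf{D}_{r}^{\pm}$. The paper carries out the representative entry $\int_{\mathbb{R}}\mathbf{P}_{-1,\lambda}^{\dagger}(\lambda_{1}^{*})\mathcal{J}\mathbf{P}_{1}(\lambda_{1})\,\mathrm{d}x$ by writing (via \eqref{symmetric-squared-eigenfunction-matrix}) $\mathrm{Tr}\bigl(p_{-1,\lambda\lambda}^{\dagger}(\lambda_{1}^{*})\,p_{1}(\lambda_{1})\bigr)=\mathrm{Tr}\bigl(p_{1,\lambda\lambda}(\lambda_{1})\,p_{1}(\lambda_{1})\bigr)$, then expanding $p_{1,\lambda\lambda}$ directly in the three Leibniz pieces, using $(\mathbf{D}_{r}^{-})_{\lambda\lambda}=2$ and the first-derivative entries already tabulated in the proof of Proposition~\ref{asy-squ-eig-2}, and collecting the constant term at $x\to-\infty$; the $x\to+\infty$ contribution dies from the prefactor $\mathrm{e}^{-2b_{1}x}$. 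The other fifteen entries are declared ``analogous''.

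Your identity $(\mathbf{\Phi}^{[2]}_{r})^{\dagger}(\lambda^{*})\,\mathbf{\Phi}^{[2]}_{r}(\lambda)=\mathcal{P}_{2}(\lambda)\,\mathbb{I}_{3}$ is a genuine streamlining the paper does not invoke. It makes the collapse $\mathrm{Tr}\bigl(p_{1,\lambda\lambda}\,p_{1}\bigr)=-2\langle w|u'\rangle^{2}$ exact for every $x$, so only the single scalar $\langle w|u'\rangle$ needs asymptotic evaluation, whereas the paper tracks all three Leibniz terms through the $\mathbf{D}_{r}^{-}$ tables. Both routes land on the same constant $-8(b_{1}^{2}-b_{2}^{2})^{2}c_{11}^{2}b_{1}^{2}$ at $-\infty$.

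One caution on your vanishing argument for the remaining entries: saying ``the $\mathbf{P}$'s are Schwartz, hence the boundary traces are $0$'' is not quite enough. The traces are built from the full column and row vectors of $\mathbf{\Phi}^{[2]}_{r}$ and $(\mathbf{\Phi}^{[2]}_{r})^{\dagger}$, not just the off-diagonal combinations that form $s_{\pm1}$; and for instance the first entry of $\partial_{\lambda}(\mathbf{\Phi}^{[2]}_{r}(\lambda_{1}))_{1}$ grows like $\mathrm{e}^{-b_{1}x}$ as $x\to-\infty$ because $(D^{-}_{r,11})_{\lambda}(\lambda_{1})$ has the nonzero constant term $\lambda_{1}-\lambda_{2}$. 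What actually makes the off-entries vanish is that in each such product the growing factor is paired with one that decays faster, so the trace still goes to zero --- but this has to be checked entry by entry against the $\mathbf{D}_{r}^{\pm}$ tables, exactly the ``asymptotic bookkeeping'' you flag in your obstacle paragraph. The paper sidesteps this by appealing to analogy with the one computed entry.
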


\begin{proof}
    We provide the proof for $\int_{\mathbb{R}}\mathbf{P}_{-1,\lambda}^{\dagger}(\lambda_{1}^{*})\mathcal{J} \mathbf{P}_{1}(\lambda_{1})\mathrm{d}x$. The proof for all other entries is analogous. Lemma \ref{thm-orthogonal-condition-discrete} and the symmetry condition \eqref{symmetric-squared-eigenfunctions} imply the identity
    \begin{align*}
        \int_{\mathbb{R}}\mathbf{P}_{-1,\lambda}^{\dagger}(\lambda_{1}^{*})\mathcal{J} \mathbf{P}_{1}(\lambda_{1})\mathrm{d}x &= -\frac{1}{4} \mathrm{Tr} \left(p_{-1,\lambda \lambda}^{\dagger} (\mathbf{\Phi}_{r}^{[2]})(\lambda_{1}^{*}) p_{1} (\mathbf{\Phi}_{r}^{[2]})(\lambda_{1}) \right) \bigg|_{-\infty}^{+\infty} \\
            &= -\frac{1}{4} \mathrm{Tr} \left(p_{1,\lambda \lambda} (\mathbf{\Phi}_{r}^{[2]})(\lambda_{1}) p_{1} (\mathbf{\Phi}_{r}^{[2]})(\lambda_{1}) \right) \bigg|_{-\infty}^{+\infty}.
    \end{align*}
    The second derivative of squared eigenfunction matrix $p_{1} (\mathbf{\Phi}_{r}^{[2]})$ at $\lambda=\lambda_{1}$ is given by
    \begin{equation*}
        p_{1} (\mathbf{\Phi}_{r}^{[2]})_{\lambda \lambda} (\lambda_{1}) \sim 
        {\rm e}^{2{\rm i}a x} {\rm e}^{-2b_{1} x} \left(p_{1} (\mathbf{D}_{r}^{\pm})_{\lambda \lambda} (\lambda_{1}) +
        4{\rm i}x p_{1} (\mathbf{D}_{r}^{\pm})_{\lambda} (\lambda_{1}) -
        4x^{2} p_{1} (\mathbf{D}_{r}^{\pm}) (\lambda_{1}) \right)
    \end{equation*}
    as $x \to \pm \infty$. 
    As $x \to +\infty$, we have 
    \begin{equation*}
        \mathrm{Tr} \left(p_{1,\lambda \lambda} (\mathbf{\Phi}_{r}^{[2]})(\lambda_{1}) p_{1} (\mathbf{\Phi}_{r}^{[2]})(\lambda_{1}) \right) \to 0
    \end{equation*}
    due to the term \({\rm e}^{-2b_{1} x}\) and the fact that \(D_{r,ij}^{+}\) are bounded. Now we consider the 
    case $x\to -\infty$. It follows from the expression of the second derivatives of the matrix $\mathbf{D}_{r}^{-}$ that
    \begin{equation*}
        (\mathbf{D}_{r}^{-})_{\lambda \lambda} = 2 .
    \end{equation*}
Since
    \begin{align*}
        p_{1} (\mathbf{D}_{r}^{\pm})_{\lambda \lambda} (\lambda_{1}) =& (\mathbf{\mathbf{D}}_{r,\lambda \lambda}^{\pm} (\lambda_{1}))_{1} \left((\mathbf{\mathbf{D}}_{r}^{\pm})^{\dagger} (\lambda_{1}^{*})\right)^{2} +
        2 (\mathbf{\mathbf{D}}_{r,\lambda}^{\pm} (\lambda_{1}))_{1} \left((\mathbf{\mathbf{D}}_{r,\lambda}^{\pm})^{\dagger} (\lambda_{1}^{*})\right)^{2} \\& \quad 
        + (\mathbf{\mathbf{D}}_{r}^{\pm} (\lambda_{1}))_{1} \left((\mathbf{\mathbf{D}}_{r,\lambda \lambda}^{\pm})^{\dagger} (\lambda_{1}^{*})\right)^{2},
        \end{align*}
we collect the constant terms and obtain
    \begin{align*}
            \mathrm{Tr} \left(p_{1,\lambda \lambda} (\mathbf{\Phi}_{r}^{[2]})(\lambda_{1}) p_{1} (\mathbf{\Phi}_{r}^{[2]})(\lambda_{1})\right)
            &\to 2(b_{1} - b_{2})^{2} \left(
                \frac{1}{M^{-}} 2{\rm i} b_{1}c_{11}  \frac{b_{1} - b_{2}}{b_{1} + b_{2}} (b_{1} - b_{2})
            \right)^{2} \\
            &= -8 (b_{1}^{2} - b_{2}^{2})^{2} c_{11}^{2} b_{1}^{2}, \quad \mbox{\rm as} \;\; x \to -\infty.
    \end{align*}
This concludes the proof for $(\mathbf{P}_{-1,\lambda}(\lambda_{1}^{*}), \mathcal{J} \mathbf{P}_{1}(\lambda_{1}))$. 
\end{proof}

\begin{rem}
	If $c_{12},c_{21} = 0$ and $c_{11},c_{22} \neq 0$ for the non-degenerate vector soliton solutions, 
    the generalized kernel for $\mathcal{J}\mathcal{L}_{2}$ is 
    \begin{equation*}
        \mathrm{span}\{
	\hat{\mathbf{P}}_{1,\lambda}(\lambda_{1}), 
	\hat{\mathbf{P}}_{1,\lambda}(\lambda_{1}^{*}) ,
	\hat{\mathbf{P}}_{2,\lambda}(\lambda_{2}), 
	\hat{\mathbf{P}}_{2,\lambda}(\lambda_{2}^{*})\},
    \end{equation*}
and the matrix is given by 
	\begin{equation*}
	\left(\int_{\mathbb{R}}\omega(\mathbf{h},\mathbf{g})\right)_{4\times 4}=2(b_{1}^{2}-b_{2}^{2})\begin{pmatrix}
	0 & -b_{1}^{2} &0 &0\\
	b_{1}^{2}& 0 &0 &0\\
	0 & 0 &0 &b_{2}^{2}\\
	0& 0 &-b_{2}^{2}& 0\\
	\end{pmatrix},
	\end{equation*}
	where 
	\begin{align*}
	&\mathbf{h}\in\mathrm{span}\{
	\hat{\mathbf{P}}_{1,\lambda}(\lambda_{1}), 
	\hat{\mathbf{P}}_{1,\lambda}(\lambda_{1}^{*}) ,
	\hat{\mathbf{P}}_{2,\lambda}(\lambda_{2}), 
	\hat{\mathbf{P}}_{2,\lambda}(\lambda_{2}^{*})\},
	\\
	&\mathbf{g}\in\mathrm{span}\{
	\hat{\mathbf{P}}_{1}(\lambda_{1}), 
	\hat{\mathbf{P}}_{1}(\lambda_{1}^{*}) ,
	\hat{\mathbf{P}}_{2}(\lambda_{2}), 
	\hat{\mathbf{P}}_{2}(\lambda_{2}^{*})\},
	\end{align*}
	due to different definitions of the FMS for breathers and non-degenerate vector solitons
		\begin{align*}
		\mathbf{\Phi}^{[2]}_{r,non} = \mathbf{\Phi}^{[2]}|_{t=0} \left( 
		\begin{matrix}
		(\lambda-\lambda_{1}^{*})(\lambda-\lambda_{2}^{*}) & 0 & 0 \\
		0 & \lambda-\lambda_{1}^{*} & 0 \\ 0 & 0 & \lambda-\lambda_{2}^{*}
\end{matrix}		\right)
		\end{align*}
		versus
		\begin{align*} \mathbf{\Phi}^{[2]}_{r}&=(\lambda-\lambda_{1}^{*})(\lambda-\lambda_{2}^{*})\mathbf{\Phi}^{[2]}|_{t=0}.
		\end{align*}
\end{rem}

\subsection{The spectrum for \(\mathcal{L}_{2}\)}

The number of negative eigenvalues of $\mathcal{L}_{2}$ is given by the following theorem. 

\begin{thm}\label{spectrum-L2-1}
    For breather solutions \eqref{breather} with $c_{11},c_{12} \neq 0$, the operator
    $\mathcal{L}_{2}(\mathbf{q}^{[2]})$ defined by \eqref{L2} in 
    $L^{2}(\mathbb{R},\mathbb{C}^4)$ has 
    two negative eigenvalues (counting multiplicities) and the zero 
    eigenvalue of multiplicity eight. 
\end{thm}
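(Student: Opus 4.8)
\emph{Proof proposal.} The plan is to reduce the eigenvalue count for $\mathcal{L}_{2}$ to a finite-dimensional linear-algebra problem on $\mathrm{gKer}(\mathcal{J}\mathcal{L}_{2})$ by exploiting the closure-relation decomposition together with the orthogonality relations established above. First I would recall that $\mathcal{L}_{2}$ is self-adjoint and, since its essential spectrum coincides with that of $\mathcal{L}_{\infty}=2^{4}\mathcal{P}_{2}({\rm i}\partial_{x}/2)$ and $\mathcal{P}_{2}(\lambda)\geq b_{1}^{2}b_{2}^{2}>0$ on $\mathbb{R}$, the essential spectrum is strictly positive and bounded away from zero; hence $n^{-}(\mathcal{L}_{2})$ and $n^{0}(\mathcal{L}_{2})$ are finite and are governed entirely by the point spectrum in $(-\infty,0]$. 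Using the decomposition $L^{2}(\mathbb{R},\mathbb{C}^{4})=\mathbb{E}+\mathrm{gKer}(\mathcal{J}\mathcal{L}_{2})$ from the closure relation, together with the orthogonality \eqref{ort-cod-continuous-discrete} and self-adjointness of $\mathcal{L}_{2}$, the quadratic form $(\mathcal{L}_{2}\,\cdot,\cdot)$ is block-diagonal with respect to this splitting, so it suffices to compute the inertia of the form on $\mathbb{E}$ and on $\mathrm{gKer}(\mathcal{J}\mathcal{L}_{2})$ separately and add.

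On $\mathbb{E}$ I would show the form is positive definite. Writing $\mathbf{u}\in\mathbb{E}$ (subject to the reality constraint defining $\mathrm{X}$, under which the symmetry \eqref{symmetric-squared-eigenfunctions} ties the modes $\mathbf{P}_{i}$ and $\mathbf{P}_{-i}$ together along the real axis) as a continuous superposition of the squared eigenfunctions, I would use $\mathcal{J}^{-1}=-\mathcal{J}$, so that $\mathcal{L}_{2}=-\mathcal{J}(\mathcal{J}\mathcal{L}_{2})$, the eigen-relation $\mathcal{J}\mathcal{L}_{2}\mathbf{P}_{\pm i}(\lambda)=\pm 32{\rm i}\mathcal{P}_{2}(\lambda)\mathbf{P}_{\pm i}(\lambda)$ of Theorem \ref{thm-J-L2}, and the orthogonality \eqref{ort-cod}, to reduce $(\mathcal{L}_{2}\mathbf{u},\mathbf{u})$ to a positive multiple of an integral of $|\text{coefficients}|^{2}$ against a positive power of $\mathcal{P}_{2}(\lambda)$, which is strictly positive since $\mathcal{P}_{2}>0$ on $\mathbb{R}$. (Equivalently, the closure relation identifies $\mathbb{E}$ with the continuous spectral subspace of the self-adjoint $\mathcal{L}_{2}$, on which $\mathcal{L}_{2}\geq\inf\sigma_{ess}(\mathcal{L}_{2})>0$.) Thus $\mathbb{E}$ contributes nothing to $n^{-}(\mathcal{L}_{2})$ or $n^{0}(\mathcal{L}_{2})$.

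On $\mathrm{gKer}(\mathcal{J}\mathcal{L}_{2})$ I would split once more. Since $\mathcal{J}$ is invertible, $\mathrm{Ker}(\mathcal{J}\mathcal{L}_{2})=\mathrm{Ker}(\mathcal{L}_{2})$, and by Theorem \ref{thm-J-L2} (in the case $c_{11},c_{12}\neq 0$, via Proposition \ref{asy-squ-eig-2}) this kernel is exactly eight-dimensional, spanned by the squared eigenfunctions $\mathbf{P}_{\pm 1}(\lambda_{1})$, $\mathbf{P}_{\pm 1}(\lambda_{2})$, $\mathbf{P}_{\pm 1}(\lambda_{1}^{*})$, $\mathbf{P}_{\pm 1}(\lambda_{2}^{*})$; the form vanishes identically on it, giving $8$ to $n^{0}(\mathcal{L}_{2})$. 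The remaining four dimensions of $\mathrm{gKer}(\mathcal{J}\mathcal{L}_{2})$ are spanned by the generalized eigenfunctions $\mathbf{h}_{a}$, namely $\mathbf{P}_{1,\lambda}(\lambda_{1})/({\rm i}c_{11})$, $\mathbf{P}_{-1,\lambda}(\lambda_{1}^{*})/(-{\rm i}c_{11}^{*})$, $\mathbf{P}_{1,\lambda}(\lambda_{2})/({\rm i}c_{12})$, $\mathbf{P}_{-1,\lambda}(\lambda_{2}^{*})/(-{\rm i}c_{12}^{*})$. Self-adjointness of $\mathcal{L}_{2}$ gives $(\mathbf{h}_{a},\mathcal{L}_{2}\mathbf{f})=0$ for $\mathbf{f}\in\mathrm{Ker}(\mathcal{L}_{2})$, so the form is block-diagonal on $\mathrm{gKer}(\mathcal{J}\mathcal{L}_{2})$ as well, and on the four-dimensional complement I would compute the Gram matrix via $(\mathbf{h}_{a},\mathcal{L}_{2}\mathbf{h}_{b})=-\mathrm{Re}\int_{\mathbb{R}}\omega(\mathbf{h}_{a},\mathcal{J}\mathcal{L}_{2}\mathbf{h}_{b})$ from \eqref{diff-form}, combining the action $\mathcal{J}\mathcal{L}_{2}\mathbf{h}_{b}\in\{\pm 64 b_{1}(b_{1}^{2}-b_{2}^{2})\mathbf{g}_{b},\ \pm 64 b_{2}(b_{1}^{2}-b_{2}^{2})\mathbf{g}_{b}\}$ of Theorem \ref{thm-J-L2} with the matrix of $\int_{\mathbb{R}}\omega(\mathbf{h},\mathbf{g})$ computed in Proposition \ref{negative-1}. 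The outcome is a nondegenerate $4\times 4$ symmetric matrix consisting of two $2\times 2$ antidiagonal blocks; each such block has one positive and one negative eigenvalue, so the matrix has two positive and two negative eigenvalues and no zero eigenvalue. Adding the contributions, $n^{-}(\mathcal{L}_{2})=0+2=2$ and $n^{0}(\mathcal{L}_{2})=0+8+0=8$, which is the claim.

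The main obstacle is the bookkeeping that legitimizes the phrase ``add the contributions'': one must check that $\mathbb{E}\cap\mathrm{gKer}(\mathcal{J}\mathcal{L}_{2})=\{0\}$, that \eqref{ort-cod-continuous-discrete} together with the within-$\mathrm{gKer}$ self-adjointness argument renders $(\mathcal{L}_{2}\,\cdot,\cdot)$ genuinely block-diagonal rather than merely block-triangular, and that all of this is compatible with the restriction to $\mathrm{X}$, so that the continuous-spectrum form comes out manifestly positive and the count is insensitive to whether one works in $\mathrm{X}$ or in its complexification. The genuinely analytic input --- the Schwartz-class decay of the eight kernel elements, the unboundedness of the other candidate generalized eigenfunctions, the eigen-relations at the discrete spectrum, and the evaluation of $\int_{\mathbb{R}}\omega(\mathbf{h},\mathbf{g})$ from the behavior of the squared-eigenfunction matrices at infinity --- is already packaged in Theorem \ref{thm-J-L2} and Propositions \ref{asy-squ-eig-2} and \ref{negative-1}, so what remains is careful accounting and finite-dimensional linear algebra.
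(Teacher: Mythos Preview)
Your proposal is correct and follows essentially the same route as the paper: decompose via the closure relation, show the form is nonnegative on $\mathbb{E}$ using \eqref{ort-cod}, and reduce the count to the $4\times 4$ Hermitian matrix on $\mathrm{gKer}(\mathcal{J}\mathcal{L}_{2})\setminus\mathrm{Ker}(\mathcal{J}\mathcal{L}_{2})$ obtained by combining Theorem \ref{thm-J-L2} with Proposition \ref{negative-1}, which the paper writes explicitly as $2^{7}(b_{1}^{2}-b_{2}^{2})^{3}\,\mathrm{diag}\bigl(\begin{smallmatrix}0&b_{1}^{3}\\ b_{1}^{3}&0\end{smallmatrix},\begin{smallmatrix}0&b_{2}^{3}\\ b_{2}^{3}&0\end{smallmatrix}\bigr)$. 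One small note: Theorem \ref{spectrum-L2-1} is stated in $L^{2}(\mathbb{R},\mathbb{C}^{4})$, and the passage to the real subspace $\mathrm{X}$ is handled separately afterwards (Lemma \ref{spectrum-L2-2}), so your compatibility-with-$\mathrm{X}$ concern is not needed here.
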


\begin{proof}
    We consider the breather solutions \eqref{breather} with $c_{11}, c_{12} \ne 0$. Define the cone 
\begin{equation*}
    \mathcal{N}_2 = \left\{ \mathbf{u} \in H^{2}(\mathbb{R},\mathbb{C}^4) : \quad (\mathcal{L}_{2} \mathbf{u}, \mathbf{u}) < 0 \right\}
\end{equation*}
and denote the dimension of the maximal linear subspace contained in $\mathcal{N}_2$ by $\mathrm{dim}(\mathcal{N}_2)$. The dimension of $\mathcal{N}_2$ is equal to the 
number of negative eigenvalues of $\mathcal{L}_{2}$.
To calculate $\mathrm{dim}(\mathcal{N}_2)$, 
we use the closure relation. Any function $\mathbf{v} \in L^{2}(\mathbb{R},\mathbb{C}^4)$ 
can be decomposed as
\begin{equation*}
    \mathbf{v}(x) = \sum_{j=1}^{4} \int_{\mathbb{R}} w_{j}(\lambda) \mathbf{P}_{j}(x; \lambda) \, \mathrm{d}\lambda + \alpha \mathbf{R}(x),
\end{equation*}
where $w_{j}(\lambda)$ are functions of $\lambda$, $\alpha$ is a constant, 
and $\mathbf{R} \in \mathrm{gKer}(\mathcal{J} \mathcal{L}_{2})$.
Since the essential spectrum and the point spectrum are orthogonal with respect to the quadratic form $(\mathcal{L}_{2} \cdot, \cdot)$, we will consider the continuous and discrete parts separately.

For the continuous part, we have by \eqref{ort-cod},
\begin{align*}
    &(\mathcal{L}_{2} \int_{\mathbb{R}} w_{i}(\lambda) \mathbf{P}_{i}(\lambda;x) \, \mathrm{d}\lambda, 
    \int_{\mathbb{R}} w_{j}(\lambda') \mathbf{P}_{j}(\lambda';x) \, \mathrm{d}\lambda') \\
    &= (-32 \, \mathrm{i} \mathcal{J} \int_{\mathbb{R}} w_{i}(\lambda) \mathcal{P}(\lambda) \mathbf{P}_{i}(\lambda;x) \, \mathrm{d}\lambda, 
    \int_{\mathbb{R}} w_{j}(\lambda') \mathbf{P}_{j}(\lambda';x) \, \mathrm{d}\lambda') \\
    &= -32 \, \mathrm{i} \int_{\mathbb{R}^{3}} w_{i}^{*}(\lambda) w_{j}(\lambda') |\mathcal{P}(\lambda)| 
    \mathbf{P}_{i}^{\dagger}(\lambda;x) \mathcal{J} \mathbf{P}_{j}(\lambda';x) \, \mathrm{d}\lambda \, \mathrm{d}\lambda' \, \mathrm{d}x \\
    &= 32 \pi \delta_{ij} \int_{\mathbb{R}} |(\lambda - \lambda_{1})(\lambda - \lambda_{2})|^{4} |w_{j}(\lambda)|^{2} |\mathcal{P}(\lambda)| \, \mathrm{d}\lambda \geq 0.
\end{align*}
This calculation indicates that there is no contribution to $\mathrm{dim}(\mathcal{N}_2)$ from the continuous part. Thus, $\mathrm{dim}(\mathcal{N}_2)$ coincides with the number of negative eigenvalues of the Hermitian matrix 
$\mathbf{H}(\mathrm{gKer}(\mathcal{J}\mathcal{L}_{2}))$ associated with the quadratic form 
$(\mathcal{L}_{2}|_{\mathrm{gKer}(\mathcal{J}\mathcal{L}_{2})} \cdot, \cdot)$. Since $\mathcal{L}_{2}(\mathbf{u})=0$ 
for $\mathbf{u}\in\mathrm{Ker}(\mathcal{L}_{2})$, it suffices to consider the Hermitian matrix 
$\mathbf{H}(\mathrm{gKer}(\mathcal{J}\mathcal{L}_{2}) \setminus \mathrm{Ker}(\mathcal{J}\mathcal{L}_{2}))$. 
By Proposition \ref{negative-1}, the Hermitian matrix is 
\begin{equation*}
    2^{7} (b_{1}^{2} - b_{2}^{2})^{3} \begin{pmatrix}
        0 & b_{1}^{3} & 0 & 0 \\
        b_{1}^{3} & 0 & 0 & 0 \\
        0 & 0 & 0 & b_{2}^{3} \\
        0 & 0 & b_{2}^{3} & 0
    \end{pmatrix}.
\end{equation*}
Thus, $\mathrm{dim}(\mathcal{N}_2) = 2$. In addition, the kernel of \(\mathcal{L}_{2}\) is eight-dimensional since $\mathrm{Ker}(\mathcal{L}_{2})=\mathrm{Ker}(\mathcal{JL}_{2}) = 8$ by Theorem \ref{thm-J-L2}.
\end{proof}

    \begin{rem}
    	\label{rem-spectrum-L1}
    	For non-degenerate vector solitons \eqref{non-sol-1}, the operator  $\mathcal{L}_1$ defined by \eqref{def-L1} in $L^2(\mathbb{R},\mathbb{C}^4)$ has four negative eigenvalues (counting multiplicities) and the zero eigenvalue of multiplicity four.  Since the number of negative eigenvalues of \(\mathcal{L}_{1}\) is equal to the dimension of \(\mathcal{N}_1\), we calculate $\dim(\mathcal{N}_1)$. 
        Following a similar argument as in the proof of Theorem \ref{spectrum-L2-1}, the dimension of \(\mathcal{N}_1\) corresponds to the number of negative eigenvalues of the matrix
        \begin{equation*}
            (\mathcal{L}_{1}\mathbf{f},\mathbf{g}),\quad \mathbf{f},\mathbf{g} \in 
            \mathrm{span}\{\mathrm{gKer}(\mathcal{J}\mathcal{L}_{1})\cup \mathrm{Ker}(\pm 2{\rm i}(b_{1}^{2}-b_{2}^{2})-\mathcal{J}\mathcal{L}_{1})\}.
        \end{equation*}
        Since the spaces \(\mathrm{gKer}(\mathcal{J}\mathcal{L}_{1})\) and 
        \(\mathrm{Ker}(\pm 2{\rm i}(b_{1}^{2}-b_{2}^{2})-\mathcal{J}\mathcal{L}_{1})\) are orthogonal with respect to the product \((\mathcal{L}_{1}\cdot,\cdot)\) by \cite{haragus_spectra_2008}, 
        we analyze these two spaces separately and switch the product to \((\cdot,\mathcal{J}\cdot)\). 
        The product \((\cdot,\mathcal{J}\cdot)\) in space \(\mathrm{gKer}(\mathcal{J}\mathcal{L}_{1})\)
was analyzed in Theorem \ref{spectrum-L2-1}. For \(\mathbf{f},\mathbf{g} \in \{\hat{\mathbf{P}}_{2}(\lambda_{1}), \hat{\mathbf{P}}_{2}(\lambda_{1}^{*}), \hat{\mathbf{P}}_{1}(\lambda_{2}), \hat{\mathbf{P}}_{1}(\lambda_{2}^{*})\}\), 
        we derive the matrix
        \begin{equation*}
            (\int_{\mathbb{R}}\omega(\mathbf{f},\mathbf{g})) = {\rm i}(b_{1}^{2}-b_{2}^{2})^{2} \begin{pmatrix}
                |c_{11}|^{2}b_{1} & -b_{1} & 0 & 0 \\
                -b_{1} & \frac{b_{1}(b_{1}-b_{2})^{2}}{|c_{11}|^{2}(b_{1}+b_{2})^{2}} & 0 & 0 \\
                0 & 0 & |c_{22}|^{2}b_{2} & -b_{2} \\
                0 & 0 & -b_{2} & \frac{b_{2}(b_{1}-b_{2})^{2}}{|c_{22}|^{2}(b_{1}+b_{2})^{2}}
            \end{pmatrix}
        \end{equation*}
        by Lemma \ref{thm-orthogonal-condition-discrete}, which yields
        \begin{equation*}
            ((\mathcal{L}_{1}\mathbf{f},\mathbf{g})) = (b_{1}^{2}-b_{2}^{2})^{3} \begin{pmatrix}
                -|c_{11}|^{2}b_{1} & b_{1} & 0 & 0 \\
                b_{1} & -\frac{b_{1}(b_{1}-b_{2})^{2}}{|c_{11}|^{2}(b_{1}+b_{2})^{2}} & 0 & 0 \\
                0 & 0 & |c_{22}|^{2}b_{2} & -b_{2} \\
                0 & 0 & -b_{2} & \frac{b_{2}(b_{1}-b_{2})^{2}}{|c_{22}|^{2}(b_{1}+b_{2})^{2}}
            \end{pmatrix},
        \end{equation*}
with two negative eigenvalues. Using the same argument as in Theorem \ref{spectrum-L2-1}, we conclude
        \begin{equation*}
            n((\mathcal{L}_{1}\mathbf{f},\mathbf{g})) = 2,\quad \mathbf{f},\mathbf{g} \in 
            \mathrm{gKer}(\mathcal{J}\mathcal{L}_{1}),
        \end{equation*}
        and
        \begin{equation*}
            n((\mathcal{L}_{1}\mathbf{f},\mathbf{g})) = 2,\quad \mathbf{f},\mathbf{g} \in 
            \mathrm{Ker}(\pm 2{\rm i}(b_{1}^{2}-b_{2}^{2})-\mathcal{J}\mathcal{L}_{1}).
        \end{equation*}
        Thus, $\dim(\mathcal{N}_1) = 4$. The kernel of $\mathcal{L}_1$ is four-dimensional since $\mathrm{Ker}(\mathcal{L}_1)=\mathrm{Ker}(\mathcal{JL}_1) = 4$ by Proposition \ref{spectrum-JL1}.       
\end{rem}

Next we consider the spectrum of $\mathcal{L}_{2}$ in the real space $\mathrm{X}$. To do so, we need to transfer the eigenfunctions from $L^{2}(\mathbb{R},\mathbb{C}^4)$ into $\mathrm{X}$.
We consider the transformation 
    \begin{equation*}
\mathcal{C}: L^{2}(\mathbb{R},\mathbb{C}^4) \to \mathrm{X} : \qquad    \mathcal{C} \mathbf{P} = \mathbf{P}+(\mathbf{\Sigma}\mathbf{P})^{*}.
    \end{equation*}
    Since 
    \begin{equation*}
        \mathbf{\Sigma}\mathcal{L}_{2}^{*}\mathbf{\Sigma}=\mathcal{L}_{2}, 
    \end{equation*}
    the operator $\mathcal{L}_{2}$ and $\mathcal{C}$ commute
    \begin{equation*}
        \mathcal{L}_{2}\mathcal{C}=
        \mathcal{C}\mathcal{L}_{2}.
    \end{equation*}
As a result, we obtain the following lemma.

    \begin{lem}\label{spectrum-L2-2}
The self-adjoint operator $\mathcal{L}_{2}(\mathbf{q}^{[2]})$ in the real Hilbert space $\mathrm{X}$ has two negative eigenvalues (counting multiplicities) and the zero eigenvalue of multiplicity eight.
    \end{lem}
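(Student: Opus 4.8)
The plan is to deduce the statement from Theorem~\ref{spectrum-L2-1} by transporting the spectral data of $\mathcal{L}_{2}$ on the complex Hilbert space $L^{2}(\mathbb{R},\mathbb{C}^{4})$ to the real Hilbert space $\mathrm{X}$ through the intertwining operator $\mathcal{C}$. First I would note that $\mathrm{X}$ is an invariant subspace for $\mathcal{L}_{2}$: the identity $\mathbf{\Sigma}\mathcal{L}_{2}^{*}\mathbf{\Sigma}=\mathcal{L}_{2}$ gives, for $\mathbf{P}=(\mathbf{\Sigma}\mathbf{P})^{*}$, that $\mathcal{L}_{2}\mathbf{P}=(\mathbf{\Sigma}\mathcal{L}_{2}\mathbf{P})^{*}$, so $\mathcal{L}_{2}$ maps $\mathcal{D}(\mathcal{L}_{2})\cap\mathrm{X}$ into $\mathrm{X}$, and $\mathcal{L}_{2}|_{\mathrm{X}}$ is self-adjoint on $\mathrm{X}$ with real spectrum. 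Since $\mathcal{L}_{2}$ is a relatively compact perturbation of $\mathcal{L}_{\infty}$, the essential spectrum of $\mathcal{L}_{2}|_{\mathrm{X}}$ is bounded away from zero, so its non-positive part consists of finitely many eigenvalues of finite multiplicity and the task reduces to counting them.

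The heart of the argument is the multiplicity bookkeeping. In the decomposition $L^{2}(\mathbb{R},\mathbb{C}^{4})=\mathrm{X}\oplus{\rm i}\mathrm{X}$, the operator $\tfrac12\mathcal{C}$ is exactly the real-linear projection onto $\mathrm{X}$ along ${\rm i}\mathrm{X}$ (one checks $\mathcal{C}=2\,\mathrm{id}$ on $\mathrm{X}$ and $\mathcal{C}=0$ on ${\rm i}\mathrm{X}$), and it commutes with $\mathcal{L}_{2}$; hence both $\mathrm{X}$ and ${\rm i}\mathrm{X}$ are $\mathcal{L}_{2}$-invariant. Because $\mathcal{L}_{2}$ is $\mathbb{C}$-linear on $L^{2}(\mathbb{R},\mathbb{C}^{4})$, multiplication by ${\rm i}$ is an isometric isomorphism of real Hilbert spaces that intertwines $\mathcal{L}_{2}|_{\mathrm{X}}$ and $\mathcal{L}_{2}|_{{\rm i}\mathrm{X}}$. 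Consequently, for any real $\mu$ the complex eigenspace $E_{\mu}\subset L^{2}(\mathbb{R},\mathbb{C}^{4})$ splits as $E_{\mu}=(E_{\mu}\cap\mathrm{X})\oplus{\rm i}(E_{\mu}\cap\mathrm{X})$, so $\dim_{\mathbb{R}}(E_{\mu}\cap\mathrm{X})=\dim_{\mathbb{C}}E_{\mu}$; equivalently, the $\mu$-eigenspace of $\mathcal{L}_{2}|_{\mathrm{X}}$ (which is precisely $E_{\mu}\cap\mathrm{X}$) has the same multiplicity as the $\mu$-eigenspace of $\mathcal{L}_{2}$ on $L^{2}(\mathbb{R},\mathbb{C}^{4})$.

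Applying this identity with $\mu<0$, where Theorem~\ref{spectrum-L2-1} records total multiplicity two, and with $\mu=0$, where it records multiplicity eight, shows that $\mathcal{L}_{2}|_{\mathrm{X}}$ has two negative eigenvalues counting multiplicities and an eight-dimensional kernel, which is the assertion. The one point that must be written carefully, and the only possible obstacle, is precisely the multiplicity identity $\dim_{\mathbb{R}}(E_{\mu}\cap\mathrm{X})=\dim_{\mathbb{C}}E_{\mu}$: it amounts to making rigorous that the splitting $L^{2}(\mathbb{R},\mathbb{C}^{4})=\mathrm{X}\oplus{\rm i}\mathrm{X}$ is preserved by $\mathcal{L}_{2}$ and that its two restrictions are isometrically equivalent. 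Everything else—the commutation $\mathcal{L}_{2}\mathcal{C}=\mathcal{C}\mathcal{L}_{2}$, the $\mathbb{C}$-linearity of $\mathcal{L}_{2}$ on $L^{2}(\mathbb{R},\mathbb{C}^{4})$, and the location of the essential spectrum—is already in hand.
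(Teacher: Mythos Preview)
Your proposal is correct and shares the paper's core idea of transporting spectral data from $L^{2}(\mathbb{R},\mathbb{C}^{4})$ to $\mathrm{X}$ via the intertwiner $\mathcal{C}$, but your packaging is somewhat different and in fact cleaner. The paper argues separately for the kernel and the negative part: for the kernel it writes $\mathrm{Ker}_{\mathrm{X}}(\mathcal{L}_{2})=\mathrm{span}\{\mathcal{C}\mathbf{P},\,\mathcal{C}({\rm i}\mathbf{P}):\mathbf{P}\in\mathrm{Ker}(\mathcal{L}_{2})\}$ and then invokes the explicit symmetry \eqref{symmetric-squared-eigenfunctions} of the squared eigenfunctions to count eight independent vectors; for the negative part it essentially re-runs the quadratic-form computation of Theorem~\ref{spectrum-L2-1} on the projected generalized kernel $\mathrm{gKer}_{\mathrm{X}}(\mathcal{J}\mathcal{L}_{2})$. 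By contrast, you isolate the single abstract identity $\dim_{\mathbb{R}}(E_{\mu}\cap\mathrm{X})=\dim_{\mathbb{C}}E_{\mu}$, valid for every real $\mu$, which follows purely from the $\mathbb{C}$-linearity of $\mathcal{L}_{2}$ on $L^{2}(\mathbb{R},\mathbb{C}^{4})$ and the invariance of the splitting $\mathrm{X}\oplus{\rm i}\mathrm{X}$. This buys you both conclusions at once, without recourse to the explicit eigenfunction symmetry or a second pass through the Hermitian matrix, and it makes transparent why the real and complex multiplicities agree. The paper's route has the minor advantage of displaying a concrete basis for $\mathrm{Ker}_{\mathrm{X}}(\mathcal{L}_{2})$, but your argument is the more economical proof of the lemma as stated.
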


    \begin{proof}
        For any eigenfunction $\mathbf{P} \in \mathrm{Ker}(\mathcal{L}_{2})$, we have 
        \begin{equation*}
            \mathcal{L}_{2} \mathcal{C}(\mathbf{P}) = \mathcal{C} \mathcal{L}_{2}(\mathbf{P}) = 0.
        \end{equation*}
        Hence, the kernel of $\mathcal{L}_{2}$ in $\mathrm{X}$ is 
        \begin{equation*}
            \mathrm{Ker}_{\mathrm{X}}(\mathcal{L}_{2}) = \mathrm{span}\{ \mathcal{C}\mathbf{P}, \mathcal{C}\mathrm{i}\mathbf{P} : \quad \mathbf{P} \in \mathrm{Ker}(\mathcal{L}_{2}) \}.
        \end{equation*}
        There are eight independent eigenfunctions in $\mathrm{Ker}_{\mathrm{X}}(\mathcal{L}_{2})$ in view of the symmetry \eqref{symmetric-squared-eigenfunctions}, hence  $\mathrm{dim}\,\mathrm{Ker}_{\mathrm{X}}(\mathcal{L}_{2}) = 8$.
        
        For the negative eigenvalues, the same argument in Theorem \ref{spectrum-L2-1} can be applied for 
        \begin{equation*}
            \mathrm{gKer}_{\mathrm{X}}(\mathcal{L}_{2}) = \mathrm{span}\{ \mathcal{C}\mathbf{P}, \mathcal{C}\mathrm{i}\mathbf{P} : \quad \mathbf{P} \in \mathrm{gKer}(\mathcal{L}_{2}) \}.
        \end{equation*}
        Hence, the number of negative eigenvalues in $\mathrm{X}$ is two.
    \end{proof}

\begin{figure}[htbp!]
    \centering
    \includegraphics[scale=0.2]{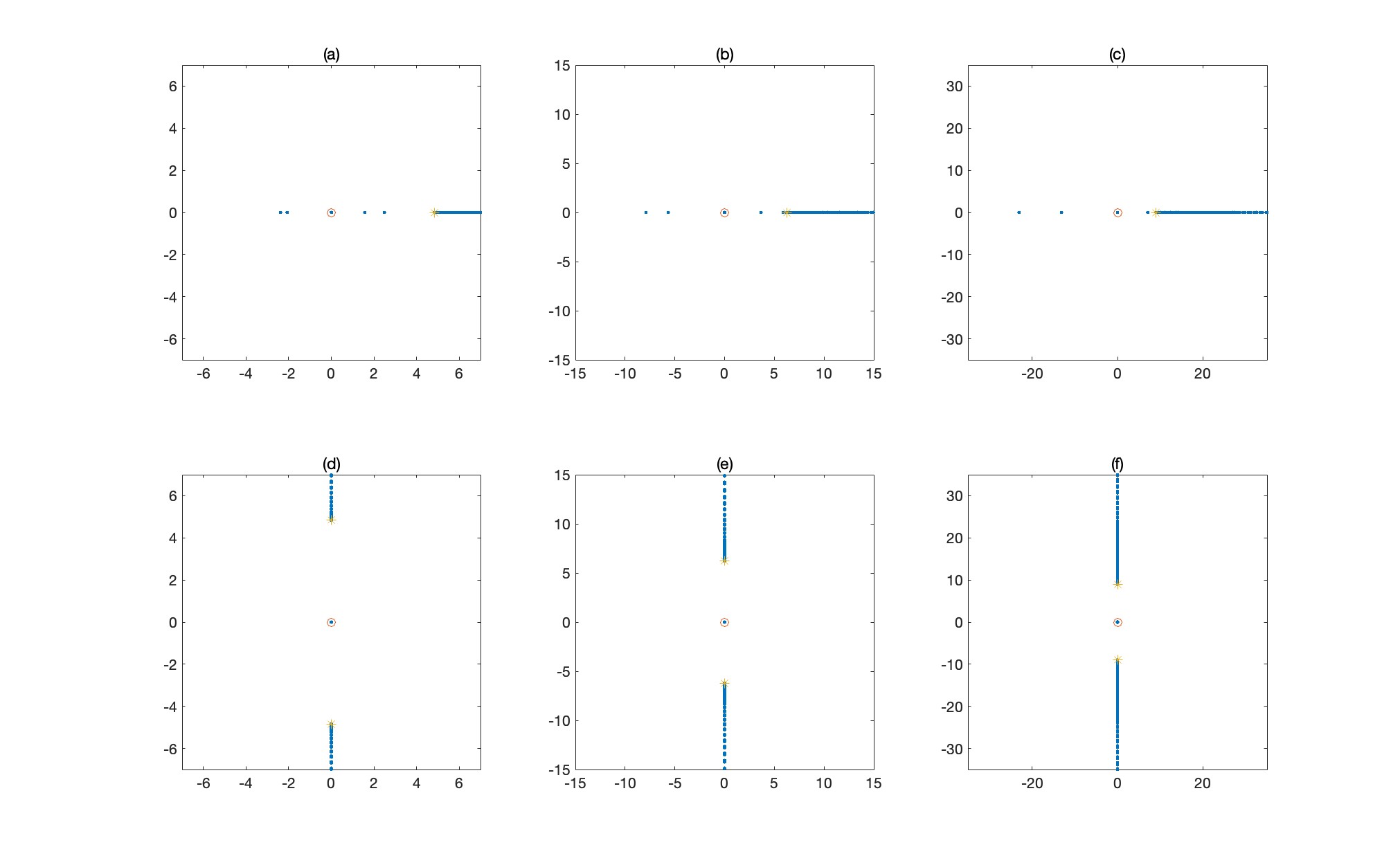}
    \caption{Approximations of the spectrum of $\mathcal{L}_{2}$ (the first row) and $\mathcal{J} \mathcal{L}_{2}$ (the second row) using the Fourier collocation method. Eigenvalues are divided by 64 for convenience. 
    The blue points are the numerical results, and the red circles represent 
the analytical results. The yellow stars are the end points of essential spectrum. The parameters are: $a=1$, $b_{1}=2$, $c_{11}=c_{22}=1$, $c_{12}=c_{21}=0$ with (a,d) $b_{2}=2.2$,  
    (b,e) $b_{2}=2.5$, and (c,f) $b_{2}=3$.}
    \label{computation-nonlinear-spectrum}
\end{figure}

In order to illustrate the result of Theorem \ref{spectrum-L2-1}, we have computed the spectrum of the operators $\mathcal{L}_{2}$ and $\mathcal{J} \mathcal{L}_2$ numerically. We use the Fourier collocation method. Eigenvalues of the operator $\mathcal{L}_{2}$ and $\mathcal{J} \mathcal{L}_{2}$ are shown in Figure \ref{computation-nonlinear-spectrum} in agreement with the theoretical analysis. 

Based on Lemma \ref{spectrum-L2-2}, we derive a coercivity result for the operator $\mathcal{L}_2$ in $\mathrm{X}$. This result is needed for the nonlinear stability analysis of breathers in Section \ref{sec-5}.

Denote the negative eigenvalues and normalized eigenfunctions of \(\mathcal{L}_{2}\) in \(\mathrm{X}\) as \(-\lambda_{-1,2}^{2}\) and \(\eta_{-1,2}\), respectively, i.e.,
\begin{align*}
        &\mathcal{L}_{2}\eta_{-1}=-\lambda_{-1}^{2}\eta_{-1},\\
        &\mathcal{L}_{2}\eta_{-2}=-\lambda_{-2}^{2}\eta_{-2},
    \end{align*}
with $\|\eta_{-1}\|_{L^{2}}=\|\eta_{-2}\|_{L^{2}}=1$. It can be proven by the elliptic estimates that the zero eigenvalue is isolated from the rest of the spectrum of $\mathcal{L}_2$. However, this relies on the exact form of \(\mathcal{L}_{2}\). Instead of elliptic estimates, we use the fact that the zero eigenvalue of $\mathcal{J}\mathcal{L}_{2}$ is isolated from the rest of the spectrum of $\mathcal{J}\mathcal{L}_{2}$ in order to prove  that the zero eigenvalue of $\mathcal{L}_{2}$ is also isolated.

\begin{lem}
	\label{lem-coercivity}
If the perturbation $z \in H^{2}(\mathbb{R},\mathrm{X})$ satisfies
    \begin{equation*}
        z \in \mathrm{Ker}_{\mathrm{X}}(\mathcal{L}_{2})^{\bot} \cap
        \mathrm{span}\{
            \eta_{-1}, 
            \eta_{-2}
        \}^{\bot}, 
    \end{equation*}
    then 
    \begin{equation*}
        (\mathcal{L}_{2} z,z) \geq C_{0}\|z\|_{L^{2}}^{2}
    \end{equation*}
    for some positive constant \( C_{0} \).
\end{lem}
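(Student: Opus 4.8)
The plan is to read off the coercivity from the spectral theorem for the self\-adjoint operator $\mathcal{L}_2$ on the real Hilbert space $\mathrm{X}$, using the spectral data already assembled. Recall that $\mathcal{L}_2$ is a relatively compact perturbation of the Fourier multiplier $\mathcal{L}_\infty$, so by Weyl's theorem $\sigma_{ess}(\mathcal{L}_2)=\sigma(\mathcal{L}_\infty)$, and since $\mathcal{P}_2(\lambda)=|\lambda-\lambda_1|^2|\lambda-\lambda_2|^2\ge b_1^2 b_2^2$ for $\lambda\in\mathbb{R}$ there is $m>0$ with $\sigma_{ess}(\mathcal{L}_2)\subset[m,\infty)$. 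Consequently $\sigma(\mathcal{L}_2)\cap(-\infty,m)$ is a discrete set of eigenvalues of finite multiplicity whose only possible accumulation point is $m$; by Lemma \ref{spectrum-L2-2} its non\-positive part is exactly $\{-\lambda_{-1}^2,-\lambda_{-2}^2,0\}$, with negative eigenspace $\mathrm{span}\{\eta_{-1},\eta_{-2}\}$ and $0$\-eigenspace $\mathrm{Ker}_{\mathrm{X}}(\mathcal{L}_2)$ of dimension eight.

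The one genuinely delicate point is the spectral gap above $0$: I must show $c_\ast:=\inf\bigl(\sigma(\mathcal{L}_2)\cap(0,\infty)\bigr)>0$, i.e. that no positive eigenvalues of $\mathcal{L}_2$ accumulate at $0$. Because the discrete eigenvalues below $m$ can accumulate only at $m>0$, the eigenvalue $0$ is automatically isolated and $c_\ast\ge\min\{m,\ \text{smallest positive eigenvalue of }\mathcal{L}_2\}>0$. Alternatively, following the remark preceding the lemma, one transfers the gap from $\mathcal{J}\mathcal{L}_2$: by Theorem \ref{thm-J-L2}, $0$ is an isolated point of $\sigma(\mathcal{J}\mathcal{L}_2)$ with finite\-dimensional generalized eigenspace $\mathrm{gKer}(\mathcal{J}\mathcal{L}_2)$, and since $\mathcal{J}$ is an isometric isomorphism of $\mathrm{X}$ with $\mathcal{L}_2=\mathcal{J}^{-1}(\mathcal{J}\mathcal{L}_2)$ and $\mathrm{Ker}(\mathcal{L}_2)=\mathrm{Ker}(\mathcal{J}\mathcal{L}_2)$, the Riesz projection of $\mathcal{J}\mathcal{L}_2$ at $0$ together with the boundedness of $(\mathcal{J}\mathcal{L}_2)^{-1}$ on the complementary invariant subspace yields that $0$ is isolated in $\sigma(\mathcal{L}_2)$ as well. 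I expect this step\,---\,ruling out positive eigenvalues of $\mathcal{L}_2$ accumulating at $0$ without invoking elliptic (Agmon\-type) estimates tied to the explicit differential form of $\mathcal{L}_2$\,---\,to be the main obstacle.

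With $c_\ast>0$ in hand the conclusion is immediate. The orthogonality conditions in the hypothesis say precisely that $z$ lies in the spectral subspace $\mathcal{H}_+:=\bigl(\mathrm{Ker}_{\mathrm{X}}(\mathcal{L}_2)\oplus\mathrm{span}\{\eta_{-1},\eta_{-2}\}\bigr)^{\bot}$, which is $\mathcal{L}_2$\-invariant and carries $\sigma(\mathcal{L}_2|_{\mathcal{H}_+})=\sigma(\mathcal{L}_2)\cap(0,\infty)\subset[c_\ast,\infty)$. Since $z\in H^2(\mathbb{R},\mathrm{X})$ belongs to the form domain of the fourth\-order operator $\mathcal{L}_2$ (whose leading term is $+\partial_x^4$), the quadratic\-form version of the spectral lower bound gives $(\mathcal{L}_2 z,z)\ge c_\ast\|z\|_{L^2}^2$, so the lemma holds with $C_0=c_\ast$; as the eigenfunctions $\eta_{-1},\eta_{-2}$ and the kernel elements are Schwartz functions (squared eigenfunctions), the orthogonal projection onto $\mathcal{H}_+$ preserves $H^2$ and no regularity is lost.
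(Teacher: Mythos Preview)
Your proposal is correct, and your primary argument is actually more elementary than the paper's. Once the paper has established (in Section~2.3) that $\sigma_{ess}(\mathcal{L}_2)\subset[m,\infty)$ with $m>0$ via Weyl's theorem, the fact that $0$ is an isolated eigenvalue follows immediately from self\-adjointness: $\sigma(\mathcal{L}_2)\setminus\sigma_{ess}(\mathcal{L}_2)$ is by definition the discrete spectrum, consisting of isolated eigenvalues of finite multiplicity, so every spectral point in $(-\infty,m)$, including $0$, is isolated. Your worry about ``the main obstacle'' is therefore unfounded; the spectral gap $c_\ast>0$ is automatic, and the spectral theorem then gives the coercivity as you describe.

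The paper takes a different route to the same isolation statement: it argues by contradiction, assuming a sequence of positive eigenvalues $\lambda_n\to 0$ with normalized eigenfunctions $\phi_n$, and then uses the uniform boundedness of $(\mathcal{J}\mathcal{L}_2-\lambda_n)^{-1}$ on $\mathrm{Ker}(\mathcal{J}\mathcal{L}_2)^\bot$ (which holds because $0$ is isolated in $\sigma(\mathcal{J}\mathcal{L}_2)$ by Theorem~\ref{thm-J-L2}) to force $\|\phi_n\|\to 0$. This is your ``alternative'' sketch made concrete. The paper's motivation, stated in the remark following the lemma, is methodological: the transfer from $\mathcal{J}\mathcal{L}_2$ to $\mathcal{L}_2$ is meant to model a technique that survives in other integrable systems where $\mathcal{J}$ may fail to be invertible. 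For the present lemma, however, your direct appeal to the discrete spectrum of a self\-adjoint operator is both sufficient and cleaner.
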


\begin{proof}
 It suffices to show that zero is not the limit of the point spectrum of $\mathcal{L}_{2}$ since the essential 
    spectrum of $\mathcal{L}_{2}$ is bounded away from zero. We argue by contradiction. 
    Suppose there exists a sequence 
    $\{\lambda_{n} : \lambda_{n} > 0\}_{n=1}^{\infty}$ such that $\lambda_{n} \to 0$ as $n \to \infty$. 
    The normalized eigenfunctions $\phi_{n}$ satisfy
    \begin{equation*}
        \mathcal{L}_{2}\phi_{n} = \lambda_{n}\phi_{n},
    \end{equation*}
    with $\|\phi_{n}\|_{L^{2}} = 1$. Since zero is an isolated eigenvalue of $\mathcal{J}\mathcal{L}_{2}$, for sufficiently small $\lambda_{n}$, we have $(\mathcal{J}\mathcal{L}_{2} - \lambda_{n})^{-1}$
    uniformly bounded on $\mathrm{Ker}(\mathcal{L}_{2})^{\bot} = \mathrm{Ker}(\mathcal{J}\mathcal{L}_{2})^{\bot}$. Hence, 
    \begin{equation*}
        \begin{split}
            \|\phi_{n}\| &\leq C \|(\mathcal{J}\mathcal{L}_{2} - \lambda_{n})\phi_{n}\| \\
            &\leq C \lambda_{n} \|\mathcal{J}\phi_{n} - \phi_{n}\| \\ 
            &\leq 4C\lambda_{n} \to 0,
        \end{split}
    \end{equation*}
    which is a contradiction to $\|\phi_{n}\| = 1$. 
\end{proof}

\begin{rem}
    The proof relies on the bounded inverse of $\mathcal{J}$ (which implies that 
    $\mathrm{Ker}(\mathcal{L}_{2}) = \mathrm{Ker}(\mathcal{J}\mathcal{L}_{2})$) and the fact that zero is an isolated eigenvalue for 
    $\mathcal{J}\mathcal{L}_{2}$. For other integrable systems, if $\mathcal{J}$ does not have an inverse with a finite-dimensional kernel, then we also need to consider $\mathrm{Ker}(\mathcal{J})$. 
\end{rem}

\section{Nonlinear stability of breathers}
\label{sec-5}

We give the proof of Theorem \ref{nonlinear-stability} that states the nonlinear stability of breather solutions. The nonlinear manifold of breather solutions is characterized by four parameters of the breather:
\begin{equation*}
    \left\{ \mathbf{q}^{[2]}(x,t;a,b_1,b_2;c_{11},c_{12},c_{21},c_{22}) : \quad  c_{11},c_{12},c_{21},c_{22} \in\mathbb{C}\right\}. 
\end{equation*}
The main approach to proving the nonlinear stability of breathers involves analyzing  the nonlinear manifold and utilizing the conservation laws to constrain perturbations within a space where the operator $\mathcal{L}_{2}$ is coercive. Subsequently, the time-invariant property and continuity of the Lyapunov functional lead to the conclusion of nonlinear stability. 
To simplify the notation, we will use $\mathbf{q}$ instead of $\mathbf{q}^{[2]}$ to denote the breather solutions. 

\subsection{The reduced Hamiltonian}

Let \(\mathrm{n}(A)\), \(\mathrm{z}(A)\), and \(\mathrm{p}(A)\) denote respectively the number of negative, zero, and positive eigenvalues of a linear, 
self-adjoint operator $A$ in a Hilbert space. We proceed to define the conservation laws
\begin{align*}
        \mathcal{Q}_{a}:=\sum_{n=0}^{4}\partial_{a}\mu_{n} H_{n}&=0,  \\
        \mathcal{Q}_{b_{1}}:=\sum_{n=0}^{4}\partial_{b_{1}}\mu_{n} H_{n}&=-\frac{64}{3}b_{1}(b_{1}-2b_{2})(b_{1}+b_{2})^{2},  \\
        \mathcal{Q}_{b_{2}}:=\sum_{n=0}^{4}\partial_{b_{2}}\mu_{n} H_{n}&=\frac{64}{3}b_{2}(2b_{1}-b_{2})(b_{1}+b_{2})^{2},
\end{align*}
where we have used the Lyapunov functional \eqref{Ly-non} and the expressions \eqref{con-law-spectral} for conservation laws. 
The Hessian matrix is given by
\begin{equation}
\label{Hes-mat}
\begin{split}
& \left(\partial_{\sigma\tau}\mathcal{I}_{2} -
    \sum_{n=0}^{4} \partial_{\sigma\tau}\mu_{n} H_{n}\right)_{\sigma,\tau\in \{a,b_{1},b_{2}\}} \\
& \quad     = 64(b_{1} + b_{2}) \begin{pmatrix}
        (b_{1} - b_{2})^{2} & 0 & 0 \\
        0 & b_{1}(b_{2} - b_{1}) & 0 \\
        0 & 0 & b_{2}(b_{1} - b_{2}) \\
    \end{pmatrix}.
    \end{split}
\end{equation}
It is related to the reduced Hamiltonian \(\mathcal{L}_{2}\mathcal{P}\), where \(\mathcal{P}\) is the projection of \(\mathrm{X}\) onto
\begin{equation*}
    \mathrm{X}_{1} = \mathrm{span}\left\{
    \frac{\delta \mathcal{Q}_{a}}{\delta \mathbf{q}}, \; 
\frac{\delta \mathcal{Q}_{b_1}}{\delta \mathbf{q}}, \; \frac{\delta \mathcal{Q}_{b_2}}{\delta \mathbf{q}} \right\}^{\bot}.
\end{equation*}
By utilizing the Hessian matrix \eqref{Hes-mat}, we prove the following lemma.

\begin{lem}\label{spectrum-reduce}
It is true that
\begin{align*}
        &\mathrm{n}(\mathcal{L}_{2}\mathcal{P}) = \mathrm{n}(\mathcal{L}_{2}) - \mathrm{p}\left(\partial_{\sigma\tau}\mathcal{I}_{2} -
    \sum_{n=0}^{4} \partial_{\sigma\tau}\mu_{n} H_{n}\right), \\
    &\mathrm{z}(\mathcal{L}_{2}\mathcal{P}) = \mathrm{z}(\mathcal{L}_{2}).
    \end{align*}
\end{lem}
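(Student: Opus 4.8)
The plan is to realize $\mathcal{L}_{2}\mathcal{P}$ as the constrained self-adjoint operator $\mathcal{P}\mathcal{L}_{2}\mathcal{P}$ acting on the finite-codimension subspace $\mathrm{X}_{1}=\mathrm{span}\{\frac{\delta \mathcal{Q}_{a}}{\delta \mathbf{q}},\frac{\delta \mathcal{Q}_{b_1}}{\delta \mathbf{q}},\frac{\delta \mathcal{Q}_{b_2}}{\delta \mathbf{q}}\}^{\bot}$, and then to invoke the classical eigenvalue-counting lemma for a self-adjoint operator restricted to the orthogonal complement of finitely many vectors, as used in \cite{grillakis_stability_1987,pelinovsky_inertia_2005,haragus_spectra_2008}. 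Its hypotheses hold in our setting: by Lemma \ref{spectrum-L2-2} the operator $\mathcal{L}_{2}$ in $\mathrm{X}$ has $\mathrm{n}(\mathcal{L}_{2})=2$ and $\mathrm{z}(\mathcal{L}_{2})=8$, both finite, and by the discussion at the end of Section 2 its essential spectrum is bounded away from zero, so $0$ is an isolated point of $\sigma(\mathcal{L}_{2})$.

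The heart of the argument will be the identity $\frac{\delta \mathcal{Q}_{\sigma}}{\delta \mathbf{q}}=-\mathcal{L}_{2}\,\partial_{\sigma}\mathbf{q}$ for $\sigma\in\{a,b_{1},b_{2}\}$. To establish it, I would use that the breather family $\mathbf{q}=\mathbf{q}^{[2]}(\cdot,0;a,b_{1},b_{2};c_{11},c_{12},c_{21},c_{22})$ depends smoothly on the spectral parameters and satisfies the Euler--Lagrange equation $\sum_{n=0}^{4}\mu_{n}\frac{\delta H_{n}}{\delta \mathbf{q}}=0$; differentiating this with respect to $\sigma$ at fixed scattering parameters and using $\mathcal{L}_{2}=\sum_{n}\mu_{n}\frac{\delta^{2} H_{n}}{\delta \mathbf{q}^{2}}$ together with $\mathcal{Q}_{\sigma}=\sum_{n}\partial_{\sigma}\mu_{n}H_{n}$ gives $\frac{\delta \mathcal{Q}_{\sigma}}{\delta \mathbf{q}}+\mathcal{L}_{2}\partial_{\sigma}\mathbf{q}=0$. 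Consequently each constraint gradient lies in the range of $\mathcal{L}_{2}$, hence is orthogonal to $\mathrm{Ker}(\mathcal{L}_{2})$, so the counting lemma applies with the symmetric $3\times 3$ matrix
\begin{equation*}
\mathbf{D}_{\sigma\tau}:=\Big(\mathcal{L}_{2}^{-1}\tfrac{\delta \mathcal{Q}_{\sigma}}{\delta \mathbf{q}},\,\tfrac{\delta \mathcal{Q}_{\tau}}{\delta \mathbf{q}}\Big)=\big(\partial_{\sigma}\mathbf{q},\,\mathcal{L}_{2}\,\partial_{\tau}\mathbf{q}\big),
\end{equation*}
and it yields, provided $\mathbf{D}$ is nonsingular, $\mathrm{n}(\mathcal{L}_{2}\mathcal{P})=\mathrm{n}(\mathcal{L}_{2})-\mathrm{n}(\mathbf{D})$ and $\mathrm{z}(\mathcal{L}_{2}\mathcal{P})=\mathrm{z}(\mathcal{L}_{2})$.

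Next I would identify $\mathbf{D}$ with the Hessian matrix in \eqref{Hes-mat}. The critical-point condition gives $\partial_{\sigma}[\mathcal{I}_{2}(\mathbf{q})]=\mathcal{Q}_{\sigma}+\big(\partial_{\sigma}\mathbf{q},\sum_{n}\mu_{n}\frac{\delta H_{n}}{\delta \mathbf{q}}\big)=\mathcal{Q}_{\sigma}$, and one more differentiation yields
\begin{equation*}
\partial_{\sigma\tau}\mathcal{I}_{2}-\sum_{n=0}^{4}\partial_{\sigma\tau}\mu_{n}H_{n}=\Big(\partial_{\tau}\mathbf{q},\,\tfrac{\delta \mathcal{Q}_{\sigma}}{\delta \mathbf{q}}\Big)=-\big(\partial_{\tau}\mathbf{q},\,\mathcal{L}_{2}\,\partial_{\sigma}\mathbf{q}\big)=-\mathbf{D}_{\sigma\tau}.
\end{equation*}
Hence the matrix in \eqref{Hes-mat} equals $-\mathbf{D}$; being the nonsingular diagonal matrix $64(b_{1}+b_{2})\,\mathrm{diag}\big((b_{1}-b_{2})^{2},\,b_{1}(b_{2}-b_{1}),\,b_{2}(b_{1}-b_{2})\big)$ for $b_{1}\neq b_{2}$, it shows that $\mathbf{D}$ is nonsingular with $\mathrm{z}(\mathbf{D})=0$ and $\mathrm{n}(\mathbf{D})=\mathrm{p}(-\mathbf{D})=\mathrm{p}\big(\partial_{\sigma\tau}\mathcal{I}_{2}-\sum_{n}\partial_{\sigma\tau}\mu_{n}H_{n}\big)$. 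Substituting these into the counting formulas of the previous paragraph gives exactly the two stated equalities.

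I expect the main obstacle to be the analytic justification of differentiating the Euler--Lagrange identity in the correct function space: one needs the map $(a,b_{1},b_{2})\mapsto \mathbf{q}^{[2]}(\cdot,0;a,b_{1},b_{2};\cdot)$ to be $C^{1}$ into $\mathcal{D}(\mathcal{L}_{2})\subset \mathrm{X}$, equivalently into $H^{4}(\mathbb{R},\mathbb{C}^{4})$, which should follow from the explicit Schwartz-class formulas of Definition \ref{def-breather} and the assumption $b_{1},b_{2}>0$, since parameter derivatives only insert polynomially growing prefactors absorbed by the exponential decay. The remaining bookkeeping --- linear independence of the three constraint gradients (automatic once $\mathbf{D}$ is nonsingular) and the sign in $\mathbf{D}=-(\text{Hessian})$ --- is routine.
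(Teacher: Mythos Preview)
Your proposal is correct and follows essentially the same approach as the paper: both derive the key identity $\mathcal{L}_{2}\partial_{\sigma}\mathbf{q}=-\frac{\delta\mathcal{Q}_{\sigma}}{\delta\mathbf{q}}$ by differentiating the Euler--Lagrange equation, then identify the Hessian \eqref{Hes-mat} with $-(\mathcal{L}_{2}\partial_{\sigma}\mathbf{q},\partial_{\tau}\mathbf{q})$, and use its nonsingularity to count. The only difference is packaging: you cite the constrained counting lemma as a black box, while the paper unpacks it in-line via the decomposition $\mathrm{X}=\mathrm{X}_{1}\oplus\mathrm{Y}$ with $\mathrm{Y}=\mathrm{span}\{\partial_{a}\mathbf{q},\partial_{b_{1}}\mathbf{q},\partial_{b_{2}}\mathbf{q}\}$, noting that this sum is direct both in $\mathrm{X}$ and with respect to $(\mathcal{L}_{2}\cdot,\cdot)$.
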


\begin{proof}
    Differentiating $\frac{\delta\mathcal{I}_{2}}{\delta\mathbf{q}}=\sum_{n=0}^{4}\mu_{n} \frac{\delta H_{n}}{\delta\mathbf{q}}=0$ with respect to $\sigma$ for $\sigma \in \{a,b_1,b_2\}$, we obtain
\begin{equation}\label{con-L-X}
    \mathcal{L}_{2}\partial_{\sigma} \mathbf{q}
    +\sum_{n=0}^{4}\partial_{\sigma}\mu_{n} \frac{\delta H_{n}}{\delta\mathbf{q}}
    =\mathcal{L}_{2}\partial_{\sigma} \mathbf{q}
    +\frac{\delta \mathcal{Q}_{\sigma}}{\delta \mathbf{q}}=0. 
\end{equation}
Differentiating $\mathcal{I}_{2}=\sum_{n=0}^{4}\mu_{n} H_{n}$ with respect to $\sigma$ for $\sigma \in \{a,b_1,b_2\}$, we obtain
\begin{equation*}
    \partial_{\sigma} \mathcal{I}_{2}=\sum_{n=0}^{4}\partial_{\sigma}\mu_{n} H_{n}+
    \left(
        \sum_{n=0}^{4}\mu_{n}\frac{\delta H_{n}}{\delta\mathbf{q}},\partial_{\sigma}\mathbf{q}
    \right)
    =\sum_{n=0}^{4}\partial_{\sigma}\mu_{n} H_{n}
\end{equation*}
due to the same equation $\frac{\delta\mathcal{I}_{2}}{\delta\mathbf{q}}=\sum_{n=0}^{4}\mu_{n} \frac{\delta H_{n}}{\delta\mathbf{q}}=0$.
Another differentiation with respect to $\tau$ for $\tau \in \{a,b_1,b_2\}$ yields
\begin{align*}
        \partial_{\tau}\partial_{\sigma} \mathcal{I}_{2}&=\sum_{n=0}^{4}\partial_{\tau}\partial_{\sigma}\mu_{n} H_{n}+
    \left(\sum_{n=0}^{4}\partial_{\sigma}\mu_{n} 
    \frac{\delta H_{n}}{\delta\mathbf{q}},\partial_{\tau}\mathbf{q}\right)\\
        &=\sum_{n=0}^{4}\partial_{\tau}\partial_{\sigma}\mu_{n} H_{n}-
        \left(\mathcal{L}_{2}\partial_{\sigma} \mathbf{q},\partial_{\tau}\mathbf{q}\right).
    \end{align*}
Comparison with \eqref{Hes-mat} shows that the Hessian matrix 
is related to the negative eigenvalues of \(\mathcal{L}_{2}\) by 
\begin{equation*}
\partial_{\sigma\tau}\mathcal{I}_{2} -
\sum_{n=0}^{4} \partial_{\sigma\tau}\mu_{n} H_{n} = -\left(\mathcal{L}_{2}\partial_{\sigma} \mathbf{q},\partial_{\tau}\mathbf{q}\right).
\end{equation*}
The remaining proof is standard but we give it for the sake of completeness.
Based on the identity \eqref{con-L-X}, we define
\begin{equation*}
    \mathrm{Y} = \left\{
        \partial_{a} \mathbf{q}, \; \partial_{b_1} \mathbf{q}, \; \partial_{b_2} \mathbf{q}
    \right\}.
\end{equation*}
Since the Hessian matrix given by \eqref{Hes-mat} has no zero eigenvalues, we have
\begin{equation}\label{ne-ze-1}
    \begin{split}
        &\mathrm{n}\left(\left.\mathcal{L}_{2}\right|_{\mathrm{Y}}\right) = \mathrm{p}\left(\partial_{\sigma\tau}\mathcal{I}_{2} -
        \sum_{n=0}^{4} \partial_{\sigma\tau}\mu_{n} H_{n}\right), \\
        &\mathrm{z}\left(\left.\mathcal{L}_{2}\right|_{\mathrm{Y}}\right) = 0,
    \end{split}
\end{equation}
and \(\mathrm{Y} \cap \mathrm{X}_{1} = \{0\}\). Hence, we have the direct sum decomposition 
\(\mathrm{X}_{1} \oplus \mathrm{Y}\). 
Since \(\mathcal{L}_{2}\) is a one-to-one map from \(\mathrm{Y}\) to \(\mathrm{X}_{1}^{\bot}\), 
according to \eqref{con-L-X}, \(\mathrm{Y}\) is isomorphic to \(\mathrm{X}_{1}^{\bot}\) so that $\mathrm{X}_{1}\oplus\mathrm{Y}=\mathrm{X}$.
In addition, by \eqref{con-L-X} and the definition of $\mathrm{X}_{1}$, 
for any $u_{1}\in\mathrm{X}_{1}$, $v_{1}\in\mathrm{Y}$,  we obtain 
\begin{equation*}
    (\mathcal{L}_{2} v_{1}, u_{1}) = - \left(\sum_{n=0}^{4}\partial_{\sigma}\mu_{n} \frac{\delta H_{n}}{\delta\mathbf{q}}, u_{1} \right) = 0
\end{equation*}
for some $\sigma$. Hence the sum is also direct under the product $(\mathcal{L}_{2}\cdot,\cdot)$. 
In view of \eqref{ne-ze-1}, the number of negative and zero eigenvalues 
for $\mathcal{L}_{2}$ in space $\mathrm{X}_{1}\oplus\mathrm{Y}$ is given by
\begin{equation*}
        \mathrm{n}(\mathcal{L}_{2})=
    \mathrm{n}(\left.\mathcal{L}_{2}\right|_{\mathrm{X}_{1}\oplus\mathrm{Y}})=
    \mathrm{n}(\mathcal{L}_{2}\mathcal{P})
    +\mathrm{p}\left(\partial_{\sigma\tau}\mathcal{I}_{2} -
    \sum_{n=0}^{4} \partial_{\sigma\tau}\mu_{n} H_{n}\right), 
\end{equation*}
and
\begin{equation*}
    \mathrm{z}(\mathcal{L}_{2})=
    \mathrm{z}(\left.\mathcal{L}_{2}\right|_{\mathrm{X}_{1}\oplus\mathrm{Y}})
    =
    \mathrm{z}(\mathcal{L}_{2}\mathcal{P}),
\end{equation*}
which completes the proof.
\end{proof}

Since $n(\mathcal{L}_2) = 2$ by Lemma \ref{spectrum-L2-2}, 
it follows from (\ref{Hes-mat}) and Lemma \ref{spectrum-reduce} that 
$n(\mathcal{L}_2 \mathcal{P}) = 0$. Hence \(\mathcal{L}_{2}\) 
is coercive in the space
\begin{equation*}
    \mathcal{R}'(\mathbf{q}) := \mathrm{Ker}_{\mathrm{X}}(\mathcal{L}_{2})^{\bot} \cap
    \mathrm{span}\left\{
        \frac{\delta \mathcal{Q}_{a}}{\delta \mathbf{q}}, \; 
        \frac{\delta \mathcal{Q}_{b_1}}{\delta \mathbf{q}}, \; 
        \frac{\delta \mathcal{Q}_{b_2}}{\delta \mathbf{q}}
    \right\}^{\bot}
\end{equation*}
according to the following lemma. Compared to Lemma \ref{lem-coercivity}, we obtain the coercivity for \(\mathcal{L}_{2}\) in terms of the spectral parameters of the breathers.

\begin{lem}
	\label{lem-coercivity-parameters}
If the perturbation \(\mathbf{z} \in H^{2}(\mathbb{R},\mathrm{X})\) satisfies 
$\mathbf{z} \in \mathcal{R}'(\mathbf{q})$, then 
    \begin{equation*}
        (\mathcal{L}_{2} \mathbf{z}, \mathbf{z}) \geq C_{1} \|\mathbf{z}\|_{H^{2}}^{2},
    \end{equation*}
    where \(C_{1}\) is a positive constant.
\end{lem}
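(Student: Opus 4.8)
The plan is to combine the index count of Lemma~\ref{spectrum-reduce} with the spectral gap established in the proof of Lemma~\ref{lem-coercivity}: first derive $L^2$-coercivity of $\mathcal{L}_2$ on $\mathcal{R}'(\mathbf{q})$, and then upgrade the $L^2$-norm to the $H^2$-norm using the ellipticity of the principal part $\mathcal{L}_\infty$.

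First I would observe that the Hessian matrix \eqref{Hes-mat} equals $-(\mathcal{L}_2\partial_\sigma\mathbf{q},\partial_\tau\mathbf{q})_{\sigma,\tau\in\{a,b_1,b_2\}}$, exactly as in the proof of Lemma~\ref{spectrum-reduce}; under the standing assumption $0<b_2<b_1$ it is a nonsingular diagonal matrix with two positive and one negative diagonal entries, so $\mathrm{p}=2$ and $\mathrm{z}=0$. Together with $\mathrm{n}(\mathcal{L}_2)=2$ and $\mathrm{z}(\mathcal{L}_2)=8$ from Lemma~\ref{spectrum-L2-2}, Lemma~\ref{spectrum-reduce} then yields $\mathrm{n}(\mathcal{L}_2\mathcal{P})=0$ and $\mathrm{z}(\mathcal{L}_2\mathcal{P})=8$. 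Moreover, by \eqref{con-L-X} one has $\frac{\delta\mathcal{Q}_\sigma}{\delta\mathbf{q}}=-\mathcal{L}_2\partial_\sigma\mathbf{q}$, so any $\mathbf{w}\in\mathrm{Ker}_{\mathrm{X}}(\mathcal{L}_2)$ satisfies $(\mathbf{w},\frac{\delta\mathcal{Q}_\sigma}{\delta\mathbf{q}})=-(\mathcal{L}_2\mathbf{w},\partial_\sigma\mathbf{q})=0$; hence $\mathrm{Ker}_{\mathrm{X}}(\mathcal{L}_2)\subset\mathrm{X}_1$, the eight-dimensional kernel of $\mathcal{L}_2\mathcal{P}$ coincides with $\mathrm{Ker}_{\mathrm{X}}(\mathcal{L}_2)$, and the quadratic form $(\mathcal{L}_2\cdot,\cdot)$ is strictly positive on $\mathcal{R}'(\mathbf{q})=\mathrm{Ker}_{\mathrm{X}}(\mathcal{L}_2)^{\bot}\cap\mathrm{X}_1$, with trivial kernel there.

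Next, since $\mathcal{J}$ has a bounded inverse, $\mathrm{Ker}(\mathcal{L}_2)=\mathrm{Ker}(\mathcal{J}\mathcal{L}_2)$, and by Theorem~\ref{thm-J-L2} the point $0$ is isolated in $\sigma(\mathcal{J}\mathcal{L}_2)$; repeating the contradiction argument from the proof of Lemma~\ref{lem-coercivity} shows that $0$ is also isolated in $\sigma(\mathcal{L}_2)$. Combined with the essential spectrum of $\mathcal{L}_2$ being bounded away from $0$ and with the positivity of the previous step, this gives a uniform lower bound $(\mathcal{L}_2\mathbf{z},\mathbf{z})\geq C_0\|\mathbf{z}\|_{L^2}^2$ for every $\mathbf{z}\in\mathcal{R}'(\mathbf{q})\cap H^2(\mathbb{R},\mathrm{X})$. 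To pass to $H^2$, I would write $\mathcal{L}_2=\mathcal{L}_\infty+\mathcal{V}$, where $\mathcal{L}_\infty=2^{4}\mathcal{P}_2\left(\frac{{\rm i}\partial_x}{2}\right)$ has symbol bounded below by $b_1^2b_2^2>0$ and $\mathcal{V}$ is a lower-order differential operator with Schwartz coefficients, so that the G\aa{}rding inequality gives $(\mathcal{L}_2\mathbf{z},\mathbf{z})\geq\delta\|\mathbf{z}\|_{H^2}^2-C\|\mathbf{z}\|_{L^2}^2$ for some $\delta,C>0$. Taking the convex combination $(\mathcal{L}_2\mathbf{z},\mathbf{z})=\theta(\mathcal{L}_2\mathbf{z},\mathbf{z})+(1-\theta)(\mathcal{L}_2\mathbf{z},\mathbf{z})$ and choosing $\theta\in(0,1)$ with $(1-\theta)C_0\geq\theta C$ then produces $(\mathcal{L}_2\mathbf{z},\mathbf{z})\geq\theta\delta\|\mathbf{z}\|_{H^2}^2$, i.e.\ the claim with $C_1=\theta\delta$.

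I expect the main obstacle to lie in the first step: one must verify that the three functionals $\frac{\delta\mathcal{Q}_a}{\delta\mathbf{q}},\frac{\delta\mathcal{Q}_{b_1}}{\delta\mathbf{q}},\frac{\delta\mathcal{Q}_{b_2}}{\delta\mathbf{q}}$ are linearly independent --- equivalently, that $\partial_a\mathbf{q},\partial_{b_1}\mathbf{q},\partial_{b_2}\mathbf{q}$ are linearly independent modulo $\mathrm{Ker}(\mathcal{L}_2)$ --- which is precisely what $\mathrm{z}(\text{Hessian})=0$ encodes, so that the codimension-three constraint defining $\mathcal{R}'(\mathbf{q})$ removes exactly the two negative directions of $\mathcal{L}_2$ together with one spurious positive direction while leaving $\mathrm{Ker}_{\mathrm{X}}(\mathcal{L}_2)$ untouched. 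The remaining ingredients --- isolation of the zero eigenvalue and the G\aa{}rding bootstrap --- are routine.
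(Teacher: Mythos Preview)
Your argument is correct and tracks the paper's proof for the $L^2$-coercivity step: both rely on the index count of Lemma~\ref{spectrum-reduce} (together with the isolation of $0$ in $\sigma(\mathcal{L}_2)$ from Lemma~\ref{lem-coercivity}) to conclude that the quadratic form is uniformly positive on $\mathcal{R}'(\mathbf{q})$ in $L^2$. Your explicit verification that $\mathrm{Ker}_{\mathrm{X}}(\mathcal{L}_2)\subset\mathrm{X}_1$ via $\frac{\delta\mathcal{Q}_\sigma}{\delta\mathbf{q}}=-\mathcal{L}_2\partial_\sigma\mathbf{q}$ is a useful detail that the paper leaves implicit.

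The genuine difference lies in the $H^2$ upgrade. You invoke the G\aa{}rding inequality $(\mathcal{L}_2\mathbf{z},\mathbf{z})\geq\delta\|\mathbf{z}\|_{H^2}^2-C\|\mathbf{z}\|_{L^2}^2$, exploiting that $\mathcal{L}_\infty=16\,\mathcal{P}_2({\rm i}\partial_x/2)$ has strictly positive symbol and that $\mathcal{V}=\mathcal{L}_2-\mathcal{L}_\infty$ is of order at most two with Schwartz coefficients; a convex combination then absorbs the negative $L^2$ term into the already established $L^2$ bound. The paper instead argues by contradiction: assuming a normalized sequence $z_n\in\mathcal{R}'(\mathbf{q})$ with $\|z_n\|_{H^2}=1$ and $(\mathcal{L}_2z_n,z_n)\to0$, the $L^2$-coercivity forces $\|z_n\|_{L^2}\to0$, interpolation then gives $\|\partial_xz_n\|_{L^2}\to0$, and expanding $(\mathcal{L}_2z_n,z_n)$ term by term shows $\|\partial_x^2z_n\|_{L^2}\to0$, contradicting $\|z_n\|_{H^2}=1$. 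Your G\aa{}rding route is cleaner and more transferable---it uses only the ellipticity of the principal part and the relative boundedness of the lower-order terms---whereas the paper's argument, while elementary, requires explicitly tracking which derivatives appear in each piece of $\mathcal{L}_2$.
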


\begin{proof}
    We have previously established the \(L^{2}\) coercivity of \(\mathcal{L}_{2}\) in Lemma \ref{spectrum-reduce}. To prove the \(H^{2}\) coercivity, we consider the ODE \(\frac{\delta \mathcal{I}_{2}}{\delta \mathbf{q}} = 0\) satisfied by the breather solutions.
    
    We proceed by contradiction. Suppose there exists a bounded sequence \(\{z_{n}\} \subset H^{2}(\mathbb{R})\) such that \(\|z_{n}\|_{H^{2}} = 1\) and \((\mathcal{L}_{2} z_{n}, z_{n}) \to 0\) as \(n \to \infty\). 
    The \(L^{2}\)-coercivity induce that 
    \begin{equation*}
        \lim_{n\to \infty}\|z_{n}\|^{2}_{L^{2}}\to 0.
    \end{equation*}
    We also have \(\lim_{n\to \infty}\|\partial_{x}z_{n}\|_{L^{2}}=0\) since 
    \begin{equation*}
        \|\partial_{x}z_{n}\|_{L^{2}}=(\partial_{x}z_{n},\partial_{x}z_{n})=
        -(\partial_{x}^{2}z_{n},z_{n})\leq \|\partial_{x}^{2}z_{n}\|_{L^{2}}^{1/2}\|z_{n}\|_{L^{2}}^{1/2}\to 0,\quad n\to\infty.  
    \end{equation*}
    Hence 
    \begin{equation*}
        \lim_{n\to \infty}\|\partial_{x}^{2}z_{n}\|^{2}_{L^{2}}\to 1. 
    \end{equation*}
    By the definition of \(\mathcal{L}_{2}\), we obtain
    \begin{align*}
        \|\partial_{x}^{2} z_{n}\|_{L^{2}} &+ \sum_{i=0}^{2}(\partial_{x}^{i}z_{n}, 
        f_{i}(\mathbf{q}, \mathbf{q}^{*},\mathbf{q}_{x}, \mathbf{q}_{x}^{*},\mathbf{q}_{xx}, \mathbf{q}_{xx}^{*}) \partial_{x} z_{n}) \\
        &+(z_{n}, g_{0}(\mathbf{q}, \mathbf{q}^{*},\mathbf{q}_{x}, \mathbf{q}_{x}^{*},\mathbf{q}_{xx}, \mathbf{q}_{xx}^{*}) z_{n})\to 0,
    \end{align*}
    for some polynomial functions \(f_{0},f_{1},f_{2},g_{0}\). Since
    \begin{equation*}
        \left|(\partial_{x}^{i}z_{n}, f \partial_{x}^{j} z_{n})\right|\leq 
        \|f\|_{L^{\infty}}\|\partial_{x}^{i}z_{n}\|_{L^{2}}^{1/2}
        \|\partial_{x}^{j} z_{n}\|_{L^{2}}^{1/2}
        \to 0,
    \end{equation*}
    for $i=0,1,2$, $j=0,1$ and polynomial $f=f(\mathbf{q}, \mathbf{q}^{*},\mathbf{q}_{x}, \mathbf{q}_{x}^{*},\mathbf{q}_{xx}, \mathbf{q}_{xx}^{*})$,  we obtain 
    \begin{equation*}
        \sum_{i=0}^{2}(\partial_{x}^{i}z_{n}, f_{i}(\mathbf{q}, \mathbf{q}^{*},\mathbf{q}_{x}, \mathbf{q}_{x}^{*},\mathbf{q}_{xx}, \mathbf{q}_{xx}^{*}) \partial_{x} z_{n})
        +(z_{n}, g_{0}(\mathbf{q}, \mathbf{q}^{*},\mathbf{q}_{x}, \mathbf{q}_{x}^{*},\mathbf{q}_{xx}, \mathbf{q}_{xx}^{*}) z_{n})\to 0. 
    \end{equation*}
    This leads to a contradiction
    \begin{equation*}
        \|\partial_{x}^{2} z_{n}\| \to 0.  
    \end{equation*}
    Therefore, the \(H^{2}\)-coercivity holds.
\end{proof}

Based on Lemma \ref{lem-coercivity-parameters},  we define the space 
\begin{equation*}
    \mathcal{R}(\mathbf{q}) := \left\{
        \mathbf{v}\in H^{2}(\mathbb{R},\mathrm{X}) : \quad \mathbf{v}\in \mathrm{Ker}_{\mathrm{X}}(\mathcal{L}_{2})^{\bot}, \;\;
        \mathcal{Q}_{\sigma}(\mathbf{q})=\mathcal{Q}_{\sigma}(\mathbf{q}+\mathbf{v}), \;\; \sigma \in \{a, b_1, b_2\}
    \right\}
\end{equation*}
and obtain the following lemma.

\begin{lem}
    There exists a constant \( C_{2} > 0 \) such that for sufficiently small \( \mathbf{z} \in \mathcal{R}(\mathbf{q}) \) in the \( H^{2} \) norm, it is true that 
    \begin{equation*}
        (\mathcal{L}_{2} \mathbf{z}, \mathbf{z}) \geq C_{1}
        \|\mathbf{z}\|_{H^{2}}^{2} - C_{2} \|\mathbf{z}\|_{H^{2}}^{3}.
    \end{equation*}
\end{lem}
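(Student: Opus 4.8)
The plan is to deduce the claimed estimate from the already-established coercivity of $\mathcal{L}_{2}$ on the \emph{linearized} constraint set $\mathcal{R}'(\mathbf{q})$ (Lemma \ref{lem-coercivity-parameters}), by showing that an element $\mathbf{z}\in\mathcal{R}(\mathbf{q})$ differs from an element of $\mathcal{R}'(\mathbf{q})$ only by a term that is quadratically small in $\|\mathbf{z}\|_{H^{2}}$. Write $\mathbf{g}_{\sigma}:=\frac{\delta\mathcal{Q}_{\sigma}}{\delta\mathbf{q}}$ for $\sigma\in\{a,b_{1},b_{2}\}$. First I would record that these three functions are in the Schwartz class (because the breather profile $\mathbf{q}=\mathbf{q}^{[2]}$ is), that they are linearly independent, and that $\mathrm{span}\{\mathbf{g}_{a},\mathbf{g}_{b_{1}},\mathbf{g}_{b_{2}}\}\subset\mathrm{Ker}_{\mathrm{X}}(\mathcal{L}_{2})^{\bot}$: indeed by \eqref{con-L-X} one has $\mathbf{g}_{\sigma}=-\mathcal{L}_{2}\partial_{\sigma}\mathbf{q}$, so $\mathbf{g}_{\sigma}$ lies in the range of $\mathcal{L}_{2}$, and the restriction of $\mathcal{L}_{2}$ to $\mathrm{span}\{\partial_{a}\mathbf{q},\partial_{b_{1}}\mathbf{q},\partial_{b_{2}}\mathbf{q}\}$ is injective by \eqref{ne-ze-1} (equivalently the Hessian \eqref{Hes-mat} is non-degenerate). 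In particular the $3\times3$ Gram matrix $\big((\mathbf{g}_{\sigma},\mathbf{g}_{\tau})\big)$ is invertible.

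Next I would decompose $\mathbf{z}\in\mathcal{R}(\mathbf{q})$, with $\|\mathbf{z}\|_{H^{2}}$ small, orthogonally in $\mathrm{X}$ with respect to $\mathrm{span}\{\mathbf{g}_{a},\mathbf{g}_{b_{1}},\mathbf{g}_{b_{2}}\}$ as $\mathbf{z}=\mathbf{z}_{\bot}+\sum_{\sigma}c_{\sigma}\mathbf{g}_{\sigma}$, where the $c_{\sigma}$ solve $\sum_{\tau}(\mathbf{g}_{\sigma},\mathbf{g}_{\tau})c_{\tau}=(\mathbf{g}_{\sigma},\mathbf{z})$. Since both $\mathbf{z}$ and the $\mathbf{g}_{\sigma}$ lie in $\mathrm{Ker}_{\mathrm{X}}(\mathcal{L}_{2})^{\bot}$, so does $\mathbf{z}_{\bot}$, and by construction $\mathbf{z}_{\bot}\perp\,\mathbf{g}_{\sigma}$; hence $\mathbf{z}_{\bot}\in\mathcal{R}'(\mathbf{q})$. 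The key estimate is that the linear functionals $(\mathbf{g}_{\sigma},\cdot)$ are quadratically small on $\mathcal{R}(\mathbf{q})$: since $\mathbf{q}$ is smooth and rapidly decaying, each $\mathcal{Q}_{\sigma}$ is twice continuously (Fréchet) differentiable on a small $H^{2}$-ball about $\mathbf{q}$, so a second-order Taylor expansion gives
\[
0=\mathcal{Q}_{\sigma}(\mathbf{q}+\mathbf{z})-\mathcal{Q}_{\sigma}(\mathbf{q})=(\mathbf{g}_{\sigma},\mathbf{z})+R_{\sigma}(\mathbf{z}),\qquad |R_{\sigma}(\mathbf{z})|\le C\|\mathbf{z}\|_{H^{2}}^{2},
\]
whence $(\mathbf{g}_{\sigma},\mathbf{z})=-R_{\sigma}(\mathbf{z})=O(\|\mathbf{z}\|_{H^{2}}^{2})$ and, by invertibility of the Gram matrix, $\sum_{\sigma}|c_{\sigma}|\le C\|\mathbf{z}\|_{H^{2}}^{2}$. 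Consequently $\|\mathbf{z}_{\bot}\|_{H^{2}}\le\|\mathbf{z}\|_{H^{2}}+C\|\mathbf{z}\|_{H^{2}}^{2}$ and $\|\mathbf{z}_{\bot}\|_{H^{2}}^{2}\ge\|\mathbf{z}\|_{H^{2}}^{2}-C\|\mathbf{z}\|_{H^{2}}^{3}$ for $\mathbf{z}$ small.

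Finally I would expand the quadratic form using self-adjointness of $\mathcal{L}_{2}$:
\[
(\mathcal{L}_{2}\mathbf{z},\mathbf{z})=(\mathcal{L}_{2}\mathbf{z}_{\bot},\mathbf{z}_{\bot})+2\sum_{\sigma}c_{\sigma}(\mathbf{z}_{\bot},\mathcal{L}_{2}\mathbf{g}_{\sigma})+\sum_{\sigma,\tau}c_{\sigma}c_{\tau}(\mathcal{L}_{2}\mathbf{g}_{\sigma},\mathbf{g}_{\tau}).
\]
The functions $\mathcal{L}_{2}\mathbf{g}_{\sigma}$ are Schwartz, so $|(\mathbf{z}_{\bot},\mathcal{L}_{2}\mathbf{g}_{\sigma})|\le C\|\mathbf{z}_{\bot}\|_{L^{2}}\le C\|\mathbf{z}\|_{H^{2}}$ and $|(\mathcal{L}_{2}\mathbf{g}_{\sigma},\mathbf{g}_{\tau})|\le C$; together with $\sum_{\sigma}|c_{\sigma}|\le C\|\mathbf{z}\|_{H^{2}}^{2}$ this bounds the last two terms by $C\|\mathbf{z}\|_{H^{2}}^{3}$. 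Applying Lemma \ref{lem-coercivity-parameters} to $\mathbf{z}_{\bot}\in\mathcal{R}'(\mathbf{q})$ and using $\|\mathbf{z}_{\bot}\|_{H^{2}}^{2}\ge\|\mathbf{z}\|_{H^{2}}^{2}-C\|\mathbf{z}\|_{H^{2}}^{3}$ then yields $(\mathcal{L}_{2}\mathbf{z},\mathbf{z})\ge C_{1}\|\mathbf{z}\|_{H^{2}}^{2}-C_{2}\|\mathbf{z}\|_{H^{2}}^{3}$ after renaming constants. The main obstacle I anticipate is the Taylor-remainder bound $|R_{\sigma}(\mathbf{z})|\le C\|\mathbf{z}\|_{H^{2}}^{2}$: because $\mathcal{Q}_{\sigma}$ involves the density of $H_{4}$, which is quadratic in $\mathbf{q}_{xx}$ and of degree six in $\mathbf{q}$, one must verify that the first and second variations of $\mathcal{Q}_{\sigma}$ act boundedly in the appropriate norms on a neighbourhood of the (smooth, decaying) breather and that the cubic-and-higher contributions in $\mathbf{z}$ are controlled — this is where the regularity and decay of $\mathbf{q}^{[2]}$ and the embedding $H^{2}(\mathbb{R})\hookrightarrow L^{\infty}(\mathbb{R})$ enter.
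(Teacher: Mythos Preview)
Your argument is correct and follows essentially the same approach as the paper: decompose $\mathbf{z}$ into a piece in $\mathcal{R}'(\mathbf{q})$ plus a correction along $\mathrm{span}\{\mathbf{g}_{\sigma}\}$, use the Taylor expansion of $\mathcal{Q}_{\sigma}$ to show the correction coefficients are $O(\|\mathbf{z}\|_{H^{2}}^{2})$, then expand the quadratic form and apply Lemma~\ref{lem-coercivity-parameters}. Your version is in fact slightly more streamlined than the paper's, since your observation that $\mathbf{g}_{\sigma}=-\mathcal{L}_{2}\partial_{\sigma}\mathbf{q}\in\mathrm{Ker}_{\mathrm{X}}(\mathcal{L}_{2})^{\bot}$ lets you avoid introducing extra kernel components $\beta_{ij}\partial_{c_{ij}}\mathbf{q}$ in the decomposition (in the paper's proof these coefficients are forced to vanish anyway by the same orthogonality).
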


\begin{proof}
    For any $\mathbf{z} \in \mathcal{R}(\mathbf{q})$, we decompose it as
    \begin{equation*}
        \mathbf{z} = \mathbf{z}_{1} + 
        \sum_{\sigma \in \{a, b_{1}, b_{2}\}} \alpha_{\sigma} \frac{\delta \mathcal{Q}_{\sigma}}{\delta \mathbf{q}} + 
        \sum_{i, j} \beta_{ij} \partial_{c_{ij}} \mathbf{q}, 
    \end{equation*}
    where $\mathbf{z}_{1} \in \mathcal{R}'(\mathbf{q})$. 
    Since $\mathcal{Q}_{\sigma}(\mathbf{q} + \mathbf{z}) = \mathcal{Q}_{\sigma}(\mathbf{q})$, 
    expanding $\mathcal{Q}_{\sigma}$ around $\mathbf{q}$ with perturbation $\mathbf{z}$ yields
    \begin{equation}\label{linear-equation-1}
        \sum_{\sigma \in \{a, b_{1}, b_{2}\}} \alpha_{\sigma} 
        \left( \frac{\delta \mathcal{Q}_{\sigma}}{\delta \mathbf{q}}, \frac{\delta \mathcal{Q}_{\tau}}{\delta \mathbf{q}} \right) + 
        \sum_{i, j} \beta_{ij} \left( \partial_{c_{ij}} \mathbf{q}, \frac{\delta \mathcal{Q}_{\tau}}{\delta \mathbf{q}} \right) 
        = \mathcal{O}(\|\mathbf{z}\|_{H^{2}(\mathbb{R})}^{2}), \quad \tau \in \{a, b_{1}, b_{2}\}.
    \end{equation}
    Moreover, since $\mathbf{z}, \mathbf{z}_{1} \in \mathrm{Ker}(\mathcal{L}_{2})^{\bot}$, the identity
    \begin{equation}\label{linear-equation-2}
        \sum_{\sigma \in \{a, b_{1}, b_{2}\}} \alpha_{\sigma} 
        \left( \frac{\delta \mathcal{Q}_{\sigma}}{\delta \mathbf{q}}, \partial_{c_{kl}} \mathbf{q} \right) + 
        \sum_{i, j} \beta_{ij} \left( \partial_{c_{ij}} \mathbf{q}, \partial_{c_{kl}} \mathbf{q} \right) 
        = 0 , \quad 1 \leq k, l \leq 2
    \end{equation}
    holds. Solving equations \eqref{linear-equation-1} and \eqref{linear-equation-2}, we observe that 
    the coefficient matrix is the Gram matrix in 
    $\mathrm{Ker}(\mathcal{L}_{2}) \cup \mathrm{span} \left\{ \frac{\delta \mathcal{Q}_{\sigma}}{\delta \mathbf{q}} \right\}$. 
    Thus, the order of coefficients is given by
    \begin{equation}\label{ord-coeff}
        \alpha_{\sigma} = \mathcal{O}(\|\mathbf{z}\|_{H^{2}}^{2}), \quad 
        \beta_{ij} = \mathcal{O}(\|\mathbf{z}\|_{H^{2}}^{2}),
    \end{equation}
so that we obtain     for some constants $C_1, C_{2}, C_{3} > 0$ 
    \begin{align*}
            (\mathcal{L}_{2} \mathbf{z}, \mathbf{z}) & = (\mathcal{L}_{2} \mathbf{z}_{1}, \mathbf{z}_{1}) + 
            2 \sum_{\sigma} \alpha_{\sigma} \left( \mathbf{z}_{1}, \mathcal{L}_{2} \frac{\delta \mathcal{Q}_{\sigma}}{\delta \mathbf{q}} \right) + 
            \sum_{\sigma, \tau} \alpha_{\sigma} \alpha_{\tau} 
            \left( \mathcal{L}_{2} \frac{\delta \mathcal{Q}_{\sigma}}{\delta \mathbf{q}}, 
            \frac{\delta \mathcal{Q}_{\tau}}{\delta \mathbf{q}} \right) \\
            & \geq C_{1} \|\mathbf{z}_{1}\|_{H^{2}}^{2} - C_{3} \|\mathbf{z}\|_{H^{2}}^{2} \|\mathbf{z}_{1}\|_{H^{2}} \\
            & \geq C_{1}
            \left\|\mathbf{z} - \sum_{\sigma \in \{a, b_{1}, b_{2}\}} \alpha_{\sigma} \frac{\delta \mathcal{Q}_{\sigma}}{\delta \mathbf{q}} + 
            \sum_{i, j} \beta_{ij} \partial_{c_{ij}} \mathbf{q} \right\|_{H^{2}}^{2} - C_{3} \|\mathbf{z}\|_{H^{2}}^{3} \\
            & \geq C_{1} \|\mathbf{z}\|_{H^{2}}^{2} - C_{2} \|\mathbf{z}\|_{H^{2}}^{3}
        \end{align*}
by the Cauchy-Schwartz inequality and the estimates \eqref{ord-coeff}. 
\end{proof}

\subsection{Lyapunov functional}

We identify the perturbation to the nonlinear manifold of breather solutions \eqref{breather} by using the orthogonality conditions to the kernel of $\mathcal{L}_{2}$. The four complex scattering parameters of breather solutions \eqref{breather} generate a manifold of real dimension eight. This manifold corresponds to the kernel $\mathrm{Ker}_{\mathrm{X}}(\mathcal{L}_{2})$, according to the following lemma. Compared to Theorem \ref{spectrum-L2-1}, where the eight-dimensional kernel of $\mathcal{L}_{2}$ in $L^2(\mathbb{R},\mathbb{C}^4)$ was obtained by using squared eigenfunctions, 
the kernel of $\mathcal{L}_2$ in $\mathrm{X}$ is characterized in terms of the scattering parameters.
 
\begin{lem}
	\label{lem-kernel}
    The kernel of $\mathcal{L}_{2}$ in the real Hilbert space $\mathrm{X}$ is spanned by the derivatives of $\mathbf{q}$ with respect to the four scattering parameters 
    $c_{ij}=|c_{ij}|{\rm e}^{{\rm i}\theta_{ij}}$: 
    \begin{equation*}
        \mathrm{Ker}_{\mathrm{X}}(\mathcal{L}_{2}) = \mathrm{span} \left\{ 
            \partial_{|c_{ij}|} \mathbf{q}^{[2]}, \partial_{\theta_{ij}} \mathbf{q}^{[2]}, \quad i,j = 1,2 
        \right\}.
    \end{equation*}
\end{lem}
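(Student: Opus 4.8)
\emph{Plan of proof.} The strategy is a dimension count together with an explicit identification of a basis. By Lemma~\ref{spectrum-L2-2} the real self-adjoint operator $\mathcal{L}_{2}(\mathbf{q}^{[2]})$ has an eight-dimensional kernel in $\mathrm{X}$, and the four complex scattering parameters $c_{ij}=|c_{ij}|\mathrm{e}^{\mathrm{i}\theta_{ij}}$, $i,j=1,2$, provide exactly eight real parameters. Hence it suffices to prove two things: (i) each of the eight derivatives $\partial_{|c_{ij}|}\mathbf{q}^{[2]}$ and $\partial_{\theta_{ij}}\mathbf{q}^{[2]}$ (evaluated at $t=0$, which is enough since the ODE satisfied by the profile holds for every $t$) belongs to $\mathrm{Ker}_{\mathrm{X}}(\mathcal{L}_{2})$; and (ii) these eight functions are linearly independent over $\mathbb{R}$. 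Together with $\dim\mathrm{Ker}_{\mathrm{X}}(\mathcal{L}_{2})=8$ this yields the claimed spanning property.

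For the inclusion, recall from Definition~\ref{def-breather} and the critical-point property established from the trace formula in Appendix~\ref{A.ISM} that, with the coefficients $\mu_{n}$ depending only on the fixed spectral parameters $a,b_{1},b_{2}$, the breather profile $\mathbf{q}^{[2]}(\cdot,0;a,b_{1},b_{2};c_{11},c_{12},c_{21},c_{22})$ solves the system of fourth-order ODEs $\sum_{n=0}^{4}\mu_{n}\frac{\delta H_{n}}{\delta \mathbf{q}}=0$ for \emph{every} choice of the scattering parameters. Differentiating this identity with respect to a real parameter $p\in\{|c_{ij}|,\theta_{ij}\}$ and recalling from \eqref{L2} that $\mathcal{L}_{2}=\sum_{n=0}^{4}\mu_{n}\frac{\delta^{2} H_{n}}{\delta \mathbf{q}^{2}}$ is precisely the linearization of this Euler--Lagrange operator, one obtains $\mathcal{L}_{2}\big(\iota(\partial_{p}\mathbf{q}^{[2]})\big)=0$. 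Since $\mathbf{q}^{[2]}=4\,T(a)\,N/M$ with $M\sim M^{\pm}\mathrm{e}^{\pm 2(b_{1}+b_{2})\xi}$, $M^{\pm}\neq 0$, by \eqref{asy-M}, each parameter-derivative is again a rational function of exponentials multiplied by an exponentially decaying factor, hence a Schwartz function on $\mathbb{R}$; in particular it lies in $H^{2}(\mathbb{R},\mathrm{X})\subset\mathcal{D}(\mathcal{L}_{2})$, so it belongs to $\mathrm{Ker}_{\mathrm{X}}(\mathcal{L}_{2})$. (If $c_{21}$ or $c_{22}$ happens to vanish the corresponding phase derivative degenerates and is replaced by $\partial_{\mathrm{Re}\,c_{ij}}\mathbf{q}^{[2]}$ and $\partial_{\mathrm{Im}\,c_{ij}}\mathbf{q}^{[2]}$, treated identically.)

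The decisive step is linear independence, which I would obtain by matching the eight parameter-derivatives against the squared-eigenfunction basis of $\mathrm{Ker}(\mathcal{J}\mathcal{L}_{2})=\mathrm{Ker}(\mathcal{L}_{2})$. Differentiating the Darboux representation \eqref{DT-breather}--\eqref{breather} with respect to $c_{ij}$ expresses $\partial_{c_{ij}}\mathbf{q}^{[2]}$, and hence $\partial_{|c_{ij}|}\mathbf{q}^{[2]}$ and $\partial_{\theta_{ij}}\mathbf{q}^{[2]}$, as an explicit linear combination of the squared eigenfunctions $\mathbf{P}_{\pm 1}(\lambda_{1}),\mathbf{P}_{\pm 1}(\lambda_{2}),\mathbf{P}_{\pm 1}(\lambda_{1}^{*}),\mathbf{P}_{\pm 1}(\lambda_{2}^{*})$, which by Proposition~\ref{asy-squ-eig-2} and Theorem~\ref{thm-J-L2} are linearly independent and span $\mathrm{Ker}(\mathcal{L}_{2})$ in $L^{2}(\mathbb{R},\mathbb{C}^{4})$. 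One then checks that the $8\times 8$ change-of-basis matrix relating $\{\partial_{|c_{ij}|}\mathbf{q}^{[2]},\partial_{\theta_{ij}}\mathbf{q}^{[2]}\}$ to this squared-eigenfunction basis is invertible, which can be verified either by reading off the leading exponential coefficients of the two families as $x\to\pm\infty$ from \eqref{asy-yiyj}--\eqref{asy-M}, or equivalently by showing that the Jacobian of $(c_{11},c_{12},c_{21},c_{22})\mapsto\mathbf{q}^{[2]}$ has full rank $8$, which expresses the injectivity of the Darboux transformation on the scattering data. Combined with $\dim\mathrm{Ker}_{\mathrm{X}}(\mathcal{L}_{2})=8$, this shows the eight derivatives form a basis of $\mathrm{Ker}_{\mathrm{X}}(\mathcal{L}_{2})$.

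I expect the main obstacle to be the bookkeeping in this last step: producing the explicit linear relation between the eight parameter-derivatives and the eight squared eigenfunctions, and confirming non-degeneracy of the resulting matrix, requires careful tracking of the exponential asymptotics of $N/M$ and of $\mathbf{D}^{[2]}$ at $x=\pm\infty$, in the spirit of Proposition~\ref{asy-squ-eig-2}. Everything else — the inclusion via differentiation of the Euler--Lagrange system, the decay estimates, and the final dimension count — is routine.
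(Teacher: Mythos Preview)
Your proposal is correct and follows essentially the same approach as the paper: both use the dimension count $\dim\mathrm{Ker}_{\mathrm{X}}(\mathcal{L}_{2})=8$ from Lemma~\ref{spectrum-L2-2}, establish the inclusion by differentiating the Euler--Lagrange identity $\nabla\mathcal{I}_{2}(\mathbf{q}^{[2]})=0$ in the scattering parameters (noting that the $\mu_{n}$ do not depend on $c_{ij}$), and then invoke linear independence of the eight derivatives. The only difference is that the paper dispatches linear independence in one sentence by appealing to the distinct exponential asymptotics of the derivatives, whereas you propose the more explicit route of expressing the derivatives in the squared-eigenfunction basis and checking invertibility of the change-of-basis matrix; your route is more laborious but leads to the same conclusion.
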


\begin{proof}
    We have proven that $\mathrm{dim}\ \mathrm{Ker}_{\mathrm{X}}(\mathcal{L}_{2}) = 8$ by Lemma \ref{spectrum-L2-2}. It remains to show that 
    $\partial_{|c_{ij}|} \mathbf{q}^{[2]}, \partial_{\theta_{ij}} \mathbf{q}^{[2]} \in \mathrm{Ker}_{\mathrm{X}}(\mathcal{L}_{2})$ and that the eight eigenfunctions are linearly independent. Since the operator $\mathcal{L}_{2}$ is 
    independent of $c_{ij}$, we have the expansion
    \begin{equation*}
        \nabla \mathcal{I}_{2} (\mathbf{q}^{[2]}(c_{ij} + \epsilon)) = \nabla \mathcal{I}_{2} (\mathbf{q}^{[2]}(c_{ij})) +
        \epsilon \mathcal{L}_{2} (\partial_{c_{ij}} \mathbf{q}^{[2]}) + \mathcal{O}(\epsilon^{2}). 
    \end{equation*}
    Since the perturbation on $c_{ij}$ does not affect the Hamiltonian, the ODE
    \begin{equation*}
        \nabla \mathcal{I}_{2} (\mathbf{q}^{[2]}(c_{ij} + \epsilon)) = \nabla \mathcal{I}_{2} (\mathbf{q}^{[2]}(c_{ij})) = 0
    \end{equation*}
    holds. The $\mathcal{O}(\epsilon)$ term represents that
    \begin{equation*}
        \mathcal{L}_{2} (\partial_{c_{ij}} \mathbf{q}^{[2]}) = 0.
    \end{equation*}
    It can be verified that $\partial_{|c_{ij}|} \mathbf{q}^{[2]}, \partial_{\theta_{ij}} \mathbf{q}^{[2]}$ 
    are linearly independent in view of the exponential term. 
    The detailed verification is omitted here.
\end{proof}

The following lemma follows from Lemma \ref{lem-kernel} by standard estimates.

\begin{lem}\label{mod}
    There exist $\delta_{0}, \epsilon_{0} > 0$ such that 
    for all $0 < \delta < \delta_{0}$, there exist functions 
    $$
    (c_{11}(t), c_{21}(t), c_{12}(t), c_{22}(t)) \in \mathbb{C}^{4}
    $$ 
    such that
    \begin{equation*}
        \mathbf{w}(\cdot,t) = \mathbf{u}(\cdot,t) - \mathbf{q}^{[2]}(\cdot,t;a,b_1,b_2,c_{11}(t), c_{21}(t), c_{12}(t), c_{22}(t)) \in \mathrm{Ker}_{\mathrm{X}}(\mathcal{L}_{2})^{\bot}
    \end{equation*}
    for $\|\mathbf{w}(\cdot,t)\|_{L^{2}} \leq \delta$.
    Moreover, if $\mathbf{u} \in C(\mathbb{R},H^2(\mathbb{R},\mathbb{C}^2))$ is the solution of the CNLS equations \eqref{CNLS}, then
    \begin{equation}\label{estimate-derivative}
        \sum_{1 \leq i,j \leq 2} |\partial_{t} c_{ij}(t)| \leq C \|\mathbf{w}(\cdot,t)\|_{H^{2}}
    \end{equation}
    for some constant $C$. 
\end{lem}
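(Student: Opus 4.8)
The plan is to prove Lemma \ref{mod} by the standard modulation argument, relying on the characterization of $\mathrm{Ker}_{\mathrm{X}}(\mathcal{L}_{2})$ obtained in Lemma \ref{lem-kernel}. Write $\partial_{k}\mathbf{q}^{[2]}$, $k = 1,\dots,8$, for the eight real derivatives $\partial_{|c_{ij}|}\mathbf{q}^{[2]}$ and $\partial_{\theta_{ij}}\mathbf{q}^{[2]}$, which by Lemma \ref{lem-kernel} are linearly independent and span $\mathrm{Ker}_{\mathrm{X}}(\mathcal{L}_{2})$; they are Schwartz-class in $x$, and their $8\times 8$ Gram matrix $\mathbf{G}(c) := \big( (\partial_{k}\mathbf{q}^{[2]}(c),\partial_{l}\mathbf{q}^{[2]}(c)) \big)_{1\le k,l\le 8}$ is nonsingular. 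First I would introduce $F : \mathrm{X}\times\mathbb{C}^{4} \to \mathbb{R}^{8}$ with components $F_{k}(\mathbf{u},c) = \big( \mathbf{u} - \mathbf{q}^{[2]}(\cdot;c),\, \partial_{k}\mathbf{q}^{[2]}(\cdot;c) \big)$, observing that $F$ is affine, hence smooth, in $\mathbf{u}$ in the $L^{2}$ topology, that $F(\mathbf{q}^{[2]}(\cdot;c^{0}),c^{0}) = 0$, and that the partial derivative $D_{c}F$ at this base point equals $-\mathbf{G}(c^{0})$, which is invertible. The implicit function theorem then yields $\delta_{0},\epsilon_{0} > 0$ and a $C^{1}$ map $\mathbf{u}\mapsto c(\mathbf{u})$, defined for $\|\mathbf{u} - \mathbf{q}^{[2]}(\cdot;c^{0})\|_{L^{2}} < \delta_{0}$ and taking values in the $\epsilon_{0}$-ball around $c^{0}$, such that $F(\mathbf{u},c(\mathbf{u})) = 0$, i.e.\ $\mathbf{w} := \mathbf{u} - \mathbf{q}^{[2]}(\cdot;c(\mathbf{u})) \in \mathrm{Ker}_{\mathrm{X}}(\mathcal{L}_{2})^{\bot}$, with $\|\mathbf{w}\|_{L^{2}}\le\delta$ once $0 < \delta < \delta_{0}$ is fixed small enough. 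Applying this pointwise in $t$ to the global solution $\mathbf{u}(\cdot,t)\in C(\mathbb{R},H^{2})\subset C(\mathbb{R},L^{2})$ produces the modulation functions $c_{ij}(t)$; since $\mathbf{u}\in C^{1}(\mathbb{R},L^{2})$ directly from \eqref{CNLS} and $c(\cdot)$ is $C^{1}$ on $L^{2}$, the chain rule gives $c_{ij}\in C^{1}(\mathbb{R},\mathbb{C})$.

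For the estimate \eqref{estimate-derivative}, I would differentiate the orthogonality conditions $\big(\mathbf{w}(\cdot,t),\partial_{k}\mathbf{q}^{[2]}(\cdot,t;c(t))\big) = 0$ in $t$. Splitting $\partial_{t}\mathbf{q}^{[2]}(\cdot,t;c(t))$ into its explicit time derivative with $c$ held fixed plus $\sum_{l}\dot{c}_{l}\,\partial_{l}\mathbf{q}^{[2]}$, and using that both $\mathbf{u}$ and the breather with $c(t)$ momentarily held fixed solve \eqref{CNLS}, one obtains
\begin{equation*}
\partial_{t}\mathbf{w} = \mathbf{N}(\mathbf{q}^{[2]},\mathbf{w}) - \sum_{l}\dot{c}_{l}\,\partial_{l}\mathbf{q}^{[2]},
\end{equation*}
where $\mathbf{N}(\mathbf{q}^{[2]},\mathbf{w})$ gathers the linear part proportional to $\partial_{x}^{2}\mathbf{w}$ and the terms linear, quadratic and cubic in $\mathbf{w}$ coming from the difference of the two copies of the CNLS nonlinearity; crucially there is no $\mathbf{w}$-independent term, precisely because $\mathbf{q}^{[2]}$ with $c$ held fixed is an exact solution. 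Taking the $\mathrm{X}$-inner product with $\partial_{k}\mathbf{q}^{[2]}$ and combining with $(\partial_{t}\mathbf{w},\partial_{k}\mathbf{q}^{[2]}) = -(\mathbf{w},\partial_{t}\partial_{k}\mathbf{q}^{[2]})$, where $\partial_{t}\partial_{k}\mathbf{q}^{[2]}(\cdot,t;c(t))$ again splits into an explicit part and $\sum_{m}\dot{c}_{m}\,\partial_{m}\partial_{k}\mathbf{q}^{[2]}$, one arrives at a linear system
\begin{equation*}
\big(\mathbf{G}(c(t)) + \mathcal{O}(\|\mathbf{w}\|_{L^{2}})\big)\,\dot{c}(t) = \mathbf{b}(t), \qquad |\mathbf{b}(t)|\le C\,\|\mathbf{w}(\cdot,t)\|_{H^{2}}.
\end{equation*}
The bound on $\mathbf{b}(t)$ comes from pairing $\partial_{t}\mathbf{w}$ against the Schwartz-class functions $\partial_{k}\mathbf{q}^{[2]}$: the $\partial_{x}^{2}\mathbf{w}$ contribution is integrated by parts twice, so that only $\mathbf{w}$ paired with $\partial_{x}^{2}\partial_{k}\mathbf{q}^{[2]}\in\mathcal{S}(\mathbb{R})$ survives, bounded by $\|\mathbf{w}\|_{L^{2}}$, while the linear and nonlinear terms are controlled using $\|\mathbf{w}\|_{L^{\infty}}\le C\|\mathbf{w}\|_{H^{1}}$ and the uniform bounds on $\|\mathbf{q}^{[2]}(\cdot,t)\|_{H^{2}}$ and $\|\mathbf{u}(\cdot,t)\|_{H^{2}}$. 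Since $\mathbf{G}(c(t))$ is nonsingular, for $\delta_{0}$ small the matrix on the left is invertible with a uniformly bounded inverse, and \eqref{estimate-derivative} follows.

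The hard part will be the estimate on $\mathbf{b}(t)$: I must verify that every term contributing to it carries at least one factor of $\mathbf{w}$ — which rests on $\mathbf{q}^{[2]}(\cdot,t;c(t))$ being an exact solution of \eqref{CNLS} for each fixed value of $c(t)$, so that the drift terms cancel — and that the $\partial_{x}^{2}\mathbf{w}$ term, which a priori would only be controlled by $\|\mathbf{w}\|_{H^{2}}$ with a loss of two derivatives, is in fact bounded by $\|\mathbf{w}\|_{L^{2}}$ after integrating by parts against the rapidly decaying kernels $\partial_{k}\mathbf{q}^{[2]}$. The remaining steps — controlling the quadratic and cubic pieces of $\mathbf{N}(\mathbf{q}^{[2]},\mathbf{w})$, the smooth $t$-dependence of $\mathbf{G}(c(t))$, and the routine nondegeneracy needed to invoke the implicit function theorem — will be suppressed from this sketch.
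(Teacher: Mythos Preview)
Your proposal is correct and follows essentially the same route as the paper: implicit function theorem applied to the orthogonality functions $F_{k}(\mathbf{u},c)=(\mathbf{u}-\mathbf{q}^{[2]}(\cdot;c),\partial_{k}\mathbf{q}^{[2]})$ with Jacobian the (negative) Gram matrix of $\mathrm{Ker}_{\mathrm{X}}(\mathcal{L}_{2})$, followed by differentiating the orthogonality conditions in $t$ and inverting the same Gram matrix to extract $\dot{c}$. Your treatment is in fact slightly more explicit than the paper's in two places---you separate out the $\dot{c}$-contributions hidden in $\partial_{t}\partial_{k}\mathbf{q}^{[2]}(\cdot,t;c(t))$ as an $\mathcal{O}(\|\mathbf{w}\|_{L^{2}})$ perturbation of $\mathbf{G}$, and you note that the $\partial_{x}^{2}\mathbf{w}$ term can be integrated by parts onto the Schwartz kernels---whereas the paper simply bounds $(\partial_{t}\mathbf{u}-\partial_{t}\mathbf{q}^{[2]},\partial_{c_{kl}}\mathbf{q}^{[2]})$ directly by $\|\mathbf{w}\|_{H^{2}}$ using $\|\mathbf{w}_{xx}\|_{L^{2}}\le\|\mathbf{w}\|_{H^{2}}$.
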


\begin{proof}
    The proof relies on the Implicit Function Theorem. For any $t \in \mathbb{R}$, 
    consider the equations
    \begin{align*}
            g_{11} &= (\mathbf{w}, \partial_{c_{11}} \mathbf{q}^{[2]}), \\ g_{12} &= (\mathbf{w}, \partial_{c_{12}} \mathbf{q}^{[2]}), \\
            g_{21} &= (\mathbf{w}, \partial_{c_{21}} \mathbf{q}^{[2]}), \\ g_{22} &= (\mathbf{w}, \partial_{c_{22}} \mathbf{q}^{[2]}).  
        \end{align*}
    Then 
    \begin{equation*}
        g_{ij}|_{\mathbf{u}(t)=\mathbf{q}^{[2]}(t)} = 0,
    \end{equation*}
    and 
    \begin{equation*}
        \left.\partial_{c_{kl}} g_{ij}\right|_{\mathbf{u}(t) = \mathbf{q}^{[2]}(t)} =
        -(\partial_{c_{kl}} \mathbf{q}^{[2]}, \partial_{c_{ij}} \mathbf{q}^{[2]}), 
    \end{equation*}
    which is the Gram matrix in $\mathrm{Ker}_{\mathrm{X}}(\mathcal{L}_{2})$. This matrix is
    non-degenerate (its determinant is non-zero) since $\partial_{c_{ij}} \mathbf{q}^{[2]}$
    are linearly independent. 

    For the equation \eqref{estimate-derivative}, differentiating 
    $(\mathbf{w}, \partial_{c_{kl}} \mathbf{q}^{[2]}) = 0$ with respect to $t$, invoking CNLS equations
    \eqref{CNLS}, leads to
    \begin{equation*}
        (\partial_{t} \mathbf{u} - \partial_{t} \mathbf{q}^{[2]} - \sum_{i,j} \partial_{c_{ij}} \mathbf{q}^{[2]}
        \partial_{t} c_{ij}, \partial_{c_{kl}} \mathbf{q}^{[2]}) + 
        (\mathbf{w}, \partial_{t} \partial_{c_{kl}} \mathbf{q}^{[2]}) = 0. 
    \end{equation*}
    Since 
    \begin{align*}
            \left|(\partial_{t} \mathbf{u} - \partial_{t} \mathbf{q}^{[2]},
            \partial_{c_{kl}} \mathbf{q}^{[2]})\right| & = \left| \left( -{\rm i}
                \frac{1}{2} \mathbf{w}_{xx} + |\mathbf{u}|^{2} \mathbf{u}
                - |\mathbf{q}^{[2]}|^{2} \mathbf{q}^{[2]}, \partial_{c_{kl}} \mathbf{q}^{[2]}
            \right) \right| \\
            & \leq C' \|\partial_{c_{kl}} \mathbf{q}^{[2]}\|_{L^{2}}
                \|\mathbf{w}\|_{H^{2}},
        \end{align*}
    the derivatives of the coefficients $c_{ij}$ satisfy
    \begin{equation}\label{der-cij}
        \sum_{i,j} \partial_{t} c_{ij} (\partial_{c_{ij}} \mathbf{q}^{[2]},
        \partial_{c_{kl}} \mathbf{q}^{[2]}) =
        (\mathbf{w}, \partial_{t} \partial_{c_{kl}} \mathbf{q}^{[2]}) +
        (\partial_{t} \mathbf{u} - \partial_{t} \mathbf{q}^{[2]}, \partial_{c_{kl}} \mathbf{q}^{[2]}) =
        \mathcal{O}(\|\mathbf{w}\|_{H^{2}}). 
    \end{equation}
    Since the matrix $(\partial_{c_{ij}} \mathbf{q}^{[2]}, \partial_{c_{kl}} \mathbf{q}^{[2]})$ is
    non-degenerate and $(c_{11}(t), c_{12}(t), c_{21}(t), c_{22}(t))$ is close to the parameters $(c_{11}, c_{12}, c_{21}, c_{22})$ of the breather solution $\mathbf{q}^{[2]}$ in
    the Implicit Function Theorem, the coefficient matrix in \eqref{der-cij} remains non-degenerate by the continuity
    of the inner product and the determinant. 
\end{proof}

Before proving Theorem \ref{nonlinear-stability}, it is necessary to establish the  continuity of conservation quantities, according to the following lemma.

\begin{lem}
	\label{lem-conserved}
    The conservation quantities $H_{0}(\mathbf{u})$, $H_1(\mathbf{u})$, $H_2(\mathbf{u})$, $H_3(\mathbf{u})$, $H_4(\mathbf{u})$ are continuous in the space $H^{2}(\mathbb{R},\mathbb{C}^2)$. 
\end{lem}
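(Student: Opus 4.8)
\emph{Proof proposal.} The plan is to exploit that each conserved quantity $H_n$, $0 \le n \le 4$, in (\ref{con-0})--(\ref{con-95}) is a finite sum of integrals over $\mathbb{R}$ of multilinear expressions in $\mathbf{q}$, $\mathbf{q}_x$, $\mathbf{q}_{xx}$: each such expression is $k$-linear in its arguments (allowing complex conjugation of some of them) for some $k \in \{2,4,6\}$, every argument carries at most two derivatives, and the only terms involving a second-order factor, namely $\mathbf{q}_x^\dagger\mathbf{q}_{xx}$ in $H_3$ and $|\mathbf{q}_{xx}|^2$ in $H_4$, pair that factor only against an $L^2$ factor (or square it). Since a bounded multilinear form on a Banach space is automatically continuous — indeed locally Lipschitz on bounded subsets — it suffices to establish, for each integrand $m$ occurring in $H_0,\dots,H_4$, a uniform multilinear bound
\[
\left| \int_{\mathbb{R}} m\big(D^{\alpha_1}\mathbf{f}_1,\dots,D^{\alpha_k}\mathbf{f}_k\big)\,\mathrm{d}x \right| \le C \prod_{j=1}^{k}\|\mathbf{f}_j\|_{H^2}, \qquad |\alpha_j| \le 2.
\]

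First I would record the one-dimensional embeddings to be used: $\|\mathbf{g}\|_{L^p(\mathbb{R})} \le C_p\|\mathbf{g}\|_{H^1(\mathbb{R})}$ for every $p \in [2,\infty]$, which gives $\|\mathbf{q}\|_{L^p} \le C_p\|\mathbf{q}\|_{H^2}$ and $\|\mathbf{q}_x\|_{L^p} \le C_p\|\mathbf{q}\|_{H^2}$ for $p \in [2,\infty]$, together with the trivial bound $\|\mathbf{q}_{xx}\|_{L^2} \le \|\mathbf{q}\|_{H^2}$. Then each integrand is estimated by H\"older's inequality with an appropriate splitting of exponents: the gradient-quadratic terms $|\mathbf{q}_x|^2$, $\mathbf{q}_x^\dagger\mathbf{q}_{xx}$, $|\mathbf{q}_{xx}|^2$ are controlled in $L^2\times L^2$; the terms $|\mathbf{q}|^4$, $|\mathbf{q}|^2\mathbf{q}_x^\dagger\mathbf{q}$, $|\mathbf{q}^\dagger\mathbf{q}_x|^2$, $(\mathbf{q}^\dagger\mathbf{q}_x)^2$, $(\mathbf{q}_x^\dagger\mathbf{q})^2$, $|\mathbf{q}|^2|\mathbf{q}_x|^2$ are controlled by placing one or two gradient factors in $L^2$ and the remaining $\mathbf{q}$-factors in $L^4$ or $L^\infty$; and the sextic term $|\mathbf{q}|^6$ is controlled by $\|\mathbf{q}\|_{L^6}^6 \le C\|\mathbf{q}\|_{H^1}^6$. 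In every case the exponents close up because the total derivative count is at most two and no term contains a second-order factor multiplied by another factor of derivative order $\ge 1$.

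Finally I would deduce continuity. Given $\mathbf{u},\mathbf{v} \in H^2(\mathbb{R},\mathbb{C}^2)$, for each integrand $m$ I write the difference of its values at $\mathbf{u}$ and at $\mathbf{v}$ as a telescoping sum of $k$ multilinear terms, in each of which exactly one argument is the corresponding derivative of $\mathbf{u}-\mathbf{v}$ while the remaining $k-1$ arguments are derivatives of $\mathbf{u}$ or of $\mathbf{v}$; applying the multilinear bound to each term gives
\[
|H_n(\mathbf{u}) - H_n(\mathbf{v})| \le C\big(\|\mathbf{u}\|_{H^2},\|\mathbf{v}\|_{H^2}\big)\,\|\mathbf{u}-\mathbf{v}\|_{H^2},
\]
so every $H_n$ is locally Lipschitz, hence continuous, on $H^2(\mathbb{R},\mathbb{C}^2)$. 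The only point requiring genuine attention is the bookkeeping of H\"older exponents for the highest-order contributions in $H_3$ and $H_4$ — namely checking that the $\mathbf{q}_{xx}$-factors always appear squared or paired with an $L^2$ factor and never multiplied by another gradient — but this is immediate from the explicit formulas (\ref{con-0})--(\ref{con-95}); there is no real analytic obstacle in this lemma.
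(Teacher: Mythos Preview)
Your proposal is correct and follows essentially the same strategy as the paper: both arguments use the one-dimensional Sobolev embedding $H^1 \hookrightarrow L^p$ ($2 \le p \le \infty$) together with H\"older's inequality, and both control the difference $H_n(\mathbf{u}) - H_n(\mathbf{v})$ by telescoping each multilinear integrand. Your framing via bounded multilinear forms is a bit more systematic than the paper's term-by-term computation and yields local Lipschitz continuity directly, whereas the paper's concrete estimates (e.g.\ bounding $\|\mathbf{u}-\mathbf{v}\|_{L^4}$ via $\|\mathbf{u}-\mathbf{v}\|_{L^\infty}^{1/2}\|\mathbf{u}-\mathbf{v}\|_{L^2}^{1/2}$) sometimes give only a H\"older-type bound; but this is a cosmetic difference and the underlying mechanism is identical.
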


\begin{proof}
    The proof relies on continuous embedding. Let $\mathbf{u}, \mathbf{v} \in H^{2}$ with $\|\mathbf{u}\|_{H^{2}}, 
    \|\mathbf{v}\|_{H^{2}} \leq M$ for some constant $M$. For $H_0(\mathbf{u})$, we have
    \begin{align*}
            |H_{0}(\mathbf{u})-H_{0}(\mathbf{v})| &= 
            \left| \|\mathbf{u}\|_{L^{2}}^{2} - \|\mathbf{v}\|_{L^{2}}^{2} \right| \\
            &= \left|(\|\mathbf{u}\|_{L^{2}} - \|\mathbf{v}\|_{L^{2}})(\|\mathbf{u}\|_{L^{2}} + \|\mathbf{v}\|_{L^{2}})\right| \\
            &\leq 2M \|\mathbf{u}-\mathbf{v}\|_{L^{2}}.
    \end{align*}
    For $H_{1}(\mathbf{u})$, we have
    \begin{align*}
            |H_{1}(\mathbf{u})-H_{1}(\mathbf{v})| &= 
            \left| \int_{\mathbb{R}}
            \mathrm{i}(\mathbf{u}_{x}^{T}\mathbf{u}^{*} - \mathbf{v}_{x}^{T}\mathbf{v}^{*}) \, \mathrm{d}x \right| \\
            &\leq \left| \int_{\mathbb{R}}  
            (\mathbf{u}_{x}^{T}(\mathbf{u}^{*} - \mathbf{v}^{*}) + (\mathbf{u}_{x} - \mathbf{v}_{x})^{T}\mathbf{v}^{*}) \, \mathrm{d}x \right| \\
            &\leq \|\mathbf{u}_{x}\|_{L^{2}} \|\mathbf{u}-\mathbf{v}\|_{L^{2}} + \|\mathbf{v}\|_{L^{2}} \|\mathbf{u}_{x} - \mathbf{v}_{x}\|_{L^{2}} \\
            &\leq 2M \|\mathbf{u}-\mathbf{v}\|_{H^{1}}.
        \end{align*}
    For $H_{2}(\mathbf{u})$, we need to use the embedding $H^{1} \hookrightarrow L^{\infty}$, then
    \begin{align*}
            |H_{2}(\mathbf{u})-H_{2}(\mathbf{v})| &\leq 
            \left| \|\mathbf{u}_{x}\|_{L^{2}}^{2} - \|\mathbf{v}_{x}\|_{L^{2}}^{2} \right| +
            \left| \|\mathbf{u}\|_{L^{4}}^{4} - \|\mathbf{v}\|_{L^{4}}^{4} \right| \\
            &\leq 2M \|\mathbf{u}_{x}-\mathbf{v}_{x}\|_{L^{2}} + 4M^{3} \|\mathbf{u}-\mathbf{v}\|_{L^{4}} \\
            &\leq 2M \|\mathbf{u}_{x}-\mathbf{v}_{x}\|_{L^{2}} + 4M^{3} \|\mathbf{u}-\mathbf{v}\|_{L^{\infty}}^{1/2} \|\mathbf{u}-\mathbf{v}\|_{L^{2}}^{1/2} \\
            &\leq 2M \|\mathbf{u}-\mathbf{v}\|_{H^{1}} + C \|\mathbf{u}-\mathbf{v}\|_{L^{2}}^{1/2}. 
        \end{align*}
    For $H_{3}(\mathbf{u})$, we need to use the $H^{2}$ norm
    \begin{align*}
            |H_{3}(\mathbf{u})-H_{3}(\mathbf{v})| &\leq
            |H_{1}(\mathbf{u}_{x}) - H_{1}(\mathbf{v}_{x})| +
            \left| \int_{\mathbb{R}}
            (|\mathbf{u}|^{2}\mathbf{u}^{T}\mathbf{u}_{x}^{*} - |\mathbf{v}|^{2}\mathbf{v}^{T}\mathbf{v}_{x}^{*}) \, \mathrm{d}x \right| \\
            &\leq 2M \|\mathbf{u}_{x} - \mathbf{v}_{x}\|_{H^{1}} + \left| \int_{\mathbb{R}}
            (|\mathbf{u}|^{2} - |\mathbf{v}|^{2}) \mathbf{u}^{T}\mathbf{u}_{x}^{*} + |\mathbf{v}|^{2} (\mathbf{u}^{T}\mathbf{u}_{x}^{*} - \mathbf{v}^{T}\mathbf{v}_{x}^{*}) \, \mathrm{d}x \right| \\
            &\leq 2M \|\mathbf{u} - \mathbf{v}\|_{H^{2}} + 
            \||\mathbf{u}|^{2} - |\mathbf{v}|^{2}\|_{L^{2}} \|\mathbf{u}^{T}\mathbf{u}_{x}^{*}\|_{L^{2}} +
            \|\mathbf{v}\|_{L^{4}}^{2} \|\mathbf{u}^{T}\mathbf{u}_{x}^{*} - \mathbf{v}^{T}\mathbf{v}_{x}^{*}\|_{L^{2}} \\
            &\leq 2M \|\mathbf{u} - \mathbf{v}\|_{H^{2}} + C \|\mathbf{u} - \mathbf{v}\|_{H^{1}}^{1/2}
        \end{align*}
    since $\|\mathbf{u}\|_{L^{4}} \leq \|\mathbf{u}\|_{L^{\infty}}^{1/2} \|\mathbf{u}\|_{L^{2}}^{1/2}$. The continuity of 
    $H_{4}(\mathbf{u})$ is similar to the above estimates. Specifically, we have
    \begin{align*}
        \left|\int_{\mathbb{R}} |\mathbf{u}|^{6} - |\mathbf{v}|^{6} \, \mathrm{d}x\right| &=
            \left| \|\mathbf{u}\|_{L^{6}}^{6} - \|\mathbf{v}\|_{L^{6}}^{6} \right| \\
            &\leq 6M^{5} \|\mathbf{u} - \mathbf{v}\|_{L^{6}} \\
            &\leq 6M^{5} \|\mathbf{u} - \mathbf{v}\|_{L^{\infty}}^{2/3} \|\mathbf{u} - \mathbf{v}\|_{L^{2}}^{1/3} \\
            &\leq C \|\mathbf{u} - \mathbf{v}\|_{L^{2}}^{1/3}. 
        \end{align*}
       Hence, all conserved quantities are continuous in $H^2$.
\end{proof}

Now we can prove the nonlinear stability for breather solutions. 

\begin{proof}[Proof of Theorem \ref{nonlinear-stability}]
    The well-posedness of the CNLS equations \eqref{CNLS} has been established, 
    as discussed in \cite{cazenave_introduction_1989}. Hence, for initial data \(\mathbf{u}(\cdot,0){\in H^2(\mathbb{R},\mathbb{C}^2)}\), 
    there exist global solutions \(\mathbf{u}(\cdot,t){\in H^2(\mathbb{R},\mathbb{C}^2)}\) for any time $t$. 
    
    We argue by contradiction. Assume that there exists $\epsilon_{0} > 0$ such that there are sequences $\mathbf{u}_{n}$ and $t_{n}$ for which
    \begin{equation*}
        \|\mathbf{u}_{n}(\cdot,0) - \mathbf{q}^{[2]}(\cdot,0;a,b_1,b_2;c_{11}(0),c_{12}(0),c_{21}(0),c_{22}(0))\|_{H^{2}} \to 0
        \end{equation*}
        and
        \begin{equation*}
        \|\mathbf{u}_{n}(\cdot,t_{n}) - \mathbf{q}^{[2]}(\cdot,t_{n};a,b_1,b_2;c_{11}(t_{n}),c_{12}(t_{n}),c_{21}(t_{n}),c_{22}(t_{n}))\|_{H^{2}} = \epsilon_{0} 
    \end{equation*}
    for any $C^{1}$ functions $c_{11}(t)$, $c_{12}(t)$, $c_{21}(t)$, $c_{22}(t)$.
    Since the conservation quantities are continuous by Lemma \ref{lem-conserved}, it follows that
    \begin{equation*}
        \mathcal{I}_{2}(\mathbf{u}_{n}(x,t_{n})) = \mathcal{I}_{2}(\mathbf{u}_{n}(x,0)) \to \mathcal{I}_{2}(\mathbf{q}^{[2]}). 
    \end{equation*}
    The continuity of the conservation laws $\mathcal{Q}_{\sigma}$ implies the existence of a sequence $\mathbf{v}_{n}$ such that 
    \begin{equation*}
        \mathcal{Q}_{\sigma}(\mathbf{v}_{n}) = \mathcal{Q}_{\sigma}(\mathbf{q}^{[2]})
    \end{equation*}
    and 
    \begin{equation*}
        \|\mathbf{v}_{n} - \mathbf{u}_{n}(\cdot,t_{n})\|_{H^{2}} \to 0 \quad \text{as} \quad n \to \infty. 
    \end{equation*}
    Considering the functions $c_{11}(t)$, $c_{12}(t)$, $c_{21}(t)$, $c_{22}(t)$ defined in Lemma \ref{mod}, we get
    \begin{equation*}
        \mathbf{v}_{n}(x) - \mathbf{q}^{[2]}(x,t_{n};a,b_1,b_2;c_{11}(t_{n}),c_{12}(t_{n}),c_{21}(t_{n}),c_{22}(t_{n})) \in \mathrm{Ker}_{\mathrm{X}}(\mathcal{L}_{2})^{\bot}. 
    \end{equation*}
Since the conservation laws are independent of the complex scattering parameters, we find that
    \begin{equation*}
        \mathbf{z}_{n} = \mathbf{v}_{n}(x) - \mathbf{q}^{[2]}(x,t_{n};a,b_1,b_2;c_{11}(t_{n}),c_{12}(t_{n}),c_{21}(t_{n}),c_{22}(t_{n})) \in \mathcal{R}(\mathbf{q}). 
    \end{equation*}
    We then have
    \begin{equation*}
        \mathcal{I}_{2}(\mathbf{v}_{n}) \to \mathcal{I}_{2}(\mathbf{q}^{[2]}), 
    \end{equation*}
    which contradicts
    \begin{align*}
            \mathcal{I}_{2}(\mathbf{v}_{n}) - \mathcal{I}_{2}(\mathbf{q}^{[2]}) &= (\mathcal{L}_{2}\mathbf{z}_{n},\mathbf{z}_{n}) + \mathcal{O}(\|\mathbf{z}_{n}\|_{H^{2}}^{3}) \\
            &\geq C_{1} \|\mathbf{z}_{n}\|_{H^{2}}^{2} + \mathcal{O}(\|\mathbf{z}_{n}\|_{H^{2}}^{3}) \\
            &\geq C_{1} \|\mathbf{u}_{n}(\cdot,t_{n}) - \mathbf{q}^{[2]}(\cdot,t_{n};a,b_1,b_2;c_{11}(t_{n}),c_{12}(t_{n}),c_{21}(t_{n}),c_{22}(t_{n}))\|_{H^{2}}^{2} \\
            &\quad + \|\mathbf{v}_{n} - \mathbf{u}_{n}(\cdot,t_{n})\|_{H^{2}} + \mathcal{O}(\|\mathbf{z}_{n}\|_{H^{2}}^{3}) \\
            &= C_{1}\epsilon_{0} + \|\mathbf{v}_{n} - \mathbf{u}_{n}(\cdot,t_{n})\|_{H^{2}} + \mathcal{O}(\|\mathbf{z}_{n}\|_{H^{2}}^{3}) \\
            &\to C_{1}\epsilon_{0}
    \end{align*}
    for sufficiently small $\|\mathbf{z}_{n}\|_{H^{2}}$ and large $n$. The estimate \eqref{derivative-cij} follows from \eqref{estimate-derivative}.
\end{proof}

\appendix

\titleformat{\section}[display]
{\centering\LARGE\bfseries}{ }{11pt}{\LARGE}

\renewcommand{\appendixname}{Appendix \ \Alph{section}}

\section{\appendixname. Inverse scattering transform}
\label{A.ISM}

\setcounter{equation}{0}
\setcounter{definition}{0}
\setcounter{prop}{0}
\renewcommand\theequation{\Alph{section}.\arabic{equation}}

\renewcommand\thedefinition{\Alph{section}.\arabic{definition}}
\renewcommand\theprop{\Alph{section}.\arabic{prop}}

As the generalization of Fourier transform, the inverse scattering method can be used to 
solve the integrable system and construct their infinite conservation laws \cite{faddeev1987hamiltonian,yang_nonlinear_2010}.

The scattering problem associated with CNLS equations \eqref{CNLS} is defined by the first equation of the Lax pair \eqref{CNLS-lax}. 
Assume that $u$ is of Schwartz class with respect to $x$, 
the first-order differential equation
\begin{equation}\label{spectral-problem-vector}
    \phi_{x}=\mathbf{U}\phi
\end{equation}
has two fundamental solution matrices $\mathbf{\Phi}^{\pm}$ which can refer to the cases $x\to\pm\infty$, i.e., solving the 
ODE \eqref{spectral-problem-vector} with boundary $\phi\to {\rm e}^{{\rm i}\lambda\sigma_{3}x}$ for $x\to\pm\infty$. The matrices 
$\mathbf{\Phi}^{\pm}$ have the asymptotic expression respectively
\begin{equation}\label{asy-Phi}
    \mathbf{\Phi}^{\pm}\sim {\rm e}^{{\rm i}\lambda\sigma_{3}x}, \,\,\,\, x\to \pm\infty.
\end{equation}
Denote $\mathbf{\Phi}^{\pm} =\begin{pmatrix}
    \phi_{1}^{\pm}& \phi_{2}^{\pm} & \phi_{3}^{\pm}
\end{pmatrix}$, 
the vector solutions $\phi_{1}^{+}$, $\phi_{2}^{-}, \phi_{3}^{-}$ are holomorphic 
on $\Omega_{+}=\{\lambda\in\mathbb{C}:\mathrm{Im}{(\lambda)}> 0\}$ and 
the vector solutions $\phi_{1}^{-}$, $\phi_{2}^{+}, \phi_{3}^{+}$ are holomorphic 
on $\Omega_{-}=\{\lambda\in\mathbb{C}:\mathrm{Im}{(\lambda)}< 0\}$. 
Additionally, $\phi_{1}^{\pm}$, $\phi_{2}^{\pm}$ and $\phi_{3}^{\pm}$ are smooth up to the boundary. 
For $\mathrm{Im}{(\lambda)}=0$ i.e. $\lambda\in \mathbb{R}$, the matrices $\mathbf{\Phi}^{\pm}$ are smooth and 
there is a transfer matrix $\mathbf{S}(\lambda;t)=\begin{pmatrix} s_{ij}(\lambda;t) \end{pmatrix}_{1\leq i,j \leq 3}$ satisfying
\begin{equation}\label{sca-mat}
    \mathbf{\Phi}^{-}(\lambda;x,t)=\mathbf{\Phi}^{+}(\lambda;x,t)\mathbf{S}(\lambda;t).
\end{equation}

Let us discuss the symmetry of the Lax pair \eqref{CNLS-lax} for CNLS equations \eqref{CNLS}. Using the symmetry $\mathbf{Q}^{\dagger}=\mathbf{Q}$, it can be verified that
\begin{equation*}
    \mathbf{U}^{\dagger}(\lambda;x,t)=-\mathbf{U}(\lambda^{*};x,t). 
\end{equation*}
The symmetries for $\mathbf{U},\mathbf{V}$ lead to
\begin{equation*}
    \begin{split}
        -\mathbf{\Phi}^{\dagger}(\lambda^{*};x,t)_{x}&=\mathbf{\Phi}^{\dagger}(\lambda^{*};x,t)\mathbf{U}(\lambda;x,t), \\
        -\mathbf{\Phi}^{\dagger}(\lambda^{*};x,t)_{t}&=\mathbf{\Phi}^{\dagger}(\lambda^{*};x,t)\mathbf{V}(\lambda;x,t).
    \end{split}
\end{equation*}
Hence the symmetries for $\mathbf{\Phi}$ are
\begin{equation*}
    \mathbf{\Phi}^{-1}(\lambda;x,t)=\mathbf{\Phi}^{\dagger}(\lambda^{*};x,t), 
\end{equation*}
and the symmetries for $\mathbf{S}$ are
\begin{equation*}
    \mathbf{S}^{-1}(\lambda;x,t)=\mathbf{S}^{\dagger}(\lambda^{*};x,t). 
\end{equation*}

Setting $\phi_{i}^{\pm}=\begin{pmatrix} \phi_{1i}^{\pm}&\phi_{2i}^{\pm}&\phi_{3i}^{\pm} \end{pmatrix}^{T}$, 
by \eqref{sca-mat}, we obtain 
\begin{align*}
        \mathbf{S}(\lambda;t)&=(\mathbf{\Phi}^{+}(\lambda;x,t))^{-1}\mathbf{\Phi}^{-}(\lambda;x,t)\\
        &=\det(\mathbf{\Phi}^{+})^{-1}\left(\begin{matrix}
            A_{11}^{+}&A_{21}^{+} &A_{31}^{+} \\
            A_{12}^{+}&A_{22}^{+} &A_{32}^{+} \\
            A_{13}^{+}&A_{23}^{+} &A_{33}^{+} \\
        \end{matrix}\right)\left(
            \begin{matrix}
                \phi_{11}^{-}&\phi_{12}^{-} &\phi_{13}^{-} \\
                \phi_{21}^{-}&\phi_{22}^{-}&\phi_{23}^{-} \\
                \phi_{31}^{-}&\phi_{32}^{-}&\phi_{33}^{-}
            \end{matrix}\right).
    \end{align*}
This yields
\begin{align*}
        s_{11}(\lambda;t)=\det(\mathbf{\Phi}^{+})^{-1}\det{(\phi_{1}^{-},\phi_{2}^{+},\phi_{3}^{+})}, \\
        s_{22}(\lambda;t)=\det(\mathbf{\Phi}^{+})^{-1}\det{(\phi_{1}^{+},\phi_{2}^{-},\phi_{3}^{+})}, \\
        s_{33}(\lambda;t)=\det(\mathbf{\Phi}^{+})^{-1}\det{(\phi_{1}^{+},\phi_{2}^{+},\phi_{3}^{-})},
    \end{align*}
with $s_{11}(\lambda;t)$ being a holomorphic function on $\Omega_{-}$.

\subsection{Conservation laws and trace formulas}
\label{app-A-1}

For $v=\begin{pmatrix} v_{1}&v_{2}&v_{3} \end{pmatrix}^{T}=
\begin{pmatrix} v_{1}&\tilde{v}^{T}\end{pmatrix}^{T}$, we consider the corresponding differential equation associated with 
\eqref{CNLS-lax}
\begin{equation*}
    \begin{split}
        v_{x}=\mathbf{U}v,\\
        v_{t}=\mathbf{V}v.
    \end{split}
\end{equation*}
The term $\omega=\tilde{v}/v_{1}$ satisfies the Riccati equation
\begin{equation}\label{Ric-equ}
    \omega_{x}={\rm i} \mathbf{q}-2{\rm i}\lambda\omega-{\rm i}\omega \mathbf{r}^{T} \omega.
\end{equation}
Since $\omega$ is holomorphic in $\mathbb{C}$ about $\lambda$, assuming 
\begin{equation}\label{Ric-exp}
    \omega(\lambda;x,t)=\sum_{n=1}^{+\infty}\frac{\omega_{n}(x,t)}{(2{\rm i} \lambda)^{n}},
\end{equation}
and substituting \eqref{Ric-exp} into \eqref{Ric-equ}, it leads to
\begin{align*}
        \omega_{1}&={\rm i}\mathbf{q}, \\
        \omega_{2}&=-{\rm i}\mathbf{q}_{x}, \\
        \omega_{3}&={\rm i}\mathbf{q}_{xx}+{\rm i}\mathbf{q}\mathbf{r}^{T}\mathbf{q},\\
        \omega_{4}&=-{\rm i}\mathbf{q}_{xxx}-{\rm i}\mathbf{q}\mathbf{r}_{x}^{T}\mathbf{q}-{\rm i}\mathbf{q}_{x}\mathbf{r}^{T}\mathbf{q}
        -{\rm i}\mathbf{q}\mathbf{r}^{T}\mathbf{q}_{x}\\
        \omega_{5}&={\rm i}\mathbf{q}_{xxxx}+{\rm i}(3\mathbf{q}_{x}\mathbf{r}_{x}^{T}\mathbf{q}+3\mathbf{q}\mathbf{r}_{x}^{T}\mathbf{q}_{x}
        +\mathbf{q}\mathbf{r}_{xx}^{T}\mathbf{q}+3\mathbf{q}_{xx}\mathbf{r}^{T}\mathbf{q}\\
        &\quad +3\mathbf{q}\mathbf{r}^{T}\mathbf{q}_{xx}
        +5\mathbf{q}_{x}\mathbf{r}^{T}\mathbf{q}_{x}
        +2\mathbf{q}\mathbf{r}^{T}\mathbf{q}\mathbf{r}^{T}\mathbf{q})\\
        &\cdots \\
        \omega_{n}&=-\omega_{n-1,x}-{\rm i}\sum_{i=1}^{n-2}\omega_{i}\mathbf{r}^{T}\omega_{n-1-i}, \quad n\geq 1.
    \end{align*}
Since
\begin{equation}\label{ln-v1}
    \begin{split}
        (\ln v_{1})_{x}&={\rm i}\lambda+{\rm i} \mathbf{r}^{T} \omega, \\
        (\ln v_{1})_{t}&={\rm i}\lambda^2-\frac{1}{2}{\rm i}\mathbf{r}^{T}\mathbf{q}+
        ({\rm i}\lambda \mathbf{r}^{T}+\frac{1}{2}\mathbf{r}^{T}_{x})\omega,
    \end{split}
\end{equation}
using the compatibility condition, the relation $(\ln v_{1})_{xt}=(\ln v_{1})_{tx}$ reads
\begin{equation}\label{conse}
    {\rm i} (\mathbf{r}^{T} \omega)_{t}=
    (-\frac{1}{2}{\rm i}\mathbf{r}^{T}\mathbf{q}+
    ({\rm i}\lambda \mathbf{r}^{T}+\frac{1}{2}\mathbf{r}^{T}_{x})\omega)_{x}.
\end{equation}
With the symmetry $\mathbf{q}=\mathbf{r}^{*}$ and the boundary condition ${\mathbf{q}}$ is of Schwartz class about $x$, 
integrating both sides of equation \eqref{conse} with respect to $x$ on the real line, it gives rise to
\begin{equation*}
    {\rm i} \frac{d}{dt} \int_{\mathbb{R}}\mathbf{r}^{T}\omega\mathrm{d}x =0.
\end{equation*}
Grouping the terms with respect to $\lambda$, it concludes that CNLS equations \eqref{CNLS} admit 
an infinite number of conservation laws $\int_{\mathbb{R}}q\omega_{n}\mathrm{d}x, n\geq 1$, including the conservation of 
mass, momentum and energy: 
\begin{equation}\label{hami}
    \begin{split}
        H_{0}(\mathbf{q})&:=-\frac{1}{2}\int_{\mathbb{R}}{\rm i}\mathbf{r}^{T}\omega_{1}\mathrm{d}x
        =\frac{1}{2}\int_{\mathbb{R}}\left| \mathbf{q} \right|^{2}\mathrm{d}x, \\
        H_{1}(\mathbf{q})&:=\frac{{\rm i}}{2}\int_{\mathbb{R}}{\rm i}\mathbf{r}^{T}\omega_{2}\mathrm{d}x
        =\frac{1}{2}\int_{\mathbb{R}}{\rm i} \mathbf{q}_{x}^{T}\mathbf{q}^{*} \mathrm{d}x, \\
        H_{2}(\mathbf{q})&:=\frac{1}{2}\int_{\mathbb{R}}{\rm i}\mathbf{r}^{T}\omega_{3}\mathrm{d}x
        =-\frac{1}{2}\int_{\mathbb{R}} \left( \mathbf{q}_{xx}^{T}\mathbf{q}^{*}
        +\left|\mathbf{q}\right|^{4} \right) \mathrm{d}x=\frac{1}{2}\int_{\mathbb{R}} \left( |\mathbf{q}_{x}|^{2}
        -\left|\mathbf{q}\right|^{4} \right) \mathrm{d}x, \\
        \cdots \\
        H_{n-1}(\mathbf{q})&:=-\frac{(-{\rm i})^{n-1}}{2}\int_{\mathbb{R}}{\rm i}\mathbf{r}^{T}\omega_{n}\mathrm{d}x. 
    \end{split}
\end{equation}

Now we induce the trace formulas. 
From the conservation laws, we can obtain the expansion of $a(\lambda;t):=\hat{s}_{11}(\lambda;t)$. 
In view of the asymptotic expansion \eqref{asy-Phi} and the determinant representation, we obtain 
\begin{equation*}
    a(\lambda;t)
=\lim_{x\to-\infty}\det(\mathbf{\Phi}^{-}(x))^{-1}\phi_{11}^{+}{\rm e}^{-2{\rm i}\lambda x}
=\lim_{x\to-\infty}\phi_{11}^{+}{\rm e}^{-{\rm i}\lambda x}. 
\end{equation*}
There exists a 
function $g(\lambda;x,t)$ such that $\phi_{11}^{+}(\lambda;x,t)={\rm e}^{{\rm i}\lambda x+g(\lambda;x,t)}$. 
In view of \eqref{ln-v1} and \eqref{asy-Phi}, 
we obtain $g_{x}={\rm i} \mathbf{r}^{T} \omega$ and $\lim_{x\to+\infty}g(\lambda;x,t)=0$. Hence
\begin{equation}\label{exp-a-1}
    \ln a(\lambda;t)=\lim_{x\to-\infty}g(\lambda;x,t)=-\int_{-\infty}^{+\infty}{\rm i} \mathbf{r}^{T} \omega \mathrm{d}x
    =-\sum_{n=1}^{+\infty}\int_{-\infty}^{+\infty}\frac{{\rm i} \mathbf{r}^{T}\omega_{n}(x,t)}{(2{\rm i} \lambda)^{n}}\mathrm{d}x,
\end{equation}
by Lebesgue dominated convergence theorem. Since $a(\lambda)$ is holomorphic and the simple zeros of $a(\lambda;t)$ correspond to the point spectrum of a 
nonself-adjoint operator, we assume that $\{\lambda_{1},\lambda_{2},\cdots,\lambda_{N}\}$ 
are all simple zeros of $a(\lambda;t)$, then 
we define
\begin{equation*}
    \tilde{a}(\lambda;t)=a(\lambda;t)
    \prod_{i=1}^{N}\frac{\lambda-\lambda_{i}^*}{\lambda-\lambda_{i}}
\end{equation*}
which is holomorphic, has no zero and tend to 1 as $\lambda\to\infty$. Moreover, $\eta(\lambda;t)=\ln\tilde{a}(\lambda;t)$ is holomorphic in $\Omega_{+}$ and continuous to the 
real line. For the value on real line, by Sochocki-Plemelj formula, since $\eta$ vanish at infinity, for $\lambda$ in the real line, 
we obtain
\begin{align*}
        \eta(\lambda;t)&=\frac{1}{2\pi{\rm i}}\left( \lim_{\epsilon\to 0^{+}} \int_{-\infty}^{+\infty}\frac{\eta(\mu;t)}{\mu-\lambda-{\rm i}\epsilon}\mathrm{d}\mu \right)\\
        &=\frac{1}{2\pi{\rm i}}\left( \mathrm{p.v.}\int_{-\infty}^{+\infty}\frac{\eta(\mu;t)}{\mu-\lambda}\mathrm{d}\mu 
        + {\rm i}\pi \eta(\lambda;t)\right), 
    \end{align*}
hence
\begin{equation*}
        \eta(\lambda;t) =\frac{1}{\pi{\rm i}}\mathrm{p.v.}\int_{-\infty}^{+\infty}\frac{\eta(\mu;t)}{\mu-\lambda}\mathrm{d}\mu .
\end{equation*}
Then the imaginary part is 
\begin{equation*}
    \mathrm{Im} \eta(\lambda;t)= -\frac{1}{\pi}\mathrm{p.v.}\int_{-\infty}^{+\infty}\frac{\mathrm{Re}\eta(\mu;t)}{\mu-\lambda}\mathrm{d}\mu .
\end{equation*}
By Sochocki-Plemelj formula again, we obtain
\begin{equation*}
    \mathrm{Im} \eta(\lambda;t)=-\frac{1}{\pi}\int_{-\infty}^{+\infty}\frac{\mathrm{Re}\eta(\mu;t)}{\mu-\lambda-{\rm i}0}\mathrm{d}\mu+{\rm i}
    \mathrm{Re} \eta(\lambda;t).
\end{equation*}
For $\mathrm{Im}(\lambda)>0$, 
\begin{align*}
        \frac{1}{2\pi{\rm i}}\int_{-\infty}^{+\infty}\frac{\mathrm{Im} \eta(\mu;t)}{\mu-\lambda}\mathrm{d}\mu
        =&\frac{1}{2\pi{\rm i}}\int_{-\infty}^{+\infty}\frac{1}{\mu-\lambda}
        \left(-\frac{1}{\pi}\int_{-\infty}^{+\infty}\frac{\mathrm{Re}\eta(\mu';t)}{\mu'-\mu-{\rm i}0}\mathrm{d}\mu'+{\rm i}
        \mathrm{Re} \eta(\mu;t)\right)
        \mathrm{d}\mu\\
        =&\frac{1}{2\pi}\int_{-\infty}^{+\infty}\frac{\mathrm{Re} \eta(\mu;t)}{\mu-\lambda}\mathrm{d}\mu
        \\
        &-
        \frac{1}{\pi}\int_{-\infty}^{+\infty}\mathrm{Re}\eta(\mu';t)\frac{1}{2\pi{\rm i}}\int_{-\infty}^{+\infty}
        \frac{1}{\mu-\lambda}\frac{1}{\mu'-\mu-{\rm i}0}\mathrm{d}\mu\mathrm{d}\mu'\\
        =&\frac{1}{2\pi}\int_{-\infty}^{+\infty}\frac{\mathrm{Re} \eta(\mu;t)}{\mu-\lambda}\mathrm{d}\mu-\frac{1}{\pi}\int_{-\infty}^{+\infty}
        \frac{\mathrm{Re}\eta(\mu';t)}{\mu'-\lambda-{\rm i}0}\mathrm{d}\mu'\\
        =&-\frac{1}{2\pi}\int_{-\infty}^{+\infty}\frac{\mathrm{Re} \eta(\mu;t)}{\mu-\lambda}\mathrm{d}\mu.
\end{align*}
Then we obtain
\begin{equation*}
    \eta(\lambda;t)=\frac{1}{2\pi{\rm i}}\int_{-\infty}^{+\infty}\frac{\eta(\mu;t)}{\mu-\lambda}\mathrm{d}\mu=\frac{1}{\pi{\rm i}}\int_{-\infty}^{+\infty}\frac{\mathrm{Re}\eta(\mu;t)}{\mu-\lambda}\mathrm{d}\mu,
\end{equation*}
for $\mathrm{Im}(\lambda)>0$. Since $\mathrm{Re}\eta(\mu;t)=\ln |\tilde{a}(\mu)|$, it leads to
\begin{align*}
        a(\lambda;t)&={\rm e}^{\eta(\lambda;t)}
        \prod_{i=1}^{N}\frac{\lambda-\lambda_{i}}{\lambda-\lambda_{i}^*}\\
        &=
        {\rm e}^{\frac{1}{2\pi{\rm i}}\int_{-\infty}^{+\infty}\frac{\ln |\tilde{a}(\mu)|^{2}}{\mu-\lambda}\mathrm{d}\mu}
        \prod_{i=1}^{N}\frac{\lambda-\lambda_{i}}{\lambda-\lambda_{i}^*}
    \end{align*}
or
\begin{equation}\label{exp-a-2}
    \begin{split}
        \ln a(\lambda;t)&=
        \frac{1}{2\pi{\rm i}}\int_{-\infty}^{+\infty}\frac{\ln |\tilde{a}(\mu)|^{2}}{\mu-\lambda}\mathrm{d}\mu+
        \sum_{i=1}^{N}\left( \ln \left( 1-\frac{\lambda_{i}}{\lambda} \right) - \ln \left( 1-\frac{\lambda_{i}^*}{\lambda} \right) \right)\\
        &=\sum_{n=1}^{+\infty}\left(
            -\frac{1}{2\pi{\rm i}}\int_{-\infty}^{+\infty}\ln |\tilde{a}(\mu)|^{2}\mu^{n-1}\mathrm{d}\mu\right)
            \frac{1}{\lambda^{n}}-
            \sum_{n=1}^{+\infty}\frac{2{\rm i}}{n}\left(\sum_{i=1}^{N}\mathrm{Im} \lambda_{i}^{n}
        \right)\frac{1}{\lambda^{n}}.
    \end{split}
\end{equation}
Combining \eqref{hami} ,\eqref{exp-a-1} and \eqref{exp-a-2}, the conservation laws can be represented by the point spectrum and 
the essential spectrum of $a(\lambda;t)$:
\begin{equation}\label{trace-formula}
    \begin{split}
        H_{n-1}&=-\frac{2^{n-2}}{\pi}\int_{-\infty}^{+\infty}\ln |\tilde{a}(\mu)|^{2}\mu^{n-1}\mathrm{d}\mu
        +\frac{2^{n}}{n}
        \sum_{i=1}^{N}\mathrm{Im} \lambda_{i}^{n},\quad n\geq 1.
    \end{split}
\end{equation}
Then we can prove that the $N$-solitons satisfies an ODE. 

\begin{prop}
	\label{prop-app-A}
    The $N$-soliton solutions satisfy
    \begin{equation*}
        \sum_{n=0}^{2N}\mu_{n}\frac{\delta H_{n}}{\delta \mathbf{q}}=0,
    \end{equation*}
    where the coefficients are given by
    \begin{equation*}
        \mathcal{P}_{N}(\lambda)=\prod_{i=1}^{N}(\lambda-\lambda_{i})(\lambda-{\lambda_{i}^{*}})=
        \sum_{n=0}^{2N}2^{n-2N}\mu_{n}
            \lambda^{n}
    \end{equation*}
\end{prop}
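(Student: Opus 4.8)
The plan is to evaluate the gradients $\frac{\delta H_n}{\delta\mathbf{q}}$ at an $N$-soliton directly from the trace formula \eqref{trace-formula} and then to observe that the coefficients multiplying the eigenvalue variations form precisely the polynomial $\mathcal{P}_N$, which annihilates every $\lambda_i$ and every $\lambda_i^*$.

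\textbf{Step 1 (reduction of the conserved quantities to the discrete data).} For Schwartz potentials near $\mathbf{q}^{[N]}$, the coefficient $a(\lambda;t)=\hat s_{11}(\lambda;t)$ is holomorphic on $\Omega_{+}$ with exactly the $N$ simple zeros $\lambda_1,\dots,\lambda_N$; being simple zeros of a holomorphic family, they depend smoothly on $\mathbf{q}$ by the implicit function theorem applied to $a(\lambda;\mathbf{q})=0$. The trace formula \eqref{trace-formula} splits each $H_{n-1}$ into a continuous-spectrum integral $-\frac{2^{n-2}}{\pi}\int_{\mathbb{R}}\ln|\tilde a(\mu)|^2\mu^{n-1}\,\mathrm{d}\mu$ and a discrete part $\frac{2^n}{n}\sum_{i=1}^N\mathrm{Im}\,\lambda_i^n$.

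\textbf{Step 2 (the continuous part does not vary at $\mathbf{q}^{[N]}$).} Unitarity of the transfer matrix, $\mathbf{S}^{-1}=\mathbf{S}^{\dagger}$, gives $|a(\mu)|^2=|s_{11}(\mu)|^2\le 1$ for $\mu\in\mathbb{R}$, with equality exactly for reflectionless potentials. Since $\mathbf{q}^{[N]}$ is reflectionless, for each fixed $\mu\in\mathbb{R}$ the functional $\mathbf{q}\mapsto\ln|\tilde a(\mu;\mathbf{q})|^2$ attains an extremum (value $0$) at $\mathbf{q}=\mathbf{q}^{[N]}$, hence its first variation in $\mathbf{q}$ vanishes there. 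After justifying the interchange of the variation and the $\mu$-integral (using the decay in $\mu$ inherited from the Schwartz class), we obtain at $\mathbf{q}=\mathbf{q}^{[N]}$
\begin{equation*}
\frac{\delta H_{n-1}}{\delta\mathbf{q}}=\frac{2^n}{n}\sum_{i=1}^N\frac{\delta}{\delta\mathbf{q}}\bigl(\mathrm{Im}\,\lambda_i^n\bigr)=-2^{n-1}{\rm i}\sum_{i=1}^N\left(\lambda_i^{n-1}\frac{\delta\lambda_i}{\delta\mathbf{q}}-(\lambda_i^*)^{n-1}\frac{\delta\lambda_i^*}{\delta\mathbf{q}}\right),
\end{equation*}
which is the $N$-eigenvalue analogue of the formula for $\frac{\delta H_{n-1}}{\delta\mathbf{q}}$ recorded in Section~2.3 for $N=2$.

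\textbf{Step 3 (algebraic conclusion).} Shifting the index, $\frac{\delta H_n}{\delta\mathbf{q}}=-2^n{\rm i}\sum_i\bigl(\lambda_i^n\frac{\delta\lambda_i}{\delta\mathbf{q}}-(\lambda_i^*)^n\frac{\delta\lambda_i^*}{\delta\mathbf{q}}\bigr)$, so
\begin{equation*}
\sum_{n=0}^{2N}\mu_n\frac{\delta H_n}{\delta\mathbf{q}}=-{\rm i}\sum_{i=1}^N\left[\Bigl(\sum_{n=0}^{2N}2^n\mu_n\lambda_i^n\Bigr)\frac{\delta\lambda_i}{\delta\mathbf{q}}-\Bigl(\sum_{n=0}^{2N}2^n\mu_n(\lambda_i^*)^n\Bigr)\frac{\delta\lambda_i^*}{\delta\mathbf{q}}\right].
\end{equation*}
By definition $\sum_{n=0}^{2N}2^n\mu_n\lambda^n=2^{2N}\mathcal{P}_N(\lambda)$, and $\mathcal{P}_N(\lambda_i)=\mathcal{P}_N(\lambda_i^*)=0$ for every $i$; hence the right-hand side vanishes, which proves the proposition. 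The main obstacle is the rigorous version of Step~2: showing that $\ln|\tilde a(\mu;\cdot)|^2$ is Fréchet differentiable with vanishing derivative at the reflectionless potential $\mathbf{q}^{[N]}$ and that this differentiation commutes with the $\mu$-integral uniformly on a neighborhood; the smooth eigenvalue dependence in Step~1 and the algebra of Step~3 are routine.
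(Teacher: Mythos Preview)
Your proof is correct and follows essentially the same route as the paper: both derive $\frac{\delta H_n}{\delta\mathbf{q}}$ from the trace formula in terms of $\frac{\delta\lambda_i}{\delta\mathbf{q}}$ and $\frac{\delta\lambda_i^*}{\delta\mathbf{q}}$, then observe that the combination $\sum_n\mu_n\frac{\delta H_n}{\delta\mathbf{q}}$ packages these into $\mathcal{P}_N(\lambda_i)$ and $\mathcal{P}_N(\lambda_i^*)$, which vanish by construction. You are in fact more careful than the paper in Step~2, since the paper simply asserts the discrete-data formula for $\frac{\delta H_n}{\delta\mathbf{q}}$ without discussing the continuous-spectrum term at all; your extremum argument for the vanishing of $\delta\ln|\tilde a(\mu)|^2$ at a reflectionless potential is the natural justification (and, as you note, the only point requiring genuine analysis).
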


\begin{proof}
    By trace formula \eqref{trace-formula}, we have
    \begin{equation*}
        \frac{\delta H_{n}}{\delta \mathbf{q}}=2^{n}\cdot \frac{1}{2{\rm i}}
        \sum_{i=1}^{N}(\lambda_{i}^{n}\frac{\delta \lambda_{i}}{\delta \mathbf{q}}  -\lambda_{i}^{*n}
        \frac{\delta \lambda_{i}^{*}}{\delta \mathbf{q}} ),\quad n\geq 0. 
    \end{equation*}
    Hence the ODE
    \begin{equation*}
            \sum_{n=0}^{2N}\mu_{n}\frac{\delta H_{n}}{\delta \mathbf{q}}
            =2^{2N}\cdot \frac{1}{2{\rm i}}
            \sum_{i=1}^{N}(\mathcal{P}_{N}(\lambda_{i})\frac{\delta \lambda_{i}}{\delta \mathbf{q}} 
            -\mathcal{P}_{N}(\lambda_{i}^{*})\frac{\delta \lambda_{i}^{*}}{\delta \mathbf{q}} ) =0
    \end{equation*}
    since $\lambda_{i},\lambda_{i}^{*}$ are roots of $\mathcal{P}_{N}(\lambda)$. 
\end{proof}

\subsection{The integrable hierarchy and the matrix $\mathbf{L}$}
\label{app.A2}

Let
\begin{equation*}
    \mathbf{L}=\mathbf{\Phi}\sigma_{3}\mathbf{\Phi}^{-1},
\end{equation*}
then 
\begin{equation*}
    \mathbf{L}_{x}=[\mathbf{U},\mathbf{L}],\quad \mathbf{L}\mathbf{L}^{-1}=\mathbb{I}, \quad \lim_{\lambda\to\infty}\mathbf{L}=\sigma_{3}.
\end{equation*}
We use these conditions to calculate the integrable equation hierarchy. Setting
\begin{equation*}
    \mathbf{L}=\sigma_{3}+\sum_{n=1}^{\infty}\frac{\mathbf{L}_{n}}{\lambda^{n}},
\end{equation*}
we have
\begin{align*}
        \mathbf{L}_{1}&=\mathbf{Q}, \\
        \mathbf{L}_{n+1}^{diag}&={\rm i}\partial_{x}^{-1}\mathrm{ad}_{\mathbf{Q}}\mathbf{L}_{n+1}^{off}=
        \frac{1}{2}\partial_{x}^{-1}\mathrm{ad}_{\mathbf{Q}}\sigma_3(\partial_{x}\mathbf{L}_{n}^{off}-
        {\rm i}\mathrm{ad}_{\mathbf{Q}}\mathbf{L}_{n}^{diag}),\\
        \mathbf{L}_{n+1}^{off}&=-\frac{{\rm i}}{2}\sigma_{3}\partial_{x}\mathbf{L}_{n}^{off}-
        \frac{1}{2}\sigma_{3}\mathrm{ad}_{\mathbf{Q}}\mathbf{L}_{n}^{diag}.
    \end{align*}
Moreover, we obtain the recursion
\begin{equation*}
    \mathbf{L}_{n+1}^{off}=-\frac{{\rm i}}{2}\sigma_{3}(\partial_{x}+ \mathrm{ad}_{\mathbf{Q}}\partial_{x}^{-1}
    \mathrm{ad}_{\mathbf{Q}})\mathbf{L}_{n}^{off}.
\end{equation*}
Setting 
\begin{equation*}
    \mathbf{L}_{n}^{off}=
    \begin{pmatrix}
        0 & \mathbf{W}_{n}^{T} \\
        \mathbf{V}_{n} & \mathbf{0}_{2\times 2}
    \end{pmatrix},
\end{equation*}
we obtain
\begin{equation*}
    \begin{pmatrix}
        \mathbf{V}_{n+1} \\ \mathbf{W}_{n+1}
    \end{pmatrix}=\frac{{\rm i}}{2}
    \begin{pmatrix}
        \partial_{x}\mathbf{V}_{n}+\partial_{x}^{-1}(\mathbf{V}_{n}\mathbf{r}^{T}-\mathbf{q}\mathbf{W}_{n}^{T})\mathbf{q}
        +\partial_{x}^{-1}(\mathbf{r}^{T}\mathbf{V}_{n}-\mathbf{W}_{n}^{T}\mathbf{q})\mathbf{q}
        \\ 
        -\partial_{x}\mathbf{W}_{n}-\partial_{x}^{-1}(\mathbf{W}_{n}\mathbf{q}^{T}-\mathbf{r}\mathbf{V}_{n}^{T})\mathbf{r}
        +\partial_{x}^{-1}(\mathbf{r}^{T}\mathbf{V}_{n}-\mathbf{W}_{n}^{T}\mathbf{q})\mathbf{r}
    \end{pmatrix},
\end{equation*}
where
\begin{equation*}
\partial_{x}^{-1}\cdot=
\int_{-\infty}^{x} (\cdot) dx,
\end{equation*}
Let the recursion operator be denoted by 
\begin{equation}\label{reu-op}
    \mathcal{L}_{r}\begin{pmatrix}
        \mathbf{g} \\ \mathbf{h}
    \end{pmatrix}=\frac{{\rm i}}{2}
    \begin{pmatrix}
        \partial_{x}\mathbf{g}+\partial_{x}^{-1}(\mathbf{g}\mathbf{r}^{T}+\mathbf{q}\mathbf{h}^{T})
        \mathbf{q}+\partial_{x}^{-1}(\mathbf{r}^{T}\mathbf{g}+\mathbf{h}^{T}\mathbf{q})\mathbf{q}
        \\
        -\partial_{x}\mathbf{h}-\partial_{x}^{-1}(\mathbf{h}\mathbf{q}^{T}+\mathbf{r}\mathbf{g}^{T})
        \mathbf{r}-\partial_{x}^{-1}(\mathbf{r}^{T}\mathbf{g}+\mathbf{h}^{T}\mathbf{q})\mathbf{r}
    \end{pmatrix}.
\end{equation}
Then we have
\begin{equation*}
    \begin{pmatrix}
        \mathbf{V}_{n+1} \\ -\mathbf{W}_{n+1}
    \end{pmatrix}=\mathcal{L}_{r}\begin{pmatrix}
        \mathbf{V}_{n} \\ -\mathbf{W}_{n}
    \end{pmatrix}.
\end{equation*}
By straightforward calculation, 
\begin{align*}
        \mathbf{L}_{0}&=\sigma_3, \\
        \mathbf{L}_{1}&=\mathbf{Q}, \\
        \mathbf{L}_{2}&=-\frac{1}{2}\sigma_3\mathbf{Q}^{2}-\frac{{\rm i}}{2}\sigma_{3}\mathbf{Q}_{x}, \\
        \mathbf{L}_{3}&=-\frac{{\rm i}}{4}(\mathbf{Q}\mathbf{Q}_{x}-\mathbf{Q}_{x}\mathbf{Q})-
        \frac{1}{4}(\mathbf{Q}_{xx}+2\mathbf{Q}^{3}), \\
        \mathbf{L}_{4}&=\frac{1}{8}\sigma_{3}
        (\mathbf{Q}_{xx}\mathbf{Q}+\mathbf{Q}\mathbf{Q}_{xx}-\mathbf{Q}_{x}^{2}+
        3\mathbf{Q}^{4})+\frac{{\rm i}}{8}\sigma_{3}
        (\mathbf{Q}_{xxx}+3\mathbf{Q}_{x}\mathbf{Q}^{2}+
        3\mathbf{Q}^{2}\mathbf{Q}_{x}).
    \end{align*}
The matrix $\mathbf{U}$ and $\mathbf{V}$ in \eqref{CNLS-lax} have representations
\begin{equation*}
    \mathbf{U}={\rm i}(\mathbf{L}_{0}\lambda+\mathbf{L}_{1}), \quad
    \mathbf{V}={\rm i}(\mathbf{L}_{0}\lambda^{2}+\mathbf{L}_{1}\lambda+\mathbf{L}_{2}).
\end{equation*}
In particular, we have the diagonal elements
\begin{align*}
        \mathbf{L}_{1}^{diag}&=0, \\
        \mathbf{L}_{2}^{diag}&=-\frac{1}{2}\sigma_3\mathbf{Q}^{2}, \\
        \mathbf{L}_{3}^{diag}&=-\frac{{\rm i}}{4}(\mathbf{Q}\mathbf{Q}_{x}-\mathbf{Q}_{x}\mathbf{Q}), \\
        \mathbf{L}_{4}^{diag}&=\frac{1}{8}
        (\mathbf{Q}_{xx}\mathbf{Q}+\mathbf{Q}\mathbf{Q}_{xx}-\mathbf{Q}_{x}^{2}+
        3\mathbf{Q}^{4})
    \end{align*}
and the off-diagonal elements
\begin{align*}
        \mathbf{L}_{1}^{off}&=\mathbf{Q}, \\
        \mathbf{L}_{2}^{off}&=-\frac{{\rm i}}{2}\sigma_{3}\mathbf{Q}_{x}, \\
        \mathbf{L}_{3}^{off}&=-\frac{1}{4}(\mathbf{Q}_{xx}+2\mathbf{Q}^{3}), \\
        \mathbf{L}_{4}^{off}&=\frac{{\rm i}}{8}\sigma_{3}
        (\mathbf{Q}_{xxx}+3\mathbf{Q}_{x}\mathbf{Q}^{2}+
        3\mathbf{Q}^{2}\mathbf{Q}_{x}), \\
        \mathbf{L}_{5}^{off}&=\frac{1}{16}
        (\mathbf{Q}_{xxxx}+4\mathbf{Q}_{xx}\mathbf{Q}^{2}+2\mathbf{Q}\mathbf{Q}_{xx}\mathbf{Q}+
        4\mathbf{Q}^{2}\mathbf{Q}_{xx}\\&\quad +
        2\mathbf{Q}_{x}^{2}\mathbf{Q}+6\mathbf{Q}_{x}\mathbf{Q}\mathbf{Q}_{x}+
        2\mathbf{Q}\mathbf{Q}_{x}^{2}+
        6\mathbf{Q}^{5}
        ). 
    \end{align*}

\section{\appendixname. The asymptotic expansion of $\mathbf{Q}_{ij}$}
\label{B.asy-exp}

\setcounter{equation}{0}

Assume that $0< b_{2}< b_{1}$, we have
    \begin{align*}
            M_{non}&\sim |c_{11}c_{22}|^{2}{\rm e}^{2(b_{1}+b_{2})\xi} \quad \xi\to\infty, \\
            M_{non}&\sim \frac{(b_{1}-b_{2})^{2}}{(b_{1}+b_{2})^{2}}{\rm e}^{-2(b_{1}+b_{2})\xi} \quad \xi\to-\infty.
        \end{align*}
    If $\lambda\notin \{\lambda_{1},\lambda_{2},\lambda_{1}^{*},\lambda_{2}^{*}\}$, 
    the asymptotics are given by
    \begin{equation*}
        \begin{pmatrix}
            Q_{11}(\xi) & 
            Q_{12}(\xi) &
            Q_{13}(\xi) \\
            Q_{21}(\xi) &
            Q_{22}(\xi) &
            Q_{23}(\xi) \\
            Q_{31}(\xi) &
            Q_{32}(\xi) &
            Q_{33}(\xi)
        \end{pmatrix}\in
        {\rm e}^{-\sigma_{3}\mathrm{Im}(\lambda)\xi}(L^{\infty}(\mathbb{R}))^{3\times 3}
        ,\quad 
        \xi \to \infty. 
    \end{equation*}
    For $\lambda=\lambda_{1}$, we have
    \begin{equation*}
        \begin{pmatrix}
            Q_{11}(\xi) & 
            Q_{12}(\xi) &
            Q_{13}(\xi) \\
            Q_{21}(\xi) &
            Q_{22}(\xi) &
            Q_{23}(\xi) \\
            Q_{31}(\xi) &
            Q_{32}(\xi) &
            Q_{33}(\xi)
        \end{pmatrix}\sim 
        \begin{pmatrix}
            {\rm e}^{-b_{1}\xi} & {\rm e}^{-b_{1}\xi} & {\rm e}^{-(2b_{2}-b_{1})\xi} \\ 
            {\rm e}^{-3b_{1}\xi} & {\rm e}^{-3b_{1}\xi} &  {\rm e}^{-(b_{1}+2b_{2})\xi}\\ 
            {\rm e}^{-(b_{1}+2b_{2})\xi} & {\rm e}^{-(b_{1}+2b_{2})\xi} & {\rm e}^{b_{1}\xi}
        \end{pmatrix},\quad 
        \xi \to +\infty
    \end{equation*}
    and
    \begin{equation*}
        \begin{pmatrix}
            Q_{11}(\xi) & 
            Q_{12}(\xi) &
            Q_{13}(\xi) \\
            Q_{21}(\xi) &
            Q_{22}(\xi) &
            Q_{23}(\xi) \\
            Q_{31}(\xi) &
            Q_{32}(\xi) &
            Q_{33}(\xi)
        \end{pmatrix}\sim 
        \begin{pmatrix}
            {\rm e}^{3b_{1}\xi} &  {\rm e}^{3b_{1}\xi} & {\rm e}^{(b_{1}+2b_{2})\xi}\\
            {\rm e}^{b_{1}\xi} &  {\rm e}^{b_{1}\xi} & {\rm e}^{(3b_{1}+2b_{2})\xi}\\
            {\rm e}^{(3b_{1}+2b_{2})\xi} & {\rm e}^{(3b_{1}+2b_{2})\xi} & {\rm e}^{b_{1}\xi}
        \end{pmatrix},\quad 
        \xi \to -\infty. 
    \end{equation*}
    For $\lambda=\lambda_{1}^{*}$, we have
    \begin{equation*}
        \begin{pmatrix}
            Q_{11}(\xi) & 
            Q_{12}(\xi) &
            Q_{13}(\xi) \\
            Q_{21}(\xi) &
            Q_{22}(\xi) &
            Q_{23}(\xi) \\
            Q_{31}(\xi) &
            Q_{32}(\xi) &
            Q_{33}(\xi)
        \end{pmatrix}\sim 
        \begin{pmatrix}
            {\rm e}^{-3b_{1}\xi} & {\rm e}^{-3b_{1}\xi} & {\rm e}^{-(b_{1}+2b_{2})\xi} \\ 
            {\rm e}^{-b_{1}\xi} & {\rm e}^{-b_{1}\xi} &  {\rm e}^{-(3b_{1}+2b_{2})\xi}\\ 
            {\rm e}^{-(3b_{1}+2b_{2})\xi} & {\rm e}^{-(3b_{1}+2b_{2})\xi} & {\rm e}^{-b_{1}\xi}
        \end{pmatrix},\quad 
        \xi \to +\infty
    \end{equation*}
    and
    \begin{equation*}
        \begin{pmatrix}
            Q_{11}(\xi) & 
            Q_{12}(\xi) &
            Q_{13}(\xi) \\
            Q_{21}(\xi) &
            Q_{22}(\xi) &
            Q_{23}(\xi) \\
            Q_{31}(\xi) &
            Q_{32}(\xi) &
            Q_{33}(\xi)
        \end{pmatrix}\sim 
        \begin{pmatrix}
            {\rm e}^{b_{1}\xi} &  {\rm e}^{b_{1}\xi} & {\rm e}^{(2b_{2}-b_{1})\xi}\\
            {\rm e}^{3b_{1}\xi} &  {\rm e}^{3b_{1}\xi} & {\rm e}^{(b_{1}+2b_{2})\xi}\\
            {\rm e}^{(b_{1}+2b_{2})\xi} & {\rm e}^{(b_{1}+2b_{2})\xi} & {\rm e}^{-b_{1}\xi}
        \end{pmatrix},\quad 
        \xi \to -\infty. 
    \end{equation*}
    Similarly, for $\lambda=\lambda_{2}$, we have
    \begin{equation*}
        \begin{pmatrix}
            Q_{11}(\xi) & 
            Q_{12}(\xi) &
            Q_{13}(\xi) \\
            Q_{21}(\xi) &
            Q_{22}(\xi) &
            Q_{23}(\xi) \\
            Q_{31}(\xi) &
            Q_{32}(\xi) &
            Q_{33}(\xi)
        \end{pmatrix}\sim 
        \begin{pmatrix}
            {\rm e}^{-b_{2}\xi} & {\rm e}^{-(2b_{1}-b_{2})\xi} & {\rm e}^{-b_{2}\xi} \\ 
            {\rm e}^{-(2b_{1}+b_{2})\xi} & {\rm e}^{b_{2}\xi} &  {\rm e}^{-(2b_{1}+b_{2})\xi}\\ 
            {\rm e}^{-3b_{2}\xi} & {\rm e}^{-(2b_{1}+b_{2})\xi} & {\rm e}^{-3b_{2}\xi}
        \end{pmatrix},\quad 
        \xi \to +\infty
    \end{equation*}
    and
    \begin{equation*}
        \begin{pmatrix}
            Q_{11}(\xi) & 
            Q_{12}(\xi) &
            Q_{13}(\xi) \\
            Q_{21}(\xi) &
            Q_{22}(\xi) &
            Q_{23}(\xi) \\
            Q_{31}(\xi) &
            Q_{32}(\xi) &
            Q_{33}(\xi)
        \end{pmatrix}\sim 
        \begin{pmatrix}
            {\rm e}^{3b_{2}\xi} &  {\rm e}^{(2b_{1}+b_{2})\xi} & {\rm e}^{3b_{2}\xi}\\
            {\rm e}^{(2b_{1}+3b_{2})\xi} &  {\rm e}^{b_{2}\xi} & {\rm e}^{(2b_{1}+3b_{2})\xi}\\
            {\rm e}^{b_{2}\xi} & {\rm e}^{(2b_{1}+3b_{2})\xi} & {\rm e}^{b_{2}\xi}
        \end{pmatrix},\quad 
        \xi \to -\infty. 
    \end{equation*}
    For $\lambda=\lambda_{2}^{*}$, we have
    \begin{equation*}
        \begin{pmatrix}
            Q_{11}(\xi) & 
            Q_{12}(\xi) &
            Q_{13}(\xi) \\
            Q_{21}(\xi) &
            Q_{22}(\xi) &
            Q_{23}(\xi) \\
            Q_{31}(\xi) &
            Q_{32}(\xi) &
            Q_{33}(\xi)
        \end{pmatrix}\sim 
        \begin{pmatrix}
            {\rm e}^{-3b_{2}\xi} & {\rm e}^{-(2b_{1}+b_{2})\xi} & {\rm e}^{-3b_{2}\xi} \\ 
            {\rm e}^{-(2b_{1}+3b_{2})\xi} & {\rm e}^{-b_{2}\xi} &  {\rm e}^{-(2b_{1}+3b_{2})\xi}\\ 
            {\rm e}^{-b_{2}\xi} & {\rm e}^{-(2b_{1}+3b_{2})\xi} & {\rm e}^{-b_{2}\xi}
        \end{pmatrix},\quad 
        \xi \to +\infty
    \end{equation*}
    and
    \begin{equation*}
        \begin{pmatrix}
            Q_{11}(\xi) & 
            Q_{12}(\xi) &
            Q_{13}(\xi) \\
            Q_{21}(\xi) &
            Q_{22}(\xi) &
            Q_{23}(\xi) \\
            Q_{31}(\xi) &
            Q_{32}(\xi) &
            Q_{33}(\xi)
        \end{pmatrix}\sim 
        \begin{pmatrix}
            {\rm e}^{b_{2}\xi} &  {\rm e}^{(2b_{1}+3b_{2})\xi} & {\rm e}^{b_{2}\xi}\\
            {\rm e}^{(2b_{1}+b_{2})\xi} &  {\rm e}^{-b_{2}\xi} & {\rm e}^{(2b_{1}+b_{2})\xi}\\
            {\rm e}^{3b_{2}\xi} & {\rm e}^{(2b_{1}+b_{2})\xi} & {\rm e}^{3b_{2}\xi}
        \end{pmatrix},\quad 
        \xi \to -\infty. 
    \end{equation*}
    In addition, to calculate the asymptotic expansion of $\mathbf{P}_{i}(\lambda)$, we also 
    need to calculate the asymptotic expansion of $R_{i}(\lambda)$ at $\lambda\in\{ \lambda_{1},
    \lambda_{1}^{*}, \lambda_{2}, \lambda_{2}^{*} \}$
    \begin{align*}
            &R_{1}(\lambda)\sim {\rm e}^{-\mathrm{Im}(\lambda)\xi},\quad \xi \to \pm\infty\\
            &R_{2}(\lambda)\sim {\rm e}^{-(2b_{1}+\mathrm{Im}(\lambda))\xi},\quad \xi \to +\infty,\quad 
            R_{2}(\lambda)\sim {\rm e}^{(2b_{1}-\mathrm{Im}(\lambda))\xi},\quad \xi \to -\infty,\\
            &R_{3}(\lambda)\sim {\rm e}^{-(2b_{2}+\mathrm{Im}(\lambda))\xi},\quad \xi \to +\infty,\quad
            R_{3}(\lambda)\sim {\rm e}^{(2b_{2}-\mathrm{Im}(\lambda))\xi},\quad \xi \to -\infty. 
    \end{align*}

\section{\appendixname. The closure relation}\label{C.colsure}
\setcounter{equation}{0}

By taking first variations in the first equation of the Lax pair, we obtain
\begin{equation*}
    \partial_{x}\delta \mathbf{\Phi}=\mathbf{U}\delta \mathbf{\Phi}+\delta \mathbf{U} \mathbf{\Phi}.
\end{equation*}
Using the condition $\mathbf{U}=\mathbf{\Phi}_{x}\mathbf{\Phi}^{-1}$, we obtain 
\begin{equation*}
        (\mathbf{\Phi}^{-1}\delta \mathbf{\Phi})_{x}=\mathbf{\Phi}^{-1}\delta \mathbf{U} \mathbf{\Phi},
\end{equation*}
hence 
\begin{equation*}
        \delta \mathbf{\Phi}^{\pm}(\lambda;x,t)=\int_{\pm\infty}^{x}\mathbf{\Phi}^{\pm}(\lambda;x,t){\mathbf{\Phi}^{\pm}}^{-1}(\lambda;y,t)\delta \mathbf{U}(\lambda;y,t) \mathbf{\Phi}^{\pm}(\lambda;y,t)\mathrm{d}y.
\end{equation*}

Using $\mathbf{S}(\lambda;t)={\mathbf{\Phi}^{+}}^{-1}(\lambda;x,t)\mathbf{\Phi}^{-}(\lambda;x,t)$, letting $x\to-\infty$, 
we obtain
\begin{equation*}
    \delta \mathbf{S}(\lambda;t)=
    \int_{-\infty}^{+\infty}{\mathbf{\Phi}^{+}}^{-1}(\lambda;y,t)\delta 
    \mathbf{U}(\lambda;y,t) \mathbf{\Phi}^{-}(\lambda;y,t)\mathrm{d}y.
\end{equation*}
Similarly, 
\begin{equation*}
    \delta \mathbf{S}^{-1}(\lambda;t)=-
    \int_{-\infty}^{+\infty}{\mathbf{\Phi}^{-}}^{-1}(\lambda;y,t)\delta 
    \mathbf{U}(\lambda;y,t) \mathbf{\Phi}^{+}(\lambda;y,t)\mathrm{d}y.
\end{equation*}

First, we assume that the point spectrum of the Lax pair \eqref{CNLS-lax} is an empty set. 
Denote 
\begin{align*}
    \mathbf{\Phi}^{\pm} &=(\phi_{ij}^{\pm})_{1\leq i,j \leq 3}=(\phi_{1}^{\pm}\ \phi_{2}^{\pm}\ \phi_{3}^{\pm}), \\
    (\mathbf{\Phi}^{\pm})^{-1} &=(\hat{\phi}_{ij}^{\pm})_{1\leq i,j \leq 3}
    =(\hat{\phi}_{1}^{\pm}\ \hat{\phi}_{2}^{\pm}\ \hat{\phi}_{3}^{\pm})^{T}, \\
    \mathbf{S}^{-1} &=(\hat{s}_{ij})_{1\leq i,j \leq 3}
\end{align*}
and define
\begin{equation*}
    \rho_{j}=\frac{s_{1,j+1}}{s_{11}}, \quad \hat{\rho}_{j}=\frac{\hat{s}_{j+1,1}}{\hat{s}_{11}}, j=1,2,
\end{equation*}
the variation of $\mathbf{S}, \mathbf{S}^{-1}$ induces that
\begin{equation}\label{drho}
        \begin{split}
            \delta\rho_{j}(\xi)=\frac{{\rm i}}{s_{11}^{2}(\xi)}\int_{\mathbb{R}}  
            \begin{pmatrix}
                \hat{\phi}_{12}^{+}\psi_{1,j+1}^{-} \\
                \hat{\phi}_{13}^{+}\psi_{1,j+1}^{-} \\
                \hat{\phi}_{11}^{+}\psi_{2,j+1}^{-} \\
                \hat{\phi}_{11}^{+}\psi_{3,j+1}^{-}
            \end{pmatrix}^{T}(x;\xi)
            \begin{pmatrix}
                \delta q_{1} \\
                \delta q_{2} \\
                \delta q_{1}^{*} \\
                \delta q_{2}^{*}
            \end{pmatrix}(x)
            \mathrm{d}x,
            \\
            \delta\hat{\rho}_{j}(\xi)=-\frac{{\rm i}}{\hat{s}_{11}^{2}(\xi)}\int_{\mathbb{R}}  
            \begin{pmatrix}
                \hat{\psi}_{j+1,2}^{-}\phi_{11}^{+} \\
                \hat{\psi}_{j+1,3}^{-}\phi_{11}^{+} \\
                \hat{\psi}_{j+1,1}^{-}\phi_{21}^{+} \\
                \hat{\psi}_{j+1,1}^{-}\phi_{31}^{+}
            \end{pmatrix}^{T}(x;\xi)
            \begin{pmatrix}
                \delta q_{1} \\
                \delta q_{2} \\
                \delta q_{1}^{*} \\
                \delta q_{2}^{*}
            \end{pmatrix}(x)
            \mathrm{d}x
        \end{split}
\end{equation}
where 
\begin{align*}
        \psi_{j+1}^{-} &=(\psi_{1,j+1}^{-}, \psi_{2,j+1}^{-}, \psi_{3,j+1}^{-} )^{T}=
        \phi_{j+1}^{-}s_{11}-\phi_{1}^{-}s_{1,j+1}, \\
        \hat{\psi}_{j+1}^{-} &=(\hat{\psi}_{j+1,1}^{-}, \hat{\psi}_{j+1,2}^{-}, \hat{\psi}_{j+1,3}^{-} )=
        \hat{\phi}_{j+1}^{-}\hat{s}_{11}-\hat{\phi}^{-}\hat{s}_{j+1,1}. 
\end{align*}
By \cite{yang_nonlinear_2010}, we obtain
\begin{equation}\label{dqr-1}
    \begin{pmatrix}
        \delta q_{1} ,
        \delta q_{2} ,
        \delta q_{1}^{*} ,
        \delta q_{2}^{*}
    \end{pmatrix}^{T}(x)=\frac{{\rm i}}{\pi}\int_{\mathbb{R}}\sum_{j=1}^{2}\left(
        \mathbf{O}_{j}(x;\xi)\delta\rho_{j}(\xi)+
        \mathbf{O}_{j+2}(x;\xi)\delta\hat{\rho}_{j}(\xi)
    \right)\mathrm{d}\xi, 
\end{equation}
where 
\begin{equation*}
        \mathbf{O}_{j}=
        \begin{pmatrix}
            \phi_{21}^{-}\hat{\phi}_{j+1,1}^{-} \\
            \phi_{31}^{-}\hat{\phi}_{j+1,1}^{-} \\
            -\phi_{11}^{-}\hat{\phi}_{j+1,2}^{-} \\
            -\phi_{11}^{-}\hat{\phi}_{j+1,3}^{-}
        \end{pmatrix},\quad 
        \mathbf{O}_{j+2}=\begin{pmatrix}
            \phi_{2,j+1}^{-}\hat{\phi}_{11}^{-} \\
            \phi_{3,j+1}^{-}\hat{\phi}_{11}^{-} \\
            -\phi_{1,j+1}^{-}\hat{\phi}_{12}^{-} \\
            -\phi_{1,j+1}^{-}\hat{\phi}_{13}^{-}
        \end{pmatrix} 
        ,\quad j=1,2. 
\end{equation*}

Under the symmetries $\mathbf{q}=\mathbf{r}^{*}$, then the symmetry
$\mathbf{S}^{-1}(\lambda;x,t)=\mathbf{S}^{\dagger}(\lambda^{*};x,t)$ is held, and it leads to
$\rho_{j}(\lambda^{*})=(\hat{\rho}(\lambda))^{*}$, hence
\begin{equation*}
    \delta\rho_{j}(\lambda^{*})=(\delta\hat{\rho}(\lambda))^{*}. 
\end{equation*}
Then for $\lambda\in\mathbb{R}$, take complex conjugate in both sides of \eqref{drho}, we obtain
\begin{equation}\label{drho-1}
    \begin{split}
        \delta\hat{\rho}_{j}(\xi) &=\frac{1}{\hat{s}_{11}^{2}(\xi)}\int_{\mathbb{R}}  
        \mathbf{R}_{j+2}^{\dagger}(x;\xi)\mathcal{J}
        \begin{pmatrix}
            \delta q_{1} \\
            \delta q_{2} \\
            \delta q_{1}^{*} \\
            \delta q_{2}^{*}
        \end{pmatrix}(x)
        \mathrm{d}x
        ,\\
        \delta\rho_{j}(\xi) &=-\frac{1}{s_{11}^{2}(\xi)}\int_{\mathbb{R}}  
        \mathbf{R}_{j}^{\dagger}(x;\xi)\mathcal{J}
        \begin{pmatrix}
            \delta q_{1} \\
            \delta q_{2} \\
            \delta q_{1}^{*} \\
            \delta q_{2}^{*}
        \end{pmatrix}(x)
        \mathrm{d}x,
    \end{split}
\end{equation}
where
\begin{equation*}
        \mathbf{R}_{j}=
        \begin{pmatrix}
            \hat{\psi}_{j+1,1}^{-}\phi_{21}^{+} \\
            \hat{\psi}_{j+1,1}^{-}\phi_{31}^{+} \\
            -\hat{\psi}_{j+1,2}^{-}\phi_{11}^{+} \\
            -\hat{\psi}_{j+1,3}^{-}\phi_{11}^{+} \\
        \end{pmatrix}, \quad 
        \mathbf{R}_{j+2}=
        \begin{pmatrix}
            \hat{\phi}_{11}^{+}\psi_{2,j+1}^{-} \\
            \hat{\phi}_{11}^{+}\psi_{3,j+1}^{-} \\
            -\hat{\phi}_{12}^{+}\psi_{1,j+1}^{-} \\
            -\hat{\phi}_{13}^{+}\psi_{1,j+1}^{-} \\
        \end{pmatrix},\quad j=1,2. 
\end{equation*}
Invoking \eqref{dqr-1} and \eqref{drho-1}, the closure relation is given by
\begin{equation*}
        \delta(x-y)\mathbb{I}_{4}=-\frac{{\rm i}}{\pi}\int_{\mathbb{R}}\sum_{j=1}^{2}\left(
            \frac{1}{s_{11}^{2}(\xi)}
            \mathbf{O}_{j}(x;\xi)\mathbf{R}_{j}^{\dagger}(y;\xi)\mathcal{J}
            -\frac{1}{\hat{s}_{11}^{2}(\xi)}
        \mathbf{O}_{j+2}(x;\xi)\mathbf{R}_{j+2}^{\dagger}(y;\xi)\mathcal{J}
    \right)\mathrm{d}\xi. 
\end{equation*}
Moreover, the orthogonality conditions hold
\begin{equation}\label{orth-cod-general}
    \begin{split}
        \int_{\mathbb{R}}  
        \mathbf{R}_{j}^{\dagger}(x;\xi)\mathcal{J}
        \mathbf{O}_{j}(x;\xi')
        \mathrm{d}x&={\rm i}\pi s_{11}^{2}(\xi)\delta(\xi-\xi'), \\
        \int_{\mathbb{R}}  
        \mathbf{R}_{j+2}^{\dagger}(x;\xi)\mathcal{J}
        \mathbf{O}_{j+2}(x;\xi')
        \mathrm{d}x&=-{\rm i}\pi \hat{s}_{11}^{2}(\xi)\delta(\xi-\xi').
    \end{split}
\end{equation}
For the case there exist discrete eigenvalues for Lax pair, 
the closure relation must contain the contributions of point spectrum. These  contributions are the residue of the functions in the above closure relation \cite{yang_nonlinear_2010}. 

\vspace{0.25cm}

{\bf Acknowledgements} Liming Ling is supported by National Natural Science Foundation of China (Nos. 12122105, 12471236).

%

\bibliographystyle{siam}
\bibliography{Ref-CNLS-stability}

\end{document}